\numberwithin{equation}{section}
\numberwithin{equation}{section}
\definecolor{darkgreen}{cmyk}{0.6,0,0.8,0}
\DeclareMathOperator{\E}{\mathbb{E}}
\DeclareMathOperator{\R}{\mathbb{R}}
\DeclareMathOperator{\N}{\mathbb{N}}
\DeclareMathOperator{\Z}{\mathbb{Z}}
\DeclareMathOperator{\Pb}{\mathbb{P}}
\DeclareMathOperator*{\tr}{tr}
\DeclareMathOperator*{\Tr}{Tr}
\DeclareMathOperator{\polylog}{polylog}
\DeclareMathOperator{\poly}{poly}
\DeclareMathOperator{\nnz}{nnz}
\DeclareMathOperator{\sym}{sym}
\DeclareMathOperator{\cov}{Cov}
\DeclareMathOperator{\spec}{spec}
\def \etc {,\ldots,}
\def \P {\mathbb{P}}
\newcommand{\cS}{\mathcal{S}}
\DeclarePairedDelimiter{\norm}{\lVert}{\rVert}
\DeclarePairedDelimiter{\abs}{\lvert}{\rvert}
\DeclarePairedDelimiter{\paren}{(}{)}
\DeclarePairedDelimiter{\sqbr}{[}{]}
\declaretheorem[numberwithin=section]{theorem}
\newtheorem{proposition}[theorem]{Proposition}
\newtheorem{corollary}[theorem]{Corollary}
\newtheorem{lemma}[theorem]{Lemma}
\newtheorem*{lemmanon}{Lemma}
\newtheorem{conjecture}[theorem]{Conjecture}
\newtheorem*{osnaptracemom}{Lemma ~\ref{prop:momestdecoupmatrix}}
\newtheorem*{diffineqlemma}{Lemma ~\ref{lem:diffineq}}
\theoremstyle{remark}
\newtheorem{remark}[theorem]{Remark}
\theoremstyle{definition}
\newtheorem{definition}[theorem]{Definition}
\newtheorem*{def:OSE}{Definition \ref{def:OSE}}
\begin{document}

\title[Optimal Oblivious Subspace Embeddings with Near-optimal Sparsity]{
Optimal Oblivious Subspace Embeddings \\with Near-optimal Sparsity
}
\author{Shabarish Chenakkod}
\author{Micha{\l} Derezi\'nski}
\author{Xiaoyu Dong}
\thanks{Partially supported by DMS 2054408 and CCF 2338655. The authors are very grateful for the generous help and support of Mark Rudelson throughout the duration of this work.}
\address{University of Michigan, Ann Arbor, MI, USA}
\email{shabari@umich.edu, derezin@umich.edu}
\address{National University of Singapore, Singapore}
\email{xdong@nus.edu.sg}

\begin{abstract}
An oblivious subspace embedding is a random $m\times n$ matrix $\Pi$ such that, for any $d$-dimensional subspace, with high probability $\Pi$ preserves the norms of all vectors in that subspace within a $1\pm\epsilon$ factor. In this work, we give an oblivious subspace embedding with the optimal dimension $m=\Theta(d/\epsilon^2)$ that has a near-optimal sparsity of $\tilde O(1/\epsilon)$ non-zero entries per column~of~$\Pi$. This is the first result to nearly match the conjecture of Nelson and Nguyen [FOCS 2013] in terms of the best sparsity attainable by an optimal oblivious subspace embedding, improving on a prior bound of $\tilde O(1/\epsilon^6)$ non-zeros per column [Chenakkod et al., STOC 2024]. We further extend our approach to the non-oblivious setting, proposing a new family of Leverage Score Sparsified embeddings with Independent Columns, which yield faster runtimes for matrix approximation and regression tasks.

In our analysis, we develop a new method which uses a decoupling argument together with the cumulant method for bounding the edge universality error of isotropic random matrices. To achieve near-optimal sparsity, we combine this general-purpose approach with new traces inequalities that leverage the specific structure of our subspace embedding construction.
\end{abstract}

\maketitle
\thispagestyle{empty}
\newpage
\setcounter{page}{1}

\section{Introduction}
Subspace embeddings are one of the most fundamental techniques in dimensionality reduction, with applications in linear regression \cite{sarlos2006improved}, low-rank approximation \cite{clarkson2013low}, clustering \cite{cohen2015dimensionality}, and many more (see \cite{woodruff2014sketching} for an overview). The key idea is to construct a random linear transformation $\Pi\in\R^{m\times n}$ which maps from a large dimension $n$ to a small dimension $m$, while approximately preserving the geometry of all vectors in a low-dimensional subspace. In many applications, such embeddings must be constructed without the knowledge of the subspace they are supposed to preserve, in which case they are called \emph{oblivious subspace embeddings}.
\begin{definition} \label{def:OSE}
    Random matrix $\Pi\in\R^{m\times n}$ is an $(\varepsilon,\delta,d)$-oblivious subspace embedding (OSE) if for any $d$-dimensional subspace $T\subseteq \R^n$, it holds that
    \begin{align*}
        \Pb\Big(\forall x\in T,\quad (1-\varepsilon)\|x\|\leq \|\Pi x\|\leq (1+\varepsilon)\|x\|\Big)\geq 1-\delta.
    \end{align*}
\end{definition}
The two central concerns in constructing OSEs are: 1) how small can we make the embedding dimension $m$, and 2) how quickly can we apply $\Pi$ to a vector or a matrix. A popular way to address the latter is to use a sparse embedding matrix: If $\Pi$ has at most $s\ll m$ non-zero entries per column, then the cost of computing $\Pi x$ equals $O(s\cdot \nnz(x))$, where $\nnz(x)$ denotes the number of non-zero coordinates in $x$. Designing oblivious subspace embeddings that simultaneously optimize the embedding dimension $m$ and the sparsity $s$ has been the subject of a long line of works \cite{clarkson2013low,meng2013low,nelson2013osnap,bourgain2015toward,cohen2016nearly,chenakkod2024optimal}, aimed towards resolving the following conjecture of Nelson and Nguyen~\cite{nelson2013osnap}, which is supported by nearly-matching lower bounds \cite{nelson2014lower,li2022lower}.
\begin{conjecture}[Nelson and Nguyen, FOCS 2013 \cite{nelson2013osnap}]\label{c:nn}
    For any $n\geq d$ and $\varepsilon,\delta\in(0,1)$, there is an $(\varepsilon,\delta,d)$-oblivious subspace embedding $\Pi\in\R^{m\times n}$ with dimension $m=O((d+\log1/\delta)/\varepsilon^2)$ having $s=O(\log(d/\delta)/\varepsilon)$ non-zeros per column.
\end{conjecture}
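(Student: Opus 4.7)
The plan is to work with an OSNAP-style construction: set $\Pi_{ij}=\eta_{ij}\sigma_{ij}/\sqrt{s}$ where, for each column $j$, the indicators $\eta_{ij}\in\{0,1\}$ place $s$ non-zeros uniformly at random among the $m$ rows (possibly in block form, with $s$ independent CountSketch blocks of hashing dimension $m/s$), and the $\sigma_{ij}$ are independent Rademacher signs. After a rotation, the OSE property reduces to showing that for every orthonormal $U\in\R^{n\times d}$,
\[
\bigl\|U^\top\Pi^\top\Pi U - I_d\bigr\|_{\mathrm{op}}\;\le\;\varepsilon
\]
with probability at least $1-\delta$; and since $\Pi$ is oblivious, $U$ may be assumed to be in whatever normal form (for instance, with flat rows, by absorbing a rotation into $\Pi$) makes the combinatorics cleanest.

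The workhorse is the moment method: bound $\E\,\tr(M-I_d)^{2p}$ for $M=U^\top\Pi^\top\Pi U$ and $p\asymp\log(d/\delta)$, then extract an operator-norm tail via Markov. Expanding the trace yields a sum over closed walks on the bipartite graph of (coordinate, row) pairs, each walk weighted by a product of $\eta$'s and $\sigma$'s. The signs kill walks in which some edge has odd multiplicity; the indicators contribute a factor of $(s/m)^{\#\text{edges}}$; and the orthonormality $U^\top U=I_d$ contracts the coordinate indices. After this bookkeeping, the $2p$-th moment becomes a sum over graph topologies whose leading terms should reproduce the optimal dense-Gaussian bound $\bigl(C(\sqrt{d/m}+\sqrt{p/m})\bigr)^{2p}$, which already yields $m=\Theta((d+\log(1/\delta))/\varepsilon^2)$.

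To push the sparsity down from the easy regime $s\asymp 1/\varepsilon^2$ to the conjectured $s\asymp\log(d/\delta)/\varepsilon$, I would use the universality strategy foreshadowed in the abstract. First, a decoupling step replaces dependent row-sums by products of independent copies so that the expected trace factorizes across the $s$ blocks. Second, a cumulant expansion rewrites the remaining expectations of products of block entries as a sum indexed by cumulants of order $k\ge 2$; morally, each $k$-th cumulant contributes a power $s^{1-k/2}$. The $k=2$ contributions reproduce the Gaussian edge $\sqrt{d/m}$, while the $k\ge 3$ contributions constitute the edge-universality error, which must be shown to be subleading once $s\gtrsim \log(d/\delta)/\varepsilon$. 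The novelty is to carry a random-matrix universality argument, usually developed for dense Wigner-type ensembles, through the sparse combinatorics of $\Pi$.

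The main obstacle is the combinatorial bookkeeping of higher cumulants. A naive application of matrix Khintchine or the bare trace method loses a factor of $s$ per excess edge multiplicity instead of per cumulant order, which forces $s\gtrsim 1/\varepsilon^2$. The new trace inequalities must therefore bound each graph topology's contribution by something of the shape $d^{\#\text{loops}}(s/m)^{\#\text{excess}}$, \emph{tight up to the correct power of $s$}, and must plug into the cumulant recursion so that no spurious $\log$ or $\varepsilon$ factors accumulate across the $2p$ moments. I expect this structure-specific tightening of the trace bounds, together with the decoupling/cumulant interface, to be the crux; the remaining pieces (Markov, rescaling of $s$ and $m$, and passing from a fixed $U$ to all $d$-dimensional subspaces via obliviousness) should be essentially standard.
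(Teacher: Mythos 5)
There is a genuine gap, and it starts with the status of the statement itself: what you are asked to prove is the Nelson--Nguyen \emph{conjecture}, which this paper does not prove and which remains open. The paper's actual result (Theorem \ref{t:ose}/\ref{thm:osedecoup}) matches the conjectured dimension $m=O((d+\log(1/\delta))/\varepsilon^2)$ but only \emph{nearly} matches the sparsity: it requires $s\gtrsim \log^2(d/\varepsilon\delta)/\varepsilon+\log^3(d/\varepsilon\delta)$, together with side conditions such as $\varepsilon\delta> d\,e^{-cd}$ and $m\le c\,e^{d}$, not the conjectured $s=O(\log(d/\delta)/\varepsilon)$. Your outline is essentially the paper's route --- OSNAP construction, reduce to $\|U^\top\Pi^\top\Pi U-I_d\|\le\varepsilon$, moments of order $q\asymp\log(d/\varepsilon\delta)$, decoupling of the off-diagonal term into $(S_1U)^\top S_2U+(S_2U)^\top S_1U$, a cumulant/interpolation comparison with the Gaussian model, and new trace inequalities --- but the step you wave through ("the $k\ge 3$ cumulant contributions must be shown to be subleading once $s\gtrsim\log(d/\delta)/\varepsilon$", with the trace bounds "tight up to the correct power of $s$") is precisely where the quantitative loss occurs. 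In the paper's execution, the universality error $(\E[\tr\Gamma(1)^{2q}])^{1/2q}-(\E[\tr\Gamma(0)^{2q}])^{1/2q}$ is bounded by $O(q^2)$ (and $O(q^3)$ in a second regime), which forces $pm\gtrsim q^2/\varepsilon+q^3$ with $q\asymp\log(d/\varepsilon\delta)$; nothing in your sketch identifies how to improve this to a single $\log$ factor, so the proposal does not establish the conjecture, only (at best) re-derives the near-optimal bound.

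Two further points. First, your claim that by obliviousness you may assume $U$ has flat rows "by absorbing a rotation into $\Pi$" is invalid: the OSNAP distribution is not orthogonally invariant, so $\Pi Q$ is not distributed as $\Pi$; an oblivious embedding must handle \emph{every} orthonormal $U$, including maximally coherent ones (e.g.\ coordinate subspaces), and flattening via leverage scores is exactly what forces the non-oblivious LESS-IC construction in the paper. Second, the heuristic "each $k$-th cumulant contributes $s^{1-k/2}$" is too coarse: the paper's gain over the naive matrix-Khintchine/trace count comes not from that scaling alone but from (i) the vanishing of the diagonal term for fixed-column-sparsity OSNAP (this is what avoids the $1/\varepsilon^2$ barrier that provably afflicts the i.i.d.-entry model, cf.\ Remark \ref{rem:epslowerbd}), and (ii) the specific trace inequality (Lemma \ref{lem:decouptraceineq} via Lemma \ref{zssz}) that re-absorbs the combined H\"older factors into $(\E[\tr\Gamma(t)^{2q}])^{1/2q}$ instead of paying $\sqrt{\max\{pm,q\}\max\{pd,q\}}$. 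Without a concrete replacement for these steps that saves the extra $\log$ factors, the crux of the conjecture is untouched.
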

Nelson and Nguyen gave a simple construction that they conjectured would achieve these guarantees: For each column of $\Pi$, place scaled random signs $\pm1/\sqrt s$ in $s$ random locations. They showed that this construction achieves dimension $m=O(d\polylog(d)/\varepsilon^2)$ and sparsity $s=O(\polylog(d)/\varepsilon)$. A number of follow-up works \cite{bourgain2015toward,cohen2016nearly} improved on this; most notably, Cohen \cite{cohen2016nearly} showed that a sparse OSE can achieve $m=O(d\log(d)/\varepsilon^2)$ with $s=O(\log(d)/\varepsilon)$. However, none of these guarantees recover the optimal embedding dimension $m=\Theta(d/\varepsilon^2)$, with the extraneous $\log(d)$ factor arising due to a long-standing limitation in existing matrix concentration techniques \cite{troppmatrixconc}. 

This sub-optimality in dimension $m$ was finally addressed in a recent work of Chenakkod, Derezi\'nski, Dong and Rudelson~\cite{chenakkod2024optimal}, relying on a breakthrough in random matrix universality theory by Brailovskaya and van Handel \cite{brailovskaya2022universality}. They achieved $m=\Theta(d/\varepsilon^2)$, but only with a significantly sub-optimal sparsity $s = \tilde O(1/\varepsilon^6)$, which is a consequence of how the universality error is measured and analyzed in \cite{brailovskaya2022universality} (here, $\tilde O$ hides polylogarithmic factors in $d/\varepsilon\delta$). This raises the following natural question: 

\smallskip

\textit{Can the optimal dimension $m=\Theta(d/\varepsilon^2)$ be achieved with the conjectured $\tilde O(1/\varepsilon)$ sparsity?}

\smallskip

We give a positive answer to this question, thus matching Conjecture \ref{c:nn} in dimension $m$ and nearly-matching it in sparsity $s$. To achieve this, we must substantially depart from the approach of Brailovskaya and van Handel, and as a by-product, develop a new set of tools for matrix universality
which are likely of independent interest (see Section \ref{s:overview} for an overview). Remarkably, our result is attained by one of the simple constructions that were originally suggested by Nelson and Nguyen in their conjecture.
\begin{theorem}[Oblivious Subspace Embedding]\label{t:ose}
    For any $n\geq d$ and $\varepsilon,\delta\in(0,1)$ such that $1/\epsilon\delta \leq\poly(d)$, there is an $(\varepsilon,\delta,d)$-oblivious subspace embedding $\Pi\!\in\!\R^{m\times n}$ with $m=O(d/\varepsilon^2)$ having $s=\tilde O(1/\varepsilon)$ non-zeros per column.
\end{theorem}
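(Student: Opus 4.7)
I would start from the simple construction originally suggested by Nelson and Nguyen, taking $m = \Theta(d/\epsilon^2)$ and letting each of the $n$ columns of $\Pi$ be independent, with $s = \tilde O(1/\epsilon)$ entries of value $\pm 1/\sqrt{s}$ placed in random positions (most likely following an OSNAP-style block rule in which $[m]$ is partitioned into $s$ equal blocks and one position is picked uniformly per block, with an independent random sign). Fix an arbitrary $d$-dimensional subspace $T\subseteq\R^n$ with orthonormal basis $U\in\R^{n\times d}$; then the OSE property is equivalent to the spectral bound $\|U^T\Pi^T\Pi U - I_d\|_{\mathrm{op}}\leq 3\epsilon$. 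Setting $M = U^T\Pi^T\Pi U - I_d$, the goal reduces to proving $\Pb(\|M\|_{\mathrm{op}} > \epsilon) \leq \delta$, where $M$ is a centered $d\times d$ random matrix whose entries are quadratic functions of the sparse entries of $\Pi$.

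\textbf{Moment method and Gaussian comparison.} By Markov, $\Pb(\|M\|_{\mathrm{op}} > \epsilon) \leq \epsilon^{-2p}\,\E\tr M^{2p}$, so it suffices to establish $\E\tr M^{2p}\leq(\epsilon/e)^{2p}\delta$ at $p \asymp \log(d/\delta)$. The natural strategy is to expand $\E\tr M^{2p}$ as a sum over closed walks of length $2p$ on the entries of $\Pi$ and compare term-by-term to the matched Gaussian model $M_G = U^T\Pi_G^T\Pi_G U - I_d$ in which each entry of $\Pi$ is replaced by a centered Gaussian of the same variance. For $M_G$, classical Wishart edge estimates give $\|M_G\|_{\mathrm{op}}\lesssim\sqrt{d/m}$, which is $\leq \epsilon$ precisely at $m = \Theta(d/\epsilon^2)$. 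What remains is to bound the universality error $\E\tr M^{2p} - \E\tr M_G^{2p}$; a cumulant expansion shows this is driven by the $k$-th joint cumulants of the entries of $\Pi$ for $k\geq 3$, which scale like $s^{1-k/2}/m$. Larger sparsity directly improves the comparison, and the target is to drive this error to $\lesssim \epsilon^{2p}\delta$ already at $s = \tilde O(1/\epsilon)$.

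\textbf{Decoupling and structure-specific trace inequalities.} The bottleneck in the previous approach based on Brailovskaya--van Handel \cite{chenakkod2024optimal} is that its generic universality error is governed by the entrywise $\ell_\infty$-norm of $\Pi$, which costs a factor of $1/\sqrt{s}$ per additional cumulant and forces $s = \tilde O(1/\epsilon^6)$. To sharpen this I would first apply a decoupling step in the spirit of de la Pe\~na--Gin\'e that, in each higher-order cumulant term, replaces the dependent sparse rows/columns of $\Pi$ by independent copies, turning every contribution into a clean expectation of a trace of products of independent sparse vectors and freeing the estimate from the shared-index correlations that cause the $\ell_\infty$ blow-up. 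Second---and this is where the real work lies---I would prove new structural trace inequalities tailored to the specific block support of $\Pi$: because each column is supported in a prescribed $s$-block partition of $[m]$, the index collisions forced inside each walk carry an extra gain of order $d/m$ beyond the generic sparse bound, and this gain is precisely what lets cumulants of order $k\geq 3$ be absorbed at $s = \tilde O(1/\epsilon)$. The main obstacle is executing this second step uniformly across the exponentially many walk types contributing to $\E\tr M^{2p}$: each graph must be estimated sharply enough that after summation at $p = \Theta(\log(d/\delta))$ the accumulated slack is only polylogarithmic in $d/\epsilon\delta$, and this is ultimately what determines the $\tilde O(1/\epsilon)$ sparsity in Theorem~\ref{t:ose}.
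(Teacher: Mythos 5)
Your toolkit matches the paper's at the level of headlines (moment method at $q\asymp\log(d/\delta)$, comparison with a matched Gaussian model, a cumulant expansion, decoupling, and structure-aware trace inequalities), but the specific steps that actually deliver $s=\tilde O(1/\varepsilon)$ are missing or misattributed, and they are the content of the theorem. The paper's first essential point is the split of $U^TS^TSU-pm\,I_d$ into a diagonal term $\sum_j(\sum_i s_{ij}^2-pm)u_ju_j^T$ and an off-diagonal term, together with the observation that for the fixed-column-sparsity OSNAP construction the diagonal term vanishes identically, since $\sum_i s_{ij}^2=pm$ by construction; for the i.i.d.\ sparse model this term alone forces $pm\gtrsim \log(d)/\varepsilon^2$ (Remark \ref{rem:epslowerbd}). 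Your sketch chooses the right construction but never invokes this cancellation; instead you attribute the gain to an ``extra gain of order $d/m$ from forced index collisions,'' which is not the operative mechanism and is asserted rather than proved. Second, the paper's decoupling is not applied inside higher-order cumulant terms: it is applied once, to the off-diagonal quadratic form itself, replacing it by $\Gamma=(S_1U)^T(S_2U)+(S_2U)^T(S_1U)$ for independent copies $S_1,S_2$ (Lemma \ref{lem:decoup}); this is what later permits conditioning on one copy and is structurally different from a de la Pe\~na--Gin\'e step inside each cumulant.

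The deeper gap is in how the universality error is controlled. You propose expanding $\E\tr M^{2p}$ over closed walks and comparing term-by-term with the Gaussian model, and you acknowledge that carrying this out uniformly over exponentially many walk types is the main obstacle---but no estimate is offered for that step, and your generic cumulant heuristic (a factor $s^{1-k/2}/m$ per order) is exactly the kind of bound that does not improve as $\varepsilon\to 0$ and is what produced the $\tilde O(1/\varepsilon^6)$ sparsity you are trying to beat. The paper avoids walk enumeration altogether: it interpolates $S_i(t)=\sqrt t\,S_i+\sqrt{1-t}\,G_i$, derives a self-bounding differential inequality for $\E[\tr\Gamma(t)^{2q}]$ (Lemma \ref{lem:diffineq}) whose proof rests on a new trace inequality (Lemma \ref{lem:decouptraceineq}) in which a combined H\"older factor is replaced by a power of $\E[\tr\Gamma(t)^{2q}]$ via Lemma \ref{zssz}, and then solves the inequality to show the universality error is only $\polylog(d/\varepsilon\delta)$; combined with the decoupled Gaussian estimate $\E[\tr\Gamma(G_1,G_2)^{2q}]^{1/2q}=O(\sqrt{pm}\sqrt{pd})=O(pm\varepsilon)$ (Lemma \ref{cor:indgaussianmom}), this gives the requirement $pm\gtrsim q^2/\varepsilon+q^3$, i.e.\ $\tilde O(1/\varepsilon)$ non-zeros per column. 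Without this bootstrap (or some concrete substitute for your claimed per-collision gain), your plan has no identified route past the $1/\varepsilon^2$--$1/\varepsilon^6$ barrier, so as written it is a restatement of the goal rather than a proof; it also does not address the hypotheses the paper needs, namely $\varepsilon\delta>d/\exp(cd)$ (the source of the $1/\varepsilon\delta\le\poly(d)$ assumption) and the fact that $O(\log(d/\varepsilon\delta))$-wise independence suffices.
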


Many applications of subspace embeddings arise in matrix approximation \cite{woodruff2014sketching} where, given a large tall matrix $A\in\R^{n\times d}$, we seek a smaller $\tilde A\in\R^{m\times d}$ such that $\|\tilde A x\|=(1\pm\varepsilon)\|Ax\|$ for all $x\in\R^d$. 
Naturally, this can be accomplished with an $(\varepsilon,\delta,d)$-OSE matrix $\Pi\in\R^{m\times n}$, by computing $\tilde A=\Pi A$ in time $\tilde O(\nnz(A)/\varepsilon)$ and considering the column subspace of $A$. However, given direct access to $A$, one may hope to get true input sparsity time $O(\nnz(A))$ by leveraging the fact that the embedding need not be oblivious.

To that end, we adapt our subspace embedding construction, so that it can be made even sparser given additional information about the leverage scores of matrix $A$. The $i$th leverage score of $A$ is defined as the squared norm of the $i$th row of the matrix obtained by orthonormalizing the columns of $A$~\cite{drineas2006sampling}. We show that if the $i$th leverage score of $A$ is bounded by $l_i\in[0,1]$, then the $i$th column of $\Pi$ needs only $\max\{1,\tilde O(l_i/\varepsilon)\}$ non-zero entries. Since the leverage scores of $A$ can be approximated quickly \cite{drineas2012fast}, this leads to our new algorithm, Leverage Score Sparsified embedding with Independent Columns (LESS-IC), which is inspired by related constructions that use LESS with independent rows ~\cite{less-embeddings,newton-less,gaussianization}.

Just like recent prior works \cite{chepurko2022near,cherapanamjeri2023optimal,chenakkod2024optimal}, our algorithm for constructing a subspace embedding from a matrix~$A$ incurs a preprocessing cost of $O(\nnz(A)+d^\omega)$ required for approximating the leverage scores (here, $\omega$ is the matrix multiplication exponent). However, our approach significantly improves on these prior works in the $\poly(d/\varepsilon)$ embedding cost, leading to matching speedups in downstream applications such as constrained/regularized least squares~\cite{chenakkod2024optimal}.
\begin{theorem}[Fast Subspace Embedding]\label{t:fast}
    Given $A\!\in\!\R^{n\times d}$, $\varepsilon,\gamma\!\in\!(0,1)$ and $1/\varepsilon\!\leq\!\poly(d)$,~in
    \begin{align*}
        O\big(\gamma^{-1}\nnz(A) + d^\omega + \varepsilon^{-1} d^{2+\gamma}\polylog(d)\big)\quad\text{time}
    \end{align*}
    we can compute
    $\tilde A\in\R^{m\times d}$ such that $m=O(d/\varepsilon^2)$ and with probability $\geq 0.99$
  \begin{align*}
    (1-\varepsilon)\|Ax\|\leq \|\tilde A x\| \leq
    (1+\varepsilon)\|Ax\|\qquad\forall x\in\R^d.
  \end{align*}
\end{theorem}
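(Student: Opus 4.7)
My approach is a two-stage sketching scheme. In the first stage, I use a sparse oblivious subspace embedding $\Pi_0\in\R^{n_0\times n}$ with constant distortion and $O(1/\gamma)$ non-zeros per column (e.g., the OSNAP-style construction of Cohen~\cite{cohen2016nearly} with dimension $n_0=\tilde O(d^{1+\gamma})$) to compute $A^{(1)}=\Pi_0 A$ in time $O(\nnz(A)/\gamma)$. Because $\Pi_0$ preserves the column geometry of $A$ up to a constant factor, any subspace embedding of $A^{(1)}$ composes into one for $A$. I then approximate the leverage scores $\tilde l_i$ of $A^{(1)}$ via a standard two-level sketch-and-solve in the style of~\cite{drineas2012fast}: further sketch $A^{(1)}$ down to size $\tilde O(d)\times d$ and extract a Cholesky factor $R$ of its Gram matrix in $O(d^\omega)$ time, then use an $O(\log d)$-dimensional Johnson--Lindenstrauss matrix $G$ to batch-evaluate $\tilde l_i=\|A^{(1)}_i R^{-1} G\|^2$ in $\tilde O(d^{2+\gamma})$ additional time. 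These $\tilde l_i$ are constant-factor approximations of the true leverage scores of $A^{(1)}$ and satisfy $\sum_i \tilde l_i=O(d)$.

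\textbf{Applying LESS-IC.} In the second stage, I construct the LESS-IC matrix $\Pi\in\R^{m\times n_0}$ with $m=\Theta(d/\varepsilon^2)$ rows by independently placing, in each column $i$, $s_i=\max\{1,\lceil C\tilde l_i\log(d)/\varepsilon\rceil\}$ scaled random $\pm 1/\sqrt{s_i}$ entries at uniformly random positions. Since $\sum_i s_i\leq n_0+\tilde O(d/\varepsilon)$ and each row of $A^{(1)}$ has $d$ entries, computing $\tilde A=\Pi A^{(1)}$ by iterating over non-zeros of $\Pi$ costs $O(d\cdot(n_0+\tilde O(d/\varepsilon)))=\tilde O(d^{2+\gamma}/\varepsilon)$. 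Summing across the two stages gives total cost $O(\gamma^{-1}\nnz(A)+d^\omega+\varepsilon^{-1}d^{2+\gamma}\polylog(d))$, and composing the two sketches yields the advertised $(1\pm\varepsilon)$-distortion subspace embedding of $A$.

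\textbf{Main obstacle.} The principal technical hurdle is proving that LESS-IC with leverage-score-dependent column sparsity $s_i\propto \tilde l_i/\varepsilon$ is a subspace embedding for the column span of $A^{(1)}$ with high probability. This requires generalizing the cumulant/decoupling/trace-inequality framework used for Theorem~\ref{t:ose} from the uniform-sparsity setting to this non-uniform, leverage-score-weighted regime. The key structural insight is that choosing $s_i$ proportional to $\tilde l_i/\varepsilon$ exactly equalizes each row's contribution to the Gram matrix $A^{(1)\top}\Pi^\top\Pi A^{(1)}$, so that after a row-weighted change of variables the concentration analysis reduces to an analogue of Theorem~\ref{t:ose}. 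Some care is needed to handle rows with $\tilde l_i\lesssim\varepsilon/\log d$ (which receive $s_i=1$ and thus behave like leverage-score samples) versus higher-leverage rows (which receive a denser sketch column); once this concentration bound is established, the running-time accounting is routine.
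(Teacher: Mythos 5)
There is a genuine gap, and it is in the very first step: you use the first-stage sketch $\Pi_0$ not just to estimate leverage scores but as part of the final embedding, $\tilde A=\Pi\,\Pi_0 A$. Distortions compose multiplicatively, so if $\Pi_0$ only preserves $\|Ax\|$ up to a constant factor (which is all an $n_0=\tilde O(d^{1+\gamma})$-row embedding can guarantee, since any $(1\pm\varepsilon)$ subspace embedding needs $\Omega(d/\varepsilon^2)$ rows), then $\tilde A$ is only a constant-distortion embedding of $A$, not the claimed $(1\pm\varepsilon)$ one. You cannot repair this within the stated time budget: making $\Pi_0$ itself $(1\pm O(\varepsilon))$-accurate forces $n_0=\Omega(d/\varepsilon^2)$ rows and, for any sparse OSE accurate to $\varepsilon$, column sparsity $\tilde\Omega(1/\varepsilon)$, so computing $\Pi_0 A$ already costs $\tilde\Omega(\nnz(A)/\varepsilon)$, exceeding the $O(\gamma^{-1}\nnz(A))$ term. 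This is exactly the barrier that motivates the paper's construction: the only $\varepsilon$-accurate map that is allowed to touch $A$ is one that reads each nonzero of $A$ only $O(1)$ times, which is what LESS-IC provides since most of its columns carry a single nonzero.

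For contrast, the paper's proof never sketches $A$ down before the final embedding. It computes only coarse $(\beta_1,\beta_2)=(O(n^\gamma),O(1))$-approximate leverage scores of $A$ itself (Lemma 5.1 of \cite{chenakkod2024optimal}, adapted from \cite{chepurko2022near}) in $O(\gamma^{-1}(\nnz(A)+d^2)+d^\omega)$ time, then applies LESS-IC (Theorem \ref{t:less-ic}) directly to $A$; the coarseness $\beta_1=O(n^\gamma)$ inflates only the sublinear term, giving cost $O(\nnz(A)+n^{\gamma}d^2\polylog(d/\varepsilon)/\varepsilon)$, and a final case analysis (whether $d\ge n^{0.1}$, using $\varepsilon\ge\sqrt{d/n}$ w.l.o.g.) converts $n^\gamma$ into $d^{O(\gamma)}$ or absorbs the term into $O(\gamma^{-1}\nnz(A))$. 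A secondary remark: even setting the composition issue aside, your second-stage concentration claim for leverage-score-weighted column sparsity is stated as an "obstacle" rather than proved; it is essentially Theorem \ref{t:less-ic} of the paper, so invoking that result would be legitimate, but the multiplicative loss from $\Pi_0$ would still defeat the $(1\pm\varepsilon)$ guarantee.
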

This is a direct improvement over the previous best known runtime for constructing an optimal subspace embedding \cite{chenakkod2024optimal}, which suffers an additional $\tilde O(d^{2+\gamma}/\varepsilon^6)$ cost due to their sub-optimal sparsity. Remarkably, our result is also the first to achieve $\tilde O(d^{2+\gamma}/\varepsilon)$ dependence even if we allow a sub-optimal dimension, i.e., $m=O(d\log(d)/\varepsilon^2)$. Here, the previous best time \cite{chepurko2022near,cherapanamjeri2023optimal} has an additional $\tilde O(d^{2+\gamma}/\varepsilon^2)$ cost, due to using a two-stage leverage score sampling scheme in place of a sparse embedding matrix. Our new LESS-IC embedding is crucial in achieving the right dependence on $\varepsilon$, as neither of the previous constructions appear capable of overcoming the $\Omega(d^{2+\gamma}/\varepsilon^2)$ barrier.

As an example application of our results, we show how our fast subspace embedding construction can be used to speed up reductions for a wide class of optimization problems based on constrained or regularized least squares regression, including Lasso regression \cite{bourgain2015toward}. The following corollary follows immediately from Theorem \ref{t:fast}, and is a direct improvement over Theorem 1.8 of \cite{chenakkod2024optimal} in terms of the runtime dependence on $\epsilon$ from $\tilde O(d^{2+\gamma}/\epsilon^6)$ to $\tilde O(d^{2+\gamma}/\epsilon)$, while achieving a matching $O(d/\epsilon^2)\times d$ reduction.
\begin{corollary}[Fast reduction for constrained least squares]\label{t:reduction}
   Given $A\in\R^{n\times d}$, $b\in\R^n$, $\epsilon>0$, function $g:\R^d\rightarrow\R_{\geq 0}$ and set
    $\mathcal C\subseteq\R^d$ consider an $n\times d$  problem $\mathrm{LS}_{\mathcal C,g}(A,b,\epsilon)$:
    \begin{align*}
    \text{Find $\tilde x$ such that}\quad f(\tilde x)\leq (1+\epsilon)\min_{x\in\mathcal C}
      f(x),\quad\text{where}\quad
      f(x) = \|Ax-b\|_2^2+ g(x).
    \end{align*}
    There is an algorithm that reduces this problem to an
    $O(d/\epsilon^2)\times d$ instance $\mathrm{LS}_{\mathcal C,g}(\tilde A,\tilde
    b,0.1\epsilon)$ in $O(\gamma^{-1}\nnz(A) + d^\omega +
    \epsilon^{-1}d^{2+\gamma}\polylog(d))$  time.
  \end{corollary}

\section{Related Work}

Subspace embeddings have played a central role in the area of randomized linear algebra ever since the work of Sarlos \cite{sarlos2006improved} (for an overview, see the following surveys and monographs \cite{woodruff2014sketching,drineas2016randnla,martinsson2020randomized,derezinski2024recent}). Initially, these approaches focused on leveraging fast Hadamard transforms \cite{ailon2009fast,tropp2011improved} to achieve improved time complexity for linear algebraic tasks such as linear regression and low-rank approximation. Clarkson and Woodruff \cite{clarkson2013low} were the first to propose a sparse subspace embedding matrix, the CountSketch, which has exactly one non-zero entry per column but does not recover the optimal embedding dimension guarantee. Before this, the idea of using a sparse random matrix for dimensionality reduction was successfully employed in the context of Johnson-Lindenstrauss embeddings \cite{dasgupta2010sparse,kane2014sparser}, which seek to preserve the geometry of a finite set, as opposed to an entire subspace. 

In addition to the aforementioned efforts in improving sparse subspace embeddings \cite{clarkson2013low,meng2013low,nelson2013osnap,bourgain2015toward,cohen2016nearly,chenakkod2024optimal}, some works have aimed to develop fast subspace embeddings that achieve optimal embedding dimension either without sparsity \cite{chepurko2022near,cherapanamjeri2023optimal}, under additional assumptions \cite{cartis2021hashing}, or with one-sided embedding bounds \cite{tropp2025comparison}. Our time complexity result, Theorem \ref{t:fast}, improves on all of these in terms of the dependence on $\varepsilon$, thanks to a combination of our new analysis techniques and the new LESS-IC construction.

\section{Main Results}

In this section, we define the subspace embedding constructions used in our results, and provide detailed statements of our theorems.

As is customary in the literature, we shall work with an equivalent form of the subspace embedding guarantee from Definition \ref{def:OSE}, which frames this problem as a characterization of the extreme singular values of a class of random matrices. Namely, consider a deterministic $n\times d$ matrix $U$ with orthonormal columns that form the basis of a $d$-dimensional subspace~$T$. Then, a random matrix $\Pi\in\R^{m\times n}$ is an $(\varepsilon,\delta,d)$-subspace embedding for $T$ if and only if all of the singular values of the matrix $\Pi U$ lie in $[1-\varepsilon,1+\varepsilon]$ with probability $1-\delta$, i.e., 
\begin{align}
\Pr(1-\varepsilon\leq s_{\min}(\Pi U)\leq s_{\max}(\Pi U)\leq 1+\varepsilon)\geq 1-\delta,\label{eq:ose-equiv}
\end{align}
where $s_{\min}$ and $s_{\max}$ denote the smallest and largest singular values. To ensure that $\Pi$ is an oblivious subspace embedding, we must therefore ensure \eqref{eq:ose-equiv} for the family of all random matrices of the form $\Pi U$, where $U$ is any $n\times d$ matrix with orthonormal columns.

\subsection{Oblivious Subspace Embeddings}

Our subspace embedding guarantees are achieved by a family of OSEs which have a fixed number of non-zero entries in each column, a key property that was also required of sparse OSE distributions called OSNAP described by Nelson and Nguyen \cite{nelson2013osnap}. As we explain later, our analysis techniques apply to other natural families of sparse embedding distributions, including those with i.i.d.~entries \cite{achlioptas2003database}, however the OSNAP-style construction is crucial for achieving the near-optimal sparsity $s=\tilde O(1/\varepsilon)$.

In our construction of the $m\times n$ OSE matrix $\Pi$, we start by defining an unscaled version of the matrix, called $S$, which has entries in $\{-1,0,1\}$. We then scale $S$ to appropriately normalize the entry-wise variances, obtaining $\Pi$.
Concretely, we wish to obtain an $m \times n$ sparse random matrix $S$ which has exactly $s$ non-zero $\pm1$ entries in each column. Assume $s$ exactly divides $m$. Then we can divide each column of $S$ into $s$ subcolumns and randomly populate one entry in each subcolumn by a Rademacher random variable (see Figure \ref{fig:osnap}). We call this family of distributions (unscaled) OSNAP, carrying over Nelson and Nguyen's terminology (technically, their definition is somewhat broader than ours).

\begin{figure}[h!]
    \centering
    \begin{tikzpicture}[scale=1]
      \pgfmathsetmacro{\a}{1.4};
      \pgfmathsetmacro{\b}{2.8};
      \pgfmathsetmacro{\c}{3.2};
      \pgfmathsetmacro{\d}{4};
      \pgfmathsetmacro{\aa}{0.2};
      \pgfmathsetmacro{\aaa}{0};
      \pgfmathsetmacro{\bbb}{0.8};
      \pgfmathsetmacro{\ccc}{1.6};
      \pgfmathsetmacro{\ddd}{2.6};
      \pgfmathsetmacro{\eee}{3.8};
      \pgfmathsetmacro{\fff}{4.8};
      \draw (0, -1) rectangle (5,2);
      \draw (2.5,2.25) node {\mbox{\scriptsize Embedding matrix $\Pi$}};

    \draw (\ccc,-1) rectangle (\ccc + 0.2,0); 
    \draw (\ccc+0.1,-0.3) node {$*$};
    \draw (\ccc,0) rectangle (\ccc + 0.2,1);
    \draw (\ccc+0.1,0.5) node {$*$};
    \draw (\ccc,1) rectangle (\ccc + 0.2,2);
    \draw (\ccc+0.1,1.9) node {$*$};

\end{tikzpicture}
    \caption{An example of a column divided into $s=3$ subcolumns with each subcolumn having exactly one non-zero entry in a random position.}
    \label{fig:osnap}
\end{figure}

Each non-zero entry in the matrix $S$ can be identified by a tuple $(l, \gamma) \in [n]\times [s]$ where $l$ identifies the column of the non-zero entry and $\gamma$ is the index of the entry in that column. Thus the $(l,\gamma)\textsuperscript{th}$ non-zero entry in $S$ is located in column $l$ and row $\mu_{(l,\gamma)}$, where $\mu_{(l,\gamma)}$ is a uniformly chosen integer from the interval $[(m/s)(\gamma-1)+1:(m/s)\gamma]$. For example, the $(1,1)\textsuperscript{th}$ non-zero entry in $S$ is located in column $1$ and some row in the interval $[1:m/s]$. An $m \times n$ matrix with a non-zero entry in column $l$ and row $\mu_{(l,\gamma)}$ is given by $e_{\mu_{(l,\gamma)}}e_l^T$, where $e_{\mu_{(l, \gamma)}}$ and $e_l$ represent standard basis vectors in $\R^m$ and $\R^n$ respectively, and for $S$ we wish to place a random sign $\xi_{(l,\gamma)}$ at this position. This motivates our formal definition for OSNAP,

\begin{definition}[OSNAP]\label{def:osnap} 
An $m \times n$ random matrix $S$ is called an unscaled oblivious sparse norm-approximating projection with $K$-wise independent subcolumns ($K$-wise independent unscaled OSNAP) with parameters $p, \varepsilon, \delta \in (0,1]$ such that $s=pm$ divides $m$ if,
\[ S = \sum_{l=1}^n \sum_{\gamma=1}^s \xi_{(l,\gamma)} e_{\mu_{(l, \gamma)}} e_l ^\top \]
where,
\begin{itemize}
    \item $\{ \xi_{(l,\gamma)} \}_{l \in [n], \gamma \in [s]}$ is a collection of $K$-wise independent Rademacher random variables.
    \item $\{ \mu_{(l,\gamma)} \}_{l \in [n], \gamma \in [s]}$ is a collection of $K$-wise independent random variables such that each $\mu_{(l,\gamma)}$ is uniformly distributed in $[(m/s)(\gamma-1)+1:(m/s)\gamma]$.
    \item The collection $\{ \xi_{(l,\gamma)} \}_{l \in [n], \gamma \in [s]}$ is independent from the collection $\{ \mu_{(l,\gamma)} \}_{l \in [n], \gamma \in [s]}$.
\end{itemize}

In this case, $\Pi = (1/\sqrt{pm})S$ is called a $K$-wise independent OSNAP with parameters $p, \varepsilon, \delta$. In addition, if all the random variables in the collections $\{ \xi_{(l,\gamma)} \}_{l \in [n], \gamma \in [s]}$ and $\{ \mu_{(l,\gamma)} \}_{l \in [n], \gamma \in [s]}$ are fully independent, then $S$ is called a fully independent unscaled OSNAP and $\Pi$ is called a fully independent OSNAP.
\end{definition}

Thus, each column of the OSNAP matrix $\Pi$ has $s = pm$ many non-zero entries, and the sparsity level can be varied by setting the parameter $p \in [0,1]$ appropriately. With the distribution formally defined, we now provide the full statement of our subspace embedding guarantee for OSNAP,

\begin{restatable}[Subspace Embedding Guarantee for OSNAP]{theorem}{osnapmainthm}
\label{t:ose-full}
Let $\Pi = (1/\sqrt{pm})S$ be an $m \times n$ matrix distributed according to the $8 \lceil\log (\frac{d}{\varepsilon \delta})\rceil$-wise independent OSNAP distribution with parameter $p$. Let $U$ be an arbitrary $n \times d$ deterministic matrix such that $U^\top U=I$. Then, there exist positive constants $c_{\ref*{t:ose-full}.1}$ and $c_{\ref*{t:ose-full}.2}$ such that for any $0 < \delta, \varepsilon < 1$ and $d>10$,  we have 
\begin{align*}
\Pb \left( 1 - \varepsilon  \leq s_{\min}(\Pi U)   \leq s_{\max}(\Pi U) \leq 1 + \varepsilon \right) \geq 1-\delta
\end{align*}
if the embedding dimension satisfies $m \ge c_{\ref*{t:ose-full}.1}  (d + \log(1/\delta\varepsilon))/\varepsilon^2$ and the sparsity $s=pm$ satisfies $s \ge \min\{c_{\ref*{t:ose-full}.2} (\log ^2(\frac{d}{\varepsilon\delta})/\varepsilon+\log^3(\frac{d}{\varepsilon\delta})),m\}$ non-zeros per column.
\end{restatable}

\begin{remark}
We note that if $1/\varepsilon$ is polynomial in $d/\delta$, i.e., $\varepsilon \ge \frac{1}{(d/\delta)^K}$ for some absolute constant $K\geq 1$, then the $\log(1/\varepsilon)$ term in $\log(d/\varepsilon\delta) = \log(d/\delta) + \log(1/\varepsilon)$ is dominated by $\log(d/\delta)$. In this case, our requirement will become
\begin{align*}pm \ge \min \left\{C(K) \paren*{\frac{(\log (d/\delta))^2}{\varepsilon}+(\log (d/\delta))^3},m \right\}
\end{align*}
for some constant $C(K)$ depending only on $K$.
A weaker lower bound on $\varepsilon$, $\varepsilon > 1/e^d$ is sufficient to reduce the requirement on $m$ to:
\[ m \ge 2c_{\ref*{t:ose-full}.1}  \frac{d + \log(1/\delta)}{\varepsilon^2}.   \]

\end{remark}

This is a direct improvement over Theorem 1.2 of  \cite{chenakkod2024optimal}, which requires sparsity $s\geq c\log^4(d/\delta)/\varepsilon^6$ where $c$ is an absolute constant, with the same condition on $m$. The primary gain lies in the polynomial dependence on $1/\varepsilon$, but we note that our result also achieves a better logarithmic dependence on $d$, which means that an improvement is obtained even for $\varepsilon=\Theta(1)$.

Our techniques can be used to obtain a similar result for a simple OSE model with i.i.d.~sparse Rademacher entries \cite{achlioptas2003database}, which was also considered by \cite{chenakkod2024optimal}. However, in this case, we need an additional requirement of $s=pm \ge c\log (\frac{d}{\varepsilon\delta})/\varepsilon^2$ for the sparsity (see Section \ref{sec:oseieproof} for details; this is again a direct improvement over a result of \cite{chenakkod2024optimal}).
\begin{remark}
The $1/\varepsilon^{2}$ factor in the column–sparsity of an OSE model with i.i.d. entries is unavoidable.  
To see why, let
\[
  U=\begin{bmatrix}I_{d}\\[2pt] 0\end{bmatrix},
  \qquad
  \Pi=\frac{1}{\sqrt{pm}}\;S ,
\]
and note that
$\sigma_{\min}(\Pi U),\sigma_{\max}(\Pi U)\in[1-\varepsilon,\,1+\varepsilon]$
forces, for every $j\le d$,
\begin{equation}\label{oseielower}
  \bigl|\|\Pi e_j\|_{2}^{2}-1\bigr|
  =\Bigl|\tfrac{N_j}{pm}-1\Bigr|
  \le\varepsilon,
  \qquad
  N_j:=\operatorname{nnz}(S e_j)\sim\operatorname{Binomial}(m,p).
\end{equation}
 
Set
\[
  Z:=\frac{N_j-pm}{\sqrt{mp(1-p)}} ,
  \qquad
  a:=\varepsilon\,\sqrt{\frac{mp}{1-p}}\le\sqrt2\,\varepsilon\sqrt{mp}\quad( \text{for } p\le\tfrac12).
\]
Condition \ref{oseielower} is equivalent to $|Z|\le a$.  
With $F_Z(x)=\Pr[Z\le x]$ and $\Phi$ the standard normal cumulative distribution function,
the Berry Esseen theorem gives
\[
  \sup_{x\in\mathbb R}|F_Z(x)-\Phi(x)|\le\frac{6}{\sqrt{mp}}.
\]
Hence
\[
  \Pr\bigl[|Z|\le a\bigr]
  \;=\;F_Z(a)-F_Z(-a)
  \;\le\;\bigl(\Phi(a)-\Phi(-a)\bigr)+\frac{12}{\sqrt{mp}}.
\]
Using $\Phi(a)-\Phi(-a)=2\int_{0}^{a}\!\phi(t)\,dt\le a/\sqrt{\pi}$ and the bound on $a$, we have
\[
  \Pr\left( \bigl|\|\Pi e_j\|_{2}^{2}-1\bigr| \le \varepsilon \right)
  \;\le\;
  \frac{a}{\sqrt{\pi}}+\frac{12}{\sqrt{mp}}
  \;\le\;
  \frac{\sqrt2}{\sqrt{\pi}}\varepsilon\sqrt{mp}+
  \frac{12}{\sqrt{mp}}
\]

By general lower bounds for OSE, we know that, when $\varepsilon \to 0$, we need $pm \to \infty$ and therefore so $\frac{12}{\sqrt{mp}} \to 0$.

Therefore, for small enough $\varepsilon$, if $pm<c/\varepsilon^{2}$ with $c:=\frac{1}{81}$, the right–hand side is $<\tfrac13$.
Thus any OSE-IE that succeeds with constant probability must satisfy $pm=\Omega\!\bigl(\varepsilon^{-2}\bigr)$.
\end{remark}

\subsection{Characterization via a Moment Property}
\label{s:osemoments}

Our proof techniques for Theorem \ref{t:ose-full} are based on the moment method, and thus, they naturally imply the following slightly stronger moment-based characterization of an oblivious subspace embedding, which was proposed by \cite{cohen2016optimal} as an extension of the corresponding moment-based characterization of a Johnson-Lindenstrauss embedding \cite{kane2014sparser}.
\begin{definition}\label{d:osemoments}
    A distribution $\mathcal{D}$ over $\mathbb{R}^{m \times n}$ has
$(\varepsilon,\delta,d,\ell)$-OSE moments if, for all matrices
$U \in \mathbb{R}^{n \times d}$ with orthonormal columns,
\begin{align*}
  \E_{\Pi \sim \mathcal{D}}
  \bigl\|(\Pi U)^{T}(\Pi U) - I\bigr\|^{\ell}
  < \varepsilon^{\ell}\delta.
\end{align*}
\end{definition}
Note that a simple application of Markov's inequality recovers the guarantee in Definition~\ref{def:OSE} from the $(\varepsilon,\delta,d,\ell)$-OSE moments property with any $\ell\geq 1$. Moreover, \cite{cohen2016optimal} showed that this moment-based OSE characterization implies several other desirable guarantees of embedding matrices in the context of approximate matrix multiplication, generalized regression and low-rank approximation.

As an immediate consequence of our analysis, we obtain the following OSE moment guarantee for the OSNAP distribution.

\begin{corollary}\label{cor:osemoments}
    Let $\Pi$ be an $m \times n$ matrix with an OSNAP distribution having sparsity~$s$. Let $0 < \delta, \varepsilon < 1$ and $d>10$. Then $\Pi$ has $(\varepsilon, \delta, d, \ell)$-OSE moments with $\ell = 16\log(\frac{d}{\varepsilon \delta})$ when $m \ge c_{\ref*{cor:osemoments}.1}  (d + \log(1/\delta\varepsilon))/\varepsilon^2$ and  $s \geq \min\{c_{\ref*{cor:osemoments}.2} (\log ^2(\frac{d}{\varepsilon\delta})/\varepsilon+\log^3(\frac{d}{\varepsilon\delta})),m\}$.
\end{corollary}
\begin{remark}
$\Pi$ can be applied to a matrix $A$ in time $O(\nnz(A)(\log^2(\frac{d}{\varepsilon\delta})/\varepsilon + \log^3(\frac{d}{\varepsilon\delta})))$.  
As noted by \cite[Remark 3]{cohen2016optimal}, such runtimes can be further refined by chaining together several embeddings with an OSE moment property. For example, \cite{cohen2016nearly} showed that  OSNAP with $m=O(d\log(d/\delta)/\varepsilon^2)$ and $s=O(\log(d/\delta)/\varepsilon)$ has $(\varepsilon,\delta,d,\log(d/\delta))$-OSE moments. Thus, letting $\varepsilon=\Theta(1)$ for simplicity, we can combine a $O(d\log (d/\delta))\times n$ OSNAP matrix $\Pi_1$ having sparsity $s=O(\log(d/\delta))$ together with a $O(d+\log(1/\delta))\times O(d\log(d/\delta))$ OSNAP matrix having sparsity $s=O(\log^3(d/\delta))$ to obtain $\Pi=\Pi_2\Pi_1$ with $(\Theta(1),\delta,d,\log(d/\delta))$-OSE moments which can be applied to a matrix $A\in\R^{n\times d}$ in time $O(\nnz(A)\log(d/\delta) + d^2\log^4(d/\delta))$.
\end{remark}

\subsection{Leverage Score Sparsified Embedding with Independent Columns}
In a related problem, we seek to embed a subspace given by a fixed $U \in \R^{n \times d}$, with information about the squared row norms of $U$ being used to define the distribution of non-zero entries in $\Pi$. Such distributions for $\Pi$ are called non-oblivious (a.k.a.~data-aware) subspace embeddings. Previous work \cite{chenakkod2024optimal} has dealt with one such family of distributions termed LESS embeddings \cite{less-embeddings,newton-less,gaussianization}, showing that they require $\tilde O(1/\varepsilon^4)$ non-zero entries \textit{per row} of $\Pi$ to obtain an $\varepsilon$-embedding guarantee. Since the embedding matrix is very wide, this leads to a much sparser embedding (sparser than any OSE) that can be applied in time sublinear in the input size, leading to fast subspace embedding algorithms. 

In this work, we show that our new techniques  also extend to LESS embeddings and enable us to prove sharper sparsity estimates than~\cite{chenakkod2024optimal}. To fully leverage our approach, we define a new type of sparse embedding (LESS-IC), which can be viewed as a cross between CountSketch and LESS. Here, IC stands for independent columns. At a high level, the CountSketch part ensures that we can use our decoupling method to achieve optimal dependence on $1/\varepsilon$, while the LESS part enables adaptivity to a fixed subspace.

Specifically, a LESS-IC embedding matrix $\Pi$ has a fixed number of non-zero entries in each column, chosen so that it is proportional to the leverage score (i.e. the squared row norm) of the corresponding row of $U$. This is achieved by modifying the OSNAP distribution such that the number of subcolumns is no longer the same in each column. For columns corresponding to very small leverage scores, we only have one ``subcolumn''. Thus, each column has at least one non-zero~entry. This means that the cost of applying LESS-IC to an $n\times d$ matrix $A$ can no longer be sublinear (like it can in the existing LESS embedding constructions), but rather has a fixed linear term of $O(\nnz(A))$, plus an additional sublinear term. Given that the preprocessing step of approximating the leverage scores has to take at least $\nnz(A)$ time, the linear term in the cost of applying LESS-IC is negligible.

To generate an embedding matrix with the LESS-IC distribution, it suffices to have a good enough approximation for the leverage scores of the matrix $U$, in the following sense.

\begin{definition}[Approximate Leverage Scores]\label{def:apprls}
    Given a matrix $U\in\R^{n\times d}$ with orthonormal columns and $\beta_1 \ge 1, \beta_2 \ge 1$, a tuple $(l_1, \ldots, l_n) \in [0,1]^n$ of numbers are $(\beta_1,\beta_2)$-approximate leverage scores for $U$ if, for $1\leq i\leq n$,
    \begin{align*}
        \frac{\norm{e_i^\top U}^2}{\beta_1} \leq l_i \qquad\text{and}\qquad
    \sum_{i=1}^n l_i \leq \beta_2 \sum_{i=1}^n \norm{e_i^\top U}^2 = \beta_2 d.
    \end{align*}
    We say that the numbers $(l_1, \ldots, l_n) \in [0,1]^n$ are $\beta$-approximations of the leverage scores (i.e. squared row norms) of $U$ with $\beta=\beta_1\beta_2$.
\end{definition}

To see how approximate leverage scores determine the distribution of entries in the LESS-IC distribution, let us first consider a simpler distribution, LESS-IE from \cite{chenakkod2024optimal}, based on a similar construction first proposed by \cite{less-embeddings}. Here, we once again start by defining an unscaled matrix $S$, which is then normalized to obtain the subspace embedding matrix $\Pi$.

\begin{definition}[LESS-IE]\label{def:lessindent}
An $m \times n$ random matrix $S$ is called an unscaled leverage score sparsified embedding with independent entries (unscaled LESS-IE), and also $\Pi = (1/\sqrt{pm})S$ is called a LESS-IE, corresponding to $(\beta_1, \beta_2)$-approximate leverage scores $(z_1,...,z_n)$ with parameter $p$, if $S$ has entries $s_{i,j}=\frac{1}{\sqrt{\beta_1  z_j}} \delta_{i,j} \xi_{i,j}$ where $\delta_{i,j}$ are independent Bernoulli random variables taking value 1 with probability $p_{ij}= \beta_1 z_j p$, whereas $\xi_{i,j}$ are i.i.d.~Rademacher random~variables. 
\end{definition}

In the LESS-IE model, we have $\beta_1 pm z_j $ many non-zero entries in  column $j$ in expectation. However, to achieve $1/\varepsilon$ dependency of the sparsity, we need to have \emph{exactly} $\beta_1 pm z_j $ many non-zero entries in the column in the LESS-IC model to fully take advantage of the error cancellation that occurs in our decoupling argument (See Section \ref{subsec:decouposnap} and Section \ref{subsec:oseiediag}). Though these sections deal with oblivious subspace embeddings, the same arguments still apply in the LESS case). This is done by modifying the OSNAP construction so that the size (and consequently, the number) of subcolumns is different across~columns.

Notice that to have $\beta_1 pm z_j$ many non-zero entries in column $j$, we would need $\beta_1 pm z_j$ many subcolumns in column $j$ each with one non-zero entry in a random position. This means that the size of each subcolumn needs to be $m/(\beta_1 pm z_j) = 1/(\beta_1pz_j)$. However, since $1/(\beta_1pz_j)$ may not be an integer, we consider subcolumns of size $ b_j := \max \{ \lfloor 1/(\beta_1pz_j)\rfloor, 1 \}$. 

In column $j$, we stack subcolumns of size $b_j$ until we fill up all the rows up to $m$. Let $s_j$ be the smallest number of subcolumns to do this. Then, it may happen that the row indices of the bottom-most subcolumn exceed $m$. For example, consider the distribution on the first column of $\Pi$ when $m=70$, and $b_1 = 15$. In this case $s_1 = 5$, so we can stack four subcolumns of size 15 and the $5\textsuperscript{th}$ subcolumn only spans row indices $[61:70]$. In each subcolumn, we randomly choose a row to place a non-zero entry, which would be a Rademacher random variable. (See Figure \ref{fig:lessic}). The non-zero entries are appropriately scaled so that all entries of the matrix have the same variance (See Section \ref{sec:lessproofs} for the full definition).

\begin{figure}[h!]
    \centering
    \begin{tikzpicture}[scale=1]
      \pgfmathsetmacro{\a}{1.4};
      \pgfmathsetmacro{\b}{2.8};
      \pgfmathsetmacro{\c}{3.2};
      \pgfmathsetmacro{\d}{4};
      \pgfmathsetmacro{\aa}{0.2};
      \pgfmathsetmacro{\aaa}{0};
      \pgfmathsetmacro{\bbb}{0.8};
      \pgfmathsetmacro{\ccc}{1.6};
      \pgfmathsetmacro{\ddd}{2.6};
      \pgfmathsetmacro{\eee}{3.8};
      \pgfmathsetmacro{\fff}{4.8};
      \draw (0,-0.8) rectangle (5,2);
      \draw (2.5,2.25) node {\mbox{\scriptsize Embedding matrix $\Pi$}};

  \draw (5.5,1) node {\mbox{$\times$}};
  \draw (7,2.25) node {\mbox{\scriptsize Matrix $U$}};  
  \draw[fill=red!20] (6,-3) rectangle (8,2);

    \draw (\aaa,-0.8) rectangle (\aaa + 0.2,-0.4); 
    \draw (\aaa+0.1,-0.5) node {$*$};
  \draw (\aaa,-0.4) rectangle (\aaa + 0.2,0.2); 
  \draw (\aaa+0.1,-0.3) node {$*$};
  \draw (\aaa,0.2) rectangle (\aaa + 0.2,0.8);
  \draw (\aaa+0.1,0.7) node {$*$};
  \draw (\aaa,0.8) rectangle (\aaa + 0.2,1.4);
  \draw (\aaa+0.1,1.3) node {$*$};
  \draw (\aaa,1.4) rectangle (\aaa + 0.2,2);
  \draw (\aaa+0.1,1.7) node {$*$};
    \draw[fill=blue!30] (6, 1.8 - \aaa) rectangle (8, 2 - \aaa);

    \draw (\bbb,-0.8) rectangle (\bbb + 0.2,-0.4); 
    \draw (\bbb+0.1,-0.7) node {$*$};
    \draw (\bbb,-0.4) rectangle (\bbb + 0.2,0.4);
    \draw (\bbb+0.1,0.3) node {$*$};
    \draw (\bbb,0.4) rectangle (\bbb + 0.2,1.2);
    \draw (\bbb+0.1,0.7) node {$*$};
    \draw (\bbb,1.2) rectangle (\bbb + 0.2,2);
    \draw (\bbb+0.1,1.9) node {$*$};
    \draw[fill=blue!30] (6, 1.8 - \bbb) rectangle (8, 2 - \bbb);

    \draw (\ccc,-0.8) rectangle (\ccc + 0.2,0); 
    \draw (\ccc+0.1,-0.3) node {$*$};
    \draw (\ccc,0) rectangle (\ccc + 0.2,1);
    \draw (\ccc+0.1,0.5) node {$*$};
    \draw (\ccc,1) rectangle (\ccc + 0.2,2);
    \draw (\ccc+0.1,1.9) node {$*$};
    \draw[fill=blue!30] (6, 1.8 - \ccc) rectangle (8, 2 - \ccc);

    \draw (\ddd,-0.8) rectangle (\ddd + 0.2,0.6);
    \draw (\ddd+0.1,-0.1) node {$*$};
    \draw (\ddd,0.6) rectangle (\ddd + 0.2,2);
    \draw(\ddd + 0.1 ,1.7) node {$*$};
    \draw[fill=blue!30] (6, 1.8 - \ddd) rectangle (8, 2 - \ddd);

    \draw (\eee,-0.8) rectangle (\eee + 0.2,-0.2); 
    \draw (\eee+0.1,-0.3) node {$*$};
    \draw (\eee,-0.2) rectangle (\eee + 0.2,2);
    \draw (\eee+ 0.1,-0.1) node {$*$};
    \draw[fill=blue!30] (6, 1.8 - \eee) rectangle (8, 2 - \eee);

    \draw (\fff,-0.8) rectangle (\fff + 0.2,2); 
    \draw (\fff + 0.1, 1.1) node {$*$};
    \draw[fill=blue!30] (6, 1.8 - \fff) rectangle (8, 2 - \fff);
    
  \draw (8.95,2.25) node {\mbox{\fontsize{7}{7}\selectfont leverage scores}};
  \foreach \i in {0,...,24}
  {
    \pgfmathtruncatemacro{\x}{\i^2};
    \draw[fill=darkgreen!30] (8.1, 0.2 * \i - 3)
    rectangle (8.1 + 0.0025 * \x, 0.2 * \i - 2.8);
  }
\end{tikzpicture}
    \caption{In the LESS-IC distribution, column $j$ is filled with $s_j$ many subcolumns, with the bottom-most subcolumn truncated to fit the size of $\Pi$. Each subcolumn has one non-zero entry. Notice that as the leverage scores decrease, the number of subcolumns decreases and the matrix becomes sparser. However, each column always has at least one non-zero entry.}
    \label{fig:lessic}
\end{figure}

For the LESS-IC distribution, we show the following subspace embedding guarantee. The structure of the proof is similar to the case of OSNAP, and only the specific expressions change due to the different distribution.

\begin{restatable}[Subspace Embedding Guarantee for LESS-IC]{theorem}{lessicmainthm}
\label{t:less-ic}
    Let $\Pi = (1/\sqrt{pm}) S$ be an $m \times n$ matrix distributed according to the $8 \lceil\log (\frac{d}{\varepsilon \delta})\rceil$-wise independent  LESS-IC distribution with parameter $p$ for some fixed $n\times d$ matrix $U$ satisfying $U^\top U=I$ with given $(\beta_1, \beta_2)$-approximate leverage scores.
Then, there exist positive constants $c_{\ref*{t:less-ic}.1}$ and $c_{\ref*{t:less-ic}.2}$ such that for any $0 < \varepsilon, \delta < 1$, and $d>10$, we have 
\begin{align*}
\Pb \left( 1 - \varepsilon  \leq s_{\min}(\Pi U)   \leq s_{\max}(\Pi U) \leq 1 + \varepsilon \right) \geq 1-\delta
\end{align*}
when $m \ge c_{\ref*{t:less-ic}.1}  \left( \frac{d +  \log^2(d/\delta)+\log(1/\varepsilon)}{\varepsilon^2} + \log^3(d/\delta)/\varepsilon \right) $ and 
$$ c_{\ref*{t:less-ic}.2} \max \left \{  \frac{(\log (d/{\varepsilon\delta}))^{2.5}}{\varepsilon}, (\log (d/{\varepsilon\delta}))^{3} \right \} \le pm \le m.$$ The matrix $\Pi$ has $O(n + \beta pmd)$ many non-zero entries and can be applied to an $n\times d$ matrix~$A$ in $O(\nnz(A) + \beta pmd^2)$ time, where $\beta=\beta_1\beta_2$ is the leverage score approximation factor.
\end{restatable}

\begin{remark}
    When $\delta = d^{-O(1)}$, we recover the optimal dimension $m=\Theta(d/\varepsilon^2)$ while showing that one can apply the LESS-IC embedding in time $O(\nnz(A)) + \tilde O(\beta d^2/\varepsilon)$. In comparison, \cite{chenakkod2024optimal} showed that a corresponding LESS-IE embedding can be applied in $\tilde O(\beta d^2/\varepsilon^6)$ time. Using our techniques, one could improve the runtime of LESS-IE to $\tilde O(\beta d^2/\varepsilon^2)$, but our new LESS-IC construction appears necessary to recover the best dependence on $1/\varepsilon$.
\end{remark}

\subsection{Fast Subspace Embedding (Proof of Theorem \ref{t:fast})}

Here, we briefly outline how our LESS-IC embedding yields a fast subspace embedding construction to recover the time complexity claimed in Theorem \ref{t:fast}. This follows analogously to the construction from Theorem~1.6 of \cite{chenakkod2024optimal}, and our improvement in the dependence on $1/\varepsilon$ compared to their result (from $1/\varepsilon^6$ to $1/\varepsilon$) stems from the improved sparsity of our LESS-IC embedding. 

The key preprocessing step for applying the LESS-IC embedding is approximating the leverage scores of the matrix $A$. Using Lemma 5.1 in \cite{chenakkod2024optimal} (adapted from Lemma 7.2 in~\cite{chepurko2022near}), we can construct coarse approximations of all leverage scores so that $\beta_1=O(n^\gamma)$ and $\beta_2=O(1)$ in time $O(\gamma^{-1}(\nnz(A)+d^2) + d^\omega)$. Applying LESS-IC (Theorem \ref{t:less-ic}) with these leverage scores and parameters $\beta_1,\beta_2$, computing $\Pi A$ takes $O(\nnz(A) + n^{\gamma} d^2\log^{3}(d/\varepsilon\delta)/\varepsilon)$, where $\nnz(A)$ comes from the fact that every column  of $\Pi$ has at least one non-zero, while the second term accounts for the additional $O(\beta d\log^{3}(d/\varepsilon\delta)/\varepsilon)$ non-zeros. 

Thus, if $d\geq n^c$ for, say, $c=0.1$, then we conclude the claim by appropriately scaling $\gamma$ by a constant factor. Now, suppose otherwise. First, note that without loss of generality we can assume that $\gamma<0.1$ (through scaling the time complexity by a constant factor), $\nnz(A)\geq n$ (by removing empty rows) and $\varepsilon\geq \sqrt{d/n}$ (because otherwise $m\geq n$ and we could use $\tilde A= A$). Thus, under our assumption that $d<n^c$, we have $n^{\gamma} d^2/\varepsilon \leq n^{0.5+\gamma+2c}\leq n^{0.8} \ll \nnz(A)$, and the time complexity is dominated by the $O(\gamma^{-1}\nnz(A))$ term.

Finally, we note that Corollary \ref{t:reduction} follows simply by constructing a subspace embedding $\Pi$ via Theorem \ref{t:fast} with respect to matrix $[A\mid b]$, and computing $\tilde A=\Pi A$, $\tilde b = \Pi b$. The proof of the claim is identical to the proof of Theorem 1.8 in \cite{chenakkod2024optimal}. Our improvement comes directly from the faster runtime of our subspace embedding construction.

\subsection{Outline of the Paper}

Section \ref{s:overview} provides a high level overview of the ideas used in the proofs of our main results, Theorem \ref{t:ose-full} and Theorem \ref{t:less-ic}. Section \ref{sec:proofsketch} provides a sketch of the proof of Theorem \ref{t:ose-full}, listing the main technical steps, leaving the full proof with all technical details to Section \ref{sec:osnap-proof}. The proof of Theorem \ref{t:less-ic} follows similarly and is covered in Section \ref{sec:lessproofs}. The subspace embedding guarantee for a sparse matrix with independent entries is proved in Section \ref{sec:oseieproof}. Section \ref{sec:prelim} contains some  basic facts from the existing literature that are used throughout the paper.

\subsection{Notation}
The following notation and terminology will be used in the paper. The notation $[n]$ is used for the set $\{1,2,...,n\}$ and the notation $\operatorname{P}([n])$ denotes the set of all partitions of $[n]$. Also, for two integers $a$ and $b$ with $a \le b$, we use the notation $[a:b]$ for the set $\{k \in \Z:a \le k \le b\}$. For $x \in \R$, we use the notation $\lfloor x \rfloor$ to denote the greatest integer less than or equal to $x$ and $\lceil x \rceil$ to denote the least integer greater than or equal to  $x$. In $\R^n$ (or $\R^m$ or $\R^d$), the $l$th coordinate vector is denoted by $e_l$. All matrices considered in this paper are real valued and the space of $m \times n$ matrices with real valued entries is denoted by $M_{m \times n}(\mathbb{R})$. Also, for a matrix $X \in M_{d \times d}(\mathbb{R})$, the notation $\Tr (X)$ denotes the trace of the matrix $X$, and $\tr (X) = \frac{1}{d} \Tr (X)$ denotes the normalized trace. We write the operator norm of a matrix $X$ as $\norm{X}$, and it is also denoted by $\norm{X}_{op}$ in some places where other norms appear for clarity. The spectrum of a matrix $X$ is denoted by $\spec(X)$.  The standard probability measure is denoted by $\mathbb{P}$, and the symbol $\mathbb{E}$ means taking the expectation with respect to this standard probability measure. To simplify the notation, we follow the convention from \cite{brailovskaya2022universality} and use the notation $\E [X]^{\alpha}$ for $(\E(X))^{\alpha}$, i.e., when a functional is followed by square brackets, it is applied before any other operations. The covariance of two random variables $X$ and $Y$ is denoted by $\cov(X,Y)$. The standard $L_q$ norm of a random variable $\xi$ is denoted by $\norm{\xi}_q$, for $1 \le q \le \infty$. Throughout the paper, the symbols $c_1, c_2, ...$, and $Const, Const', ...$ denote absolute constants.

\section{Main Ideas}
\label{s:overview}

We next outline our new techniques which are needed to establish the main results, Theorems \ref{t:ose-full} and \ref{t:less-ic}. Here, for notational convenience, we will refer to the unscaled random matrix $S$, as opposed to the subspace embedding matrix $\Pi=(1/\sqrt{pm})S$ (see Definition \ref{def:osnap}).

Note that due to the equivalent characterization of the OSE property in \eqref{eq:ose-equiv}, all we need to show is that singular values of $SU$ are clustered around $\sqrt{pm}$ at distance $O(\sqrt{pm} \varepsilon)$. In other words, we need to show that the difference between the spectrum of $SU$ and the spectrum of $\sqrt{pm}I_{d}$ is small, of the order $O(\sqrt{pm} \varepsilon)$. 

In all our models, the entries of $S$ are uncorrelated with mean $0$ and variance $p$, and therefore the entries of $SU$ are uncorrelated with uniform variance. If we consider a random matrix $G$ with Gaussian entries which keeps the covariance profile of the entries of $SU$, then this Gaussian random matrix $G$ has independent Gaussian entries with variance $p$. Using classical results about singular values of Gaussian random matrices, it can be shown that the singular values of $G$ are sufficiently clustered around $\sqrt{pm}$ with high probability for $m = \Omega(d/\varepsilon^2)$.  Thus, it suffices to find conditions under which the singular values of $SU$ are sufficiently close to the singular values of $G$. This is the phenomenon of universality whereby random systems show predictable (in this case Gaussian) behavior under certain limits.

\textbf{Failure of black-box universality.}
Recent work by Brailovskaya-van Handel \cite{brailovskaya2022universality} on universality for certain random matrix models developed tools to bound the distance between the spectrum of a random matrix model obtained as a sum of independent random matrices and the spectrum of a Gaussian random matrix with the same covariance profile. Using these tools, \cite{chenakkod2024optimal} achieved optimal embedding dimension $m=O(d/\varepsilon^2)$ for OSEs by using the bound in \cite[Theorem 2.6]{brailovskaya2022universality} to estimate the Hausdorff distance (a concept of distance between two subsets of $\R$; $A,B \subset \R$ are said to be $\varepsilon$-close in Hausdorff distance if $A$ is in the $\varepsilon$-neighborhood of $B$ and $B$ is in the $\varepsilon$ neighborhood of $A$) between the spectra of 
\begin{align*}
    \sym(SU)=\left[ {\begin{array}{*{20}{c}}
  {}&{{(SU)^T}} \\ 
  SU&{} 
\end{array}} \right] \quad \text{ and } \quad \sym(G)=\left[ {\begin{array}{*{20}{c}}
  {}&{{G^T}} \\ 
  G&{} 
\end{array}} \right].
\end{align*}

This distance is shown to be $(O(\sqrt{pm}))^{2/3}$, which is of order $\sqrt{pm}\varepsilon$ only when $pm$ has $1/\varepsilon^6$ dependence. Thus, \cite{chenakkod2024optimal} did not obtain the conjectured dependency of the sparsity on $\varepsilon$, which requires $pm$ to only have $1/\varepsilon$ dependency. To get better $\varepsilon$ dependency, we would either need a sharper bound on the Hausdorff distance, or have the distance decrease with $\varepsilon$. For example, if the $(O(\sqrt{pm}))^{2/3}$ bound was improved to $(O(\sqrt{pm}))^{1/2}$, we would only need $(\sqrt{pm})^{1/2} \le \sqrt{pm} \varepsilon$ which can be achieved when $pm$ has $1/\varepsilon^4 $ dependence. On the other hand, if the $(O(\sqrt{pm}))^{2/3}$ bound was improved to $(O(\sqrt{pm}))^{2/3} \varepsilon^{1/2}$, we would only need $pm$ to have $1/\varepsilon^3$ dependence.

\textbf{Key idea: Universality of centered moments.}
One can instead look at a different approach to characterize the clustering of singular values. To show that the singular values of $\Pi U$ are between $1 \pm \varepsilon$, it is enough to show that $\norm{(\Pi U)^T\Pi U - I_d} \le \varepsilon$ or $\norm{(S U)^TSU - pm\cdot I_d} \le pm \varepsilon$ (Note that $S=\sqrt{pm} \Pi$). One way to achieve this bound with high probability is to use the moment method, i.e., to show that (see proof of Theorem \ref{t:ose-full} in Section \ref{sec:proofsketch}):
$$\E\Big[\tr \big((SU)^T(SU) - pm \cdot I_{d}\big)^{2q}\Big]^{\frac{1}{2q}}=O(pm \varepsilon).$$  

In this case, standard calculations on Gaussian random matrices(see Lemma \ref{cor:gaussianmom}) show that $(\E[\tr (G^TG - pmI_{d \times d})^{2q}])^{\frac{1}{2q}} \le cpm\sqrt{\frac{d}{m}}=O(pm \varepsilon)$ when $m = \Omega(d/\varepsilon^2)$ and $G$ has the covariance profile of $SU$. So it is enough to show that 
$$\E\Big[\tr \big((SU)^T(SU) - pmI_{d}\big)^{2q}\Big]^{\frac{1}{2q}}-\E\Big[\tr \big(G^TG - pmI_{d}\big)^{2q}\Big]^{\frac{1}{2q}}=O(pm\varepsilon).$$
where we recall the notation $\E\Big[\tr \big(X\big)^{2q}\Big]^{\frac{1}{2q}}=\left (\E\tr\left ( X \right )^{2q}\right )^{1/(2q)}
$.

Now, \cite[Proposition 9.12]{brailovskaya2022universality} does take a similar approach of comparing $(SU)^T(SU) - pmI_{d}$ and $G^TG - pmI_{d}$, by relying on an interpolation argument, where one defines a mixture $S(t) = \sqrt{t} S+\sqrt{1-t}G$ and controls the change in the moments along the trajectory specified by $t\in[0,1]$.  Unfortunately, using that result gives a larger power of $pm$ in the bound than desired, resulting again in a worse $\varepsilon$ dependence.

One can also, by viewing $(SU)^T(SU) - pmI_{d} =\sum_{i=1}^m (U^Ts_is_i^TU-pI_{d})$, get a random matrix model which is a sum of independent random matrices (this is not true for OSNAP, but some other models of OSEs), and then compare $\E[\tr ((SU)^T(SU) - pm \cdot I_{d})^{2q}]^{\frac{1}{2q}}$ with $\E[\tr (H)^{2q}]^{\frac{1}{2q}}$ where $H$ is the Gaussian model for $(SU)^T(SU) - pmI_{d}$. This is the approach of  \cite[Proposition 9.15]{brailovskaya2022universality}, but it fails in obtaining the optimal embedding dimension $m = d/\varepsilon^2$.

\textbf{Key technique: Decoupling.} To overcome these obstacles, we develop a fresh analysis while still using the ideas of \cite{brailovskaya2022universality}. Our first step is to observe that due to the property of $S$ having a fixed number of non-zero entries in a column for the OSNAP distribution, all quadratic terms in $(SU)^T(SU) - pm\cdot I_d$ are square-free, and this allows us to use the decoupling technique to reduce the problem of controlling the moments of $(SU)^T(SU) - pm\cdot I_d$ to controlling the moments of $(S_1U)^T(S_2U) + (S_2U)^T(S_1U)$ where $S_1$ and $S_2$ are independent copies of $S$ (See proof of Lemma \ref{prop:momestdecoupmatrix} in Section \ref{sec:proofsketch}).

We still have to separate bounding $\E[\tr((S_1U)^T(S_2U)+ (S_2U)^T(S_1U) )^{2q}]^\frac{1}{2q}$ into two parts, bounding $\E[\tr(G_1^TG_2+G_2^TG_1 )^{2q}]^\frac{1}{2q}$ for the Gaussian model, and the difference 
\begin{align*}\E[\tr((S_1U)^T(S_2U)+ (S_2U)^T(S_1U))^{2q}]^\frac{1}{2q}-\E[\tr(G_1^TG_2+G_2^TG_1 )^{2q}]^\frac{1}{2q},\end{align*}
which is called the universality error.

By standard calculations, we have $\E[\tr(G_1^TG_2+G_2^TG_1)^{2q}]^\frac{1}{2q} \le c \sqrt{pm}\sqrt{pd} = O(pm \varepsilon)$ and the main task is still to bound the universality error. The advantage of the decoupling idea is that, informally speaking, since $S_1$ and $S_2$ are independent, we can condition on one of them, e.g., $S_1$. For fixed $S_1$, the random matrix $(S_1U)^T(S_2U)$ (where all randomness comes from $S_2$) can be viewed as a sum of independent random matrices, with the individual summands having moments of smaller order than the previous approach. We can then use an interpolation argument to bound the trace universality error for $q=\log(\frac{d}{\varepsilon \delta})$ as follows:
\begin{equation}\label{eq:decouplepolylogunivererror}
     \begin{aligned}
         \Big|\E\!\big[\tr((S_1U)^T(S_2U)+ (S_2U)^T(S_1U))^{2q}\big]^\frac{1}{2q}-\E\!\big[\tr(G_1^TG_2+G_2^TG_1 )^{2q}\big]^\frac{1}{2q}\Big|\le \polylog(\tfrac{d}{\varepsilon \delta}).\hspace{-3mm}
     \end{aligned}
 \end{equation}
 Notice that there is no $pm$ dependence on the right hand side. So our requirement that this quantity be bounded by $pm\varepsilon$ is satisfied when $pm \ge \polylog(\frac{d}{\varepsilon \delta})/\varepsilon$, achieving the conjectured $1/\varepsilon$ dependence.

Nevertheless, the conditioning argument cannot be done directly because
\begin{align*}
\hspace{-1mm}\E\!\Big[\tr\big((S_1U)^T(S_2U)+ (S_2U)^T(S_1U)\big)^{2q}\Big]^\frac{1}{2q}
\ne \E_{S_1}\!\!\bigg[\E_{S_2}\!\!\Big[\tr\big((S_1U)^T(S_2U)+ (S_2U)^T(S_1U)\big)^{2q}\Big]^\frac{1}{2q}\bigg].
\end{align*}

\textbf{Key technique: 2D interpolation via chain rule.}
So, instead we develop a new approach which incorporates the conditioning step directly into a two-dimensional interpolation argument, through the use of the chain rule (see Figure \ref{fig:two-parameter-interpolation}). Define
\[
S_1(t_1) = \sqrt{t_1}\,S_1 + \sqrt{1-t_1}\,G_1, 
\quad
S_2(t_2) = \sqrt{t_2}\,S_2 + \sqrt{1-t_2}\,G_2.
\]
We start from  $(G_1, G_2)$ at $(t_1,t_2) = (0,0)$ and move to $(S_1, S_2)$ at $(1,1)$, interpolating between the easier-to-analyze Gaussian matrices $(G_1,G_2)$ and the true random matrices $(S_1,S_2)$ of interest and controlling the changes in their moments (or the error terms) step by step.

\begin{figure}[]\label{fig:2dinterp}
		\centering
		\begin{tikzpicture}[>=stealth, scale=3.6, font=\small]  
			
			\draw[->] (-0.1,0) -- (1.2,0) node[right] {$t_1$};
			\draw[->] (0,-0.1) -- (0,1.2) node[above] {$t_2$};
			
			\fill[green!10] (0,0) rectangle (1,1);
			
			\draw (0,0) rectangle (1,1);
			
			\node[below left]  at (0,0)  {$(G_1,\; G_2)$};
			\node[below right] at (1,0)  {$(S_1,\; G_2)$};
			\node[above left]  at (0,1)  {$(G_1,\; S_2)$};
			\node[above right] at (1,1)  {$(S_1,\; S_2)$};
			
			\draw[dashed, thick] (0,0) -- (1,1);
			\node[above left, xshift=6pt, yshift=-6pt] 
			at (0.7,0.8) {\scriptsize diagonal $t_1 = t_2$};
			
			\coordinate (smallPoint) at (0.3,0.3);
			\fill (smallPoint) circle (0.02);
			\draw[->] (smallPoint) -- ++(0.09,0) node[right, xshift=-4pt] {\scriptsize $\frac{\partial}{\partial t_1}$};
			\draw[->] (smallPoint) -- ++(0,0.09) node[above, yshift=-4pt] {\scriptsize $\frac{\partial}{\partial t_2}$};
			
		\end{tikzpicture}
		\caption{
			Two-dimensional interpolation in $(t_1,t_2)\in[0,1]^2$, decomposed using the chain rule.}
		\label{fig:two-parameter-interpolation}
	\end{figure}

Defining $f(M_1,M_2)=\tr((M_1U)^T(M_2U)+(M_2U)^T(M_1U))^{2q}$, and applying the chain rule on the diagonal $t_1=t_2=t$, we obtain:
\begin{align*}
\frac{d}{dt} \mathbb{E}\bigl[f\bigl(S_1(t),S_2(t)\bigr)\bigr]
&=\frac{\partial}{\partial t_1}\E\bigl[ f\bigl(S_1(t_1),S_2(t_2)\bigr)\bigr]\Big|_{t_1=t,\,t_2=t}
+\frac{\partial}{\partial t_2}\E\bigl[ f\bigl(S_1(t_1),S_2(t_2)\bigr)\bigr]\Big|_{t_1=t,\,t_2=t}.
\end{align*}
By independence of $S_1$ and $S_2$, we can condition on $S_2$ when we bound the partial derivative $\frac{\partial}{\partial t_1}\E\bigl[f\bigl(S_1(t_1),S_2(t_2)\bigr)\bigr]\big|_{t_1, t_2=t}$, and do similar calculations for the other term. The benefit of doing this is that we can now fine tune the techniques of \cite{brailovskaya2022universality} to get a differential inequality (Lemma \ref{lem:diffineq}) that leads to inequality (\ref{eq:decouplepolylogunivererror}). In doing so, we are able to find the optimal bounds and exponents in the differential inequality.  

\section{Proof Sketch for the Oblivious Subspace Embedding} \label{sec:proofsketch}

We now sketch the proof of our main subspace embedding guarantee, Theorem \ref{t:ose-full} for OSNAP. The full proof can be found in Section \ref{sec:osnap-proof}. The proof of the subspace embedding guarantee for LESS-IC, Theorem \ref{t:less-ic} is similar and can be found in Section \ref{sec:lessproofs}.

\begin{proof}[Proof sketch of Theorem \ref{t:ose-full}]
    Let $X:= \frac{1}{\sqrt{pm}}SU$. We first assume that the collection of all the random variables 
$\{ \xi_{(l,\gamma)}, \mu_{(l,\gamma)} \}_{l \in [n], \gamma \in [s]}$ 
    in the unscaled OSNAP construction are fully independent, and later we will check what is the minimum independence needed.
    
    We observe that to prove the theorem, it is enough to show that
    \begin{align*} \Pb \left( \| X^TX - I_d \| \le \varepsilon \right) \ge 1-\delta. \end{align*}
We call the quantity $X^TX - I_d$ the embedding error. By Markov's inequality, we have
\begin{align*}
        \Pb \left( \| X^TX - I_d \| \ge \delta^{-\frac{1}{2q}} \E [d \tr (X^TX - I_d)^{2q}]^\frac{1}{2q} \right) &\le \delta
    \end{align*}
which, after simplification, becomes
\begin{align*}
        \Pb \left( \| X^TX - I_d \| \ge (d/\delta)^{\frac{1}{2q}} \E [ \tr (X^TX - I_d)^{2q}]^\frac{1}{2q} \right) &\le \delta.
    \end{align*}    
For $q> \log (d/\delta) $, we have
\begin{align*}
    (d/\delta)^{\frac{1}{2q}}=\exp\Big(\log(d/\delta)\frac{1}{2q}\Big) \le \exp\Big(\log(d/\delta)\frac{1}{2\log (\frac{d}{\delta})}\Big) \le {\sqrt{e}}.
\end{align*}
Therefore, we have
\begin{align*}
        \Pb \left( \| X^TX - I_d \| \ge \sqrt{e} \E [ \tr (X^TX - I_d)^{2q}]^\frac{1}{2q} \right) &\le \delta.
    \end{align*}

Thus we need to control moments of order $2q$ of the embedding error for $q>\log(d/\delta)$, and this is done in the following lemma.

\begin{osnaptracemom}[Trace Moments of Embedding Error for OSNAP]
    For $X$ as above, there exist constants $c_{\ref*{prop:momestdecoupmatrix}.1}, c_{\ref*{prop:momestdecoupmatrix}.2}, c_{\ref*{prop:momestdecoupmatrix}.3} > 0$ such that for $q \in \N$ satisfying $2 \le q \le m$, we have
\begin{align}
    \E[\tr(X^TX - I_d)^{2q}]^\frac{1}{2q} &\leq  \varepsilon,\nonumber\\
\text{when }\quad    m &\geq c_{\ref*{prop:momestdecoupmatrix}.1} \frac{d+q}{\varepsilon^2}  \label{ose-cond1}\\
    \text{and } \quad pm &\ge \bigg(\max\Big\{\frac{c_{\ref*{prop:momestdecoupmatrix}.2} q^{2}}{\varepsilon}, c_{\ref*{prop:momestdecoupmatrix}.3} q^3\Big\}\bigg)^{1+\frac{2}{q-2}}. \label{ose-cond2}    
\end{align} 
\end{osnaptracemom}

Applying Lemma \ref{prop:momestdecoupmatrix} (with appropriately adjusted $\varepsilon$) implies
\begin{align*}
        \Pb \left( \| X^TX - I_d \| \ge \varepsilon \right) &\le \delta
    \end{align*}
when combined with the previous calculations.

It remains to check that conditions \eqref{ose-cond1} and \eqref{ose-cond2} are satisfied for $q = \lceil 2\log (\frac{d}{\varepsilon \delta} )\rceil+2$ by requiring $m \ge c_{\ref*{t:ose-full}.1}  (d + \log(1/\delta\varepsilon))/\varepsilon^2$ and $s = pm \geq c_{\ref*{t:ose-full}.2} (\log ^2(\frac{d}{\varepsilon\delta})/\varepsilon+\log^3(\frac{d}{\varepsilon\delta}))$, and this is done in the full version of the proof in Section \ref{subsec:osnapfinal}. 

Note that the expression for $\E[\tr(X^TX - I_d)^{2q}]$ depends only on $2q$ fold products of the entries of $X$. So, the quantity $\E[\tr(X^TX - I_d)^{2q}]$ remains unchanged if we only assume that subsets of the entries of $X$ of size $2q$ are independent instead of arbitrary subsets of the entries of $X$ being independent. Since it suffices to choose $q=\lceil 2\log (\frac{d}{\varepsilon \delta} )\rceil+2$, we only need $S$ to be an $O(\log(d/\varepsilon\delta))$-wise independent unscaled OSNAP.
\end{proof}
Finally, we show how the above arguments also imply the OSE moment property (Definition~\ref{d:osemoments}).
\begin{proof}[Proof of Corollary \ref{cor:osemoments}]
    By the proof of Theorem \ref{t:ose-full}, we see that when  $m \ge c_{\ref*{t:ose-full}.1}  (d + \log(1/\delta\varepsilon))/\varepsilon^2$ and  $s \ge \min\{c_{\ref*{t:ose-full}.2} (\log ^2(\frac{d}{\varepsilon\delta})/\varepsilon+\log^3(\frac{d}{\varepsilon\delta})),m\}$, then $\E[\tr(X^TX - I_d)^{2q}] \le \varepsilon^{2q}$, for $q=8\log(\frac{d}{\varepsilon \delta})$ (The proof originally has $q = \lceil 2\log (\frac{d}{\varepsilon \delta} )\rceil+2$, but upon going through the proof we see that $q=8\log(\frac{d}{\varepsilon \delta})$ also works). To get $\E[\tr(X^TX - I_d)^{2q}] \le \varepsilon^{2q}\delta/d$, it suffices for $m$ and $s$ to satisfy the same lower bounds, but with $\varepsilon$ replaced by $\varepsilon (\delta/d)^\frac{1}{2q} \ge c \varepsilon$ for some $c>0$ since $q \ge \log(d/\delta)$. These new lower bounds can be achieved by lower bounds of the same form as Theorem \ref{t:ose-full}, but with different constants. The claim follows, since $\|X^TX- I_d\|^{2q}\leq d\tr(X^TX-I_d)^{2q}$.
\end{proof}

\subsection{Controlling Trace Moments of the Embedding Error}

We now sketch the proof of Lemma \ref{prop:momestdecoupmatrix}, which obtains the moment bound for $X^TX-I$ used in the previous proof. The full proof can be found in Section~\ref{subsec:osnaptracemom}. 

\begin{proof}[Proof sketch of Lemma \ref{prop:momestdecoupmatrix}]
    Our first step is to observe that due to the property of $S$ having a fixed number of non-zero entries in a column, all quadratic terms in $(SU)^T(SU) - pm\cdot I_d$ are square-free, and this allows us to use the decoupling technique to reduce the problem of controlling the moments of $(SU)^T(SU) - pm\cdot I_d$ to controlling the moments of $(S_1U)^T(S_2U) + (S_2U)^T(S_1U)$ where $S_1$ and $S_2$ are independent copies. This is shown in the following claim, with the proof deferred to Section~\ref{subsec:decouposnap}.
\begin{lemmanon}[Decoupling, Lemma \ref{lem:decoup}] 
When $S$ has the fully independent unscaled OSNAP distribution, we have
\begin{align*}
    \E \Big[ \tr \big(U^TS^TSU - pm\cdot I_d\big)^{2q} \Big] &= \E \left[ \tr \left( \sum_{i=1}^m \sum_{j,j' =1, j \neq j'}^n s_{ij}s_{ij'} u_ju_{j'}^T \right)^{2q} \right]
\end{align*}
where $\{ u_j^T \}_{j \in [n]}$ denote the rows of $U$.
Consequently, we have
\begin{align*}
    \E \Big[ \tr \big(U^TS^TSU - pm\cdot I_d\big)^{2q} \Big] &\le \E_{S_1,S_2} \left[ \tr \left(  2\paren*{(S_1U)^TS_2U + (S_2U)^TS_1U} \right)^{2q} \right]
\end{align*}
where $S_2$ is an independent copy of $S_1$.
\end{lemmanon}

To estimate the moments of $(S_1U)^TS_2U$, we compare them to moments from the Gaussian case, i.e. the moments of $(G_1U)^TG_2U$ where the entries of $G_1$ and $G_2$ are independent normal random variables with variance $p$ (since the entries of $S_1$ and $S_2$ are also uncorrelated with mean 0 and variance $p$, see Lemma \ref{lem:osnapvaruncor})). In this case, due to orthogonal invariance of the Gaussian distribution, the matrices $G_1U$ and $G_2U$ are distributed as $\sqrt{p}H_1$ and $\sqrt{p}H_2$ where $H_1$ and $H_2$ are $m \times d$ matrices with independent standard normal entries. Thus, we can rely on the following bound, which uses standard results about the norms of Gaussian random matrices with independent entries. 

\begin{lemmanon}[Trace Moment of Embedding Error for Decoupled Gaussian Model, Lemma \ref{cor:indgaussianmom}]
    Let $H_1$ and $H_2$ be independent $m \times d$ random matrices with i.i.d. Gaussian entries. Then for any positive integer $q$, there exists $c_{\ref*{cor:indgaussianmom}}>0$ such that 
    \[ \E \left[ \tr \left( H_1^TH_2 + H_2^TH_1 \right)^{2q} \right]^\frac{1}{2q} \le c_{\ref*{cor:indgaussianmom}}\sqrt{\max\{d, q\}}\sqrt{\max\{m, q\}}.\]
\end{lemmanon}

To formally compare the moments of $(S_1U)^TS_2U$ and $(G_1U)^TG_2U$, we define the interpolating matrices $S_1(t), S_2(t)$ for $t \in [0,1]$ as described in Section \ref{s:overview}:
\begin{align}
\begin{split} 
    S_1(t) = \sqrt{t}S_1 + \sqrt{1-t}G_1, \\
    S_2(t) = \sqrt{t}S_2 + \sqrt{1-t}G_2.
\end{split}
\end{align}
Let $\Gamma(M_1,M_2)=(M_1U)^T(M_2U)+(M_2U)^T(M_1U)$ and $\Gamma(t)=\Gamma(S_1(t),S_2(t))$. Then, due to the decoupling lemma (Lemma \ref{lem:decoup}), to prove Lemma \ref{prop:momestdecoupmatrix} it is enough to show that $\E[\tr(\Gamma(1))^{2q}]^\frac{1}{2q} \le pm\varepsilon/2$. Now, by Lemma \ref{cor:indgaussianmom}, we know that:
$$\E\big[\tr(\Gamma(0))^{2q}\big]^\frac{1}{2q} = \E\big[\tr(\Gamma(G_1, G_2))^{2q}\big]^\frac{1}{2q} \le c_{\ref*{cor:indgaussianmom}}p\sqrt{\max\{d, q\}}\sqrt{\max\{m, q\}}.$$ 

Since we want to find the conditions for which $\E[\tr(\Gamma(0))^{2q}]^\frac{1}{2q} \le pm\varepsilon/4$, it is enough to ensure that $c_{\ref*{cor:indgaussianmom}}p\sqrt{\max\{d, q\}}\sqrt{\max\{m, q\}} \le pm\varepsilon/4$. Clearly, this can only happen when $q \le m$, and in this case the inequality holds when $m \ge \frac{c(d+q)}{\varepsilon^2}$. Thus, it suffices to show
\begin{align} \label{eq:univbound}
    \E\big[\tr \Gamma(1)^{2q}\big]^{\frac{1}{2q}}-\E\big[\tr \Gamma(0)^{2q}\big]^{\frac{1}{2q}} \le \frac{1}{4} pm \varepsilon.
\end{align}

For this, we look to estimate the derivative $\frac{d}{dt} \E[\tr \Gamma(t)^{2q}]$, and we obtain the following estimate in Lemma \ref{lem:diffineq} using the 2D interpolation idea mentioned in Section \ref{s:overview}.
\begin{diffineqlemma}[Differential Inequality]
For $\Gamma(t)$ as defined above, there exists a constant $c_{\ref*{lem:diffineq}}$ such that, for any $q \ge 2$, we have
\begin{align*}
\frac{d}{dt} \E\big[\tr \Gamma(t)^{2q}\big] \le& \max \limits_{4 \le k \le 2q} (c_{\ref*{lem:diffineq}}q)^k \Big((pm)^{\frac{1}{q}}\sqrt{\max\{pd,q\}}\Big)^{\frac{qk-2q}{q-1}}\E\!\big[\tr \Gamma(t)^{2q}\big]^{1-\frac{k-2}{2q-2}}.
\end{align*}
\end{diffineqlemma}
This differential inequality can be separated into two distinct cases: $pd \le q$ and $pd>q$. When $pd\leq q$, we can simplify the expression on the right using convexity arguments, and use Lemma 6.6 from \cite{brailovskaya2022universality} to solve the differential inequality and obtain the following bound:
\begin{align*}
    \E\!\big[\tr \Gamma(1)^{2q}\big]^{\frac{1}{2q}}-\E\!\big[\tr \Gamma(0)^{2q}\big]^{\frac{1}{2q}} \le c_7 (pm)^{\frac{1}{q}} q^2.
\end{align*}
for some $c_7>0$ (this is done in the full proof of Lemma \ref{prop:momestdecoupmatrix}). Thus, inequality \eqref{eq:univbound} is satisfied when $c_7(pm)^{\frac{1}{q}} q^2 < pm \varepsilon/4$, or
\[ pm \ge \frac{4c_7(pm)^{\frac{1}{q}} q^2}{\varepsilon}. \]
When $pd>q$, the expression on the right of the above differential inequality has some $pd$ factors. We replace these $pd$ factors by terms involving only $pm$ and $\E[\tr \Gamma(t)^{2q}]$ and similarly obtain:
\begin{align*}
    \E\!\big[\tr \Gamma(1)^{2q}\big]^{\frac{1}{2q}}-\E\!\big[\tr \Gamma(0)^{2q}\big]^{\frac{1}{2q}} \le c_{13}  q^3 (pm)^{\frac{2}{q}} \Big(\frac{d}{m}\Big)^{\frac{1}{2}}
\end{align*}
for some $c_{13}>0$. In this case, inequality \eqref{eq:univbound} is satisfied when 
\begin{align*}
    c_{13} q^3 (pm)^{\frac{2}{q}}\Big(\frac{d}{m}\Big)^{\frac{1}{2}} \le \frac{1}{4} pm \varepsilon.
\end{align*}
Since we have $m \ge \frac{c_{14}d}{\varepsilon^2}$ for some constant $c_{14}$, we have $\varepsilon \ge \sqrt{\frac{c_{14}d}{m}}$, so it suffices to require
\begin{align*}
    c_{13} q^3 (pm)^{\frac{2}{q}}\Big(\frac{d}{m}\Big)^{\frac{1}{2}} &\le \frac{1}{4}pm \sqrt{\frac{c_{14}d}{m}} \\
     \text{or, }\quad pm &\ge c_{15} (pm)^{\frac{2}{q}} {q^3}
\end{align*}
for some $c_{15}>0$.

Combining the analysis for the two cases, it suffices to require
\begin{align*}pm \ge (pm)^{\frac{2}{q}}\max\Big\{\frac{c_{16} q^{2}}{\varepsilon}, c_{17} q^3\Big\}
\end{align*}
for some constants $c_{16}>0$ and $c_{17}>0$. This requirement is equivalent to
\begin{align*}pm \ge \bigg(\max\Big\{\frac{c_{16} q^{2}}{\varepsilon}, c_{17} q^3\Big\}\bigg)^{\frac{1}{1-2/q}},
\end{align*}
which concludes the proof of Lemma \ref{prop:momestdecoupmatrix} (see remaining details in Section~\ref{subsec:osnaptracemom}).
\end{proof}

\subsection{Obtaining the differential inequality in Lemma \ref{lem:diffineq}}

We now discuss the proof of the technical part of our argument in the previous proof, which is to control the derivative of the interpolant. The full proof can be found in Section \ref{subsec:diffineq}.

\begin{proof}[Sketch of proof of Lemma \ref{lem:diffineq}]
    There are two main ideas for obtaining this differential inequality. First, we use the cumulant method as in \cite{brailovskaya2022universality} to transform the derivative in $t$ to matrix directional derivatives. Then, we bound the resulting terms in the expression by delicately using the matrix H\"older's inequality. 
    
    Fix $M_2$ and define $f_{1,M_2}(M_1):=\tr(\Gamma(M_1,M_2)^{2q})$ as a function of $M_1$. We shall first obtain an expression for $\frac{d}{dt}\E[f_{1,M_2}(S_1(t))]$. To see why this is sufficient, note that the derivative we are interested in is the directional derivative along the path $t \to (t,t)$ for the multivariate function $(t_1, t_2) \to \E [\tr(\Gamma(S_1(t_1), S_2(t_2))^{2q})]$ and by the chain rule (as mentioned in Section~\ref{s:overview}),
    \[ \frac{d}{dt} \E[\tr \Gamma(t)^{2q}] = \frac{d}{dt_1}\E [\tr(\Gamma(S_1(t_1), S_2(t_2))^{2q})] \bigg\vert_{t_1, t_2 = t} + \frac{d}{dt_2}\E [\tr(\Gamma(S_1(t_1), S_2(t_2))^{2q})] \bigg\vert_{t_1, t_2 = t}  \]

    Now, recall that $S_1$ can be written in the form $\sum_{(l,\gamma) \in \Xi} Z_{(l,\gamma)}$ where $\Xi=[n] \times [pm]$ and $Z_{(l,\gamma)}=\xi_{(l,\gamma)} e_{\mu_{(l, \gamma)}} e_l ^\top$ (see Definition \ref{def:osnap}). We then have the following lemma.
    \begin{lemmanon}[Based on Corollary 6.1, \cite{brailovskaya2022universality}] 
    For any polynomial $\phi:M_{m\times d}(\mathbb{R})\to\mathbb{R}$, we have
\begin{multline*}
\hspace{-3mm}\frac{d}{dt}\E[\phi(S_1(t))] \\
= \frac{1}{2}\sum_{k=4}^\infty
	\frac{t^{\frac{k}{2}-1}}{(k-1)!}
	\sum_{\pi\in\mathrm{P}([k])}
	(-1)^{|\pi|-1}(|\pi|-1)!\,
	\E\Bigg[ \sum_{(l,\gamma) \in \Xi}\partial_{Z_{(l,\gamma),1|\pi}}\cdots\partial_{Z_{(l,\gamma),k|\pi}}\phi(S_1(t))
	\Bigg],
\end{multline*}
where $\partial_Z\phi$ denotes the directional derivative of
$\phi$ in the direction $Z\in M_{m\times d}(\mathbb{R})$.
\end{lemmanon}
Here, $\mathrm{P}([k])$ denotes the set of all partitions of $[k]$, and $Z_{(l,\gamma),1|\pi} \etc Z_{(l,\gamma),k|\pi}$ are random matrices distributed as $Z_{(l,\gamma)}$. Crucially, those are independent of $S_1, G_1, S_2$ and $G_2$ (but not necessarily from each other). Further details are given in the full proof of Lemma \ref{lem:diffineq} in Section~\ref{subsec:diffineq}.

Applying this lemma to $\frac{d}{dt_1}\E[\tr(\Gamma(S_1(t_1),S_2(t_2))^{2q})] = \frac{d}{dt_1}\E[f_{1,S_2(t_2)}(S_1(t_1))] $, we need to deal with the directional derivatives of $f_{1,S_2(t_2)}$ along $Z_{(l,\gamma),1|\pi}, \ldots, Z_{(l,\gamma),k|\pi}$. Using a general expression for derivatives of multinomials via the product rule, we have, for any deterministic $m \times d$ matrices $B_1, \ldots, B_k, M_1$ and $M_2$,
\begin{align*}
    &\partial_{B_1} \cdots \partial_{B_k}f_{1,M_2}(M_1)
\\&\ \ =\!\!\sum_{\sigma \in \sym (k)}\sum_{\substack{r_1,\ldots,r_{k+1}\ge 0\\
	r_1+\cdots+r_{k+1}=2q-k}}
	\tr \Big(\Gamma(M_1,M_2)^{r_1}((B_{\sigma(1)}U)^TM_2U+(M_2U)^TB_{\sigma(1)}U)
	\Gamma(M_1,M_2)^{r_2}\\[-5mm]&
    \hspace{5.25cm}((B_{\sigma(2)}U)^TM_2U+(M_2U)^TB_{\sigma(2)}U)\cdots
\Gamma(M_1,M_2)^{r_k}\\&
\hspace{5.25cm}((B_{\sigma(k)}U)^TM_2U+(M_2U)^TB_{\sigma(k)}U)\Gamma(M_1,M_2)^{r_{k+1}}\Big).
\end{align*}

In our case, for each fixed $(l, \gamma)$, we have to analyze $\partial_{Z_{(l,\gamma),1|\pi}}\cdots\partial_{Z_{(l,\gamma),k|\pi}} \allowbreak f_{1,S_2(t_2)}(S_1(t_1))$, which means that we have $B_{\lambda} = Z_{(l,\gamma),\lambda|\pi}$ for $\lambda \in [k]$, and $M_2 = S_2(t)$. So terms of the form $(B_{\lambda}U)^TM_2U$ become $(Z_{(l,\gamma),\lambda|\pi}U)^TS_2(t)U$. 
Crucially, $(Z_{(l,\gamma),\lambda|\pi}U)^TS_2(t)U$ is a rank one matrix, so it can be written as an outer product of the form $\Theta_{(l, \gamma), \lambda,1}^T\Theta_{(l, \gamma), \lambda,2}$. 

Then, estimating $$\E\Bigg[ \sum_{(l,\gamma) \in \Xi}\partial_{Z_{(l,\gamma),1|\pi}}\cdots\partial_{Z_{(l,\gamma),k|\pi}}f_{1,S_2(t_2)}(S_1(t_1)) \Bigg]$$ for $t_1=t_2=t$ boils down to estimating terms of the form
\[ \E \Big[ \tr \Gamma(t)^{r_1} \Theta_{(l,\gamma), \sigma(1), \tau_1(1)}^T \Theta_{(l,\gamma), \sigma(1), \tau_1(2)} \cdot
	\Gamma(t)^{r_2} \cdots \Gamma(t)^{r_k}\Theta_{(l,\gamma), \sigma(k), \tau_k(1)}^T \Theta_{(l,\gamma), \sigma(k), \tau_k(2)} \Gamma(t)^{r_{k+1}} \Big] \]
where $\tau_i \in \sym (\{1,2\})$ are permutations of the set $\{1,2\}$.

   For the remainder of the proof, we delicately analyze terms of this form using the matrix H\"older's inequality, and appropriately estimate the terms that arise.
    \begin{lemmanon}[Matrix H\"older's inequality, Lemma 5.3 in \cite{brailovskaya2022universality}]
Let $1\le 
\beta_1,\ldots,\beta_k\le\infty$ satisfy $\sum_{i=1}^k\frac{1}{\beta_i}=1$. Then
$$
	\Big|\E\!\big[\tr Y_1\cdots Y_k\big]\Big| \le
	\|Y_1\|_{\beta_1}\cdots \|Y_k\|_{\beta_k}
$$
for any $d\times d$ random matrices $Y_1,\ldots,Y_k$.
\end{lemmanon}
 This analysis based on matrix H\"older's inequality is done in Lemma Lemma \ref{lem:decouptraceineq}.  One important observation we use in this lemma (among many others) is that $\Theta_{(l, \gamma), \lambda,1}^T\Theta_{(l, \gamma), \lambda,2}$ are rank one matrices, which allows us to bound $\| \Theta_{(l, \gamma), \lambda,1}^T\Theta_{(l, \gamma), \lambda,2} \|_q$ with $\sqrt{pd}$ instead of $\sqrt{pm}$. For further details, please refer to the full proof of  Lemma \ref{lem:decouptraceineq} in Section \ref{subsec:osnaptraceineq}.
\end{proof}

\section{Preliminaries} \label{sec:prelim}

\subsection{Oblivious Subspace Embeddings}

Here, we prove some important properties of the OSNAP distribution that we shall use later.

\begin{lemma}[Variance and Uncorrelatedness] \label{lem:osnapvaruncor}
Let $p = p_{m,n} \in (0,1]$ and $S=\{s_{ij}\}_{i \in [m], j \in [n]}$ be a $m \times n$ random matrix as in the unscaled OSNAP distribution. Then, $\E(s_{ij})=0$ and $\operatorname{Var}(s_{ij})=p$ for all $i \in [m], j \in [n]$, and $\cov(s_{i_1 j_1},s_{i_2 j_2})=0$ for any $\{i_1,i_2\} \subset [m], \{j_1,j_2\} \subset [n]$ and $ (i_1,j_1)\neq (i_2,j_2) $
\end{lemma}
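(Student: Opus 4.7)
The entire argument is a bookkeeping exercise starting from the formal representation $S = \sum_{l,\gamma} \xi_{(l,\gamma)} e_{\mu_{(l,\gamma)}} e_l^\top$ in Definition \ref{def:osnap}. First I would observe that for a given $(i,j) \in [m] \times [n]$, only one term of this sum can possibly place a nonzero value at position $(i,j)$: the one with $l = j$ and the unique $\gamma$ such that row $i$ lies inside the $\gamma$th subcolumn $[(m/s)(\gamma-1)+1 : (m/s)\gamma]$. Denote that subcolumn index by $\gamma(i)$. Then
\[s_{ij} = \xi_{(j,\gamma(i))} \, \mathbf{1}_{\{\mu_{(j,\gamma(i))} = i\}},\]
and from here the mean and variance claims are one-line calculations: $\E(s_{ij}) = \E(\xi_{(j,\gamma(i))}) \, \Pb(\mu_{(j,\gamma(i))} = i) = 0$ by the independence between the $\xi$- and $\mu$-collections together with the fact that each $\xi$ is Rademacher, while $\E(s_{ij}^2) = \E(\xi_{(j,\gamma(i))}^2) \cdot \Pb(\mu_{(j,\gamma(i))} = i) = 1 \cdot (s/m) = p$ since $\mu_{(j,\gamma(i))}$ is uniform over the $m/s$ rows of its subcolumn.

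For the uncorrelatedness claim, since both means vanish it suffices to show $\E(s_{i_1 j_1} s_{i_2 j_2}) = 0$. I would split on whether $s_{i_1 j_1}$ and $s_{i_2 j_2}$ depend on the same one-hot submatrix or on different ones. If the submatrices differ (either $j_1 \neq j_2$, or $j_1 = j_2$ but $\gamma(i_1) \neq \gamma(i_2)$), then the two variables are determined by two distinct submatrices from the collection defining $S$, and these submatrices are independent; this uses only pairwise independence, which is automatic from the $K$-wise hypothesis since $K \geq 2$. Hence $\E(s_{i_1 j_1} s_{i_2 j_2}) = \E(s_{i_1 j_1})\, \E(s_{i_2 j_2}) = 0$. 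The remaining case is $j_1 = j_2$ with $\gamma(i_1) = \gamma(i_2)$ and $i_1 \neq i_2$; here both indicators read off the same random index $\mu_{(j,\gamma)}$, which can equal only one of $i_1, i_2$, so the product $s_{i_1 j_1} s_{i_2 j_2}$ is identically zero.

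The statement is genuinely a warm-up, so there is no real obstacle. The only point worth flagging is that the argument uses nothing beyond pairwise independence of the subcolumn submatrices and of the Rademacher variables, so the lemma applies uniformly to both the fully independent and the $K$-wise independent versions of the OSNAP distribution, consistent with Remark \ref{rem:logind1}.
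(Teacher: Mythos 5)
Your proposal is correct and follows essentially the same route as the paper: compute $\E(s_{ij}^2)=\Pb(s_{ij}\neq 0)=p$ from the one-hot structure, and split the covariance into the different-subcolumn case (handled by pairwise independence, which the $K$-wise hypothesis supplies) and the same-subcolumn case (where the product vanishes identically because at most one entry of a one-hot subcolumn is nonzero). The explicit representation $s_{ij}=\xi_{(j,\gamma(i))}\mathbf{1}_{\{\mu_{(j,\gamma(i))}=i\}}$ and the explicit mean-zero check are slightly more detailed than the paper's write-up but not a different argument.
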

\begin{proof}

Recall that in the unscaled OSNAP distribution, each subcolumn \begin{align*}S_{[(m/s)(\gamma-1)+1:(m/s)\gamma]\times\{ j\}}\end{align*} of $S$ for $\gamma\in [s], j \in [n]$ has the one hot distribution, and all these subcolumns are jointly $\log(mn)$-independent. Therefore, the same argument as above, we directly calculate the variances of the entries
\begin{align*}
    \E (s_{ij}^2) = \Pb(s_{ij} \neq 0) = \frac{1}{\frac{m}{pm}} = p
\end{align*}

For the covariances, we first observe that $\cov(s_{i_1 j_1}, s_{i_2 j_2})=0$ if $(i_1,j_1)$ and $(i_2,j_2)$ belong to two different subcolumns by $\log$-independence. More precisely, $\log(d/\varepsilon\delta)$-independence implies $2$-independence which implies zero covariance. If $(i_1,j_1)$ and $(i_2,j_2)$ belong to the same subcolumn, we have
\begin{align*}
    \cov(s_{i_1 j_1}, s_{i_2 j_2})=\E(s_{i_1 j_2}s_{i_2 j_1})=\E(0)=0
\end{align*}
because each subcolumn has one hot distribution which means at most one of $s_{i_1 j_1}$ and $s_{i_2 j_2}$ can be nonzero.
\end{proof}

\begin{lemma}[Norm of a Random Row in Interpolated OSNAP]\label{lem:rownormbound}
    Let $S(t):= \sqrt{t}S + \sqrt{1-t}G$, where $S$ is as in the fully independent unscaled OSNAP distribution and $G$ is an $m \times n$ matrix with i.i.d. Gausian entries with variance $p$. Let $U$ be an $n \times d$ matrix such that $U^TU=I$. Let $\mu$ be a random variable uniformly distributed in $
    J \subset [m]$ and independent of $S$ and $G$. Then, there exists $c_{\ref*{lem:rownormbound}}> 0$ such that for any positive integer $q>0$, we have
    \begin{align*}
        \E_{\mu, S(t)} [ \norm{e_{\mu}^TS(t)U}^{q} ]^\frac{1}{q} \le c_{\ref*{lem:rownormbound}}\sqrt{\max\{pd,q \}} 
    \end{align*}

\end{lemma}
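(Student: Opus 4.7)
The plan is to first condition on $\mu$, reducing to a bound that is uniform over its possible values, and then use Minkowski's inequality together with separate concentration estimates for the Gaussian and the sparse Rademacher components of $S(t)$. Since $\mu$ is independent of $(S,G)$ and in the fully independent OSNAP model the row of $S$ indexed by any $i\in[m]$ has i.i.d.\ entries with law $\Pb(\cdot=\pm 1)=p/2$, $\Pb(\cdot=0)=1-p$ (different columns being independent implies independence of the entries within a single row), the conditional law of $e_i^\top S(t)U$ given $\mu=i$ is the same for every $i\in J$. I would therefore fix $i$ and work with $\tilde s\in\R^n$ whose coordinates are i.i.d.\ Bernoulli-Rademacher as above and $\tilde g\sim N(0,pI_n)$ independent of $\tilde s$, and bound $(\E\|U^\top(\sqrt{t}\,\tilde s+\sqrt{1-t}\,\tilde g)\|^q)^{1/q}$.

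By Minkowski's inequality this is at most $\sqrt{t}(\E\|U^\top\tilde s\|^q)^{1/q}+\sqrt{1-t}(\E\|U^\top\tilde g\|^q)^{1/q}$. The Gaussian piece is immediate: since $U^\top U=I$, the vector $U^\top\tilde g$ has distribution $N(0,pI_d)$, so $\|U^\top\tilde g\|^2/p\sim\chi^2_d$, and the standard subgaussian concentration of $\chi_d$ around $\sqrt d$ yields
\begin{align*}
\bigl(\E\|U^\top\tilde g\|^q\bigr)^{1/q}\le \sqrt{p}\bigl(\sqrt d + C_1\sqrt q\bigr)\le C_1\bigl(\sqrt{pd}+\sqrt q\bigr).
\end{align*}

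The sparse Rademacher piece is the main difficulty. A direct application of the Hanson-Wright inequality to $\|U^\top\tilde s\|^2=\tilde s^\top UU^\top\tilde s$ only gives the suboptimal $(\E\|U^\top\tilde s\|^q)^{1/q}\lesssim\sqrt{pd+\sqrt{qd}+q}$, because it treats $\tilde s$ as merely subgaussian with constant of order $1$ and thus cannot rule out a $\sqrt{qd}$ contribution dominating $pd$. Instead, I would invoke Talagrand's convex concentration inequality: the map $f(\tilde s):=\|U^\top\tilde s\|$ is convex and $1$-Lipschitz in $\tilde s\in\R^n$ (as the largest singular value of $U$ equals $1$), and the coordinates of $\tilde s$ are independent and bounded by $1$ in absolute value, so $f(\tilde s)-\operatorname{med}f(\tilde s)$ is subgaussian with a universal constant, giving $(\E|f(\tilde s)-\operatorname{med}f(\tilde s)|^q)^{1/q}\le C_2\sqrt q$. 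The median is controlled by Markov and Jensen: since $f\ge 0$,
\begin{align*}
\operatorname{med}f(\tilde s)\le 2\,\E f(\tilde s)\le 2\bigl(\E f(\tilde s)^2\bigr)^{1/2}=2\sqrt{pd},
\end{align*}
and hence $(\E\|U^\top\tilde s\|^q)^{1/q}\le 2\sqrt{pd}+C_2\sqrt q$.

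Combining the two estimates via $\sqrt t+\sqrt{1-t}\le\sqrt 2$ gives $(\E\|e_i^\top S(t)U\|^q)^{1/q}\le c(\sqrt{pd}+\sqrt q)\le 2c\sqrt{\max\{pd,q\}}$, uniformly in $i\in J$ and $t\in[0,1]$, and averaging over $\mu$ preserves the bound. The only substantive obstacle is the sparse Rademacher concentration; replacing Hanson-Wright by Talagrand's convex distance inequality (which fully exploits that $|\tilde s_j|\le 1$ together with convexity of $f$) is precisely what delivers the clean $\sqrt{\max\{pd,q\}}$ rate, and it works uniformly across the interpolation parameter $t$ because both concentration tools yield a subgaussian constant of order $1$ regardless of $t$.
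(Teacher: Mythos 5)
Your proposal is correct and uses essentially the same ingredients as the paper's proof: both reduce to a single row of $S$, which has i.i.d.\ sparse Rademacher entries, and then combine convex-Lipschitz concentration for bounded independent coordinates (the paper cites exactly this, Theorem 6.10 of Boucheron et al., where you invoke Talagrand) with Gaussian norm concentration for the $G$ part. The only difference is organizational: you split $\sqrt{t}S+\sqrt{1-t}G$ by Minkowski and treat the two pieces separately, whereas the paper conditions on $G$, applies convex concentration to the whole interpolated row, and then union-bounds against the Gaussian tail event — both routes give the same $\sqrt{\max\{pd,q\}}$ bound.
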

\begin{proof}
    By Hölder's inequality, it suffices to prove these bounds for moments of the order of the smallest even integer bigger than $q$, so without loss of generality, we may assume that $q$ is an even integer. 
    \begin{align*}
        \E_{\mu, S(t)} [ \norm{e_{\mu}^TS(t)U}^{q}] =& \frac{1}{\abs{J}}\sum_{j \in J} \E_{S(t)} [ \norm{e_{j}^TS(t)U}^{q}] \\
        & = \E_{S(t)} [ \norm{e_{1}^TS(t)U}^{q}]
    \end{align*}
    since rows of $S(t)U$ are identically distributed.

Note that $S_{1,1}$ (which is the $(1,1)$th entry of $S$) is non-zero when the one hot distribution on the column submatrix $S_{[1:m/s]\times{1}}$ has it's non zero entry on the first row. The probability that this happens is $1/(m/s) = s/m = p$. Moreover, the columns of $S$ are independent. Thus, we conclude that for looking at the distribution of $e_{1}^TS(t)U$ we may assume that $S$ has independent 
$\pm 1$ entries sparsified by independent Bernoulli $p$ random variables, i.e., $s_{i,j}=\delta_{i,j} \xi_{i,j}$ where $\delta_{i,j}$ are Bernoulli random variables taking value 1 with probability $p \in (0,1]$, $\xi_{i,j}$ are random variables with $\Pb(\xi_{i,j}=1)=\Pb(\xi_{i,j}=-1)=1/2$ and the collection $\{\delta_{i,j}, \xi_{i,j} \}_{i \in [m], j \in [n]}$ is independent.

Now, conditioned on $G$, $\norm{e_{1}^TS(t)U}$ is a convex function of the entries of  the first row of $S$, and
\begin{align*}
    E_S[\norm{e_{1}^TS(t)U}] &\le  E_S[\norm{e_{1}^TS(t)U}^2]^\frac{1}{2} \\
    &= \sqrt{t\cdot pd + (1-t)\norm{e_{1}^TGU}^2}
\end{align*}
By \cite[Theorem 6.10]{boucheron2013concentration}, conditioned on $G$, we have, for $\lambda > 0$,
\begin{align*}
    \Pb_{S} ( \norm{e_{1}^TS(t)U} \ge \sqrt{t\cdot pd + (1-t)\norm{e_{1}^TGU}^2} + \lambda ) \le \exp(-c_1\lambda^2)
\end{align*}
for some $c_1>0$.
Next, let $G = \sqrt{p}H$, where $H$ is an $m \times d$ matrix with independent standard normal entries. By orthogonal invariance of Gaussians, the random vector $e_1^THU$ is a $d$ dimensional vector with independent standard normal entries. By concentration of norm for Gaussian vectors, \cite[Theorem 3.1.1]{vershyninhdp}, we have
\begin{align*}
    \Pb( \norm{e_1^THU} \ge \sqrt{d} + \lambda ) \le \exp(-c_2\lambda^2)
\end{align*}
which after scaling becomes
\begin{align*}
    \Pb( \norm{e_1^TGU} \ge \sqrt{pd} + \lambda\sqrt{p} )
\end{align*}
Therefore, we have
\begin{align*}
    \Pb( \norm{e_1^TGU} \ge \sqrt{pd} + \lambda ) &\le \exp(-c_2\lambda^2)
\end{align*}
which after taking square becomes
\begin{align*}
    \Pb( \norm{e_1^TGU}^2 \ge 2pd + 2\lambda^2 ) &\le \exp(-c_2\lambda^2)
\end{align*}
for some $c_2 > 0$.

Let $\mathcal{E}$ be the event that $\norm{e_1^TGU}^2 \ge 2pd + 2\lambda^2$. Then, $\Pb(\mathcal{E}) \le \exp(-c_2\lambda^2)$ and we have, 
\begin{align*}
    &\Pb ( \norm{e_{1}^TS(t)U} \ge \sqrt{2pd} + \sqrt{2}\lambda ) \\
    \le &\Pb ( \norm{e_{1}^TS(t)U} \ge \sqrt{2pd + 2\lambda^2} ) \\
    \le &\Pb(\mathcal{E}) + \Pb ( \norm{e_{1}^TS(t)U} \ge \sqrt{2pd + 2\lambda^2} \big\vert \mathcal{E}^C )
\end{align*}

When $\norm{e_{1}^TGU}^2 \le 2pd + 2\lambda^2$, $\sqrt{t\cdot pd + (1-t)\norm{e_{1}^TGU}^2} \le \sqrt{2pd + 2\lambda^2}$, so
\begin{align*}
    &\Pb ( \norm{e_{1}^TS(t)U} \ge \sqrt{2pd} + \sqrt{2}\lambda ) \\
    \le& \Pb(\mathcal{E}) + \Pb ( \norm{e_{1}^TS(t)U} \ge \sqrt{2pd + 2\lambda^2} \big\vert \mathcal{E}^C ) \\
    \le& \Pb(\mathcal{E}) + \Pb ( \norm{e_{1}^TS(t)U} \ge \sqrt{t\cdot pd + (1-t)\norm{e_{1}^TGU}^2} \big\vert \mathcal{E}^C  ) \\
    \le& \exp(-c_2\lambda^2) + \exp(-c_1\lambda^2) \\
    \le& \exp(-c_3\lambda^2)
\end{align*}
for $\lambda \ge \sqrt{pd} \ge 1$ and some constant $c_3>0$, i.e. $\norm{e_{1}^TS(t)U}$ has a subgaussian tail. By standard moment computation of subgaussian random variables, we conclude that $\E[ \norm{e_{1}^TS(t)U}^q ] \le (\sqrt{2pd})^q + (c_4q)^\frac{q}{2}$ for some $c_4\ge 0$. Thus,

\begin{align*}
    \E[ \norm{e_{1}^TS(t)U}^q ] &\le (c_5\sqrt{\max\{pd,q\}})^q 
\end{align*}
for some constant $c_5$.

\end{proof}

\subsection{Basic Facts of the $L_q(S_q^d)$ Space}

To derive the trace inequalities that will be used in the interpolation argument, we need the following tools from \cite{brailovskaya2022universality}. For a $d \times d$ matrix $M$, following \cite{brailovskaya2022universality}, we define the absolute value $|M|=\sqrt{M^*M}$ and normalized trace $\tr(M)=\frac {1}{d}\Tr(M)$. Let $L_q(S_q^d)$ be the normed vector space of $d\times d$ random matrices $M$ with norm
\begin{align*}
	\norm{M}_q = \begin{cases}
	\big(\E[\tr |M|^q]\big)^{\frac{1}{q}} & \text{if }1\le q<\infty\\
	\| \norm{M}_{op} \|_\infty & \text{if }q=\infty
	\end{cases}
\end{align*}
More precisely, the space $L_q(S_q^d)$ consists of random matrices $M$ with $\norm{M}_q$ well defined (which means $\big(\E[\tr |M|^q]\big)<\infty$ for $1\le q<\infty$ and $\| \norm{M}_{op} \|_\infty$ for $q=\infty$).

Next, we state a H\"older inequality for Schatten classes proved in \cite{brailovskaya2022universality}.

\begin{lemma}[Lemma 5.3. in \cite{brailovskaya2022universality}]
\label{lem:holdervanhandel}
Let $1\le 
\beta_1,\ldots,\beta_k\le\infty$ satisfy $\sum_{i=1}^k\frac{1}{\beta_i}=1$. Then
$$
	|\E[\tr Y_1\cdots Y_k]| \le
	\|Y_1\|_{\beta_1}\cdots \|Y_k\|_{\beta_k}
$$
for any $d\times d$ random matrices $Y_1,\ldots,Y_k$.
\end{lemma}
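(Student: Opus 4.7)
The plan is to reduce the statement to two well-known inequalities: the deterministic (pointwise) noncommutative H\"older inequality for Schatten norms, and the classical H\"older inequality for scalar random variables.

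First, I would invoke the pointwise noncommutative H\"older inequality: for any fixed $d\times d$ matrices $A_1,\ldots,A_k$ and exponents $1\le \beta_1,\ldots,\beta_k\le\infty$ with $\sum_i 1/\beta_i=1$,
\begin{align*}
|\tr(A_1\cdots A_k)| \le \prod_{i=1}^k \bigl(\tr|A_i|^{\beta_i}\bigr)^{1/\beta_i},
\end{align*}
with the usual convention that $(\tr|A_i|^{\beta_i})^{1/\beta_i}=\|A_i\|_{op}$ when $\beta_i=\infty$. This inequality is insensitive to whether one uses the normalized trace $\tr$ or the unnormalized trace $\Tr$: the factor $1/d$ on the left matches $\prod_i (1/d)^{1/\beta_i}=1/d$ on the right, since $\sum_i 1/\beta_i=1$. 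For $k=2$ this is the classical Schatten--H\"older inequality (obtained via polar decomposition and the scalar H\"older inequality applied to singular values). The general $k$ case follows by iterating the two-matrix case: write $A_1\cdots A_k=A_1(A_2\cdots A_k)$, apply the two-matrix bound with exponents $\beta_1$ and $\beta'$ where $1/\beta'=\sum_{i\ge 2}1/\beta_i$, and then bound $\|A_2\cdots A_k\|_{S^{\beta'}}\le \prod_{i\ge 2}\|A_i\|_{S^{\beta_i}}$ by induction on $k$.

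Second, I would lift the pointwise bound to the random case. Applying $|\E Z|\le \E|Z|$ with $Z=\tr(Y_1\cdots Y_k)$ and then the deterministic inequality pointwise for each realization of $Y_1,\ldots,Y_k$ gives
\begin{align*}
|\E[\tr(Y_1\cdots Y_k)]| \le \E\Bigl[\prod_{i=1}^k (\tr|Y_i|^{\beta_i})^{1/\beta_i}\Bigr].
\end{align*}
Setting $X_i:=(\tr|Y_i|^{\beta_i})^{1/\beta_i}$, so that $\E[X_i^{\beta_i}]=\E[\tr|Y_i|^{\beta_i}]=\|Y_i\|_{\beta_i}^{\beta_i}$, the classical generalized H\"older inequality for non-negative scalar random variables yields
\begin{align*}
\E\Bigl[\prod_i X_i\Bigr] \le \prod_i \bigl(\E[X_i^{\beta_i}]\bigr)^{1/\beta_i} = \prod_i \|Y_i\|_{\beta_i},
\end{align*}
which combined with the previous display gives the claim. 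For any index with $\beta_i=\infty$, the corresponding factor is replaced by the essential supremum of $\|Y_i\|_{op}$, which by definition equals $\|Y_i\|_\infty$, so no further modification is needed.

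The main obstacle is really the deterministic pointwise noncommutative H\"older inequality for $k>2$ matrices; this is classical (see, e.g., Simon's \emph{Trace Ideals and their Applications} or Bhatia's \emph{Matrix Analysis}) but requires some care with polar decomposition and singular-value arguments, and with the $\beta_i=\infty$ endpoints. Once that ingredient is in hand, the lift to random matrices is a one-line application of scalar H\"older and the overall proof is short.
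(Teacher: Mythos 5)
Your proof is correct, but it takes a different route from the one the paper points to. The paper does not prove this lemma itself: it cites Lemma 5.3 of Brailovskaya--van Handel and notes that the proof there proceeds by convexity and interpolation in $L_q(S_q^d)$ --- one verifies the bound at the extreme choices of exponents and then extends to general $(\beta_1,\ldots,\beta_k)$ via the multilinear convexity statement (Lemma \ref{lem:calderon} / Corollary \ref{cor:spinterpo}). You instead argue directly: the deterministic Schatten--H\"older inequality $|\tr(A_1\cdots A_k)|\le\prod_i(\tr|A_i|^{\beta_i})^{1/\beta_i}$ applied pointwise (with the normalization working out exactly because $\sum_i 1/\beta_i=1$), followed by $|\E Z|\le\E|Z|$ and the scalar generalized H\"older inequality for the random variables $X_i=(\tr|Y_i|^{\beta_i})^{1/\beta_i}$, with the $\beta_i=\infty$ endpoints handled by essential suprema. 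This is a complete and more elementary argument, requiring no interpolation machinery, and it needs no independence assumptions, which is consistent with the statement. What the interpolation route buys --- and the reason the paper emphasizes it --- is that the same extreme-point-plus-interpolation template is reused later for the paper's own trace inequalities (Lemma \ref{lem:decouptraceineq}), where the quantity being bounded is not a plain trace of a product and a pointwise H\"older argument of your type would not directly apply; for the present lemma, however, your direct proof is perfectly adequate.
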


The proof of Lemma \ref{lem:holdervanhandel} (which we do not include here) relies on convexity and interpolation in $L_q(S_q^d)$. More precisely, one can first prove the result for the case when $\beta_1,...,\beta_k$ are extreme exponents and then extend the result to general $\beta_1,...,\beta_k$ by the following convexity result. Later we will also use this method to prove our trace inequalities (Lemma \ref{lem:decouptraceineq}).

\begin{lemma}[Lemma 5.2. in \cite{brailovskaya2022universality}]
\label{lem:calderon}
Let $F:(L_\infty(S_\infty^d))^k\to\mathbb{C}$ be a multilinear functional.
Then the map
$$
	\bigg(\frac{1}{\beta_1},\ldots,\frac{1}{\beta_k}\bigg)
	\mapsto
	\log \sup_{M_1,\ldots,M_k}
	\frac{|F(M_1,\ldots,M_k)|}{\|M_1\|_{\beta_1}\cdots\|M_k\|_{\beta_k}}
$$
is convex on $[0,1]^k$.
\end{lemma}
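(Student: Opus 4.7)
The plan is to prove this via complex interpolation in the style of the Hadamard three-lines theorem (Stein--Calder\'on interpolation). Since convexity on $[0,1]^k$ is equivalent to convexity along every line segment in the cube, I would fix two points $a = (1/\beta_1,\ldots,1/\beta_k)$ and $a' = (1/\beta_1',\ldots,1/\beta_k')$ in $[0,1]^k$, set $a_\theta = (1-\theta)a + \theta a'$, and introduce interpolated exponents $\beta_i^\theta$ defined by $1/\beta_i^\theta = (1-\theta)/\beta_i + \theta/\beta_i'$. The goal is then to show $\Phi(a_\theta) \le (1-\theta)\Phi(a) + \theta \Phi(a')$, where $\Phi$ denotes the log-sup functional in the statement.

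Next, I would fix matrices $M_1,\ldots,M_k \in L_\infty(S_\infty^d)$, normalized so that $\|M_i\|_{\beta_i^\theta} = 1$ for each $i$ (which is legitimate because $F$ is multilinear and the quotient in the definition of $\Phi$ is homogeneous of degree zero in each $M_i$). Using the pointwise polar decomposition $M_i = V_i |M_i|$ and the functional calculus on the positive part, I set
\begin{align*}
M_i(z) = V_i\, |M_i|^{\phi_i(z)}, \qquad \phi_i(z) = \beta_i^\theta\Big(\frac{1-z}{\beta_i} + \frac{z}{\beta_i'}\Big),
\end{align*}
which is well-defined and analytic in $z$ on the closed strip $\{0 \le \operatorname{Re} z \le 1\}$. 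A direct check gives $\phi_i(\theta) = 1$, hence $M_i(\theta) = M_i$, while on the boundary lines $\operatorname{Re} \phi_i(is) = \beta_i^\theta/\beta_i$ and $\operatorname{Re} \phi_i(1+is) = \beta_i^\theta/\beta_i'$, from which one computes $\|M_i(is)\|_{\beta_i} = \|M_i(1+is)\|_{\beta_i'} = 1$ for every real $s$.

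By multilinearity of $F$ and analyticity of each $M_i(z)$, the scalar function $\psi(z) = F(M_1(z),\ldots,M_k(z))$ is holomorphic on the open strip and continuous and bounded on its closure; the hypothesis $M_i \in L_\infty(S_\infty^d)$ provides the uniform operator-norm bound needed to see that the random traces defining $F$ are integrable and behave holomorphically in $z$. Hadamard's three-lines theorem then yields
\begin{align*}
|\psi(\theta)| \le \Big(\sup_{s\in\mathbb{R}} |\psi(is)|\Big)^{1-\theta} \Big(\sup_{s\in\mathbb{R}} |\psi(1+is)|\Big)^{\theta}.
\end{align*}
Since $\|M_i(is)\|_{\beta_i} = 1$ and $\|M_i(1+is)\|_{\beta_i'} = 1$, each boundary supremum is bounded by $e^{\Phi(a)}$ and $e^{\Phi(a')}$ respectively. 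Taking the supremum of the left-hand side over all admissible $M_1,\ldots,M_k$ and then taking logarithms gives the convexity inequality along the segment, which is exactly what is required.

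The main obstacle I expect is justifying the three-lines step rigorously in the random-matrix setting: one must choose the polar decompositions $M_i = V_i |M_i|$ measurably in $\omega$, verify that $|M_i|^{\phi_i(z)}$ is well-defined and jointly measurable via the holomorphic functional calculus on a bounded positive operator, and confirm that $z \mapsto \mathbb{E}[\tr(\cdot)]$ commutes with holomorphic differentiation in $z$. The $L_\infty$ hypothesis gives an almost-sure uniform operator-norm bound on $M_i(z)$ throughout the strip, so dominated convergence handles holomorphy and boundedness of $\psi$; once these technicalities are in place, the interpolation argument is standard.
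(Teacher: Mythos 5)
The paper does not prove this lemma at all: it is imported verbatim from Brailovskaya and van Handel, where it is established by exactly the Calder\'on-type complex interpolation argument you outline (normalize, take polar decompositions, form the analytic family $M_i(z)=V_i|M_i|^{\phi_i(z)}$, and apply the Hadamard three-lines theorem). Your computations are right: $\phi_i(\theta)=1$, $\mathrm{Re}\,\phi_i(is)=\beta_i^\theta/\beta_i$, $\mathrm{Re}\,\phi_i(1+is)=\beta_i^\theta/\beta_i'$, and since $V_i^*V_i$ is the support projection of $|M_i|$ one gets $|M_i(is)|=|M_i|^{\mathrm{Re}\,\phi_i(is)}$, so the boundary norms equal one and the three-lines bound gives convexity along the segment. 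Two points should be made explicit to close the argument. First, holomorphy of $\psi(z)=F(M_1(z),\ldots,M_k(z))$ is not automatic for an arbitrary multilinear $F$ (your justification implicitly assumes $F$ is of the form $\E[\tr(\cdots)]$); the clean fix is to dispose of the vacuous cases first ($F\equiv 0$, or an endpoint supremum equal to $+\infty$), after which both endpoint suprema are finite, and since $\norm{\cdot}_{\beta}\le\norm{\cdot}_{\infty}$ this forces $F$ to be a bounded multilinear functional on $(L_\infty(S_\infty^d))^k$; boundedness plus multilinearity then turns the pointwise, locally uniformly bounded analyticity of each $M_i(z)$ into holomorphy and boundedness of $\psi$ on the strip. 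Second, the exponents $\beta_i=\infty$ need a separate (routine) treatment, e.g.\ taking $M_i(z)$ constant when both endpoint exponents are infinite, and noting that $\mathrm{Re}\,\phi_i=0$ on a boundary line yields a contraction rather than an isometry, which only helps the inequality. With these caveats your proposal is essentially the original proof; it is also consistent with the paper's own machinery, which obtains the same kind of statement abstractly via Theorem \ref{thm:compinterp} and Corollary \ref{cor:spinterpo}.
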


More generally, we have the following complex interpolation result from \cite{bergh2012interpolation}.
\begin{theorem}[4.4.1 Theorem in \cite{bergh2012interpolation}]\label{thm:compinterp}
    Let $\{(A_{(\nu,0)},A_{(\nu,0)})\}_{\nu=1}^{n}$ and $(B_0,B_1)$ be compatible Banach spaces. Assume that $T: (A_{(1,0)} \cap A_{(1,0)}) \oplus \cdots \oplus (A_{(n,0)} \cap A_{(n,0)}) \to (B_0 \cap B_1)$ is multilinear.
    Also, assume that for any $(a_1,...,a_n) \in (A_{(1,0)} \cap A_{(1,0)}) \oplus \cdots \oplus (A_{(n,0)} \cap A_{(n,0)})$, we have
    \begin{align*}
        \norm{T(a_1,...,a_n)}_{B_0} \le M_0 \prod \limits_{\nu=1}^{n} \norm{a_{\nu}}_{A_{\nu,0}}
    \end{align*}
    and
    \begin{align*}
        \norm{T(a_1,...,a_n)}_{B_1} \le M_1 \prod \limits_{\nu=1}^{n} \norm{a_{\nu}}_{A_{\nu,1}}
    \end{align*}
    Then for any $0 \le \theta \le 1$, the function $T$ may be uniquely extended to a multilinear mapping from $(A_{(1,0)}, A_{(1,0)})_{[\theta]} \oplus \cdots \oplus (A_{(n,0)}, A_{(n,0)})_{[\theta]}$ to $(B_0,B_1)_{[\theta]}$ with norm at most $M_0^{1-\theta}M_1^{\theta}$, where $(A_{(\nu,0)}, A_{(\nu,0)})_{[\theta]}$ is the complex interpolation space of the pair $(A_{(\nu,0)}, A_{(\nu,0)})$ with exponent $\theta$ defined in \cite[p.88]{bergh2012interpolation}.
\end{theorem}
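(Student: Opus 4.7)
The plan is to follow the standard complex interpolation technique based on holomorphic functions on the strip $\Omega = \{z \in \mathbb{C} : 0 \le \operatorname{Re}(z) \le 1\}$ combined with Hadamard's three lines lemma. Recall that an element $a$ lies in $(A_0, A_1)_{[\theta]}$ precisely when there exists an admissible function $f$, i.e.\ a bounded continuous $(A_0 + A_1)$-valued function on $\Omega$ which is holomorphic on the interior and satisfies $f(iy) \in A_0$, $f(1+iy) \in A_1$, with $f(\theta) = a$, and $\|a\|_{[\theta]}$ is the infimum of $\max\{\sup_y \|f(iy)\|_{A_0}, \sup_y \|f(1+iy)\|_{A_1}\}$ over all such $f$.

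First I would fix elements $a_\nu$ in the intersection $A_{\nu,0} \cap A_{\nu,1}$ (where $T$ is already defined), an $\epsilon > 0$, and admissible representatives $f_\nu : \Omega \to A_{\nu,0} + A_{\nu,1}$ with $f_\nu(\theta) = a_\nu$ whose boundary norms on the two edges of $\Omega$ are bounded by $\|a_\nu\|_{[\theta]} + \epsilon$. A standard approximation (multiplication by $e^{\delta(z^2 - \theta^2)}$ to enforce decay at $\pm i\infty$, followed by truncation/convolution so that the representative takes values in $A_{\nu,0} \cap A_{\nu,1}$ throughout $\Omega$) lets us assume $f_\nu(z) \in A_{\nu,0} \cap A_{\nu,1}$ everywhere, so that $F(z) := T(f_1(z), \ldots, f_n(z))$ is defined pointwise. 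Multilinearity of $T$ together with holomorphy of each $f_\nu$ yields that $F : \Omega \to B_0 + B_1$ is bounded, continuous on $\Omega$, and holomorphic on its interior.

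Next I would verify the boundary bounds. Applying the two hypothesized estimates componentwise gives, for every $y \in \mathbb{R}$,
\begin{align*}
\|F(iy)\|_{B_0} &\le M_0 \prod_{\nu=1}^n \|f_\nu(iy)\|_{A_{\nu,0}} \le M_0 \prod_{\nu=1}^n \bigl(\|a_\nu\|_{[\theta]} + \epsilon\bigr), \\
\|F(1+iy)\|_{B_1} &\le M_1 \prod_{\nu=1}^n \|f_\nu(1+iy)\|_{A_{\nu,1}} \le M_1 \prod_{\nu=1}^n \bigl(\|a_\nu\|_{[\theta]} + \epsilon\bigr).
\end{align*}
To convert these separate boundary bounds into the geometric mean $M_0^{1-\theta} M_1^\theta$, I would rescale: the function $G(z) := M_0^{z-1} M_1^{-z} F(z)$ is holomorphic on the interior of $\Omega$, continuous and bounded on $\Omega$, and its $B_0$-norm on $\operatorname{Re}(z)=0$ and $B_1$-norm on $\operatorname{Re}(z)=1$ are both bounded by $\prod_\nu(\|a_\nu\|_{[\theta]} + \epsilon)$. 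Hence $G$ is admissible for $(B_0, B_1)_{[\theta]}$, which gives $\|G(\theta)\|_{(B_0, B_1)_{[\theta]}} \le \prod_\nu(\|a_\nu\|_{[\theta]} + \epsilon)$, and unwinding the rescaling yields $\|F(\theta)\|_{(B_0, B_1)_{[\theta]}} \le M_0^{1-\theta} M_1^\theta \prod_\nu(\|a_\nu\|_{[\theta]} + \epsilon)$. Letting $\epsilon \downarrow 0$ gives the claimed norm bound on tuples in the intersection; the unique extension to the interpolation spaces follows because $A_{\nu,0} \cap A_{\nu,1}$ is dense in $(A_{\nu,0}, A_{\nu,1})_{[\theta]}$, and the multilinear map $T$ is uniformly continuous on bounded subsets of the intersection with respect to the interpolation norm by the just-established bound.

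The main obstacle is technical rather than conceptual: ensuring the composite $F(z)$ is a genuine admissible function for the interpolation functor on $(B_0, B_1)$. This requires $F$ to be bounded and continuous up to the boundary of $\Omega$, with enough decay at $\pm i\infty$ that the three lines lemma applies directly (rather than only via Phragm\'en--Lindel\"of). Arranging this for a product of $n$ independently-chosen representatives is the source of the exponential-weight trick $e^{\delta(z^2 - \theta^2)}$ and the subsequent $\delta \downarrow 0$ passage. The other subtle point, which is where the multilinear setting requires slightly more care than the bilinear one treated in most textbook accounts, is the simultaneous approximation of all $f_\nu$ by intersection-valued representatives in a way that is compatible with applying $T$; this is handled by a polynomial/mollification argument using density of $A_{\nu,0} \cap A_{\nu,1}$ in each of $A_{\nu,0}$ and $A_{\nu,1}$ along the appropriate boundary.
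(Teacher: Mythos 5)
Your proposal is correct: it reconstructs the standard complex-interpolation proof (admissible intersection-valued representatives on the strip, the rescaling $M_0^{z-1}M_1^{-z}$, the three-lines/boundary estimate, then extension by density of $A_{\nu,0}\cap A_{\nu,1}$ in the interpolation space), which is exactly the argument behind Theorem 4.4.1 of Bergh--L\"ofstr\"om. The paper does not reprove this statement but imports it by citation, so your argument coincides with the proof the paper implicitly relies on.
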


Applying Theorem \ref{thm:compinterp} to $L_q(S_q^d)$ spaces, we have a more general version of Lemma \ref{lem:calderon}.
\begin{corollary}[Interpolation in $L_q(S_q^d)$ Space]\label{cor:spinterpo}
    Let\begin{align*}(\frac{1}{\beta_{1}(0)},...,\frac{1}{\beta_{k}(0)}) \in [0,1]^k \text{ and }(\frac{1}{\beta_{1}(1)},...,\frac{1}{\beta_{k}(1)}) \in [0,1]^k\end{align*} Assume that
    \begin{align*}
        F: (L_{\beta_{1}(0)}(S_{\beta_{1}(0)}) \cap L_{\beta_{1}(1)}(S_{\beta_{1}(1)})) \oplus \cdots \oplus (L_{\beta_{k}(0)}(S_{\beta_{k}(0)}) \cap L_{\beta_{k}(1)}(S_{\beta_{k}(1)})) \to \R
    \end{align*}
    is multilinear with
    \begin{align*}
        F(M_1,...,M_k) \le K \prod \limits_{\nu=1}^{n} \norm{M_{\nu}}_{\beta_{\nu}(0)}
    \end{align*}
    and
    \begin{align*}
        F(M_1,...,M_k) \le K \prod \limits_{\nu=1}^{n} \norm{M_{\nu}}_{\beta_{\nu}(1)}
    \end{align*}
    for all $(M_1,...,M_k) \in (L_{\beta_{1}(0)}(S_{\beta_{1}(0)}) \cap L_{\beta_{1}(1)}(S_{\beta_{1}(1)})) \oplus \cdots \oplus (L_{\beta_{k}(0)}(S_{\beta_{1}(0)}) \cap L_{\beta_{k}(1)}(S_{\beta_{1}(1)}))$.
Define $\beta_{\nu}(\theta)$ for $\nu=1,2,...,k$  and $0 \le \theta \le 1$ such that
\begin{align*}
    \frac{1}{\beta_{1}(\theta)}=(1-\theta)\frac{1}{\beta_{1}(0)}+\theta \frac{1}{\beta_{1}(1)}
\end{align*}
    Then for any $0 \le \theta \le 1$, the multilinear functional $F$ can be uniquely extended to \begin{align*}L_{\beta_{1}(\theta)}(S_{\beta_{1}(\theta)})  \oplus \cdots \oplus L_{\beta_{k}(\theta)}(S_{\beta_{1}(\theta)}) \end{align*} with
    \begin{align*}
        F(M_1,...,M_k) \le K \prod \limits_{\nu=1}^{n} \norm{M_{\nu}}_{\beta_{\nu}(\theta)}
    \end{align*}
    for all $(M_1,...,M_k) \in L_{\beta_{1}(\theta)}(S_{\beta_{1}(\theta)})  \oplus \cdots \oplus L_{\beta_{k}(\theta)}(S_{\beta_{1}(\theta)}) $.
\end{corollary}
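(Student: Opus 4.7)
The plan is to apply Theorem \ref{thm:compinterp} directly. Take $A_{\nu,0} = L_{\beta_\nu(0)}(S_{\beta_\nu(0)}^d)$ and $A_{\nu,1} = L_{\beta_\nu(1)}(S_{\beta_\nu(1)}^d)$ for each $\nu \in [k]$, and take $B_0 = B_1 = \mathbb{C}$ (so that $(B_0, B_1)_{[\theta]} = \mathbb{C}$ with no change of norm). The two hypotheses of the corollary are precisely the two boundary bounds required by Theorem \ref{thm:compinterp} with $M_0 = M_1 = K$. The theorem then yields a unique multilinear extension of $F$ to the product of interpolation spaces with operator norm bounded by $K^{1-\theta} K^{\theta} = K$.

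The only substantive point to verify is the identification
\begin{align*}
\bigl(L_{\beta_\nu(0)}(S_{\beta_\nu(0)}^d),\, L_{\beta_\nu(1)}(S_{\beta_\nu(1)}^d)\bigr)_{[\theta]}
= L_{\beta_\nu(\theta)}(S_{\beta_\nu(\theta)}^d)
\end{align*}
with matching norms, where $1/\beta_\nu(\theta) = (1-\theta)/\beta_\nu(0) + \theta/\beta_\nu(1)$. This follows from two standard complex-interpolation results that combine neatly because the outer (random-matrix $L_q$) exponent and the inner (Schatten $S_q^d$) exponent coincide: namely, the noncommutative Riesz--Thorin identity $(S_{p_0}^d, S_{p_1}^d)_{[\theta]} = S_{p_\theta}^d$ for Schatten classes on a fixed finite-dimensional matrix space, together with the vector-valued interpolation identity for scalar $L_q$ with values in an interpolation scale (\cite{bergh2012interpolation}, Section 5.1).

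For a self-contained proof of the identification, the plan is to use Stein's analytic interpolation. Given a random matrix $M \in L_{\beta_\nu(\theta)}(S_{\beta_\nu(\theta)}^d)$ with polar decomposition $M = U|M|$, define for $z$ in the closed strip $\{0 \le \re z \le 1\}$ the analytic family
\begin{align*}
M_\nu(z) = U\,|M|^{\beta_\nu(\theta)\bigl[(1-z)/\beta_\nu(0) + z/\beta_\nu(1)\bigr]},
\end{align*}
so that $M_\nu(\theta) = M$ and $\|M_\nu(j+it)\|_{\beta_\nu(j)} = \|M\|_{\beta_\nu(\theta)}$ for $j \in \{0,1\}$ and $t \in \R$. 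Substituting into $F$ and applying the three-lines lemma to the scalar-valued analytic function $z \mapsto F(M_1(z),\ldots,M_k(z))$ directly yields $|F(M_1,\ldots,M_k)| \le K \prod_\nu \|M_\nu\|_{\beta_\nu(\theta)}$.

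The main (minor) obstacle is ensuring the analytic family $M_\nu(\cdot)$ is well-defined on the strip when $|M|$ is not invertible and that $z \mapsto F(M_1(z),\ldots,M_k(z))$ is genuinely bounded analytic there; both are standard and handled by the regularization $|M| \leadsto |M| + \varepsilon I$ with a limit as $\varepsilon \downarrow 0$, using finiteness of $d$ to avoid any density or measurability subtleties.
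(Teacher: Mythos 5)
Your main route---apply Theorem \ref{thm:compinterp} with $B_0=B_1$ the scalar field and identify $\bigl(L_{\beta_{\nu}(0)}(S_{\beta_{\nu}(0)}^d),L_{\beta_{\nu}(1)}(S_{\beta_{\nu}(1)}^d)\bigr)_{[\theta]}=L_{\beta_{\nu}(\theta)}(S_{\beta_{\nu}(\theta)}^d)$---is exactly the paper's proof, which simply quotes this identification from Pisier's work on noncommutative $L_p$ spaces, so the proposal is correct and essentially the same. Your optional Stein/three-lines sketch is also viable, with two minor caveats: the equality $\|M_{\nu}(j+it)\|_{\beta_{\nu}(j)}=\|M\|_{\beta_{\nu}(\theta)}$ holds only after normalizing $\|M\|_{\beta_{\nu}(\theta)}=1$ (harmless by multilinear homogeneity), and since finiteness of $d$ does not make the probability-space direction finite-dimensional, the argument should first be run on bounded (or simple) random matrices---so that $M_{\nu}(z)$ stays in the intersection where $F$ is defined and analyticity is clear---and then extended by density, which is anyway needed for the uniqueness claim.
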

\begin{proof}
    By \cite{pisier2003non}, the spaces $L_{\beta_{\nu}(\theta)}(S_{\beta_{\nu}(\theta)})$ are complex interpolation spaces of the compatible Banach spaces $(L_{\beta_{\nu}(0)}(S_{\beta_{\nu}(0)}),L_{\beta_{\nu}(1)}(S_{\beta_{\nu}(1)}))$. The result follows from Theorem \ref{thm:compinterp}.
\end{proof}

\subsection{Spectrum of Gaussian Matrices}
Here we collect results about the spectrum of various Gaussian models that will be used later in conjunction with universality results.

   \begin{lemma}[(2.3), \cite{rudelson2010non}]\label{lem:Gaussianspectrum}
    For $m>d$, let $G$ be an $m \times d$ matrix whose entries are independent standard normal variables. Then,
    \begin{align*}
        \Pb(\sqrt{m}-\sqrt{d}-t \leq s_{\min}(G) \leq s_{\max}(G) \leq \sqrt{m}+\sqrt{d}+t) \geq 1 - 2e^{-t^2/2}
    \end{align*}
\end{lemma}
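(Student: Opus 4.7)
The plan is to combine two classical ingredients: (i) Gordon's min-max comparison theorem to control the expected values of the extreme singular values, and (ii) Gaussian concentration (Borell-TIS) applied to the fact that singular values are $1$-Lipschitz functions of the Gaussian entries.

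First, I would view $G$ as a vector in $\mathbb{R}^{md}$ with the standard Gaussian distribution, and note that the maps $G\mapsto s_{\max}(G) = \sup_{\|x\|=1,\|y\|=1} y^\top G x$ and $G\mapsto s_{\min}(G) = \inf_{\|x\|=1}\sup_{\|y\|=1} y^\top G x$ (for $m>d$) are $1$-Lipschitz with respect to the Frobenius/Euclidean norm. This is immediate from $|s_k(G)-s_k(G')|\le \|G-G'\|_{\mathrm{op}}\le \|G-G'\|_F$, which in turn follows from Weyl's inequality.

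Next, I would invoke Gordon's inequality (a min-max sharpening of Slepian's lemma) applied to the two Gaussian processes $X_{x,y}=y^\top G x$ and $Y_{x,y}=\langle g,x\rangle+\langle h,y\rangle$, where $g\in\mathbb{R}^d$ and $h\in\mathbb{R}^m$ are independent standard Gaussians. A short covariance computation shows the increments of $X$ dominate those of $Y$ in the sense Gordon requires, so one obtains
\begin{align*}
\mathbb{E}\, s_{\min}(G) \;\ge\; \mathbb{E}\|h\|-\mathbb{E}\|g\| \;\ge\; \sqrt{m}-\sqrt{d}, \qquad \mathbb{E}\, s_{\max}(G)\;\le\; \mathbb{E}\|h\|+\mathbb{E}\|g\|\;\le\;\sqrt{m}+\sqrt{d},
\end{align*}
using $\mathbb{E}\|h\|\le\sqrt{m}$ and $\mathbb{E}\|h\|\ge \sqrt{m} - \tfrac{1}{2\sqrt{m}}$-type bounds (actually the clean inequalities $\sqrt{m}-\tfrac{1}{2\sqrt{m}}\le \mathbb{E}\|h\|\le\sqrt{m}$) suffice, and similarly for $g$. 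Strictly speaking this gives a slightly worse leading constant, so in the write-up I would instead apply the standard form of Gordon's theorem which directly yields $\sqrt{m}-\sqrt{d}$ and $\sqrt{m}+\sqrt{d}$ as the expectation bounds.

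Finally, by the Borell--Tsirelson--Ibragimov--Sudakov Gaussian concentration inequality, any $1$-Lipschitz function $F$ of a standard Gaussian vector satisfies $\mathbb{P}(|F-\mathbb{E}F|\ge t)\le 2e^{-t^2/2}$. Applying this separately to $F=s_{\max}(G)$ and $F=-s_{\min}(G)$ and taking a union bound over the two deviation events gives
\begin{align*}
\mathbb{P}\bigl(s_{\max}(G)\ge \sqrt{m}+\sqrt{d}+t\bigr)\le e^{-t^2/2}, \qquad \mathbb{P}\bigl(s_{\min}(G)\le \sqrt{m}-\sqrt{d}-t\bigr)\le e^{-t^2/2},
\end{align*}
which combine to the claimed bound. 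The main conceptual step is Gordon's inequality; the concentration and Lipschitz parts are routine. Since the lemma is cited directly from \cite{rudelson2010non}, I would likely just quote it rather than reproduce the Gordon-plus-concentration argument in full, but the above is how I would reconstruct the proof from first principles.
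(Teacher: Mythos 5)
Your argument is correct: the paper does not prove this lemma but cites it directly from Rudelson--Vershynin, and the proof in that reference is exactly the Gordon comparison (Davidson--Szarek) bound on $\E\, s_{\max}$, $\E\, s_{\min}$ combined with Borell--TIS concentration for the $1$-Lipschitz singular value maps and a union bound, which is what you reconstruct. Your own caveat about recovering the clean constants $\sqrt{m}\pm\sqrt{d}$ (rather than losing $O(1/\sqrt{m})$ from bounding $\E\|h\|$, $\E\|g\|$ separately) is the only delicate point, and quoting the standard form of Gordon's theorem, as you suggest, handles it.
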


\begin{corollary}[Trace Moment of Embedding Error for Gaussian Model] \label{cor:gaussianmom}
    Let $G$ be an $m \times d$ matrix whose entries are independent normal random variables with variance $\frac{1}{m}$. Let $\varepsilon<\frac{1}{6}$ and $q \in \mathbb{N} \le m \varepsilon^2 $. Then, there exists $c_{\ref*{cor:gaussianmom}} > 1$ such that for $m \geq \frac{c_{\ref*{cor:gaussianmom}}d}{\varepsilon^2}$,
    \[ \E[\tr(G^TG - I_d)^{2q}]^\frac{1}{2q} \leq  \varepsilon \]
    
\end{corollary}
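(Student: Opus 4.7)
The plan is to reduce the $2q$-th trace moment of $H:=G^T G-I_d$ to a deviation bound on its operator norm, and control the latter using the singular-value concentration from Lemma \ref{lem:Gaussianspectrum}. Specifically, apply that lemma to the rescaled matrix $\sqrt{m}\,G$, whose entries are i.i.d.\ $N(0,1)$: for every $t\ge 0$,
\begin{align*}
    \Pr\Big(s_i(G)\in\big[1-\sqrt{d/m}-t/\sqrt{m},\ 1+\sqrt{d/m}+t/\sqrt{m}\big]\text{ for all }i\Big)\ge 1-2e^{-t^2/2}.
\end{align*}
Setting $Z:=\max_i|s_i(G)-1|$, this reads $\Pr(Z\ge\sqrt{d/m}+t/\sqrt{m})\le 2e^{-t^2/2}$; equivalently, $\sqrt{m}(Z-\sqrt{d/m})_+$ is subgaussian with variance proxy $1$.

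Since each eigenvalue of $H$ is $s_i(G)^2-1$, we have $\|H\|_{\mathrm{op}}=\max_i|s_i(G)^2-1|\le 2Z+Z^2$, and therefore $\tr H^{2q}\le\|H\|_{\mathrm{op}}^{2q}\le(2Z+Z^2)^{2q}$. Passing to $L^{2q}$-norms and applying Minkowski,
\begin{align*}
    \big(\E\tr H^{2q}\big)^{1/(2q)}\le 2\|Z\|_{L^{2q}}+\|Z\|_{L^{4q}}^{2}.
\end{align*}
Integrating the subgaussian tail from Step~1 yields $\|Z\|_{L^p}\le\sqrt{d/m}+C\sqrt{p/m}$ for an absolute constant $C$. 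Under the hypotheses $m\ge c_{\ref*{cor:gaussianmom}} d/\varepsilon^2$ and $q\le m\varepsilon^2$, we have $\sqrt{d/m}\le\varepsilon/\sqrt{c_{\ref*{cor:gaussianmom}}}$ and $\sqrt{q/m}\le\varepsilon$; combined with $\varepsilon<1/6$, which ensures the quadratic $\|Z\|_{L^{4q}}^{2}$ term is absorbed into $O(\varepsilon)$, choosing $c_{\ref*{cor:gaussianmom}}$ sufficiently large (depending only on $C$) delivers the desired bound $(\E\tr H^{2q})^{1/(2q)}\le\varepsilon$.

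The main technical obstacle is keeping the accumulated constants under control so that the final sum fits inside $\varepsilon$ rather than some larger multiple. If the naive step $\tr H^{2q}\le\|H\|_{\mathrm{op}}^{2q}$ proves too lossy because the subgaussian prefactor from integrating the tail is bigger than $1$, one can sharpen by splitting on the event $\{Z\le\sqrt{d/m}+t^\star/\sqrt{m}\}$ for $t^\star\asymp\sqrt{q\log(1/\varepsilon)}$: on this event the deterministic bound on the maximum eigenvalue is already $\lesssim\varepsilon$, while the complement has probability $\ll\varepsilon^{2q}$ and its contribution is absorbed using the crude deterministic estimate $\|H\|_{\mathrm{op}}\le\|G\|^2+1$ together with Gaussian moment bounds on $\|G\|$. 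The hypothesis $q\le m\varepsilon^2$ is exactly what makes both contributions small enough simultaneously.
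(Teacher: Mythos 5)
Your route is the same as the paper's: both apply Lemma \ref{lem:Gaussianspectrum} to $\sqrt{m}\,G$, pass from the normalized trace to the operator norm of $H=G^TG-I_d$, and convert the resulting subgaussian deviation of the singular values into a $2q$-th moment bound; your $2Z+Z^2$ plus Minkowski bookkeeping is a repackaging of the paper's three-region tail integration, and it correctly yields $(\E\tr H^{2q})^{1/2q}\le 2\sqrt{d/m}+O(\sqrt{q/m})+O(\varepsilon^2)=O(\varepsilon)$.

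The gap is in how you remove the implied constant. Enlarging $c_{\ref*{cor:gaussianmom}}$ only suppresses the $\sqrt{d/m}$ contribution; it does nothing to the term $C\sqrt{2q/m}$ in $2\|Z\|_{L^{2q}}$, and the hypothesis permits $q$ as large as $m\varepsilon^2$, so this term alone can be of size $2\sqrt{2}\,C\varepsilon$ with $C$ an order-one absolute constant coming from integrating the $2e^{-t^2/2}$ tail --- already exceeding $\varepsilon$ no matter how large $c_{\ref*{cor:gaussianmom}}$ is. Your fallback does not repair this: on the event $Z\le\sqrt{d/m}+t^\star/\sqrt{m}$ with $t^\star\asymp\sqrt{q\log(1/\varepsilon)}$ the deterministic bound is of order $\varepsilon\sqrt{\log(1/\varepsilon)}$, not $\varepsilon$, while making the complement probability $\le\varepsilon^{2q}$ forces the larger choice $t^\star\ge 2\sqrt{q\log(1/\varepsilon)}$, which only worsens the good-event bound. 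In fact the endpoint $q=m\varepsilon^2$ is genuinely borderline: for a fixed unit $v$ one has $\lambda_{\max}(H)\ge\|Gv\|^2-1$, whose $L^{2q}$ norm is about $(2/\sqrt{e})\sqrt{q/m}\approx 1.2\,\varepsilon$ there, and since $q\ge c_{\ref*{cor:gaussianmom}}d\gg\log d$ the factor $d^{-1/(2q)}$ is essentially $1$; so no sharpening of constants in this scheme can deliver the bound with constant exactly one at that endpoint. The paper finesses the same issue by rerunning its computation with $\varepsilon/C$ and absorbing $C$ into $c_{\ref*{cor:gaussianmom}}$ --- which handles the $\sqrt{d/m}$ part but implicitly also requires the $q$-condition to scale (effectively $q\lesssim m\varepsilon^2/C^2$, so that $(72q/m)^q$ and $(48q/m)^{2q}$ remain $\le(\mathrm{const}\cdot\varepsilon/C)^{2q}$). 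So your estimates are fine up to an absolute constant, matching what the paper's computation actually proves, but the specific finishing claims ("choose $c_{\ref*{cor:gaussianmom}}$ large" and the $t^\star$ split) are incorrect as stated; an honest conclusion of your argument is $\le C'\varepsilon$, or $\le\varepsilon$ under the tightened hypothesis $q\le m\varepsilon^2/C'^2$.
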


\begin{proof}

    By lemma \ref{lem:Gaussianspectrum}, we have
\begin{align*}
    \Pb(\sqrt{m}-\sqrt{d}-t \leq s_{\min}(\sqrt{m}G) \leq s_{\max}(\sqrt{m}G) \leq \sqrt{m}+\sqrt{d}+t) &\geq 1 - 2e^{-t^2/2}
\end{align*}
which after scaling becomes
\begin{align*}
    \Pb(1-\sqrt{\frac{d}{m}}- \frac{t}{\sqrt{m}} \leq s_{\min}(G) \leq s_{\max}(G) \leq 1+\sqrt{\frac{d}{m}}+ \frac{t}{\sqrt{m}}) &\geq 1 - 2e^{-t^2/2}
\end{align*}
And therefore, we have
\begin{align*}
    \Pb ( \| G^TG - I_d \| \leq ( 1 + \sqrt{\frac{d}{m}}+ \frac{t}{\sqrt{m}})^2 - 1 ) &\geq 1 - 2e^{-t^2/2}
\end{align*}

Let $s = \sqrt{\frac{d}{m}}+ \frac{t}{\sqrt{m}}$. We have
\begin{align*}
    \Pb ( \| G^TG - I_d \| \ge 2s + s^2 )  &\le 2\exp(-\frac{1}{2}(\sqrt{m}s-\sqrt{d})^2) \\
    &\le \exp(-\frac{m}{2} (s-\sqrt{d/m})^2) \\
\end{align*}

When $m \ge \frac{d}{\varepsilon^2}$, $\sqrt{d/m} \le \varepsilon$, the above result becomes 
\begin{align*}
    \Pb ( \| G^TG - I_d \| \ge 2s + s^2 )  &\le \exp(-\frac{m}{2} (s-\varepsilon)^2) \\
\end{align*} 
when $s \ge \varepsilon$.

In conclusion, we have
\begin{align*}
    \Pb ( \| G^TG - I_d \| \ge 3s)  &\le \exp(-\frac{m}{8} s^2) \qquad \text{ when } 2\varepsilon \le s < 1 
\end{align*}
and
\begin{align*}
     \Pb ( \| G^TG - I_d \| \ge 3s^2)  &\le \exp(-\frac{m}{8} s^2) \qquad \text{ when }  s \ge  1 
\end{align*}

By change of variables, we have
\begin{align*}
    \Pb ( \| G^TG - I_d \| \ge s')  &\le \exp(-\frac{m}{72} s'^2) \qquad \text{ when } 6\varepsilon \le s' < 3 \end{align*} and \begin{align*}
    \Pb ( \| G^TG - I_d \| \ge s')  &\le \exp(-\frac{m}{24} s') \qquad \text{ when }  s' \ge  3 \\
\end{align*}

Denote the random variable $\| G^TG - I_d \|$ by $R$. Then,
\begin{align*}
    \E (R^{2q}) &= 2q \int_0^\infty t^{2q-1}\Pb(R \ge t) dt \\
    &\le 2q(6\varepsilon)^{2q} + 2q \int_{6\varepsilon}^3 t^{2q-1}\exp(-\frac{m}{72} t^2) dt + 2q \int_{3}^\infty t^{2q-1}\exp(-\frac{m}{24} t) dt
\end{align*}
Performing change of variables $x = mt^2/72$ and $y = mt/24$ in the second and third integrals respectively,
\begin{align*}
    \E (R^{2q}) &\le 2q(6\varepsilon)^{2q} + 2q \int_{0}^\infty \left(\frac{72x}{m} \right)^\frac{2q-1}{2}\exp(-x) \frac{72}{2m}\left(\frac{72x}{m} \right)^{-1/2}  dx \\ &\qquad \qquad \qquad + 2q \int_{0}^\infty \left(\frac{24y}{m} \right)^{2q-1}\exp(-y) \frac{24}{m} dy \\
    &\le 2q(6\varepsilon)^{2q} + q \left(\frac{72}{m} \right)^q \int_{0}^\infty x^{q-1}\exp(-x) dx + 2q \left(\frac{24}{m} \right)^{2q} \int_{0}^\infty y^{2q-1}\exp(-y) dy \\
    &\le 2q(6\varepsilon)^{2q} + q \left(\frac{72}{m} \right)^q (q-2)! + 2q \left(\frac{24}{m} \right)^{2q} (2q-2)! \\
    &\le 2q(6\varepsilon)^{2q} +  \left(\frac{72q}{m} \right)^q + \left(\frac{48q}{m} \right)^{2q} \\
\end{align*}
Since $q \le m\varepsilon^2 $,  $\left(\frac{72q}{m} \right)^q \le  \left(72 \varepsilon^2 \right)^{q}$ and $\left(\frac{48q}{m} \right)^{2q} \le  \left(48 \varepsilon^2 \right)^{2q} \le  \left(48 \varepsilon \right)^{2q} $. So,
\begin{align*}
    \E (R^{2q}) &\le (2q+2 )(48\varepsilon)^{2q} \\
    \E (R^{2q})^\frac{1}{2q} &\le (2q+2 )^\frac{1}{2q} 48\varepsilon 
\end{align*}
Now, $48(2q+2 )^\frac{1}{2q} < C$, for some $C>0$. Thus,
\[ \E (R^{2q})^\frac{1}{2q} \le C\varepsilon \]
for some $C > 1$ independent of $\varepsilon$. When $m \ge C^2\frac{d}{\varepsilon^2}$, we can run the same argument with $\varepsilon/C$ in place of $\varepsilon$ to get, 
\[ \E (R^{2q})^\frac{1}{2q} \le \varepsilon \]

\end{proof}

\begin{lemma}[Trace Moment of Embedding Error for Decoupled Gaussian Model]\label{cor:indgaussianmom}
    Let $G_1$ and $G_2$ be independent $m \times d$ random matrices with i.i.d. Gaussian entries. Then for any positive integer $q$, there exists $c_{\ref*{cor:indgaussianmom}}>0$ such that 
    \[ \E \left[ \tr \left( G_1^TG_2 + G_2^TG_1 \right)^{2q} \right]^\frac{1}{2q} \le c_{\ref*{cor:indgaussianmom}}\sqrt{\max\{d, q\}}\sqrt{\max\{m, q\}}\]
    
\end{lemma}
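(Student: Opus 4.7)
The plan is to reduce the trace moment to an operator norm bound. Since $M := G_1^T G_2 + G_2^T G_1$ is a $d\times d$ matrix, one has $\tr(M^{2q}) = \frac{1}{d}\Tr(M^{2q}) \le \|M\|_{\text{op}}^{2q}$, so it suffices to show that $\E[\|M\|_{\text{op}}^{2q}]^{1/(2q)} \le C\sqrt{md}$. By the triangle inequality, $\|M\|_{\text{op}} \le 2\|G_1^T G_2\|_{\text{op}}$, and the naive bound $\|G_1^T G_2\|\le \|G_1\|\|G_2\|\sim m$ is too weak. The key is to exploit the independence of $G_1$ and $G_2$ through rotational invariance.

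Concretely, I would condition on $G_1$ and use the thin SVD $G_1 = U_1 \Sigma_1 V_1^T$, with $U_1 \in \R^{m\times d}$ having orthonormal columns (assuming the embedding regime $m\ge d$). Then
\begin{align*}
\|G_1^T G_2\|_{\text{op}} = \|V_1\Sigma_1 U_1^T G_2\|_{\text{op}} \le s_{\max}(G_1)\cdot \|U_1^T G_2\|_{\text{op}}.
\end{align*}
By orthogonal invariance of the Gaussian distribution, conditional on $G_1$ the matrix $\tilde G := U_1^T G_2$ has i.i.d.\ standard Gaussian entries and is a $d\times d$ matrix. Taking the conditional expectation over $G_2$ first and then over $G_1$ decouples the moments:
\begin{align*}
\E\bigl[\|G_1^T G_2\|^{2q}\bigr] \le \E\bigl[s_{\max}(G_1)^{2q}\bigr]\cdot \E\bigl[\|\tilde G\|^{2q}\bigr].
\end{align*}

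Each factor on the right is a Gaussian operator-norm moment, which follows from Lemma \ref{lem:Gaussianspectrum}. The tail bound $\Pb(s_{\max}(G_1) \ge \sqrt m + \sqrt d + t) \le 2e^{-t^2/2}$ integrates to give $\E[s_{\max}(G_1)^{2q}]^{1/(2q)} \le C(\sqrt m + \sqrt d + \sqrt q)$, and analogously $\E[\|\tilde G\|^{2q}]^{1/(2q)} \le C(2\sqrt d + \sqrt q)$. In the regime $q \le O(d)$ and $m\ge d$, these simplify to $O(\sqrt m)$ and $O(\sqrt d)$ respectively, yielding
\begin{align*}
\E\bigl[\|G_1^T G_2\|^{2q}\bigr]^{1/(2q)} \le C'\sqrt{md},
\end{align*}
and hence $\E[\tr M^{2q}]^{1/(2q)} \le \E[\|M\|_{\text{op}}^{2q}]^{1/(2q)} \le 2C'\sqrt{md}$, as claimed.

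The only technical point is the passage from the sub-Gaussian tail estimate to a moment bound; this is routine, but it is precisely here that the constraint $q \le O(d)$ enters, since otherwise the $\sqrt q$ term would dominate $\sqrt d$ in the bound for $\|\tilde G\|$ and we would lose the right order. Unlike the bulk of the paper, this lemma requires none of the decoupling machinery or trace inequalities developed later: the main content is simply the observation that conditioning and orthogonal invariance turn $G_1^T G_2$ into a product of two ``independent'' Gaussian objects, one of size roughly $\sqrt m$ and the other of size roughly $\sqrt d$.
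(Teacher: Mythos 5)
Your proposal is correct and follows essentially the same route as the paper: reduce the normalized trace to the operator norm, split off one factor by conditioning and using the SVD of one Gaussian matrix, and use orthogonal invariance to see that the remaining factor is a $d\times d$ Gaussian matrix, giving the product bound $O(\sqrt{m})\cdot O(\sqrt{d})$. The only cosmetic differences are that you condition on $G_1$ instead of $G_2$ and obtain the $d\times d$ Gaussian norm moments by integrating the tail of Lemma \ref{lem:Gaussianspectrum} rather than citing the moment bounds of Bandeira--van Handel, both of which are immaterial.
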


\begin{proof}
    We wish to bound, 
    \begin{align*}
        \E \left[ \tr \left( G_1^TG_2 + G_2^TG_1 \right)^{2q} \right]^\frac{1}{2q} &\le \E \left[ \norm{G_1^TG_2 + G_2^TG_1}^{2q}  \right]^\frac{1}{2q} \\
        &\le \E \left[ \left(\norm{G_1^TG_2} + \norm{G_2^TG_1}\right)^{2q}  \right]^\frac{1}{2q} \\
        &\le \E \left[ \norm{G_2^TG_1}^{2q} \right]^\frac{1}{2q} + \E \left[ \norm{G_1^TG_2}^{2q} \right]^\frac{1}{2q}
    \end{align*}

    Since $G_2^TG_1$ and $G_1^TG_2$ are identically distributed, it is enough to bound $\E \left[ \norm{G_1^TG_2}^{2q} \right]^\frac{1}{2q}$.
    We shall first condition on $G_2$ and compute $\E_{G_1} \left[ \norm{G_1^TG_2}^{2q} \right]$
    Let $G_2 = U \abs{G_2} V^T$ be the SVD of $G_2$ where $U$ is an $m \times d$ matrix with orthogonal columns. Then, 
    \begin{align*}
        \E_{G_1} \left[ \norm{G_1^TG_2}^{2q} \right] &\le \E_{G_1} \left[ \norm{G_1^TU\abs{G_2}V^T}^{2q} \right] \\
        &\le \norm{G_2}^{2q} \E_{G_1} \left[ \norm{G_1^TU}^{2q} \right]
    \end{align*}
    By orthogonal invariance of Gaussians, $G_1^TU$ is a $d \times d$ Gaussian matrix. Thus, 
    \[ \E_{G_1} \left[ \norm{G_1^TU}^{2q} \right] \le (c_1\sqrt{\max\{d, q\}})^{2q}\]
for some constant $c_1>0$, by applying Proposition 2.1 and Lemma 2.2 from \cite{bandeira2016sharp} to 
\begin{align*}\sym(G_1^TU) := \left[ {\begin{array}{*{20}{c}}
0&G_1^TU\\
{{G_1^TU^T}}&0
\end{array}} \right].
\end{align*}
Similarly, we can get
    \[ \E[\norm{G_2}^{2q}] \le (c_2\sqrt{\max\{m, q\}})^{2q} \]
\end{proof}

\section{Full proof of Theorem \ref{t:ose-full}}\label{sec:osnap-proof}

This section contains the full proof of Theorem  \ref{t:ose-full} along with the various lemmas that are used along the way.

\begin{itemize}
    \item Section \ref{subsec:osnapfinal} proves the final subspace embedding guarantee using a bound on the trace moments of the embedding error.
    \item Section \ref{subsec:decouposnap} has details about the decoupling step that reduces the problem of controlling moments of $(SU)^TSU - pm\cdot I_d$ to controlling moments of $(S_1U)^TS_2U+(S_2U)^TS_1U$, for independent $S_1$ and $S_2$.
    \item Section \ref{subsec:osnaptracemom} proves the trace moment bound using 2D interpolation of moments of the form $(S_1(t)U)^TS_2(t)U+(S_2(t)U)^TS_1(t)U$.
    \item Section \ref{subsec:diffineq} obtains the differential inequality for the derivative of the interpolant.
    \item Section \ref{subsec:osnaptraceineq} proves the trace inequality required for obtaining the differential inequality.
\end{itemize}

\subsection{Proving the subspace embedding guarantee for OSNAP} \label{subsec:osnapfinal}
In this section we give the full proof of Theorem \ref{t:ose-full}.
\osnapmainthm*

    \begin{proof}
    Let $X:= \frac{1}{\sqrt{pm}}SU$. We first assume that the collection of all the random variables 
$\{ \xi_{(l,\gamma)}, \mu_{(l,\gamma)} \}_{l \in [n], \gamma \in [s]}$ 
    in the unscaled OSNAP construction are fully independent, and later we will check what is the minimum independence needed.
    
    We observe that to prove the theorem, it is enough to show that
    \begin{align*} \Pb \left( \| X^TX - I_d \| \le \varepsilon \right) \ge 1-\delta \end{align*}
    because
    \begin{align*} \| X^TX - I_d \| < \varepsilon \end{align*}
    will imply \begin{align*}1 - \varepsilon \leq v^TX^TXv \le 1 + \varepsilon \end{align*} and \begin{align*} (1-\varepsilon)^2 \le \| X v \|^2 \le (1+\varepsilon)^2 \end{align*}
    for any $v \in \mathbb{R}^d$, and then the claim about singular values follows from the min-max principle. To show
    \begin{align*} \Pb \left( \| X^TX - I_d \| \le \varepsilon \right) \ge 1-\delta \end{align*} we first use Markov's inequality and obtain
    \begin{align*}
        \Pb \left( \| X^TX - I_d \|^{2q} \ge \delta^{-1} \E [\| X^TX - I_d \|^{2q}] \right) &\le \delta
    \end{align*}
    Taking the $(2q)$th root, we have
    \begin{align*}
        \Pb \left( \| X^TX - I_d \| \ge \delta^{-\frac{1}{2q}} \E [\| X^TX - I_d \|^{2q}]^{\frac{1}{2q}} \right) &\le \delta
    \end{align*}
We observe that
\begin{align*}
    \| X^TX - I_d \|^{2q} =(s_{\max}(X^TX - I_d))^{2q} \le \sum \limits_{i=1}^d (s_{i}(X^TX - I_d))^{2q} = d \tr (X^TX - I_d)^{2q}
\end{align*}
Therefore, we have
\begin{align*}
        \Pb \left( \| X^TX - I_d \| \ge \delta^{-\frac{1}{2q}} \E [d \tr (X^TX - I_d)^{2q}]^\frac{1}{2q} \right) &\le \delta
    \end{align*}
which, after simplification, becomes
\begin{align*}
        \Pb \left( \| X^TX - I_d \| \ge (d/\delta)^{\frac{1}{2q}} \E [ \tr (X^TX - I_d)^{2q}]^\frac{1}{2q} \right) &\le \delta
    \end{align*}    
For $q> \log (\frac{d}{\delta}) $, we have
\begin{align*}
    (d/\delta)^{\frac{1}{2q}}=\exp(\log(d/\delta)\frac{1}{2q}) \le \exp(\log(d/\delta)\frac{1}{2\log (\frac{d}{\delta})}) \le {\sqrt{e}}
\end{align*}
Therefore, we have
\begin{align*}
        \Pb \left( \| X^TX - I_d \| \ge \sqrt{e} \E [ \tr (X^TX - I_d)^{2q}]^\frac{1}{2q} \right) &\le \delta
    \end{align*}
By Theorem \ref{prop:momestdecoupmatrix}, when
$m \geq c_{\ref*{prop:momestdecoupmatrix}.1} \frac{d+q}{(\varepsilon/\sqrt{e})^2}$ and $q \in \N$ satisfying $2 \le q \le m$ and \begin{align*}pm \ge (\max\{\frac{c_{\ref*{prop:momestdecoupmatrix}.2} q^{2}}{(\varepsilon/\sqrt{e})}, c_{\ref*{prop:momestdecoupmatrix}.3} q^3\})^{1+\frac{2}{q-2}} \end{align*} we have
\begin{align*}
    \E[\tr(X^TX - I_d)^{2q}]^\frac{1}{2q} \leq  (\varepsilon/\sqrt{e})
\end{align*}
which will imply
\begin{align*}
        \Pb \left( \| X^TX - I_d \| \ge \varepsilon \right) &\le \delta
    \end{align*}
when combined with the previous calculations.
    To ensure that the assumptions of Theorem \ref{prop:momestdecoupmatrix} are satisfied, we need to choose $q$ such that
    \begin{align*}pm \ge (\max \left\{ \frac{c_{\ref*{prop:momestdecoupmatrix}.2}\sqrt{e} q^2}{ \varepsilon}, c_{\ref*{prop:momestdecoupmatrix}.3} q^3 \right\})^{1+\frac{2}{q-2}} \end{align*}
Without loss of generality, we assume that $c_{\ref*{prop:momestdecoupmatrix}.2}>1$ and $c_{\ref*{prop:momestdecoupmatrix}.3}>1$.
In addition, we recall that we also need the requirement $q> \log (\frac{d}{\delta}) $ for one of the previous steps.

Now, we claim that it suffices to require $m \ge 8c_{\ref*{prop:momestdecoupmatrix}.1} \frac{d+\log(d/\varepsilon\delta)}{(\varepsilon/\sqrt{e})^2}$, and 
\begin{equation}\label{pmassum}
    pm \ge \max \left\{ c_{\ref*{prop:momestdecoupmatrix}.2}{e} 8^2e^3, c_{\ref*{prop:momestdecoupmatrix}.3} 8^3 e^3 \right\}(\max \left\{ \frac{c_{\ref*{prop:momestdecoupmatrix}.2}{e} (8\log(\frac{d}{\varepsilon \delta}))^2}{ \varepsilon}, c_{\ref*{prop:momestdecoupmatrix}.3} (8\log(\frac{d}{\varepsilon \delta}))^3 \right\})
\end{equation}
and choose \begin{align*}q=\lceil 2\log (\frac{d}{\varepsilon \delta} )\rceil+2\end{align*} to obtain the desired high probability bound. Since the quantity $\E[\tr(X^TX - I_d)^{2q}]$ remains unchanged if we only assume that subsets of the entries of $X$ of size $2q$ are independent instead of arbitrary subsets of the entries of $X$ being independent and we choose $q=\lceil 2\log (\frac{d}{\varepsilon \delta} )\rceil+2$, we only need $S$ to be an $O(\log(d/\varepsilon\delta))$-wise independent unscaled OSNAP.

Now we will check that this choice really works. First, by definition, we have $q> \log (d/\delta)$ and the lower bound on $m$ implies $q<m$, so we still have
\begin{align*}
        \Pb \left( \| X^TX - I_d \| \ge \sqrt{e} \E [ \tr (X^TX - I_d)^{2q}]^\frac{1}{2q} \right) &\le \delta
    \end{align*}
So it remains to check 
\begin{align*}pm \ge (\max \left\{ \frac{c_{\ref*{prop:momestdecoupmatrix}.2}{e} q^2}{ \varepsilon}, c_{\ref*{prop:momestdecoupmatrix}.3} q^3 \right\})^{\frac{q_1}{q_1-2}}\end{align*}
Note that $q=2\lceil \log (\frac{d}{\varepsilon \delta})\rceil+2<8\log (\frac{d}{\varepsilon \delta})$ because $d>10$, and therefore the above requirement is implied by
\begin{align*}pm \ge (\max \left\{ \frac{c_{\ref*{prop:momestdecoupmatrix}.2}{e} (8\log(\frac{d}{\varepsilon \delta}))^2}{ \varepsilon}, c_{\ref*{prop:momestdecoupmatrix}.3} (8\log(\frac{d}{\varepsilon \delta}))^3 \right\})^{1+\frac{2}{q-2}}  \end{align*}
Since $q-2 \ge 2\log (\frac{d}{\varepsilon \delta})$, we also claim that
\begin{align*}
    &(\max \left\{ \frac{c_{\ref*{prop:momestdecoupmatrix}.2}{e} (8\log(\frac{d}{\varepsilon \delta}))^2}{ \varepsilon}, c_{\ref*{prop:momestdecoupmatrix}.3} (8\log(\frac{d}{\varepsilon \delta}))^3 \right\})^{\frac{2}{q-2}} \\\le& (\max \left\{ \frac{c_{\ref*{prop:momestdecoupmatrix}.2}{e} (8\log(\frac{d}{\varepsilon \delta}))^2}{ \varepsilon}, c_{\ref*{prop:momestdecoupmatrix}.3} (8\log(\frac{d}{\varepsilon \delta}))^3 \right\})^{\frac{1}{\log (\frac{d}{\varepsilon \delta})}}
\end{align*}
will be bounded by an explicit constant, and this is where we need the extra factor $\frac{1}{\varepsilon}$ inside the $\log$.
We observe that
\begin{align*}
    &(\frac{ (\log(\frac{d}{\varepsilon \delta}))^2}{ \varepsilon})^{\frac{1}{\log (\frac{d}{\varepsilon \delta})}}
    \\ = & \exp(2\log\log(\frac{d}{\varepsilon \delta})/\log (\frac{d}{\varepsilon \delta}))\cdot (\frac{1}{\varepsilon})^{\frac{1}{\log (\frac{d}{\varepsilon \delta})}}
    \\ \le & e^2 \cdot (\frac{1}{\varepsilon})^{\frac{1}{\log (\frac{1}{\varepsilon})}}
    \\\le & e^3
\end{align*}
and
\begin{align*}
    &( (\log(\frac{d}{\varepsilon \delta}))^{\frac{3}{\log (\frac{d}{\varepsilon \delta})}}
    \\ = & \exp(3\log\log(\frac{d}{\varepsilon \delta})/\log (\frac{d}{\varepsilon \delta})) \le e^3
\end{align*}
where we use the fact that $\log\log(\frac{d}{\varepsilon \delta})>0$ when $d>10$.
Therefore, we have
\begin{align*}&(\max \left\{ \frac{c_{\ref*{prop:momestdecoupmatrix}.2}{e} (8\log(\frac{d}{\varepsilon \delta}))^2}{ \varepsilon}, c_{\ref*{prop:momestdecoupmatrix}.3} (8\log(\frac{d}{\varepsilon \delta}))^3 \right\})^{\frac{2}{q-2}} \\\le& (\max \left\{ \frac{c_{\ref*{prop:momestdecoupmatrix}.2}{e} (8\log(\frac{d}{\varepsilon \delta}))^2}{ \varepsilon}, c_{\ref*{prop:momestdecoupmatrix}.3} (8\log(\frac{d}{\varepsilon \delta}))^3 \right\})^{\frac{1}{\log (\frac{d}{\varepsilon \delta})}}
\\\le& \max \left\{ c_{\ref*{prop:momestdecoupmatrix}.2}{e} 8^2e^3, c_{\ref*{prop:momestdecoupmatrix}.3} 8^3 e^3 \right\} \end{align*}
Therefore it is enough to require,
\begin{align*}pm \ge \max \left\{ c_{\ref*{prop:momestdecoupmatrix}.2}{e} 8^2e^3, c_{\ref*{prop:momestdecoupmatrix}.3} 8^3 e^3 \right\}(\max \left\{ \frac{c_{\ref*{prop:momestdecoupmatrix}.2}{e} (8\log(\frac{d}{\varepsilon \delta}))^2}{ \varepsilon}, c_{\ref*{prop:momestdecoupmatrix}.3} (8\log(\frac{d}{\varepsilon \delta}))^3 \right\})  \end{align*}
which is guaranteed by our assumption (\ref{pmassum}) on $pm$.

    Moreover, recall that the definition of OSNAP requires $p \le 1$. Therefore, if the requirement (\ref{pmassum}) already forces $p>1$, then we cannot construct an OSNAP satisfying the requirement. This will generally not be an issue because lower bound in the requirement (\ref{pmassum}) grows much slower than $m$. However, when $m$ is small, this might happen because the constants can be large. For the completeness of the result, we mention that we can always use $p=1$ even when (\ref{pmassum}) requires $p>1$. To prove this claim, it is enough to find values of $m$ for which a dense $m \times n$ matrix populated with random signs satisfies the OSE property for a given $0 < \varepsilon, \delta < 1$, and this follows directly for $m \ge c_4 \cdot \frac{d+\log(1/\delta)}{\varepsilon^2}$ using \cite[Theorem 4.6.1]{vershynin2018high} for some constant $c_4$.
    
     \end{proof}
\subsection{Proving the decoupling lemma for OSNAP} \label{subsec:decouposnap}
Let $S$ have the fully independent unscaled OSNAP distribution as described in Definition \ref{def:osnap}. Then, recall that,
\begin{align*}
    S &= \sum_{l=1}^n \sum_{\gamma=1}^{pm} \xi_{l,\gamma} e_{\mu_{(l, \gamma)}} e_l ^T \\
    &=: \sum_{l=1}^n \sum_{\gamma=1}^{pm} Z_{l,\gamma} 
\end{align*}
where $\{ \xi_{l,\gamma} \}_{l \in [n], \gamma \in [s]}$ is a collection of  independent Rademacher random variables, $\{ \mu_{l,\gamma} \}_{l \in [n], \gamma \in [s]}$ is a collection of independent random variables such that each $\mu_{l,\gamma}$ is uniformly distributed in $[(m/s)(\gamma-1)+1:(m/s)\gamma]$ and $e_{\mu_{(l, \gamma)}}$ and $e_l$ represent basis vectors in $\R^m$ and $\R^n$ respectively. 

Recalling that our goal is to look at the moments of $(SU)^T(SU) - pm I_d$, we observe that,
\begin{align*}
    U^TS^TSU - pm\cdot I_d &= \left( \sum_{i=1}^m U^Ts_is_i^TU \right) - pm\cdot I_d \\
\end{align*}
where $s_i^T$ denotes the $i\textsuperscript{th}$ row of $S$. Then, letting $s_{ij}$ denote the entries of $S$, 
\begin{align*}
    U^TS^TSU - pm\cdot I_d &= \left( \sum_{i=1}^m U^T\left( \sum_{j=1}^n s_{ij}e_{j} \right) \left( \sum_{j'=1}^n s_{ij'}e_{j'}^T \right)U \right) - pm\cdot I_d \\
    &=  \sum_{i=1}^m \left( \sum_{j=1}^n s_{ij}u_{j} \right) \left( \sum_{j'=1}^n s_{ij'}u_{j'}^T \right)  - pm\cdot I_d \\
\end{align*}
where $u_j^T$ denotes the $j\textsuperscript{th}$ row of $U$. Separating the cases where $j=j'$ and $j \neq j'$, 
\begin{equation}\label{eq:diagoffdiag}
\begin{aligned}
    U^TS^TSU - pm\cdot I_d &=  \sum_{i=1}^m \sum_{j=1}^n s_{ij}^2 u_ju_j^T  - pm\cdot I_d + \sum_{i=1}^m \sum_{\substack{j,j' =1 \\ j \neq j'}}^n s_{ij}s_{ij'} u_ju_{j'}^T \\
    &=  \sum_{j=1}^n \left( \sum_{i=1}^m s_{ij}^2 \right) u_ju_j^T  - pm\cdot I_d + \sum_{i=1}^m \sum_{\substack{j,j' =1 \\ j \neq j'}}^n s_{ij}s_{ij'} u_ju_{j'}^T
\end{aligned}
\end{equation}
\begin{remark}
    In Equation \eqref{eq:diagoffdiag}, we decomposed the the embedding error $U^TS^TSU - pm\cdot I_d$ into two parts, the diagonal term
    \begin{align*}
        \sum_{j=1}^n \left( \sum_{i=1}^m s_{ij}^2 \right) u_ju_j^T  - pm\cdot I_d
    \end{align*}
    and the off-diagonal term
    \begin{align*}
        \sum_{i=1}^m \sum_{\substack{j,j' =1 \\ j \neq j'}}^n s_{ij}s_{ij'} u_ju_{j'}^T
    \end{align*}
    This decomposition is the key to understand why OSNAP model only needs $\tilde O(\frac{1}{\varepsilon})$ nonzero entries per column but the i.i.d. entries model might require $\tilde O(\frac{1}{\varepsilon^2})$ nonzero entries per column. In fact, as we will see very soon, the key difference between the OSNAP model and the i.i.d. entries model is that the diagonal term vanishes in OSNAP model but does not vanish in the i.i.d. entries model.
\end{remark}
By construction, $\sum_{i=1}^m s_{ij}^2 = pm$, so the diagonal term becomes
\begin{align*}
        \sum_{j=1}^n \left( \sum_{i=1}^m s_{ij}^2 \right) u_ju_j^T  - pm\cdot I_d=pm \left( \sum_{j=1}^n u_ju_j^T  - I_d \right)=0
    \end{align*}
and therefore we have
\begin{align*}
    U^TS^TSU - pm\cdot I_d &=  pm \left( \sum_{j=1}^n u_ju_j^T  - I_d \right) + \sum_{i=1}^m \sum_{\substack{j,j' =1 \\ j \neq j'}}^n s_{ij}s_{ij'} u_ju_{j'}^T \\
    &= \sum_{i=1}^m \sum_{\substack{j,j' =1 \\ j \neq j'}}^n s_{ij}s_{ij'} u_ju_{j'}^T
\end{align*}

To analyze the off-diagonal term, we use the standard technique of decoupling,

\begin{lemma}[Decoupling] \label{lem:decoup}
When $S$ has the fully independent unscaled OSNAP distribution, we have
\begin{align*}
    \E [ \tr (U^TS^TSU - pm\cdot I_d)^{2q} ] &= \E \left[ \tr \left( \sum_{i=1}^m \sum_{j,j' =1, j \neq j'}^n s_{ij}s_{ij'} u_ju_{j'}^T \right)^{2q} \right] \\
\end{align*}
Consequently, we have
\begin{align*}
    \E [ \tr (U^TS^TSU - pm\cdot I_d)^{2q} ] &\le \E_{S,S'} \left[ \tr \left(  2\paren*{(SU)^TS'U + (SU)^TS'U} \right)^{2q} \right]
\end{align*}
where $S'$ is an independent copy of $S$.
\end{lemma}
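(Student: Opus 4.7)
The plan is to dispatch the equality first and then execute a symmetric decoupling argument. For the equality $\E[\tr(U^\top S^\top SU - pm\,I_d)^{2q}] = \E[\tr M^{2q}]$, with $M$ the off-diagonal piece in the statement, I just invoke the calculation preceding the lemma: in the fully independent unscaled OSNAP, each column of $S$ contains exactly $pm$ nonzero entries of magnitude one, so $\sum_i s_{ij}^2 = pm$ deterministically, and the diagonal term $\sum_j (\sum_i s_{ij}^2)\,u_j u_j^\top - pm\,I_d$ vanishes identically via $\sum_j u_j u_j^\top = U^\top U = I_d$.

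For the decoupling, note that $M = \sum_{j \neq j'}\langle S_{\cdot j}, S_{\cdot j'}\rangle\, u_j u_{j'}^\top$ is a bilinear form in the \emph{independent} columns of $S$ (columns are independent in the fully independent OSNAP model). Introduce auxiliary i.i.d.\ Bernoulli$(1/2)$ random variables $\delta_1,\dots,\delta_n$ and set $I := \{j : \delta_j = 1\}$. Since $\E_\delta[\delta_j(1-\delta_{j'}) + \delta_{j'}(1-\delta_j)] = \tfrac{1}{2}\,\ind_{\{j \neq j'\}}$, we obtain
\[
M \;=\; 2\,\E_\delta\bigl[T(\delta) + T(\delta)^\top\bigr], \qquad T(\delta) := \sum_{j \in I,\, j' \in I^c}\langle S_{\cdot j}, S_{\cdot j'}\rangle\, u_j u_{j'}^\top.
\]
Since $M$ is symmetric and $\Phi(X) := \tr X^{2q}$ is convex (and $2q$-homogeneous) on symmetric matrices, Jensen's inequality gives $\E_S \Phi(M) \le 2^{2q}\, \E_S \Phi(T(\delta^*) + T(\delta^*)^\top)$ for a deterministic realization $\delta^*$ achieving the supremum of $\E_S \Phi(T(\delta) + T(\delta)^\top)$ over $\delta$, with resulting partition $I^*, I^{*c}$.

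Because the columns of $S$ are independent, the columns indexed by $I^{*c}$ can be replaced in joint law by the corresponding columns of an independent copy $S'$, transforming $T(\delta^*)$ into $\widetilde T := \sum_{j \in I^*,\, j' \in I^{*c}}\langle S_{\cdot j}, S'_{\cdot j'}\rangle\, u_j u_{j'}^\top$. To promote the index set from $I^* \times I^{*c}$ to the full product $[n] \times [n]$, I condition on $\{S_{\cdot j}\}_{j \in I^*}$ together with $\{S'_{\cdot j'}\}_{j' \in I^{*c}}$: the remaining columns of $S$ and $S'$ are mean zero and independent of the conditioning, so every block of the sum $\sum_{j, j' \in [n]}\langle S_{\cdot j}, S'_{\cdot j'}\rangle\,(u_j u_{j'}^\top + u_{j'} u_j^\top)$ outside $I^* \times I^{*c}$ has vanishing conditional expectation. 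A second application of Jensen's inequality, combined with the identity $\sum_{j, j'}\langle S_{\cdot j}, S'_{\cdot j'}\rangle\, u_j u_{j'}^\top = (SU)^\top(S'U)$, yields
\[
\E_S \Phi(M) \;\le\; 2^{2q}\,\E_{S, S'}\, \Phi\bigl((SU)^\top(S'U) + (S'U)^\top(SU)\bigr),
\]
which matches the stated inequality after absorbing the factor $2$ inside the trace as $2^{2q}$.

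I expect the delicate point to be the final conditioning step that inflates $I^* \times I^{*c}$ to $[n] \times [n]$: one must check that all four blocks $I^* \times I^*,\ I^{*c} \times I^*,\ I^{*c} \times I^{*c},\ I^* \times I^{*c}$ behave correctly, i.e., that on each ``off-diagonal'' block at least one factor is a mean zero column that is independent of the conditioning (and is \emph{not} the same variable as the conditioning -- here the independence of $S$ and $S'$ eliminates the ostensibly dangerous case $j = j'$ within $I^*$ or $I^{*c}$). The use of the symmetric weights $\delta_j(1-\delta_{j'}) + \delta_{j'}(1-\delta_j)$, rather than the asymmetric $\delta_j(1-\delta_{j'})$ alone, is what produces the symmetrized expression $(SU)^\top(S'U) + (S'U)^\top(SU)$ appearing in the lemma.
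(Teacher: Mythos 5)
Your proof is correct and follows essentially the same route as the paper: the diagonal term vanishes because each column has exactly $pm$ unit-magnitude nonzeros, and the decoupling uses the same Bernoulli auxiliary variables, replacement of one group of independent columns by an independent copy, and completion to the full index set by adding conditionally mean-zero blocks via Jensen's inequality. Your variations (fixing a maximizing $\delta^*$ instead of keeping $\E_{\mathbf{w}}$ outside, and phrasing the completion step as a conditional expectation) are cosmetic and both valid.
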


\begin{proof}
    Let $\mathbf{w}^T = (w_1, \ldots, w_n)$ be a vector of independent random variables such that $\Pb(w_i=1)=\Pb(w_i=0)=1/2$. Then, whenever $j\neq j'$, $\E[\mathbf{1}_{w_j \neq w_{j'}}] = 1/2$. So we have, 
    \begin{align*}
        \sum_{i=1}^m \sum_{j,j' =1, j \neq j'}^n s_{ij}s_{ij'} u_ju_{j'}^T = 2\E_{\mathbf{w}} \left[ \sum_{i=1}^m \sum_{j,j' =1}^n \mathbf{1}_{w_j \neq w_{j'}} s_{ij}s_{ij'} u_ju_{j'}^T \right]
    \end{align*}
    By Jensen's inequality \cite[Lemma 4.5, p.86]{kallenberg2021foundations}
    \begin{align*}
        \E \left[ \tr \left( \sum_{i=1}^m \sum_{j,j' =1, j \neq j'}^n s_{ij}s_{ij'} u_ju_{j'}^T \right)^{2q} \right] &= \E_S \left[ \tr \left( \E_{\mathbf{w}} \left[ 2 \sum_{i=1}^m \sum_{j,j' =1}^n  \mathbf{1}_{w_j \neq w_{j'}} s_{ij}s_{ij'} u_ju_{j'}^T \right] \right)^{2q} \right] \\
        &\le \E_S \E_{\mathbf{w}} \left[ \tr \left(   2 \sum_{i=1}^m \sum_{j,j' =1}^n  \mathbf{1}_{w_j \neq w_{j'}} s_{ij}s_{ij'} u_ju_{j'}^T  \right)^{2q} \right] \\
        &\le \E_{\mathbf{w}} \E_S \left[ \tr \left(   2 \sum_{i=1}^m \sum_{j,j' =1}^n  \mathbf{1}_{w_j \neq w_{j'}} s_{ij}s_{ij'} u_ju_{j'}^T  \right)^{2q} \right] \\
    \end{align*}
Fix $\mathbf{w}$, and let $J = \{ j \in [n] | w_j=1 \}$. Then,
\begin{align*}
    &\E_S \left[ \tr \left(   2 \sum_{i=1}^m \sum_{j,j' =1}^n  \mathbf{1}_{w_j \neq w_{j'}} s_{ij}s_{ij'} u_ju_{j'}^T  \right)^{2q} \right] \\ &= \E_S \left[ \tr \left(   2 \sum_{i=1}^m \left( \sum_{j\in J, j' \in J^C}   s_{ij}s_{ij'} u_ju_{j'}^T + \sum_{j\in J^C, j' \in J}   s_{ij}s_{ij'} u_ju_{j'}^T \right) \right)^{2q} \right] \\
\end{align*}
The key observation for decoupling is that the collection of random variables $ \mathcal{S}_J := \{s_{ij} \}_{i \in [m], j\in J}$ is independent of the collection $\mathcal{S}_{J^C} := \{s_{ij} \}_{i \in [m], j\in J^C}$, so the above expectation does not change if the collection $ \mathcal{S}_J$ is replaced by an independent copy $\mathcal{S}'_J :=\{s'_{ij} \}_{i \in [m], j\in J}$ thought of as coming from the entries of an independent copy of $S$, say, $S'$, i.e.,
\begin{align*}
    &\E_{\mathcal{S}_J, \mathcal{S}_{J^C} } \Biggl[ \tr \Biggl(   2 \sum_{i=1}^m \Biggl( \sum_{j\in J, j' \in J^C}   s_{ij}s_{ij'} u_ju_{j'}^T + \sum_{j\in J^C, j' \in J}   s_{ij}s_{ij'} u_ju_{j'}^T \Biggr) \Biggr)^{2q} \Biggr] \\
    &= \E_{\mathcal{S}'_J, \mathcal{S}_{J^C}} \Biggl[ \tr \Biggl(   2 \sum_{i=1}^m \Biggl( \sum_{j\in J, j' \in J^C}   s'_{ij}s_{ij'} u_ju_{j'}^T + \sum_{j\in J^C, j' \in J}   s_{ij}s'_{ij'} u_ju_{j'}^T \Biggr) \Biggr)^{2q} \Biggr] \\
    &= \E_{\mathcal{S}'_J, \mathcal{S}_{J^C}} \Biggl[ \tr \Biggl(   2 \sum_{i=1}^m \Biggl( \sum_{j\in J, j' \in J^C}   s'_{ij}s_{ij'} u_ju_{j'}^T + \sum_{j\in J^C, j' \in J}   s_{ij}s'_{ij'} u_ju_{j'}^T  \\
    &\quad + \E_{\mathcal{S}_{J}} \Biggl[ \sum_{j\in J, j' \in J}   s'_{ij}s_{ij'} u_ju_{j'}^T \Biggr] + \E_{\mathcal{S}'_{J^C}, \mathcal{S}_{J}} \Biggl[ \sum_{j\in J^C, j' \in [n]}   s'_{ij}s_{ij'} u_ju_{j'}^T \Biggr] \\
    &\quad + \E_{\mathcal{S}'_{J^C}} \Biggl[ \sum_{j\in J^C, j' \in J^C}   s_{ij}s'_{ij'} u_ju_{j'}^T \Biggr] + \E_{\mathcal{S}_J, \mathcal{S}'_{J^C}} \Biggl[ \sum_{j\in J, j' \in [n]}   s_{ij}s'_{ij'} u_ju_{j'}^T \Biggr] \Biggr) \Biggr)^{2q} \Biggr] \\
    &\le \E_{S,S'} \Biggl[ \tr \Biggl( 2 \sum_{i=1}^m  \sum_{j,j'=1}^n \Biggl( s_{ij}s'_{ij'} u_ju_{j'}^T + s'_{ij}s_{ij'} u_ju_{j'}^T \Biggr) \Biggr)^{2q} \Biggr] \\
    &= \E_{S,S'} \left[ \tr \left( 2\paren*{(SU)^TS'U + (S'U)^TSU} \right)^{2q} \right]
\end{align*}
where in the second equality the expectations we add are all $0$, and the inequality step follows by taking the expectations outside using Jensen's inequality. 
\end{proof}

\subsection{Controlling the trace moments of the embedding error for OSNAP} \label{subsec:osnaptracemom}
In this section we give the full proof of Lemma \ref{prop:momestdecoupmatrix}.

\begin{lemma}[Trace Moments of Embedding Error for OSNAP] \label{prop:momestdecoupmatrix}
Let $S$ be an $m \times n$ matrix distributed according to the fully independent unscaled OSNAP distribution with parameter $p \le 1$. Let $U$ be an arbitrary $n \times d$ deterministic matrix such that $U^TU=I$. Define $X = \frac{1}{\sqrt{pm}}SU$.  Then, there exist constants $c_{\ref*{prop:momestdecoupmatrix}.1}, c_{\ref*{prop:momestdecoupmatrix}.2}, c_{\ref*{prop:momestdecoupmatrix}.3} > 0$ such that for $q \in \N$ satisfying $2 \le q \le m$, when $m \geq c_{\ref*{prop:momestdecoupmatrix}.1} \frac{d+q}{\varepsilon^2}$ and $ pm \ge (\max\{\frac{c_{\ref*{prop:momestdecoupmatrix}.2} q^{2}}{\varepsilon}, c_{\ref*{prop:momestdecoupmatrix}.3} q^3\})^{1+\frac{2}{q-2}} $, we have
\begin{align*}
    \E[\tr(X^TX - I_d)^{2q}]^\frac{1}{2q} \leq  \varepsilon
\end{align*} 

\end{lemma}

\begin{proof}

In Section \ref{subsec:decouposnap}, we saw that when $S$ has the unscaled OSNAP distribution \begin{align*}\E[\tr((SU)^T(SU)-pm\cdot I_d)^{2q}]^\frac{1}{2q} \le 2 \E[\tr(\Gamma(S_1,S_2))^{2q}]^\frac{1}{2q}\end{align*} where $S_1$ and $S_2$ are independent random matrices with the unscaled OSNAP distribution and $\Gamma(M_1,M_2)=(M_1U)^T(M_2U)+(M_2U)^T(M_1U)$. Let $\Gamma(t)$ be as defined in Lemma $\ref{lem:diffineq}$. Then, to show the statement of the theorem, it is enough to show that $\E[\tr(\Gamma(1))^{2q}]^\frac{1}{2q} \le pm\varepsilon/2$. Now, by Lemma \ref{cor:indgaussianmom}, we know that, $$\E[\tr(\Gamma(0))^{2q}]^\frac{1}{2q} = \E[\tr(\Gamma(G_1, G_2))^{2q}]^\frac{1}{2q} \le c_{\ref*{cor:indgaussianmom}}p\sqrt{\max\{d, q\}}\sqrt{\max\{m, q\}}$$ 

We want to find conditions for which $\E[\tr(\Gamma(0))^{2q}]^\frac{1}{2q} \le pm\varepsilon/4$, for which it is enough to ensure $c_{\ref*{cor:indgaussianmom}}p\sqrt{\max\{d, q\}}\sqrt{\max\{m, q\}} \le pm\varepsilon/4$. Clearly, this can only happen when $q \le m$, and in this case the inequality holds when $m \ge \frac{c(d+q)}{\varepsilon^2}$. Thus, it suffices to show
\begin{align*}
    (\E[\tr \Gamma(1)^{2q}])^{\frac{1}{2q}}-(\E[\tr \Gamma(0)^{2q}])^{\frac{1}{2q}} \le \frac{1}{4} pm \varepsilon
\end{align*}

By Lemma \ref{lem:diffineq}, if $q$ satisfies $q \ge 2$, we have
\begin{align*}
\frac{d}{dt} \E[\tr \Gamma(t)^{2q}] \le & \max \limits_{4 \le k \le 2q} (c_5q)^k ((pm)^{\frac{1}{q}}\sqrt{\max\{pd,q\}})^{\frac{qk-2q}{q-1}}(\E[f(S_1(t),S_2(t))])^{1-\frac{k-2}{2q-2}}
\end{align*}

We use different calculations for two different cases.

In the first case, we consider $pd<q$.

By convexity, the maximum must be attained on one of the two end points. Also, we observe that when $k=2q$, we have $\frac{k-2}{2q-2}=1$. Therefore, we have

\begin{align*}
\frac{d}{dt} \E[\tr \Gamma(t)^{2q}] \le&\max \limits_{4 \le k \le 2q} (c_5q)^k ((pm)^{\frac{1}{q}}\sqrt{\max\{pd,q\}})^{\frac{qk-2q}{q-1}}(\E \tr((\Gamma(t))^{2q}))^{1-\frac{k-2}{2q-2}}\\ \le & \max \{(c_5q)^4 ((pm)^{\frac{1}{q}}\sqrt{\max\{pd,q\}})^{\frac{q4-2q}{q-1}}(\E \tr((\Gamma(t))^{2q}))^{1-\frac{2}{2q-2}}, \\& (c_5q)^{2q} ((pm)^{\frac{1}{q}}\sqrt{\max\{pd,q\}})^{\frac{q(2q)-2q}{q-1}}(\E \tr((\Gamma(t))^{2q}))^{1-\frac{2q-2}{2q-2}}\}
\\ = & (c_5q)^4 ((pm)^{\frac{1}{q}}\sqrt{\max\{pd,q\}})^{\frac{2q}{q-1}} \max \{(\E \tr((\Gamma(t))^{2q}))^{1-\frac{2}{2q-2}}, \\& (c_5q)^{2q-4} ((pm)^{\frac{1}{q}}\sqrt{\max\{pd,q\}})^{\frac{2q^2-4q}{q-1}}(\E \tr((\Gamma(t))^{2q}))^{1-\frac{2q-2}{2q-2}}\}
\\ = & (c_5q)^4 ((pm)^{\frac{1}{q}}\sqrt{\max\{pd,q\}})^{\frac{2q}{q-1}} \max \{(\E \tr((\Gamma(t))^{2q}))^{1-\frac{2}{2q-2}}, \\& (c_5q)^{2q-4} ((pm)^{\frac{1}{q}}\sqrt{\max\{pd,q\}})^{\frac{2q^2-4q}{q-1}}\}
\\ = & (c_5q)^4 ((pm)^{\frac{1}{q}}\sqrt{\max\{pd,q\}})^{\frac{2q}{q-1}} \max \{(\E \tr((\Gamma(t))^{2q}))^{1-\frac{1}{q-1}}, \\& (c_5q)^{2q-4} ((pm)^{\frac{1}{q}}\sqrt{\max\{pd,q\}})^{\frac{2q^2-4q}{q-1}}\}
\end{align*}

We will use Lemma 6.6 in \cite{brailovskaya2022universality} with $\alpha=\frac{1}{q-1}$. In this case, we have ${1-\alpha}={\frac{q-2}{q-1}}$, and $\frac{\alpha}{1-\alpha}=\frac{1}{q-2}$. Also, we want $K^{1-\alpha}=(c_5q)^{2q-4} ((pm)^{\frac{1}{q}}\sqrt{\max\{pd,q\}})^{\frac{2q(2q)-8q}{2q-2}}$, so we need $K=((c_5q)^{2q-4} ((pm)^{\frac{1}{q}}\sqrt{\max\{pd,q\}})^{\frac{2q^2-4q}{q-1}})^{\frac{1}{1-\alpha}}$.

By Lemma 6.6 in \cite{brailovskaya2022universality}, we have
\begin{align*}
    &(\E[\tr \Gamma(S_1,S_2)^{2q}])^{\alpha}-(\E[\tr \Gamma(G_1,G_2)^{2q}])^{\alpha} \\\le& (c_5q)^4 ((pm)^{\frac{1}{q}}\sqrt{\max\{pd,q\}})^{\frac{2q}{q-1}} \cdot \frac{1}{q-1}+ ((c_5q)^{2q-4} ((pm)^{\frac{1}{q}}\sqrt{\max\{pd,q\}})^{\frac{2q^2-4q}{q-1}})^{\frac{\alpha}{1-\alpha}}
    \\\le& (c_5q)^4 ((pm)^{\frac{1}{q}}\sqrt{q})^{\frac{2q}{q-1}} \cdot \frac{1}{q-1}+ ((c_5q)^{2q-4} ((pm)^{\frac{1}{q}}\sqrt{q})^{\frac{2q^2-4q}{q-1}})^{\frac{1}{q-2}}
    \\=& (c_5q)^4 (pm)^{\frac{2}{q-1}} q (q)^{\frac{1}{q-1}} \cdot \frac{1}{q-1}+ (c_5q)^{2} (pm)^{\frac{2}{q-1}} q^\frac{q}{q-1}
    \\ \le & c_6 (pm)^{\frac{2}{q-1}} q^4
\end{align*}

Therefore, we have
\begin{align*}
    (\E[\tr \Gamma(S_1,S_2)^{2q}])^{\frac{1}{q-1}}-(\E[\tr \Gamma(G_1,G_2)^{2q}])^{\frac{1}{q-1}} \le c_6 (pm)^{\frac{2}{q-1}} q^4
\end{align*}

By Taylor expansion, we have
\begin{align*}
    (a+b)^{\frac{2q}{q-1}} \ge a^{\frac{2q}{q-1}}+b^{\frac{2q}{q-1}}
\end{align*}

Using change of variable $c=a^{\frac{2q}{q-1}}+b^{\frac{2q}{q-1}}$ and $d=a^{\frac{2q}{q-1}}$, we have
\begin{align*}
    c^{\frac{q-1}{2q}}-d^{\frac{q-1}{2q}} \le (c-d)^{\frac{q-1}{2q}}
\end{align*}

Therefore, we have
\begin{align*}
    (\E[\tr \Gamma(S_1,S_2)^{2q}])^{\frac{1}{2q}}-(\E[\tr \Gamma(G_1,G_2)^{2q}])^{\frac{1}{2q}} \le (c_6 (pm)^{\frac{2}{q-1}} q^4)^{\frac{q-1}{2q}} \le c_7 (pm)^{\frac{1}{q}} q^2
\end{align*}

By previous analysis, the desired requirement $(\E[\tr \Gamma(S_1,S_2)^{2q}])^{\frac{1}{2q}}-(\E[\tr \Gamma(G_1,G_2)^{2q}])^{\frac{1}{2q}} \le \frac{1}{4} pm \varepsilon$ will be satisfied when
$c_7 (pm)^{\frac{1}{q}}  q^2 \le pm \varepsilon$
which means
\begin{align*}
    pm \ge c_7 (pm)^{\frac{1}{q}}  \frac{q^2}{\varepsilon}
\end{align*}

In the second case, we consider $pd>q$. In this case, we have some $pd$ factors in the bound for $\frac{d}{dt} \E[\tr \Gamma(t)^{2q}]$. We will eventually get rid of these $pd$ factors by a replacing it by some powers of $\E[\tr \Gamma(t)^{2q}]$ and some extra factors. To illustrate this idea, we start from following lemma to calculate $\E [\tr \Gamma(t)^2]$.

\begin{lemma}[Second Moments]\label{2moment}
    Let $M$ and $N$ be $m \times d$ independent random matrices such that the entries of $M$ (and $N$) are uncorrelated and have variance $p$. Then, 
    \begin{itemize}
        \item $\E [\tr M^T N M^T N ] = p^2m$.
        \item $\E [\tr M^T N N^T M ] = p^2md$
    \end{itemize}
\end{lemma}
\begin{proof}
    We have, 
    \begin{align*}
        \E [\tr M^T N M^T N ] &= \frac{1}{d} \sum_{i=1}^d \sum_{j,k,l} \E [m_{ji}n_{jk}m_{lk}n_{li}  ] 
    \end{align*}
    By uncorrelatedness of entries, the expectation is non-zero only when $j=l$ and $k=i$. So,
    \begin{align*}
        \E [\tr M^T N M^T N ] &= \frac{1}{d} \sum_{i=1}^d \sum_{j=1}^m \E [m_{ji}^2n_{ji}^2 ] \\
        &= p^2m
    \end{align*}
    Similarly,
    \begin{align*}
        \E [\tr M^T N N^T M ] &= \frac{1}{d} \sum_{i=1}^d \sum_{j,k,l} \E [m_{ji}n_{jk}n_{lk}m_{li}  ] \\
        &= \frac{1}{d} \sum_{i=1}^d \sum_{j,k} \E [m_{ji}^2n_{jk}^2  ] \\
        &= p^2md
    \end{align*}
\end{proof}

By Lemma \ref{2moment}, we have $\E [\tr \Gamma(t)^2] = 2p^2md + 2p^2m \ge 2p^2md$. By Hölder's inequality,
\[ \sqrt{2p^2md} \le  \E [\tr \Gamma(t)^2]^\frac{1}{2} \le \E [\tr \Gamma(t)^{2q}]^\frac{1}{2q}  \]

In this case, we can replace the extra $pd$ factors and obtain
\begin{align*}
\frac{d}{dt} \E[\tr \Gamma(t)^{2q}] \le & \max \limits_{4 \le k \le 2q} (c_5q)^k ((pm)^{\frac{1}{q}} \sqrt{\max\{pd,q\}})^{\frac{2qk-4q}{2q-2}}(\E \tr((\Gamma(t))^{2q}))^{1-\frac{k-2}{2q-2}}
\\ \le  & \max \limits_{4 \le k \le 2q} (c_5q)^k ((pm)^{\frac{1}{q}} \sqrt{pd})^{\frac{2qk-4q}{2q-2}}(\E \tr((\Gamma(t))^{2q}))^{1-\frac{k-2}{2q-2}}
\\ \le  & \max \limits_{4 \le k \le 2q} (c_5q)^k ((pm)^{\frac{1}{q}} (2p^2md)^{1/4}(\frac{1}{2}\frac{d}{m})^{1/4})^{\frac{2qk-4q}{2q-2}}(\E \tr((\Gamma(t))^{2q}))^{1-\frac{k-2}{2q-2}}
\\ \le  & \max \limits_{4 \le k \le 2q} (c_5q)^k ((pm)^{\frac{1}{q}} ( \E [\tr \Gamma(t)^{2q}]^\frac{1}{2q})^{1/2}(\frac{1}{2}\frac{d}{m})^{1/4})^{\frac{2qk-4q}{2q-2}}(\E \tr((\Gamma(t))^{2q}))^{1-\frac{k-2}{2q-2}}
\\ =  & \max \limits_{4 \le k \le 2q} (c_5q)^k (pm)^{\frac{k-2}{q-1}}  ( \E [\tr \Gamma(t)^{2q}])^{\frac{k-2}{4(q-1)}} (\frac{1}{2}\frac{d}{m})^{\frac{qk-2q}{4(q-1)}}(\E \tr((\Gamma(t))^{2q}))^{1-\frac{k-2}{2(q-1)}}
\\ =  & \max \limits_{4 \le k \le 2q} (c_5q)^k (pm)^{\frac{k-2}{q-1}}  (\frac{1}{2}\frac{d}{m})^{\frac{qk-2q}{4(q-1)}}(\E \tr((\Gamma(t))^{2q}))^{1-\frac{2(k-2)}{4(q-1)}+\frac{k-2}{4(q-1)}}
\\ =  & \max \limits_{4 \le k \le 2q} (c_5q)^k (pm)^{\frac{k-2}{q-1}}  (\frac{1}{2}\frac{d}{m})^{\frac{qk-2q}{4(q-1)}}(\E \tr((\Gamma(t))^{2q}))^{1-\frac{(k-2)}{4(q-1)}}
\end{align*}

If we want $1-\frac{(k-2)}{4(q-1)}=0$, then we need $(k-2)=4(q-1)$. Therefore, we can choose $k=4(q-1)+2=4q-2>2q$ when $q>1$.

Using the convexity of the function $a^xb^{1-x}$ in $x$ as before, we have
\begin{align*}
\frac{d}{dt} \E[\tr \Gamma(t)^{2q}] \le & \max \limits_{4 \le k \le 2q} (c_5q)^k (pm)^{\frac{k-2}{q-1}}  (\frac{1}{2}\frac{d}{m})^{\frac{qk-2q}{4(q-1)}}(\E \tr((\Gamma(t))^{2q}))^{1-\frac{(k-2)}{4(q-1)}}
\\  \le   & \max \limits_{4 \le k \le (4q-2)} (c_5q)^k (pm)^{\frac{k-2}{q-1}} (\frac{1}{2}\frac{d}{m})^{\frac{qk-2q}{4(q-1)}}(\E \tr((\Gamma(t))^{2q}))^{1-\frac{(k-2)}{4(q-1)}}
\\ \le & \max\{(c_5q)^4 (pm)^{\frac{4-2}{q-1}}  (\frac{1}{2}\frac{d}{m})^{\frac{q4-2q}{4(q-1)}}(\E \tr((\Gamma(t))^{2q}))^{1-\frac{(4-2)}{4(q-1)}},\\&(c_5q)^{4q-2} (pm)^{\frac{4q-2-2}{q-1}}  (\frac{1}{2}\frac{d}{m})^{\frac{q(4q-2)-2q}{4(q-1)}}\}
\\ = & (c_5q)^4 (pm)^{\frac{2}{q-1}} (\frac{1}{2}\frac{d}{m})^{\frac{q4-2q}{4(q-1)}} \max\{(\E \tr((\Gamma(t))^{2q}))^{1-\frac{(4-2)}{4(q-1)}},\\&(c_5q)^{4q-2-4} (pm)^{\frac{4q-6}{q-1}} (\frac{1}{2}\frac{d}{m})^{\frac{q(4q-2)-2q}{4(q-1)}-\frac{q4-2q}{4(q-1)}}\}
\\ = & (c_5q)^4 (pm)^{\frac{2}{q-1}}  (\frac{1}{2}\frac{d}{m})^{\frac{q}{2(q-1)}} \max\{(\E \tr((\Gamma(t))^{2q}))^{1-\frac{1}{2(q-1)}},\\&(c_5q)^{4q-6} (pm)^{\frac{4q-6}{q-1}} (\frac{1}{2}\frac{d}{m})^{\frac{q(4q-6)}{4(q-1)}}\}
\end{align*}
by convexity.

At this point, we will use Lemma 6.6 in \cite{brailovskaya2022universality} again with $\alpha=\frac{1}{2(q-1)}$. Therefore, we have $1-\alpha=\frac{2q-3}{2(q-1)}$. Therefore, we have $\frac{\alpha}{1-\alpha}=\frac{1}{2q-3}$. So we need to use Lemma 6.6 in \cite{brailovskaya2022universality} with
\begin{align*}
    K=((c_5q)^{4q-6} (pm)^{\frac{4q-6}{q-1}} (\frac{1}{2}\frac{d}{m})^{\frac{q(4q-6)}{4(q-1)}})^{\frac{1}{1-\alpha}}
\end{align*}
in the notation there.

By Lemma 6.6 in \cite{brailovskaya2022universality}, we have
\begin{align*}
    &(\E[\tr \Gamma(S_1,S_2)^{2q}])^{\alpha}-(\E[\tr \Gamma(G_1,G_2)^{2q}])^{\alpha} \\\le& (c_5q)^4 (pm)^{\frac{2}{q-1}} (\frac{1}{2}\frac{d}{m})^{\frac{q}{2(q-1)}} \frac{1}{2(q-1)}+((c_5q)^{4q-6} (pm)^{\frac{4q-6}{q-1}} (\frac{1}{2}\frac{d}{m})^{\frac{q(4q-6)}{4(q-1)}})^{\frac{\alpha}{1-\alpha}}
    \\ \le & c_{10} q^{3} (pm)^{\frac{2}{q-1}} (\frac{d}{m})^{\frac{q}{2(q-1)}}+((c_5q)^{4q-6} (pm)^{\frac{4q-6}{q-1}}(\frac{1}{2}\frac{d}{m})^{\frac{q(4q-6)}{4(q-1)}})^{\frac{1}{2q-3}}
    \\ \le & c_{10} q^{3} (pm)^{\frac{2}{q-1}} (\frac{d}{m})^{\frac{q}{2(q-1)}}+(c_5q)^{2} (pm)^{\frac{2}{q-1}} (\frac{1}{2}\frac{d}{m})^{\frac{2q}{4(q-1)}}
    \\ \le & c_{10} q^{3} (pm)^{\frac{2}{q-1}} (\frac{d}{m})^{\frac{q}{2(q-1)}}+(c_5q)^{2} (pm)^{\frac{2}{q-1}} (\frac{1}{2}\frac{d}{m})^{\frac{q}{2(q-1)}}
    \\ \le & c_{10} q^{3} (pm)^{\frac{2}{q-1}} (\frac{d}{m})^{\frac{q}{2(q-1)}}+c_{11}q^{2} (pm)^{\frac{2}{q-1}}(\frac{d}{m})^{\frac{q}{2(q-1)}}
    \\ \le & c_{12} q^{3} (pm)^{\frac{2}{q-1}} (\frac{d}{m})^{\frac{q}{2(q-1)}}
\end{align*}

In summary, we have
\begin{align*}
    &(\E[\tr \Gamma(S_1,S_2)^{2q}])^{\frac{1}{2(q-1)}}-(\E[\tr \Gamma(G_1,G_2)^{2q}])^{\frac{1}{2(q-1)}} \\\le& c_{12} q^{3} (pm)^{\frac{2}{q-1}} (\frac{d}{m})^{\frac{q}{2(q-1)}}
\end{align*}

Therefore, we have
\begin{align*}
    &(\E[\tr \Gamma(S_1,S_2)^{2q}])^{\frac{1}{2q}}-(\E[\tr \Gamma(G_1,G_2)^{2q}])^{\frac{1}{2q}} \\\le& (c_{12} q^{3} (pm)^{\frac{2}{q-1}} (\frac{d}{m})^{\frac{q}{2(q-1)}})^{\frac{2(q-1)}{(2q)}}
    \\\le& c_{13} q^{\frac{3(q-1)}{q}} (pm)^{\frac{2}{q}} (\frac{d}{m})^{\frac{1}{2}}
    \\\le& c_{13} q^3 (pm)^{\frac{2}{q}} (\frac{d}{m})^{\frac{1}{2}}
\end{align*}
where we use the fact that
\begin{align*}
    (a+b)^{\frac{(2q)}{2(q-1)}} \ge a^{\frac{(2q)}{2(q-1)}}+b^{\frac{(2q)}{2(q-1)}}
\end{align*}
by Taylor expansion.

We need $(\E[\tr \Gamma(S_1,S_2)^{2q}])^{\frac{1}{2q}}-(\E[\tr \Gamma(G_1,G_2)^{2q}])^{\frac{1}{2q}} \le pm \varepsilon/4$, so the requirement is
\begin{align*}
    c_{13} q^3 (pm)^{\frac{2}{q}}(\frac{d}{m})^{\frac{1}{2}} \le \frac{pm \varepsilon}{4}
\end{align*}

Since we have $m \ge \frac{c_{14}d}{\varepsilon^2}$ for some constant $c_{14}$, we have $\varepsilon \ge \sqrt{\frac{c_{14}d}{m}}$, so it suffices to require
\begin{align*}
    c_{13} q^3 (pm)^{\frac{2}{q}}(\frac{d}{m})^{\frac{1}{2}} \le \frac{pm \sqrt{\frac{c_{14}d}{m}}}{4}
\end{align*}

which means
\begin{align*}
    pm \ge c_{15} (pm)^{\frac{2}{q}} {q^3}
\end{align*}

Combining the analysis for the two cases, it suffices to require
\begin{align*}pm \ge (pm)^{\frac{2}{q}}\max\{\frac{c_{16} q^{2}}{\varepsilon}, c_{17} q^3\}
\end{align*}
for some constants $c_{16}>0$ and $c_{17}>0$.

This requirement is equivalent to
\begin{align*}pm \ge (\max\{\frac{c_{16} q^{2}}{\varepsilon}, c_{17} q^3\})^{\frac{1}{1-2/q}}
\end{align*}

\end{proof}

\subsection{Differential inequality for the derivative of the interpolant}\label{subsec:diffineq}
In this section we give the full proof of Lemma \ref{lem:diffineq}.

\begin{lemma}[Differential Inequality]\label{lem:diffineq}
 Let $S_1$ and $S_2$ be independent random matrices such that either both $S_1$ and $S_2$ have the fully independent unscaled OSNAP distribution with parameter $p$ or both $S_1$ and $S_2$ have the unscaled OSE-IE distribution with parameter $p$. Let $G_1$ and $G_2$ be independent random matrices with i.i.d. Guassian entries each with variance $p$, and define the interpolated random matrices,
\begin{align}
\begin{split} 
    S_1(t) = \sqrt{t}S_1 + \sqrt{1-t}G_1 \\
    S_2(t) = \sqrt{t}S_2 + \sqrt{1-t}G_2
\end{split}
\end{align}
Let $f(M_1,M_2)=\tr(((M_1U)^T(M_2U)+(M_2U)^T(M_1U))^{2q})$.
Then, there exists a constant $c_{\ref*{lem:diffineq}}$ such that, for any $q \ge 2$, we have
\begin{align*}
\frac{d}{dt} \E[f(S_1(t),S_2(t))] \le& \max \limits_{4 \le k \le 2q} (c_{\ref*{lem:diffineq}}q)^k ((pm)^{\frac{1}{q}}\sqrt{\max\{pd,q\}})^{\frac{qk-2q}{q-1}}(\E[f(S_1(t),S_2(t))])^{1-\frac{k-2}{2q-2}}
\end{align*}
\end{lemma}
\begin{remark}
The proof of Lemma \ref{lem:diffineq} relies on a technical trace inequality, Lemma \ref{lem:decouptraceineq}. To illustrate the main idea, we present the proof of Lemma \ref{lem:diffineq} using Lemma \ref{lem:decouptraceineq} here first, and then prove Lemma \ref{lem:decouptraceineq} in the next section.
\end{remark}
\begin{proof}[Proof of Lemma \ref{lem:diffineq}]

For convenience, let $\Gamma(M_1,M_2)=(M_1U)^T(M_2U)+(M_2U)^T(M_1U)$ and $\Gamma(t)=\Gamma(S_1(t),S_2(t))$.

Fix $M_2$ and view $f_{1,M_2}(M_1)=\tr(\Gamma(M_1,M_2)^{2q})$ as a function of $M_1$. We shall first obtain an expression for $\frac{d}{dt_1}\E_{S_1(t_1)}[f(S_1(t_1),M_2)]$. For this purpose, we consider the construction from \cite{brailovskaya2022universality}. Recall that $S_1$ can be written in the form $\sum_{(l,\gamma) \in \Xi} Z_{(l,\gamma)}$ where $\Xi=[n] \times [pm]$ and $Z_{(l,\gamma)}=\xi_{(l,\gamma)} e_{\mu_{(l, \gamma)}} e_l ^\top$ (see definition \ref{def:osnap}). We shall assume that the $Z_{(l,\gamma)}$ are independent, with the additional remark that since all quantities involved are moments of order at most $4q$, the same calculations hold even if the $Z_{(l,\gamma)}$ are $4q$-wise independent.

Let $k\in\mathbb{N}$ and $\pi\in\mathrm{P}([k])$ be a partition of $[k]$. We construct the following family of random elements
\begin{align*}
\{Z_{(l,\gamma),j|\pi}\}_{(l,\gamma)\in \Xi, j \in [k]}
\end{align*}
that is independent from the sigma-algebra $\sigma(S_1,S_2,G_1,G_2)$ 
with the
following properties:
\begin{enumerate}[1.]
\item Each $\{Z_{(l,\gamma),j|\pi}\}_{(l,\gamma)\in \Xi}$ has the same distribution as
$\{Z_{(l,\gamma)}\}_{(l,\gamma) \in \Xi}$.
\item $\{Z_{(l,\gamma),j|\pi}\}_{(l,\gamma)\in \Xi}=\{Z_{(l,\gamma),j'|\pi}\}_{(l,\gamma)\in \Xi}$ for indices $j,j'$
that belong to
the same element of $\pi$.
\item $\{Z_{(l,\gamma),j|\pi}\}_{(l,\gamma)\in \Xi}$ are independent for indices $j$ that belong
to distinct elements of $\pi$.
\end{enumerate}
More precisely, we construct these random elements in the following way. First, we choose a partition $\pi\in\mathrm{P}([k])$. Without loss of generality, consider the simple case where $\pi=\{A_1,A_2\}$. Then we construct two i.i.d. copies of $\{Z_{(l,\gamma)}\}_{(l,\gamma) \in \Xi}$, and call them $(Z_{(l,\gamma),A_1})_{(l,\gamma) \in \Xi}$ and $(Z_{(l,\gamma),A_2})_{(l,\gamma) \in \Xi}$. For any $j \in A_1$, we just set $\{Z_{(l,\gamma),j|\pi}\}_{(l,\gamma)\in \Xi}=\{Z_{(l,\gamma),A_1}\}_{(l,\gamma) \in \Xi}$. Similarly, for any $j \in A_2$, we just set $\{Z_{(l,\gamma),j|\pi}\}_{(l,\gamma)\in \Xi}=\{Z_{(l,\gamma),A_2}\}_{(l,\gamma) \in \Xi}$.

In addition, we assume that the random elements
\begin{align*}
\{Z_{(l,\gamma),j|\pi}\}_{(l,\gamma)\in \Xi, j \in [k]}
\end{align*}
for different $k$ and $\pi$ are mutually independent from each other.

Using these random matrices, we can compute an expression for $\frac{d}{dt_1}\E_{S_1(t_1)}[f(S_1(t_1),M_2)]$ using the following lemma,

\begin{lemma}[Corollary 6.1, \cite{brailovskaya2022universality}] \label{lem:derexpansion}
    For any polynomial $\phi:M_{m\times d}(\mathbb{R})\to\mathbb{R}$, we have
\begin{align*}
	&\frac{d}{dt}\E[\phi(S_1(t))] \\=& 
	\frac{1}{2}\sum_{k=4}^\infty
	\frac{t^{\frac{k}{2}-1}}{(k-1)!}
	\sum_{\pi\in\mathrm{P}([k])}
	(-1)^{|\pi|-1}(|\pi|-1)!\,
	\E\Bigg[ \sum_{(l,\gamma) \in \Xi}\partial_{Z_{(l,\gamma)1|\pi}}\cdots\partial_{Z_{(l,\gamma)k|\pi}}\phi(S_1(t))
	\Bigg],
\end{align*}
where $\partial_Z\phi$ denotes the directional derivative of
$\phi$ in the direction $Z\in M_{m\times d}(\mathbb{C})$.
\end{lemma}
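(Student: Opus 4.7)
The plan is to reduce $\frac{d}{dt}\E[f(S_1(t),S_2(t))]$ to a sum over partitions via Lemma \ref{lem:derexpansion}, and then bound each resulting term by invoking the trace inequality (Lemma \ref{lem:decouptraceineq}), the row-norm estimate of Lemma \ref{lem:rownormbound}, and combinatorial bookkeeping.

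First, since $S_1, S_2, G_1, G_2$ are jointly independent, I set $g(t_1, t_2) := \E[f(S_1(t_1), S_2(t_2))]$ so that $\E[f(S_1(t), S_2(t))] = g(t,t)$. By the chain rule $\frac{d}{dt}g(t,t) = \partial_{t_1} g(t,t) + \partial_{t_2} g(t,t)$, and by the symmetry $(M_1, M_2)\mapsto(M_2,M_1)$ of $\Gamma$, these two contributions are equal. It therefore suffices to bound $\partial_{t_1} g(t,t)$. Conditioning on $S_2(t)$ and applying Lemma \ref{lem:derexpansion} to $\phi(M_1) := f(M_1, S_2(t))$ yields a cumulant-type expansion indexed by $k \geq 4$ and $\pi \in \mathrm{P}([k])$. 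Because $\Gamma(M_1,S_2(t)) = (M_1U)^\top (S_2(t)U) + (S_2(t)U)^\top (M_1U)$ is linear in $M_1$, the $k$-th mixed derivative $\partial_{Z_{(l,\gamma)1|\pi}}\cdots \partial_{Z_{(l,\gamma)k|\pi}} \tr(\Gamma^{2q})$ expands into a sum over placements of the $k$ slots among the $2q$ copies of $\Gamma$, producing traces of the form $\tr(\Gamma^{r_1} D_1 \Gamma^{r_2} D_2 \cdots D_k)$ with $\sum_i r_i = 2q - k$ and each $D_i$ a rank-one symmetric matrix built from $U^\top e_l$, the appropriate row of $S_2(t)U$, and the Rademacher sign $\xi_{(l,\gamma)j|\pi}$. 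Because $Z_{(l,\gamma)j|\pi} = Z_{(l,\gamma)j'|\pi}$ for $j, j'$ in the same block of $\pi$ and because $\E[\xi_{(l,\gamma)}^{\text{odd}}] = 0$, only partitions $\pi$ with every block of even size contribute; in particular $k$ must itself be even, and the smallest nonvanishing order is $k=4$, matching the lower limit in Lemma \ref{lem:derexpansion}.

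Next, I would estimate each surviving term via Lemma \ref{lem:decouptraceineq}, which allows us to bound expressions of the form $\E[\tr(\Gamma^{r_1} D_1 \cdots D_k)]$ by an appropriate power of $\norm{\Gamma}_{2q} = \E[\tr\Gamma^{2q}]^{1/(2q)}$ multiplied by moments of the row norms of $S_2(t)U$. The latter are controlled directly by Lemma \ref{lem:rownormbound} and produce the factor $\sqrt{\max\{pd,q\}}$ raised to the power $q(k-2)/(q-1)$. Summing over $(l,\gamma) \in \Xi = [n]\times[pm]$ uses the identity $\sum_l \|u_l\|^2 = d$ together with the observation that the partition structure forces equalities among the $\mu_{(l,\gamma)j|\pi}$-indices within each block, keeping the combinatorial cost under control. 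Each $k$-th term then takes the form $(c_0 q)^k (\sqrt{\max\{pd, q\}})^{q(k-2)/(q-1)} \E[f(S_1(t),S_2(t))]^{1-(k-2)/(2q-2)}$, where the $q^k$ factor absorbs $(2q)(2q-1)\cdots(2q-k+1)$ from slot placements; the hypothesis $q \geq \tfrac{1}{2}\log(pm)$ is used precisely to absorb $|\mathrm{P}([k])|\cdot(|\pi|-1)!/(k-1)!$ and the factor $|\Xi|$ into $c_{\ref*{lem:diffineq}}$. Taking a geometric-series/maximum bound over $4\le k\le 2q$ (and observing via Hölder that $k>2q$ terms are dominated by the $k=2q$ term) yields the claimed inequality.

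The main obstacle is obtaining the correct exponent on $\sqrt{\max\{pd,q\}}$, and correspondingly the correct self-referential exponent $1-(k-2)/(2q-2)$ on $\E[f]$. A naive Hölder estimate that splits off each $D_i$ separately in $L_\infty$ produces operator-norm factors $\norm{\Gamma}_\infty^{k-2}$ which cannot be bootstrapped back into $\E[f]$ and scales poorly in $q$. As anticipated in Section \ref{s:overview}, the technical device inside Lemma \ref{lem:decouptraceineq} is to combine several Hölder factors before interpolating in $L_q(S_q^d)$ via Corollary \ref{cor:spinterpo}, so that the ``small'' contribution of the $D_i$'s is replaced by exactly the right power of $\norm{\Gamma}_{2q}$, while the remaining factors turn into row-norm moments of $S_2(t)U$ that Lemma \ref{lem:rownormbound} handles. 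With Lemma \ref{lem:decouptraceineq} in hand, the differential inequality follows by a term-by-term maximum and the convention that $\E[f]^{1-(k-2)/(2q-2)}$ is monotone in $k$ at fixed $\E[f]$.
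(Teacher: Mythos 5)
Your proposal does not prove the statement at hand. The statement is Lemma \ref{lem:derexpansion} itself, i.e.\ the cumulant expansion formula for $\frac{d}{dt}\E[\phi(S_1(t))]$ with the sum over $k\ge 4$ and partitions $\pi\in\mathrm{P}([k])$. What you have written is instead an outline of the proof of Lemma \ref{lem:diffineq} (the differential inequality), and in the very first step you invoke Lemma \ref{lem:derexpansion} as a known tool. Using the target statement as a black box and then developing the downstream estimate via Lemma \ref{lem:decouptraceineq} and Lemma \ref{lem:rownormbound} is circular with respect to the task: nothing in your argument establishes why the derivative of the interpolated expectation admits this expansion, why the $k\le 2$ terms are absent, or why the series may be started at $k=4$ rather than $k=3$.

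A proof of Lemma \ref{lem:derexpansion} follows the route of Corollary 6.1 in \cite{brailovskaya2022universality}: one first writes $\frac{d}{dt}\E[\phi(S_1(t))]$ as a series over $k\ge 3$ whose coefficients are joint cumulants $\kappa\big((Z_{(l,\gamma)})_{u_1v_1},\ldots,(Z_{(l,\gamma)})_{u_kv_k}\big)$ of the entries of the summands, the orders $k\le 2$ having cancelled because $G_1$ is chosen with the same mean and covariance structure as $S_1$; one then observes that all third-order cumulants vanish because the entries of each $Z_{(l,\gamma)}$ are symmetric random variables, so the series in fact starts at $k=4$; finally one converts the entrywise cumulant sums into the stated form with directional derivatives along the auxiliary partition-correlated copies $Z_{(l,\gamma),j|\pi}$ via Lemma 4.1 of \cite{brailovskaya2022universality}. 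None of these three steps appears in your proposal. Your parenthetical remark that only even block sizes survive (so the smallest order is $k=4$) is a plausibility check on the formula, not a derivation of it, and in any case the paper's reason for starting at $k=4$ is the vanishing of third-order cumulants of symmetric entries, which must be argued at the level of the cumulant expansion before the partition form is introduced.
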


We remark that Lemma \ref{lem:derexpansion} is proved in \cite{brailovskaya2022universality} for $S_1(t)$ in the space of $d \times d$ self-adjoint matrices with complex entries, but the result can be seen to hold for arbitrary matrices. Lemma \ref{lem:derexpansion} relies on the fact that the entries of $G_1$ have the same mean and covariance as the entries of $S_1$ for the vanishing of the terms corresponding to $k \le 2$ in the above expansion. Terms corresponding to $k=3$ vanish due to our random variables being symmetric, so we start the series with $k=4$.

\begin{proof}[Proof of Lemma \ref{lem:derexpansion}]
Following the proof of Corollary 6.1 in \cite{brailovskaya2022universality}, we have
\begin{multline*}
	\frac{d}{dt}\E[\phi(X(t))] =
	\frac{1}{2}\sum_{k=3}^\infty
	\frac{t^{\frac{k}{2}-1}}{(k-1)!}\times\mbox{}\\
	\sum_{(l,\gamma) \in \Xi}
	\sum_{\substack{(u_j,v_j)\in[m] \times [d]\\
	j=1,\ldots,k}}
	\kappa((Z_{(l,\gamma)})_{u_1v_1},\ldots,(Z_{(l,\gamma)})_{u_kv_k})\,
	\E\bigg[
	\frac{\partial^k\phi}{\partial  M_{u_1v_1}\cdots
	\partial M_{u_kv_k}}(X(t))
	\bigg],
\end{multline*}
Also, we observe that $\kappa((Z_{(l,\gamma)})_{u_1v_1},\ldots,(Z_{(l,\gamma)})_{u_kv_k})$ is always $0$ when $k=3$ because the entries of $Z_{(l,\gamma)}$ are symmetric.
Therefore, we conclude that
\begin{multline*}
	\frac{d}{dt}\E[\phi(X(t))] =
	\frac{1}{2}\sum_{k=4}^\infty
	\frac{t^{\frac{k}{2}-1}}{(k-1)!}\times\mbox{}\\
	\sum_{(l,\gamma) \in \Xi}
	\sum_{\substack{(u_j,v_j)\in[m] \times [d]\\
	j=1,\ldots,k}}
	\kappa((Z_{(l,\gamma)})_{u_1v_1},\ldots,(Z_{(l,\gamma)})_{u_kv_k})\,
	\E\bigg[
	\frac{\partial^k\phi}{\partial  M_{u_1v_1}\cdots
	\partial M_{u_kv_k}}(X(t))
	\bigg],
\end{multline*}
Then the result follows by applying Lemma 4.1 in \cite{brailovskaya2022universality}.
\end{proof}

Using Lemma \ref{lem:derexpansion} with $\phi=f_{1,M_2}$ and $t_1$ as the variable name, we have
\begin{align*}
	&\frac{d}{dt_1}\E[f(S_1(t_1),M_2)]=\frac{d}{dt_1}\E[f_{1,M_2}(S_1(t_1))]
 \\=& 
	\frac{1}{2}\sum_{k=4}^{2q}
	\frac{t^{\frac{k}{2}-1}}{(k-1)!}
	\sum_{\pi\in\mathrm{P}([k])}
	(-1)^{|\pi|-1}(|\pi|-1)!\,
	\E\Bigg[ \sum_{(l,\gamma) \in [n] \times [pm]} \partial_{Z_{(l,\gamma),1|\pi}}\cdots\partial_{Z_{(l,\gamma),k|\pi}}f_{1,M_2}(S_1(t_1))
	\Bigg]
\end{align*}

For clarity, we introduce the following notations. Let $X,Y$ be two random elements taking values in measurable spaces $\mathscr{S}_1$ and $\mathscr{S}_2$. Let $\phi:\mathscr{S}_1 \times \mathscr{S}_2 \to \R$ be a measurable function. Assume that $\phi(X,Y)$ is integrable. Define the function $\psi_2:\mathscr{S}_2 \to \R$ such that $\psi_2(y)=\E\phi(X,y)$. Then the notation $\E_X(\phi(X,Y))$ stands for $\psi_2(Y)$. Similarly, we define the function $\psi_1:\mathscr{S}_1 \to \R$ such that $\psi_1(x)=\E\phi(x,Y)$ and the notation $\E_Y(\phi(X,Y))$ stands for the random variable $\psi_1(X)$. With these notation and observing that we can plug the random variable $S_2(t_2)$ into $M_2$ in the previous identity (since the identity holds for arbitrary $M_2$), we have
\begin{align*}
	&\frac{d}{dt_1}\E_{S_1(t_1)}[f(S_1(t_1),S_2(t_2))] \\=& 
	\frac{1}{2}\sum_{k=4}^{2q}
	\frac{t^{\frac{k}{2}-1}}{(k-1)!}
	\sum_{\pi\in\mathrm{P}([k])}
	(-1)^{|\pi|-1}(|\pi|-1)!\,
	\\ &\cdot\E_{S_1(t_1)}\Bigg[ \sum_{(l,\gamma) \in [n] \times [pm]} \partial_{Z_{(l,\gamma),1|\pi}}\cdots\partial_{Z_{(l,\gamma),k|\pi}}f_{1,S_2(t_2)}(S_1(t_1))
	\Bigg]
\end{align*}

By differentiation under integral sign \cite[Theorem 2.27]{folland2013real} and iterated integration, we have
\begin{align*}
	&\frac{d}{dt_1}\E_{S_2(t_2)}\E_{S_1(t_1)}[f(S_1(t_1),S_2(t_2))] \\=& \E_{S_2(t_2)}\frac{d}{dt_1}\E_{S_1(t_1)}[f(S_1(t_1),S_2(t_2))]
 \\=&
	\frac{1}{2}\sum_{k=4}^{2q}
	\frac{t^{\frac{k}{2}-1}}{(k-1)!}
	\sum_{\pi\in\mathrm{P}([k])}
	(-1)^{|\pi|-1}(|\pi|-1)!\,
	\\&\E_{S_2(t_2)}\E_{S_1(t_1)}\Bigg[ \sum_{(l,\gamma) \in [n] \times [pm]} \partial_{Z_{(l,\gamma),1|\pi}}\cdots\partial_{Z_{(l,\gamma),k|\pi}}f_{1,S_2(t_2)}(S_1(t_1))
	\Bigg],
\end{align*}

By the independence between $S_1(t_1)$ and $S_2(t_2)$, we can use iterated expectation and conclude that
\begin{align*}
    \E_{S_2(t_2)}\E_{S_1(t_1)}[f(S_1(t_1),S_2(t_2))]=\E[f(S_1(t_1),S_2(t_2))]
\end{align*}
and
\begin{align*}
    &\E_{S_2(t_2)}\E_{S_1(t_1)}\Bigg[ \sum_{(l,\gamma) \in [n] \times [pm]} \partial_{Z_{(l,\gamma),1|\pi}}\cdots\partial_{Z_{(l,\gamma),k|\pi}}f_{1,S_2(t_2)}(S_1(t_1))
	\Bigg]\\=&\E\Bigg[ \sum_{(l,\gamma) \in [n] \times [pm]} \partial_{Z_{(l,\gamma),1|\pi}}\cdots\partial_{Z_{(l,\gamma),k|\pi}}f_{1,S_2(t_2)}(S_1(t_1))
	\Bigg]
\end{align*}

Therefore, we have
\begin{align*}
	&\frac{d}{dt_1}\E[f(S_1(t_1),S_2(t_2))] 
 \\=&
	\frac{1}{2}\sum_{k=4}^{2q}
	\frac{t^{\frac{k}{2}-1}}{(k-1)!}
	\sum_{\pi\in\mathrm{P}([k])}
	(-1)^{|\pi|-1}(|\pi|-1)!\,
	\\&\E\Bigg[ \sum_{(l,\gamma) \in [n] \times [pm]} \partial_{Z_{(l,\gamma),1|\pi}}\cdots\partial_{Z_{(l,\gamma),k|\pi}}f_{1,S_2(t_2)}(S_1(t_1))
	\Bigg],
\end{align*}

Similarly, we can define $f_{2,M_1}(M_2)=\tr(\Gamma(M_1,M_2)^{2q})$ and obtain
\begin{align*}
	&\frac{d}{dt_2}\E[f(S_1(t_1),S_2(t_2))] \\=& 
	\frac{1}{2}\sum_{k=4}^{2q}
	\frac{t^{\frac{k}{2}-1}}{(k-1)!}
	\sum_{\pi\in\mathrm{P}([k])}
	(-1)^{|\pi|-1}(|\pi|-1)!\,\\&
	\E\Bigg[ \sum_{(l,\gamma) \in [n] \times [pm]}\partial_{Z_{(l,\gamma),1|\pi}}\cdots\partial_{Z_{(l,\gamma),k|\pi}}f_{2,S_1(t_1)}(S_2(t_2))
	\Bigg],
\end{align*}

We decompose \begin{align*}\phi(t)=\frac{d}{dt} \E [f(S_1(t), S_2(t))]\end{align*} as $\phi(t)=\psi_1 \circ \psi_2(t)$ where
$\psi_1(t_1,t_2)=\E[f(S_1(t_1),S_2(t_2))]$
and $(t_1,t_2)=\psi_2(t)=(t,t)$.

By chain rule, we have
\begin{align}
\begin{split} \label{eq:derestdecoup} 
	&\frac{d}{dt}\E[f(S_1(t),S_2(t))]
 \\=&
 \Bigg[\begin{array}{cc}
    \frac{d}{dt_1}(\E[f(S_1(t),S_2(t))])  & \frac{d}{dt_2}(\E[f(S_1(t),S_2(t))]) 
 \end{array}\Bigg] \cdot \Bigg[\begin{array}{cc}
      \frac{d}{dt_1}(t_1)  \\ \frac{d}{dt_2}(t_2)
 \end{array}\Bigg]
 \\=& 	\frac{1}{2}\sum_{k=4}^{2q}
	\frac{t^{\frac{k}{2}-1}}{(k-1)!}
	\sum_{\pi\in\mathrm{P}([k])}
	(-1)^{|\pi|-1}(|\pi|-1)!\,
	\\&\E\Bigg[ \sum_{(l,\gamma) \in [n] \times [pm]} \partial_{Z_{(l,\gamma),1|\pi}}\cdots\partial_{Z_{(l,\gamma),k|\pi}}f_{1,S_2(t)}(S_1(t))
	\Bigg]\\&+\frac{1}{2}\sum_{k=4}^{2q}
	\frac{t^{\frac{k}{2}-1}}{(k-1)!}
	\sum_{\pi\in\mathrm{P}([k])}
	(-1)^{|\pi|-1}(|\pi|-1)!\,
	\\&\E\Bigg[ \sum_{(l,\gamma) \in [n] \times [pm]} \partial_{Z_{(l,\gamma),1|\pi}}\cdots\partial_{Z_{(l,\gamma),k|\pi}}f_{2,S_1(t)}(S_2(t))
	\Bigg]\\=:& T_1 + T_2
 \end{split}
\end{align}
where $T_1$ and $T_2$ denote the first and second sum respectively.

The next step is to bound $\frac{d}{dt}\E[f(S_1(t),S_2(t))]$.

From Equation \eqref{eq:derestdecoup}, we notice that the terms inside the expectation in both $T_1$ and $T_2$ are directional derivatives of $f_{1,S_2(t_2)}$ and $f_{2,S_1(t_1)}$ along $Z_{(l,\gamma),1|\pi}, \ldots, Z_{(l,\gamma),k|\pi}$. Using a general expression for derivatives of multinomials using product rule, we have, for any deterministic $m \times d$ matrices $B_1, \ldots, B_k, M_1$ and $M_2$,
\begin{align*}
    &\partial_{B_1} \cdots \partial_{B_k}f_{1,M_2}(M_1)
\\=&\sum_{\sigma \in \sym (k)}\sum_{\substack{r_1,\ldots,r_{k+1}\ge 0\\
	r_1+\cdots+r_{k+1}=2q-k}}
	\tr (\Gamma(M_1,M_2)^{r_1}((B_{\sigma(1)}U)^TM_2U+(M_2U)^TB_{\sigma(1)}U)
	\Gamma(M_1,M_2)^{r_2}\\&((B_{\sigma(2)}U)^TM_2U+(M_2U)^TB_{\sigma(2)}U)\cdots
\Gamma(M_1,M_2)^{r_k}\\&((B_{\sigma(k)}U)^TM_2U+(M_2U)^TB_{\sigma(k)}U)\Gamma(M_1,M_2)^{r_{k+1}})
\end{align*}

In our case, for each fixed $(l, \gamma)$, we have to analyse (in the case of $T_1$) $\partial_{Z_{(l,\gamma),1|\pi}}\cdots\partial_{Z_{(l,\gamma),k|\pi}} \allowbreak f_{1,S_2(t_2)}(S_1(t_1))$, which means we have $B_{\lambda} = Z_{(l,\gamma),\lambda|\pi}$ for $\lambda \in [k]$, and $M_2 = S_2(t)$. So terms of the form $(B_{\lambda}U)^TM_2U$ become, 
\begin{align*}
    (Z_{(l,\gamma),\lambda|\pi}U)^TS_2(t)U
    \allowbreak =\xi_{(l,\gamma),\lambda|\pi} \mathbf{u}_l e_{\mu_{(l,\gamma),\lambda|\pi}}^TS_2(t)U
\end{align*}
where $\mathbf{u}_l$ is the column vector such that $\mathbf{u}_l^T$ is the $l$th row of $U$ and we use the fact that $Z_{(l,\gamma)}=\xi_{(l,\gamma)} e_{\mu_{(l, \gamma)}} e_l ^T$ with $\mu_{(l,\gamma)}$ being a uniformly distributed random variable on $[(m/s)(\gamma-1)+1:(m/s)\gamma]$ and $\xi_{(l,\gamma)}$ being a random sign. 

Let $\Theta_{(l,\gamma), \lambda, 1} = \xi_{(l,\gamma)}\mathbf{u}_l^T$ and $\Theta_{(l,\gamma),\lambda, 2} = e_{\mu_{(l,\gamma),\lambda|\pi}}(S_2(t)U) $. Then, 
\begin{align*}
    (Z_{(l,\gamma),\lambda|\pi}U)^TS_2(t)U
    \allowbreak =\xi_{(l,\gamma)} \mathbf{u}_l e_{\mu_{(l,\gamma),\lambda|\pi}}^TS_2(t)U = \Theta_{(l,\gamma), \lambda, 1}^T \Theta_{(l,\gamma), \lambda, 2}
\end{align*}

Thus, terms of the form $(B_{\lambda}U)^TM_2U+(M_2U)^TB_{\lambda}U$ become
\begin{align*}
    (Z_{(l,\gamma),\lambda|\pi}U)^TS_2(t)U + (S_2(t)U)^TZ_{(l,\gamma),\lambda|\pi}U &= \Theta_{(l,\gamma), \lambda, 1}^T\Theta_{(l,\gamma), \lambda, 2} + \Theta_{(l,\gamma), \lambda, 2}^T\Theta_{(l,\gamma), \lambda, 1} \\
    &= \sum_{\tau \in \sym(\{1,2\})} \Theta_{(l,\gamma), \lambda, \tau(1)}^T \Theta_{(l,\gamma), \lambda, \tau(2)}
\end{align*}

Doing this for all terms \begin{align*}(B_{\sigma(1)}U)^TM_2U+(M_2U)^TB_{\sigma(1)}U, \ldots, (B_{\sigma(1)}U)^TM_2U+(M_2U)^TB_{\sigma(1)}U\end{align*} in the expansion of $\partial_{Z_{(l,\gamma),1|\pi}}\cdots\partial_{Z_{(l,\gamma),k|\pi}} \allowbreak f_{1,S_2(t_2)}(S_1(t_1))$ (for a fixed $(l, \gamma)$ and $\pi \in P([k])$), we get
\begin{align*}
&\partial_{Z_{(l,\gamma),1|\pi}}\cdots\partial_{Z_{(l,\gamma),k|\pi}}  f_{1,S_2(t_2)}(S_1(t_1)) \\ 
    &= \sum_{\sigma \in \sym([k])} \sum_{\substack{r_1,\ldots,r_{k+1}\ge 0\\
	r_1+\cdots+r_{k+1}=2q-k}}
	 \sum_{\tau_1,...,\tau_k \in \sym(\{1,2\})}
	\tr \Gamma(t)^{r_1} \Theta_{(l,\gamma), \sigma(1), \tau_1(1)}^T \Theta_{(l,\gamma), \sigma(1), \tau_1(2)} \\& \cdot
	\Gamma(t)^{r_2}\Theta_{(l,\gamma), \sigma(2), \tau_2(1)}^T \Theta_{(l,\gamma), \sigma(2), \tau_2(2)}\cdots
\Gamma(t)^{r_k}\Theta_{(l,\gamma), \sigma(k), \tau_k(1)}^T \Theta_{(l,\gamma), \sigma(k), \tau_k(2)} \Gamma(t)^{r_{k+1}} \\
\end{align*}

Fixing $k$ and $\pi$, summing over $(l, \gamma) \in [n] \times [pm]$, and taking expectations, we have
\begin{align*}
    & \E \left[ \sum_{(l,\gamma) \in [n] \times [pm]} \partial_{Z_{(l,\gamma),1|\pi}}\cdots\partial_{Z_{(l,\gamma),k|\pi}}  f_{1,S_2(t_2)}(S_1(t_1))  \right] \\ 
    =& \sum_{\sigma \in \sym([k])} \sum_{\substack{r_1,\ldots,r_{k+1}\ge 0\\
	r_1+\cdots+r_{k+1}=2q-k}}
	 \sum_{\tau_1,...,\tau_k \in \sym(\{1,2\})}  
	 \sum_{(l,\gamma) \in [n] \times [pm]} \\& \E [ \tr \Gamma(t)^{r_1} \Theta_{(l,\gamma), \sigma(1), \tau_1(1)}^T \Theta_{(l,\gamma), \sigma(1), \tau_1(2)} \cdot
	\Gamma(t)^{r_2} \\&\cdots \Gamma(t)^{r_k}\Theta_{(l,\gamma), \sigma(k), \tau_k(1)}^T \Theta_{(l,\gamma), \sigma(k), \tau_k(2)} \Gamma(t)^{r_{k+1}} ] \\
\end{align*}

By Lemma \ref{lem:decouptraceineq} (when $S_1$ and $S_2$ have the unscaled OSNAP distribution) and by Lemma \ref{lem:decouptraceineqose} (when $S_1$ and $S_2$ have the unscaled OSE-IE distribution) with $\Upsilon_1=\Gamma(t)^{r_2},...,\Upsilon_k=\Gamma(t)^{r_k+r_1}$, we have
\begin{equation}\label{tracebound}
\begin{aligned}
&\sum_{(l,\gamma) \in [n] \times [pm]} \E [ \tr \Gamma(t)^{r_1} \Theta_{(l,\gamma), \sigma(1), \tau_1(1)}^T\Theta_{(l,\gamma), \sigma(1), \tau_1(2)}\cdot
	\Gamma(t)^{r_2} \\&\cdots \Gamma(t)^{r_k}\Theta_{(l,\gamma), \sigma(k), \tau_k(1)}^T\Theta_{(l,\gamma), \sigma(k), \tau_k(2)}\Gamma(t)^{r_{k+1}} ]\\=
 &\sum_{(l,\gamma) \in [n] \times [pm]} \E [ \tr  \Theta_{(l,\gamma), \sigma(1), \tau_1(1)}^T\Theta_{(l,\gamma), \sigma(1), \tau_1(2)} \cdot
	\Gamma(t)^{r_2} \\&\cdots \Gamma(t)^{r_k}\Theta_{(l,\gamma), \sigma(k), \tau_k(1)}^T\Theta_{(l,\gamma), \sigma(k), \tau_k(2)}\Gamma(t)^{r_{k+1}}\Gamma(t)^{r_1} ]\\=
    &\sum_{(l,\gamma) \in [n] \times [pm]} \E[ \tr \Theta_{(l,\gamma), 1, \tau_1(1)}^T\Theta_{(l,\gamma), 1, \tau_1(2)}
	\Upsilon_1\Theta_{(l,\gamma), 2, \tau_2(1)}^T\Theta_{(l,\gamma), 2, \tau_2(2)} \\&\cdots
	\Upsilon_2\Theta_{(l,\gamma), k, \tau_k(1)}^T\Theta_{(l,\gamma), k, \tau_k(2)}\Upsilon_k]  \\ \le &
       (c_{\ref*{lem:decouptraceineq}}(pm)^{\frac{1}{q}}\sqrt{\max\{pd,q\}})^{\frac{qk-2q}{q-1}} (\E \tr((\Gamma(t))^{2q}))^{\frac{1}{q} \cdot \frac{2q-k}{2q-2}} (\E \tr((\Gamma(t))^{2q}))^{1-\frac{k}{2q}}
\end{aligned}
\end{equation}

Therefore, we have
\begin{align*}
    &\E \left[ \sum_{(l,\gamma) \in [n] \times [pm]} \partial_{Z_{(l,\gamma),1|\pi}}\cdots\partial_{Z_{(l,\gamma),k|\pi}}  f_{1,S_2(t_2)}(S_1(t_1))  \right]
    \\\le& k! \binom{2q}{k} 2^k   (c_{\ref*{lem:decouptraceineq}}(pm)^{\frac{1}{q}}\sqrt{\max\{pd,q\}})^{\frac{qk-2q}{q-1}} (\E \tr((\Gamma(t))^{2q}))^{\frac{1}{q} \cdot \frac{2q-k}{2q-2}} (\E \tr((\Gamma(t))^{2q}))^{1-\frac{k}{2q}} 
    \\\le& (4q)^k   (c_{\ref*{lem:decouptraceineq}}(pm)^{\frac{1}{q}}\sqrt{\max\{pd,q\}})^{\frac{qk-2q}{q-1}}  (\E \tr((\Gamma(t))^{2q}))^{1-\frac{1}{2q}(k-2/r)}
\end{align*}
where $r=\frac{2q-2}{2q-k}$.

Going back to the expression for $T_1$ and using \cite[Lemma 6.4]{brailovskaya2022universality}, we get,
\begin{align*}
    T_1 =& \frac{1}{2}\sum_{k=4}^{2q}
	\frac{t^{\frac{k}{2}-1}}{(k-1)!}
	\sum_{\pi\in\mathrm{P}([k])}
	(-1)^{|\pi|-1}(|\pi|-1)!
	\\ &\cdot\E \Bigg[ \sum_{(l,\gamma) \in [n] \times [pm]} \partial_{Z_{(l,\gamma),1|\pi}}\cdots\partial_{Z_{(l,\gamma),k|\pi}}f_{1,S_2(t_2)}(S_1(t_1))
	\Bigg] \\
    \le& \frac{1}{2} \sum_{k=4}^{2q}
	\frac{1}{(k-1)!}  2^k (k-1)! \\ &\cdot (4q)^k   (c_{\ref*{lem:decouptraceineq}}(pm)^{\frac{1}{q}}\sqrt{\max\{pd,q\}})^{\frac{qk-2q}{q-1}}  (\E \tr((\Gamma(t))^{2q}))^{1-\frac{1}{2q}(k-2/r)} \\
    \le& \frac{1}{2} \sum_{k=4}^{2q} (8q)^k c_{1}^{k}  ((pm)^{\frac{1}{q}}\sqrt{\max\{pd,q\}})^{\frac{qk-2q}{q-1}}  (\E \tr((\Gamma(t))^{2q}))^{1-\frac{1}{2q}(k-2/r)}
\end{align*}

We similarly get $T_2 \le \frac{1}{2} \sum_{k=4}^{2q} (8q)^k c_{1}^{k}  ((pm)^{\frac{1}{q}}\sqrt{\max\{pd,q\}})^{\frac{qk-2q}{q-1}}  (\E \tr((\Gamma(t))^{2q}))^{1-\frac{1}{2q}(k-2/r)}$.

Going back, we have,
\begin{align*}
    \frac{d}{dt}\E[\tr \Gamma(t)^{2q}] \le&  \sum_{k=4}^{2q} (8c_1q)^k   ((pm)^{\frac{1}{q}}\sqrt{\max\{pd,q\}})^{\frac{qk-2q}{q-1}}  (\E \tr((\Gamma(t))^{2q}))^{1-\frac{1}{2q}(k-2/r)}
    \\ = & \sum_{k=4}^{2q} 2^{-k}(16c_1q)^k   ((pm)^{\frac{1}{q}}\sqrt{\max\{pd,q\}})^{\frac{qk-2q}{q-1}}  (\E \tr((\Gamma(t))^{2q}))^{1-\frac{1}{2q}(k-2/r)}
    \\ \le & \max \limits_{4 \le k \le 2q}(16c_1q)^k   ((pm)^{\frac{1}{q}}\sqrt{\max\{pd,q\}})^{\frac{qk-2q}{q-1}}  (\E \tr((\Gamma(t))^{2q}))^{1-\frac{1}{2q}(k-2/r)}
    \\ \le & \max \limits_{4 \le k \le 2q} (c_5q)^k ((pm)^{\frac{1}{q}}\sqrt{\max\{pd,q\}})^{\frac{qk-2q}{q-1}}(\E[f(S_1(t),S_2(t))])^{1-\frac{k-2}{2q-2}}
\end{align*}

\end{proof}

\subsection{Proving the trace inequality needed to obtain the differential inequality for the derivative of the interpolant} \label{subsec:osnaptraceineq}

In this section, we explain the following trace inequality result which is the key to prove the bound (\ref{tracebound}) in the proof of Lemma \ref{lem:diffineq}.

\begin{lemma} [Trace Inequalities for OSNAP]\label{lem:decouptraceineq}
Let $S_1(t)$ and $S_2(t)$ be as in Lemma \ref{lem:diffineq} with both having the fully independent unscaled OSNAP distribution. Let
\begin{align*}\Gamma(t)=(S_1(t)U)^T(S_2(t)U)+(S_2(t)U)^T(S_1(t)U)\end{align*} Let $\mathcal{Z}=\{\xi_{(l,\gamma)}:(l,\gamma) \in [n] \times [pm]\} \cup \{\mu_{(l,\gamma)}:(l,\gamma) \in [n] \times [pm]\}$ be the family of mutually independent random variables generating an instance of $S_1$ with the fully independent  unscaled OSNAP distribution. Let $q \ge 2$ and $3 \le k \le 2q$. Let $\{\mathcal{Z}_{\lambda}\}_{\lambda \in [k]}$ be a family of (possibly dependent) random elements, where for each $\lambda \in [k]$, the random element
\begin{align*}
    \mathcal{Z}_{\lambda}=\{\xi_{(l,\gamma),\lambda}:(l,\gamma) \in [n] \times [pm]\} \cup \{\mu_{(l,\gamma),\lambda}:(l,\gamma) \in [n] \times [pm]\}
\end{align*}
has the same distribution as 
\begin{align*}
    \mathcal{Z}=\{\xi_{(l,\gamma)}:(l,\gamma) \in [n] \times [pm]\} \cup \{\mu_{(l,\gamma)}:(l,\gamma) \in [n] \times [pm]\}
\end{align*}
Let $Z_{(l,\gamma)}=\xi_{(l,\gamma)} e_{\mu_{(l, \gamma)}} e_l ^T$ and $Z_{(l,\gamma),\lambda}=\xi_{(l,\gamma),\lambda} e_{\mu_{(l, \gamma),\lambda}} e_l ^T$.
Let $\{\Upsilon_1,...,\Upsilon_k\}$ be a family of $L_{\infty}(S_{\infty}^d)$ random matrices.
Assume further that the collection $\{\mathcal{Z}_{\lambda}\}_{\lambda \in [k]}$ is independent of $S_1, S_2, G_1, G_2$, and $\{\Upsilon_1,...,\Upsilon_k\}$. (In other words, $\{\Upsilon_1,...,\Upsilon_k\}$ can possibly be dependent with $S_1, S_2, G_1, G_2$.)
For each $(l,\gamma) \in [n] \times [pm]$ and $\lambda \in k$, we define random vectors $\Theta_{(l,\gamma), \lambda, 1}, \Theta_{(l,\gamma), \lambda, 2} \in \R^d$ such that
\begin{align*}
    \Theta_{(l,\gamma), \lambda, 1} = \xi_{(l,\gamma),\lambda} u_l^T \text{ and } \Theta_{(l,\gamma), \lambda, 2} = e_{\mu_{(l,\gamma),\lambda}}^TS_2(t)U
\end{align*}
where $e_{\mu_{(l,\gamma),\lambda}}$ represents the $\mu_{(l,\gamma),\lambda}$th coordinate vector. Then, given $0 \le \beta_1,...,\beta_k \le +\infty$ such that $\sum \limits_{\lambda=1}^k \frac{1}{\beta_{\lambda}}=1-\frac{k}{2q}$, $\tau_1, \ldots, \tau_k \in \sym(\{1,2 \})$, there exists $c_{\ref*{lem:decouptraceineq}}>0$ such that
\begin{equation}\label{eq:traceineq}
\begin{aligned}
    &\sum_{(l,\gamma) \in [n] \times [pm]} \E[ \tr \Theta_{(l,\gamma), 1, \tau_1(1)}^T\Theta_{(l,\gamma), 1, \tau_1(2)}
	\Upsilon_1\Theta_{(l,\gamma), 2, \tau_2(1)}^T\Theta_{(l,\gamma), 2, \tau_2(2)} \\&\cdots
	\Upsilon_2\Theta_{(l,\gamma), k, \tau_k(1)}^T\Theta_{(l,\gamma), k, \tau_k(2)}\Upsilon_k]  \\ \le &
       (c_{\ref*{lem:decouptraceineq}}(pm)^{\frac{1}{q}}\sqrt{\max\{pd,q\}})^{\frac{2qk-4q}{2q-2}} (\E \tr((\Gamma(t))^{2q}))^{\frac{1}{q} \cdot \frac{2q-k}{2q-2}} \prod \limits_{\lambda=1}^k \norm{\Upsilon_{\lambda}}_{\beta_{\lambda}}
\end{aligned}
\end{equation}
\end{lemma}
\begin{remark}
The main idea of this lemma is simple. We first use matrix Holder inequality to transform the left hand side into smaller factors, and then we bound those small factors separately. Following this idea, there could be different variants of this lemma. However, not all of them will eventually lead to the optimal dependency of the sparsity on $\varepsilon$. We will explain the idea on how to choose the correct bound. As explained in Proposition \ref{prop:momestdecoupmatrix}, the sparsity requirement comes from the condition
\begin{align*}
    (\E[\tr \Gamma(1)^{2q}])^{\frac{1}{2q}}-(\E[\tr \Gamma(0)^{2q}])^{\frac{1}{2q}} \le \frac{1}{4} pm \varepsilon
\end{align*}
If we want this condition to be implied by a requirement of the type
\begin{align*}
    pm>\frac{C (\log(d))^{\text{(some power)}}}{\varepsilon}
\end{align*}
then a natural attempt would be to try to bound $(\E[\tr \Gamma(1)^{2q}])^{\frac{1}{2q}}-(\E[\tr \Gamma(0)^{2q}])^{\frac{1}{2q}}$ by just a constant times some power of $\log(d)$.

To bound $(\E[\tr \Gamma(1)^{2q}])^{\frac{1}{2q}}-(\E[\tr \Gamma(0)^{2q}])^{\frac{1}{2q}}$, we combine our differential inequality with Lemma 6.6 in \cite{brailovskaya2022universality}.

By Lemma 6.6 in \cite{brailovskaya2022universality}, we can get the bound of the type
\begin{align*}
    (\E[\tr \Gamma(1)^{2q}])^{\alpha}-(\E[\tr \Gamma(0)^{2q}])^{\alpha} \le C\alpha +K^{1-\alpha}
\end{align*}
if we have a differential inequality of the form
\begin{align*}
    |\frac{d}{dt}(\E[\tr \Gamma(t)^{2q}])| \le C \max\{(\E[\tr \Gamma(t)^{2q}])^{1-\alpha},K^{1-\alpha}\}
\end{align*}
So we mainly want to obtain a differential inequality where $C$ and $K$ only contain factors of $\log(d)$ but do not contain any positive powers of $pm$ or any negative power of $\varepsilon \approx \sqrt{d/m}$. To this end, we need to choose a variant of the bound for the left hand side of (\ref{eq:traceineq}) which does not contain any positive powers of $pm$ or any negative power of $\varepsilon \approx \sqrt{d/m}$.

Therefore, when choosing the different variants of the bound in Lemma \ref{lem:decouptraceineq}, we seek to remove those factors that we do not want. One natural attempt would be to use the method similar to the proof of the second inequality in Theorem 2.9 in \cite{brailovskaya2022universality}, where they first bound the small factors that arise after using Holder by some matrix parameters, and then replace those matrix parameters by the trace moments by Jensen inequality. In our case, this means to replace $\sqrt{pmpd}$ by $(\E[\tr \Gamma(t)^{2q}])^{\frac{1}{2q}}$, thereby removing the factors $pm$ and $pd$.

However, in our case, after bounding the small factors obtained by directly separating the left hand side of (\ref{eq:traceineq}) using Holder, we can only obtain factors of the form $\sqrt{\max\{pd,q\}}$ and $\sqrt{\max\{pm,q\}}$. (Recall that $q \sim \log(d/\varepsilon\delta)$). The product $\sqrt{\max\{pd,q\} \cdot \max\{pm,q\}}$ of these two factors can be interchanged by the factor $\sqrt{pmpd}$ only when $pd \ge \log(d/\varepsilon\delta)$. But this already means that $pm \ge O(\frac{\log(d/\varepsilon\delta)}{\varepsilon^2})$, since $\varepsilon=O(\sqrt{\frac{d}{m}})$.

To get rid of the unbalanced factor $\sqrt{\max\{pd,q\} \cdot \max\{pm,q\}}$, we develop a completely different method. We combine several factors at early stage and then use a key observation, Lemma \ref{zssz}, to replace this combined factor by $(\E[\tr \Gamma(t)^{2q}])^{\frac{1}{2q}}$ directly. Eventually this method allowed us to get a more precise upper estimation, namely the right hand side of (\ref{eq:traceineq}). In this upper estimation, although there are still some factors of $\max\{pd,q\}$, they are not harmful, because $\max\{pd,q\}$ is of smaller order than $\sqrt{pdpm}$ (see the proof of Proposition \ref{prop:momestdecoupmatrix} for details).

\end{remark}
\begin{proof}[Proof of Lemma \ref{lem:decouptraceineq}]

Let
\begin{align*}
&F(\Upsilon_1,...,\Upsilon_k)\\=&\sum_{(l,\gamma) \in [n] \times [pm]} \E[ \tr \Theta_{(l,\gamma), 1, \tau_1(1)}^T\Theta_{(l,\gamma), 1, \tau_1(2)}
	\Upsilon_1\Theta_{(l,\gamma), 2, \tau_2(1)}^T\Theta_{(l,\gamma), 2, \tau_2(2)} \\&\cdots
	\Upsilon_2\Theta_{(l,\gamma), k, \tau_k(1)}^T\Theta_{(l,\gamma), k, \tau_k(2)}\Upsilon_k]
\end{align*}
Then, by Holder and the definition of $\Theta_{(l,\gamma), \lambda, 1}, \Theta_{(l,\gamma), \lambda, 2} \in \R^d$, we can show that $F(\Upsilon_1,...,\Upsilon_k)$ defines a multilinear functional on $L_{\infty}(S_{\infty}^d)$.

By Corollary \ref{cor:spinterpo}, it suffices to prove the claim for $(\beta_1,...,\beta_k)$ such that $(\frac{1}{\beta_1},...,\frac{1}{\beta_k})$ are extreme points of the set $\{(x_1,...,x_k) \in [0,1]^k:\sum \limits_{\lambda=1}^{k} x_{\lambda}=1-\frac{k}{2q}\}$, because all the other $(\frac{1}{\beta_1},...,\frac{1}{\beta_k})$ are convex combinations of the extreme points and the result follows from interpolation. Therefore, we only need to prove the claim for the extreme case when one of the $\beta_{\lambda}$'s is $\frac{q}{q-k}$ and all the others are $\infty$. By symmetry, we only need to consider the case $\beta_1= \cdots =\beta_{k-1}=\infty$ and $\beta_k=\frac{q}{q-k}$.

Let $\eta = (\eta(1), \eta(2))$ be a uniformly distributed random vector in $[n] \times [pm]$ such that $\eta$ is independent with $\{\mathcal{Z}_{\lambda}\}_{\lambda \in [k]}$, $S_1, S_2, G_1, G_2$, and $\{\Upsilon_1,...,\Upsilon_k\}$.
    
For all $\lambda \in [k]$, define random vectors $\Theta_{1, \lambda} = \Theta_{\eta, \lambda, 1}$ and $\Theta_{2, \lambda} = \Theta_{\eta, \lambda, 2}$. Then,
\begin{align*}
        &\sum_{(l,\gamma) \in [n] \times [pm]} \E[ \tr \Theta_{(l,\gamma), 1, \tau_1(1)}^T\Theta_{(l,\gamma), 1, \tau_1(2)}
	\Upsilon_1\Theta_{(l,\gamma), 2, \tau_2(1)}^T\Theta_{(l,\gamma), 2, \tau_2(2)}\\ &\cdots
	\Upsilon_2\Theta_{(l,\gamma), k, \tau_k(1)}^T\Theta_{(l,\gamma), k, \tau_k(2)}\Upsilon_k] \\ =&
    pmn \cdot \E[
	\tr \Theta_{\tau_{1}(1),1}^T\Theta_{\tau_{1}(2),1}
	\Upsilon_1\Theta_{\tau_{2}(1),2}^T\Theta_{\tau_{2}(2),2}\cdots
	\Upsilon_2\Theta_{\tau_{k}(1),k}^T\Theta_{\tau_{k}(2),k}\Upsilon_k] \\ =&
    \cS \cdot \E[
	\tr \Theta_{\tau_{1}(1),1}^T\Theta_{\tau_{1}(2),1}
	\Upsilon_1\Theta_{\tau_{2}(1),2}^T\Theta_{\tau_{2}(2),2}\cdots
	\Upsilon_2\Theta_{\tau_{k}(1),k}^T\Theta_{\tau_{k}(2),k}\Upsilon_k]
\end{align*}
    where $\cS := pmn$ is the number of non-zero entries in $S$.
    Essentially, we have written the sum over indices $(l, \gamma)$ as an expectation over uniformly chosen $(l, \gamma)$ times the number of tuples $(l, \gamma)$.

Let $\Upsilon_k=V_k|\Upsilon_k|$ be the polar decomposition. By matrix Holder inequality (Lemma 5.3. in \cite{brailovskaya2022universality}), we have
\begin{align*}
    &|\E[
	\tr \Theta_{\tau_{1}(1),1}^T\Theta_{\tau_{1}(2),1}
	\Upsilon_1\Theta_{\tau_{2}(1),2}^T\Theta_{\tau_{2}(2),2}\cdots
	\Upsilon_{k-1}\Theta_{\tau_{k}(1),k}^T\Theta_{\tau_{k}(2),k}\Upsilon_k]|
 \\ = & |\E[
	\tr |\Upsilon_k|^{1/2}\Theta_{\tau_{1}(1),1}^T\Theta_{\tau_{1}(2),1}
	\Upsilon_1\Theta_{\tau_{2}(1),2}^T\Theta_{\tau_{2}(2),2}\cdots
	\Upsilon_{k-1}\Theta_{\tau_{k}(1),k}^T\Theta_{\tau_{k}(2),k}V_k|\Upsilon_k|^{1/2}]|
 \\ \le & |\E[
	\tr \Theta_{\tau_{1}(1),1}^T\Theta_{\tau_{1}(2),1}
	\Upsilon_1 \cdots \Theta_{\tau_{k/2}(1),k/2}^T\Theta_{\tau_{k/2}(2),k/2}
	\Upsilon_{k/2} 
 \\& \cdot \Upsilon_{k/2}^T \Theta_{\tau_{k/2}(2),k/2}^T\Theta_{\tau_{k/2}(1),k/2} \cdots \Upsilon_{1}^T \Theta_{\tau_{1}(2),1}^T\Theta_{\tau_{1}(1),1} |\Upsilon_k|]|^{1/2}
\\ &\cdot |\E[
	\tr  \Theta_{\tau_{k}(2),k}^T\Theta_{\tau_{k}(1),k}
	\Upsilon_{k-1} \cdots \Upsilon_{k/2+1}^T \Theta_{\tau_{k/2+1}(2),k/2+1}^T\Theta_{\tau_{k/2+1}(1),k/2+1}
 \\& \cdot
	\Theta_{\tau_{k/2+1}(1),k/2+1}^T\Theta_{\tau_{k/2+1}(2),k/2+1} \Upsilon_{k/2+1} \cdots \Upsilon_{k-1}  \Theta_{\tau_{k}(1),k}^T\Theta_{\tau_{k}(2),k} V_k |\Upsilon_k|V_k^T]|^{1/2}
  \\ =& (\text{factor 1}) \cdot (\text{factor 2})
\end{align*}
where, assuming $k$ is even, in the notation of Proposition 5.1 in \cite{brailovskaya2022universality}, we set
\begin{align*}Y_1=|\Upsilon_k|^{1/2}\Theta_{\tau_{1}(1),1}^T\Theta_{\tau_{1}(2),1}
	\Upsilon_1 \cdots \Theta_{\tau_{k/2}(1),k/2}^T\Theta_{\tau_{k/2}(2),k/2}
	\Upsilon_{k/2}\end{align*}
 and
\begin{align*}Y_2=\Theta_{\tau_{k/2+1}(1),k/2+1}^T\Theta_{\tau_{k/2+1}(2),k/2+1} \Upsilon_{k/2+1} \cdots \Upsilon_{k-1}  \Theta_{\tau_{k}(1),k}^T\Theta_{\tau_{k}(2),k} V_k |\Upsilon_k|^{1/2}\end{align*}
with the exponents being $p_1=p_2=2$. When $k$ is odd, the argument can be modified as suggested in Step 4 of the proof of Proposition 5.1 in \cite{brailovskaya2022universality}.

We focus on the factor
\begin{align*}
    (\text{factor 1})=&|\E[
	\tr \Theta_{\tau_{1}(1),1}^T\Theta_{\tau_{1}(2),1}
	\Upsilon_1 \cdots \Theta_{\tau_{k/2}(1),k/2}^T\Theta_{\tau_{k/2}(2),k/2}
	\Upsilon_{k/2} 
 \\& \cdot \Upsilon_{k/2}^T \Theta_{\tau_{k/2}(2),k/2}^T\Theta_{\tau_{k/2}(1),k/2} \cdots \Upsilon_{1}^T \Theta_{\tau_{1}(2),1}^T\Theta_{\tau_{1}(1),1} |\Upsilon_k|]|^{1/2}
\end{align*}
and the analysis of the other factor will be similar.

Let
\begin{align*}
    \Theta_{\tau_{1}(2),1}^T\Theta_{\tau_{1}(1),1}=\Lambda_{1} |\Theta_{\tau_{1}(2),1}^T\Theta_{\tau_{1}(1),1}|
\end{align*}
be the polar decomposition.

Therefore, we have, for some $r \ge 1$ to be fixed later, 
\begin{equation}\label{eq:factor1}
\begin{aligned}
&(\text{factor 1})^2\\=
    &|\E[
	\tr \Theta_{\tau_{1}(1),1}^T\Theta_{\tau_{1}(2),1}
	\Upsilon_1 \cdots \Theta_{\tau_{k/2}(1),k/2}^T\Theta_{\tau_{k/2}(2),k/2}
	\Upsilon_{k/2} 
 \\& \cdot \Upsilon_{k/2}^T \Theta_{\tau_{k/2}(2),k/2}^T\Theta_{\tau_{k/2}(1),k/2} \cdots \Upsilon_{1}^T \Theta_{\tau_{1}(2),1}^T\Theta_{\tau_{1}(1),1} |\Upsilon_k|]|
 \\=&|\E[
	\tr |\Theta_{\tau_{1}(2),1}^T\Theta_{\tau_{1}(1),1}|^{1-1/r} \Lambda_{1}^T
	\Upsilon_1 \cdots \Theta_{\tau_{k/2}(1),k/2}^T\Theta_{\tau_{k/2}(2),k/2}
	\Upsilon_{k/2} 
 \\& \cdot \Upsilon_{k/2}^T \Theta_{\tau_{k/2}(2),k/2}^T\Theta_{\tau_{k/2}(1),k/2} \cdots \Upsilon_{1}^T \Lambda_{1} |\Theta_{\tau_{1}(2),1}^T\Theta_{\tau_{1}(1),1}|^{1-1/r} \\& \cdot|\Theta_{\tau_{1}(2),1}^T\Theta_{\tau_{1}(1),1}|^{1/r} |\Upsilon_k| |\Theta_{\tau_{1}(2),1}^T\Theta_{\tau_{1}(1),1}|^{1/r}]|
 \\ \le & \norm{\Theta_{\tau_{1}(2),1}^T\Theta_{\tau_{1}(1),1}}_{2q}^{2-2/r} \prod\limits_{\lambda  = 2}^{k/2}\norm{\Theta_{\tau_{\lambda}(1),\lambda}^T\Theta_{\tau_{\lambda}(2),\lambda}}_{2q}^2\prod\limits_{\lambda  = 1}^{k/2} {\norm{\Upsilon_\lambda }_{\infty}^2} \\ &\cdot \norm{|\Theta_{\tau_{1}(2),1}^T\Theta_{\tau_{1}(1),1}|^{1/r} |\Upsilon_k| |\Theta_{\tau_{1}(2),1}^T\Theta_{\tau_{1}(1),1}|^{1/r}}_r
 \end{aligned}
\end{equation}
where the last assertion follows from using the Hölder inequality with the exponents,
\begin{itemize}
    \item $2q$ for $\Theta_{\tau_j(1),j}^T\Theta_{\tau_j(2),j}$ and $\Theta_{\tau_j(2),j}^T\Theta_{\tau_j(1),j}$  for $j=2, \ldots, k/2$. Note that there are $2(\frac{k}{2}-1) = k-2$ many such terms.
    \item $\infty$ for $\Upsilon_j$ or $\Upsilon_j^T$ for $j=1, \ldots, k/2$, $\Lambda_{1}$ and $\Lambda_{1}^T$.
    \item $\frac{2qr}{r-1}$ for each term $|\Theta_{\tau_{1}(2),1}^T\Theta_{\tau_{1}(1),1}|^{1-1/r}$.
    \item $r$ for the term $|\Theta_{\tau_{1}(2),1}^T\Theta_{\tau_{1}(1),1}|^{1/r} |\Upsilon_k| |\Theta_{\tau_{1}(2),1}^T\Theta_{\tau_{1}(1),1}|^{1/r}$.
\end{itemize}
Then, the condition on exponents for Hölder inequality determines the value of $r$, 
\[ \frac{k-2}{2q} +  \frac{2(r-1)}{2rq} + \frac{1}{r} = 1 \]
giving, $r=\frac{2q-2}{2q-k}$.

Starting from here, we bound all the factors that appear in the final line of (\ref{eq:factor1}). First, the factor $\norm{\Theta_{\tau_{\lambda}(1),\lambda}^T\Theta_{\tau_{\lambda}(2),\lambda}}_{2q}=\norm{\Theta_{1,\lambda}^T\Theta_{2,\lambda}}_{2q}$ can be bounded directly by using the following lemma.

\begin{lemma}[Product of Random Rows] \label{theta1theta2}
Let $\Theta_{(l,\gamma), \lambda, 1}, \Theta_{(l,\gamma), \lambda, 2} \in \R^d$ be as in Lemma \ref{lem:decouptraceineq}. Let $\eta = (\eta(1), \eta(2))$ be a uniformly distributed random vector in $[n] \times [pm]$ such that $\eta$ is independent with $\{\mathcal{Z}_{\lambda}\}_{\lambda \in [k]}$, $S_1, S_2, G_1, G_2$. For all $\lambda \in [k]$, define random vectors $\Theta_{1, \lambda} = \Theta_{\eta, \lambda, 1}$ and $\Theta_{2, \lambda} = \Theta_{\eta, \lambda, 2}$. Then there exists a constant $c_{\ref*{theta1theta2}}$ such that, for any integer $q \ge 2$, we have
\begin{align*}
\norm{\Theta_{1,\lambda}^T\Theta_{2,\lambda}}_{q} &\le \frac{c_{\ref*{theta1theta2}}(pm)^{\frac{1}{q}}\sqrt{\max \{ pd, q \}}}{\cS^\frac{1}{q}}
\end{align*}
where $\cS=pmn$.
\end{lemma}

\begin{proof}
Note that for each fixed $1 \le \lambda \le k/2$, 
\begin{align*}
    \norm{\Theta_{1,\lambda}^T\Theta_{2,\lambda}}_{q} &= \norm{\xi_{\eta, \lambda} \mathbf{u}_{\eta(1)} e_{\mu_{\eta,\lambda}}^TS_2(t)U}_{q} \\
    &\le \left( \E [\tr \abs{\xi_{\eta, \lambda} \mathbf{u}_{\eta(1)} e_{\mu_{\eta,\lambda}}^TS_2(t)U}^{q} ] \right)^\frac{1}{q} \\
    &= \left( \E [\tr \abs{ \mathbf{u}_{\eta(1)} e_{\mu_{\eta,\lambda}}^TS_2(t)U}^{q} ] \right)^\frac{1}{q} \\
    &= \left( \E \left[ \frac{1}{d} \norm{ \mathbf{u}_{\eta(1)} }^{q}\norm{e_{\mu_{\eta,\lambda}}^TS_2(t)U}^{q} \right] \right)^\frac{1}{q} \\
\end{align*} 
Conditioning on $\eta$, we take expectation of the second factor over $\mu_{\eta, \lambda}$ and $S_2$,
\begin{align*}
    \norm{\Theta_{1,\lambda}^T\Theta_{2,\lambda}}_{q} &\le \left( \E_{\eta} \left[ \frac{1}{d} \norm{ \mathbf{u}_{\eta(1)} }^{q} \E_{\mu_{\eta, \lambda}, S_2(t)}[ \norm{e_{\mu_{\eta,\lambda}}^TS_2(t)U}^{q}] \right] \right)^\frac{1}{q} \\
\end{align*}

Note that, conditioned on $\eta$, $\mu_{\eta, \lambda}$ is uniformly distributed over a subset of $[m]$. Therefore, by Lemma \ref{lem:rownormbound}, we have  
\begin{align*}
    \norm{\Theta_{1,\lambda}^T\Theta_{2,\lambda}}_{q} &\le \left( \E_{\eta} \left[ \frac{1}{d} \norm{ \mathbf{u}_{\eta(1)} }^{q} \E_{\mu_{\eta, \lambda}, S_2(t)}[ \norm{e_{\mu_{\eta,\lambda}}^TS_2(t)U}^{q}] \right] \right)^\frac{1}{q} \\
    &\le c_{\ref*{lem:rownormbound}}\sqrt{\max \{ pd, q \}} \left( \E_{\eta} \left[ \frac{1}{d} \norm{ \mathbf{u}_{\eta(1)} }^{q} \right] \right)^\frac{1}{q} \\
    &= c_{\ref*{lem:rownormbound}}\sqrt{\max \{ pd, q \}} \left( \frac{1}{\cS  d} \sum_{(l,\gamma) \in [n]\times [pm]} \norm{u_l}^{q} \right)^\frac{1}{q} \\
    &= c_{\ref*{lem:rownormbound}}\sqrt{\max \{ pd, q \}} \left( \frac{pm}{\cS  d} \sum_{l=1}^{n} \norm{u_l}^{q} \right)^\frac{1}{q} \\
    &\le c_{\ref*{lem:rownormbound}}\sqrt{\max \{ pd, q \}} \left( \frac{pm}{\cS  d} \sum_{l=1}^{n} \norm{u_l}^{2} \right)^\frac{1}{q} \\
    &\le \frac{c_{\ref*{theta1theta2}}(pm)^{\frac{1}{q}}\sqrt{\max \{ pd, q \}}}{\cS^\frac{1}{q}}
\end{align*}
where in the last line we use the fact that $\sum_{l=1}^{n} \norm{u_l}^{2} = d$.

\end{proof}

Next, we have the following bound for the factor $\norm{|\Theta_{\tau_{1}(2),1}^T\Theta_{\tau_{1}(1),1}|^{1/r} |\Upsilon_k| |\Theta_{\tau_{1}(2),1}^T\Theta_{\tau_{1}(1),1}|^{1/r}}_{r}$.
\begin{align*}
    &\norm{|\Theta_{\tau_{1}(2),1}^T\Theta_{\tau_{1}(1),1}|^{1/r} |\Upsilon_k| |\Theta_{\tau_{1}(2),1}^T\Theta_{\tau_{1}(1),1}|^{1/r}}_{r}^{r} \\=&\E\tr((| \Theta_{\tau_{1}(2),1}^T\Theta_{\tau_{1}(1),1}|^{1/r} |\Upsilon_k| | \Theta_{\tau_{1}(2),1}^T\Theta_{\tau_{1}(1),1}|^{1/r})^{r})
    \\ \le & \E\tr(| \Theta_{\tau_{1}(2),1}^T\Theta_{\tau_{1}(1),1}| |\Upsilon_k|^{r} | \Theta_{\tau_{1}(2),1}^T\Theta_{\tau_{1}(1),1}|)
    \\=& \E\tr( |\Upsilon_k|^{r} | \Theta_{\tau_{1}(2),1}^T\Theta_{\tau_{1}(1),1}|| \Theta_{\tau_{1}(2),1}^T\Theta_{\tau_{1}(1),1}|)
    \\=& \E\tr( |\Upsilon_k|^{r}  (\Theta_{\tau_{1}(2),1}^T\Theta_{\tau_{1}(1),1})^T \Theta_{\tau_{1}(2),1}^T\Theta_{\tau_{1}(1),1})
    \end{align*}
where the inequality is due to the Lieb Thirring Inequality (\cite[Lemma 5.4]{brailovskaya2022universality}).

Here, we have two possible cases. The first case (CASE I) is $\tau_1(1)=1$. In this case, we have
\begin{align*}
   & (\Theta_{\tau_{1}(2),1}^T\Theta_{\tau_{1}(1),1})^T (\Theta_{\tau_{1}(2),1}^T\Theta_{\tau_{1}(1),1})\\
   =&\Theta_{1,1}^T\Theta_{2,1} (\Theta_{2,1}^T\Theta_{1,1})\\=&(Z_{\eta,1}U)^T(S_2(t)U)(S_2(t)U )^TZ_{\eta,1}U 
\end{align*}

Therefore, we have
\begin{align*}
& \E\tr( |\Upsilon_k|^{r}  (\Theta_{\tau_{1}(2),1}^T\Theta_{\tau_{1}(1),1})^T \Theta_{\tau_{1}(2),1}^T\Theta_{\tau_{1}(1),1})\\=&\E\tr( |\Upsilon_k|^{r}  (Z_{\eta,1}U)^T(S_2(t)U)(S_2(t)U )^TZ_{\eta,1}U )
    \\=&\E\tr( |\Upsilon_k|^{r}  \E_{\eta}((Z_{\eta,1}U)^T(S_2(t)U)(S_2(t)U )^TZ_{\eta,1}U) )
    \\=&\E\tr( |\Upsilon_k|^{r}  \frac{1}{\cS}(\sum_{(l,\gamma)}(Z_{(l,\gamma),1}U)^T(S_2(t)U)(S_2(t)U )^TZ_{(l,\gamma),1}U) )
    \\=&\frac{1}{\cS}\E\tr( |\Upsilon_k|^{r}  \E_{Z}(\sum_{(l,\gamma)}(Z_{(l,\gamma),1}U)^T(S_2(t)U)(S_2(t)U )^TZ_{(l,\gamma),1}U) )
    \\ \le& \frac{1}{\cS} \norm{|\Upsilon_k|^{r}}_{\frac{2q}{2q-2}} \cdot \norm{\E_{Z}(\sum_{(l,\gamma)}(Z_{(l,\gamma),1}U)^T(S_2(t)U)(S_2(t)U )^TZ_{(l,\gamma),1}U)}_{q}
\end{align*}

Plugging this estimate and the previous estimate on $\norm{\Theta_{\tau_j(1),j}^T\Theta_{\tau_j(2),j}}_{q}$ into inequality \ref{eq:factor1}, we have
\begin{equation}\label{eq:factor1, second}
\begin{aligned}
&(\text{factor 1})^2\\ 
\le & \norm{\Theta_{2,1}^T\Theta_{1,1}}_{2q}^{2-2/r} \prod\limits_{\lambda  = 2}^{k/2}\norm{\Theta_{1,\lambda}^T\Theta_{2,\lambda}}_{2q}^2\prod\limits_{\lambda  = 1}^{k/2} {\norm{\Upsilon_\lambda }_{\infty}^2} \norm{|\Theta_{2,1}^T\Theta_{1,1}|^{1/r} |\Upsilon_k| |\Theta_{2,1}^T\Theta_{1,1}|^{1/r}}_r
 \\ \le & \paren*{\frac{c_{\ref*{theta1theta2}}(pm)^{\frac{1}{q}}\sqrt{\max \{ pd, q \}}}{\cS^\frac{1}{q}}}^{2-2/r}\paren*{\frac{c_{\ref*{theta1theta2}}(pm)^{\frac{1}{q}}\sqrt{\max \{ pd, q \}}}{\cS^\frac{1}{q}}}^{k-2} \prod\limits_{\lambda  = 1}^{k/2} {\norm{\Upsilon_\lambda }_{\infty}^2} \\ &\cdot \paren*{\frac{1}{\cS} \norm{|\Upsilon_k|^{r}}_{\frac{2q}{2q-2}} \cdot \norm*{\E_{Z}\paren*{\sum_{(l,\gamma)}(Z_{(l,\gamma),1}U)^T(S_2(t)U)(S_2(t)U )^TZ_{(l,\gamma),1}U} }_{q}}^{1/r}
  \\=&\paren*{\frac{c_{\ref*{theta1theta2}}(pm)^{\frac{1}{q}}\sqrt{\max \{ pd, q \}}}{\cS^\frac{1}{q}}}^{\frac{2qk-4q}{2q-2}}\prod\limits_{\lambda  = 1}^{k/2} {\norm{\Upsilon_\lambda }_{\infty}^2} \\ &\cdot \paren*{\frac{1}{\cS} \norm{|\Upsilon_k|^{r}}_{\frac{2q}{2q-2}} \cdot \norm*{\E_{Z}\paren*{\sum_{(l,\gamma)}(Z_{(l,\gamma),1}U)^T(S_2(t)U)(S_2(t)U )^TZ_{(l,\gamma),1}U} }_{q}}^{1/r}
\end{aligned}
\end{equation}

To bound the term $\norm{\E_{Z}(\sum_{(l,\gamma)}(Z_{(l,\gamma),1}U)^T(S_2(t)U)(S_2(t)U )^TZ_{(l,\gamma),1}U)}_{q}$, we can use the following lemma to relate it to $\Gamma(t)$.

\begin{lemma}[Replacement by the Original Matrix] \label{zssz}
Let $S_1(t)$, $S_2(t)$, $\Gamma(t)$, and $Z_{(l,\gamma),\lambda}$  be as in Lemma \ref{lem:decouptraceineq}.

We have
\begin{align*}
    &\norm{\E_{Z_{(l,\gamma),1}}(\sum_{(l,\gamma)}(Z_{(l,\gamma),1}U)^T(S_2(t)U)(S_2(t)U )^TZ_{(l,\gamma),1}U)}_{q}^q
    \\ \le & \E \tr((\Gamma(t))^{2q})
\end{align*}
and
\begin{align*}
    &\norm{\E_{Z_{(l,\gamma),1}}(\sum_{(l,\gamma)}(S_2(t)U)^T(Z_{(l,\gamma),1}U)(Z_{(l,\gamma),1}U)^T(S_2(t)U )}_{q}^q
    \\ \le & \E \tr((\Gamma(t))^{2q})
\end{align*}
\end{lemma}

\begin{proof}
Let $W_{(l,\gamma),1}$ be the gaussian model for $Z_{(l,\gamma),1}$, i.e., the entries of $W_{(l,\gamma),1}$ are gaussian and the covariance structure between entries of $W_{(l,\gamma),1}$ is the same as $Z_{(l,\gamma),1}$, such that the family $\{W_{(l,\gamma),1}:(l,\gamma) \in \Xi\} \cup \{Z_{(l,\gamma),1}:(l,\gamma) \in \Xi\}$ are mutually independent. Let $Z_{(l,\gamma),1}(t)=\sqrt{t}Z_{(l,\gamma),1}+\sqrt{1-t}W_{(l,\gamma),1}$. With this notation, we know that $\sum \limits_{(l,\gamma) \in \Xi}Z_{(l,\gamma),1}(t)$ has the same distribution with $S_1(t)$.

We observe that
\begin{align*}
    &\E_{Z_{(l,\gamma),1}}(\sum_{(l,\gamma)}(Z_{(l,\gamma),1}U)^T(S_2(t)U)(S_2(t)U )^TZ_{(l,\gamma),1}U)
    \\=&\E_{Z_{(l,\gamma),1}(t)}(\sum_{(l,\gamma)}(Z_{(l,\gamma),1}(t)U)^T(S_2(t)U)(S_2(t)U )^TZ_{(l,\gamma),1}(t)U)
\end{align*}
because for fixed $(S_2(t)U)(S_2(t)U )^T$, the value
\begin{align*}
    \E_{Z_{(l,\gamma),1}}(\sum_{(l,\gamma)}(Z_{(l,\gamma),1}U)^T(S_2(t)U)(S_2(t)U )^TZ_{(l,\gamma),1}U)
\end{align*}
only depends on the covariance structure of $Z_{(l,\gamma),1}$.

Also, we have
\begin{align*}
    &\E_{Z_{(l,\gamma),1}(t)}(\sum_{(l,\gamma)}(Z_{(l,\gamma),1}(t)U)^T(S_2(t)U)(S_2(t)U )^TZ_{(l,\gamma),1}(t)U)
    \\=&\E_{Z_{(l,\gamma),1}(t)}((\sum_{(l,\gamma)}Z_{(l,\gamma),1}(t)U)^T(S_2(t)U)(S_2(t)U )^T(\sum_{(l,\gamma)}Z_{(l,\gamma),1}(t))U)
    \\=&\E_{S_1(t)}((S_1(t)U)^T(S_2(t)U)(S_2(t)U )^TS_1(t)U)
\end{align*}
because the cross terms have zero expectation by independence.

Therefore, we have
\begin{align*}
    &\E_{Z_{(l,\gamma),1}}\paren*{\sum_{(l,\gamma)}(Z_{(l,\gamma),1}U)^T(S_2(t)U)(S_2(t)U )^TZ_{(l,\gamma),1}U}
    \\=&\E_{S_1(t)}((S_1(t)U)^T(S_2(t)U)(S_2(t)U )^TS_1(t)U)
\end{align*}

Similarly, we also have
\begin{align*}
    &\E_{Z_{(l,\gamma),1}}\paren*{\sum_{(l,\gamma)}(S_2(t)U)^T(Z_{(l,\gamma),1}U)(Z_{(l,\gamma),1}U)^T(S_2(t)U )}\\
    =&\E_{S_1(t)}((S_2(t)U)^T(S_1(t)U)(S_1(t)U )^TS_2(t)U)
\end{align*}

Now we look at $\E_{S_1(t)}(\Gamma(t)^2)$. We have
\begin{align*}
    &\E_{S_1(t)}(\Gamma(t)^2)\\=&\E_{S_1(t)}((S_1(t)U)^T(S_2(t)U)(S_2(t)U )^TS_1(t)U)\\&+\E_{S_1(t)}((S_2(t)U)^T(S_1(t)U)(S_1(t)U )^TS_2(t)U)\\&+\E_{S_1(t)}((S_1(t)U)^T(S_2(t)U)(S_1(t)U )^TS_2(t)U)\\&+\E_{S_1(t)}((S_2(t)U)^T(S_1(t)U)(S_2(t)U )^TS_1(t)U)
\end{align*}

To simplify the last two terms, we need the following observation. Consider a random matrix $X$ whose entries have variance $1$ and zero covariances. Let $Y$ be a deterministic matrix. Then direct calculation can show that $\E_X XYX=Y^T$.

In fact, we have
\begin{align*}
(XY)_{i,\beta}=\sum_{\alpha}X_{i,\alpha}Y_{\alpha,\beta}
\end{align*}
and therefore
\begin{align*}
\E_X (XYX)_{i,j}=\E_X (\sum_{\beta}\sum_{\alpha}X_{i,\alpha}Y_{\alpha,\beta}X_{\beta,j})=\sum_{\beta}\sum_{\alpha}\E_X(X_{i,\alpha}Y_{\alpha,\beta}X_{\beta,j})
\end{align*}

We observe that $\E_X(X_{i,\alpha}Y_{\alpha,\beta}X_{\beta,j})=Y_{j,i}$ if $\alpha=j$ and $\beta=i$. When $(\alpha,\beta) \ne (j,i)$, we always have $\E_X(X_{i,\alpha}Y_{\alpha,\beta}X_{\beta,j})=0$. Therefore, we have $\E_X (XYX)_{i,j}=Y_{j,i}$, and then we conclude $\E_X (XYX)=Y^T$. 

Observing that the matrix $(S_1(t)U)^T$ has entries with variance $1$ and zero covariances, the above analysis gives that
\begin{align*}
    &\E_{S_1(t)}((S_1(t)U)^T(S_2(t)U)(S_1(t)U )^TS_2(t)U)\\=&(\E_{S_1(t)}((S_1(t)U)^T(S_2(t)U)(S_1(t)U )^T))(S_2(t)U)\\=&(S_2(t)U)^T(S_2(t)U)
\end{align*}
and
\begin{align*}
    &\E_{S_1(t)}((S_2(t)U)^T(S_1(t)U)(S_2(t)U )^TS_1(t)U)\\=&(S_2(t)U)^T(\E_{S_1(t)}((S_1(t)U)(S_2(t)U )^TS_1(t)U))\\=&(S_2(t)U)^T(S_2(t)U)
\end{align*}

Therefore, we have
\begin{align*}
    &\E_{S_1(t)}(\Gamma(t)^2)\\=&\E_{S_1(t)}((S_1(t)U)^T(S_2(t)U)(S_2(t)U )^TS_1(t)U)\\&+\E_{S_1(t)}((S_2(t)U)^T(S_1(t)U)(S_1(t)U )^TS_2(t)U)\\&+\E_{S_1(t)}((S_1(t)U)^T(S_2(t)U)(S_1(t)U )^TS_2(t)U)\\&+\E_{S_1(t)}((S_2(t)U)^T(S_1(t)U)(S_2(t)U )^TS_1(t)U)
    \\=&\E_{S_1(t)}((S_1(t)U)^T(S_2(t)U)(S_2(t)U )^TS_1(t)U)\\&+\E_{S_1(t)}((S_2(t)U)^T(S_1(t)U)(S_1(t)U )^TS_2(t)U)\\&+2(S_2(t)U)^T(S_2(t)U )
\end{align*}

Since all the matrices in the sum \begin{align*}&\E_{S_1(t)}((S_1(t)U)^T(S_2(t)U)(S_2(t)U )^TS_1(t)U)\\&+\E_{S_1(t)}((S_2(t)U)^T(S_1(t)U)(S_1(t)U )^TS_2(t)U)\\&+2(S_2(t)U)^T(S_2(t)U )\end{align*} are positively semidefinite, we can conclude that
\begin{align*}
    \E_{S_1(t)}((S_1(t)U)^T(S_2(t)U)(S_2(t)U )^TS_1(t)U) \le \E_{S_1(t)}(\Gamma(t)^2)
\end{align*}
and
\begin{align*}
    \E_{S_1(t)}((S_2(t)U)^T(S_1(t)U)(S_1(t)U )^TS_2(t)U) \le \E_{S_1(t)}(\Gamma(t)^2)
\end{align*}

Therefore, by \cite[Theorem 2.10]{carlen2010trace}, we have
\begin{align*}
    &\norm{\E_{Z_{(l,\gamma),1}}(\sum_{(l,\gamma)}(Z_{(l,\gamma),1}U)^T(S_2(t)U)(S_2(t)U )^TZ_{(l,\gamma),1}U)}_{q}^q
    \\=&\norm{\E_{S_1(t)}((S_1(t)U)^T(S_2(t)U)(S_2(t)U )^TS_1(t)U)}_{q}^q
    \\=&\E \tr((\E_{S_1(t)}((S_1(t)U)^T(S_2(t)U)(S_2(t)U )^TS_1(t)U))^q)
    \\ \le & \E \tr((\E_{S_1(t)}(\Gamma(t)^2))^q)
    \\ \le & \E \tr((\Gamma(t)^2)^q)
    \\ \le & \E \tr((\Gamma(t))^{2q})
\end{align*}

Similarly, we also have
\begin{align*}
    &\norm{\E_{Z_{(l,\gamma),1}}(\sum_{(l,\gamma)}(S_2(t)U)^T(Z_{(l,\gamma),1}U)Z_{(l,\gamma),1}U)^T(S_2(t)U )}_{q}^q
    \\ \le & \E \tr((\Gamma(t))^{2q})
\end{align*}

\end{proof}

Therefore, we have bounded all the factors in the last line of (\ref{eq:factor1}), and the calculations give
\begin{align*}
    &(\text{factor 1})^2
 \\ \le &\paren*{\frac{c_{\ref*{theta1theta2}}(pm)^{\frac{1}{q}}\sqrt{\max \{ pd, q \}}}{\cS^\frac{1}{q}}}^{\frac{2qk-4q}{2q-2}}\prod\limits_{\lambda  = 1}^{k/2} {\norm{\Upsilon_\lambda }_{\infty}^2} \\ &\cdot (\frac{1}{\cS} \norm{|\Upsilon_k|^{r}}_{\frac{2q}{2q-2}} \cdot \norm{\E_{Z_{(l,\gamma),1}(t)}(\sum_{(l,\gamma)}(Z_{(l,\gamma),1}(t)U)^T(S_2(t)U)(S_2(t)U )^TZ_{(l,\gamma),1}(t)U)}_{q})^{1/r}
  \\=&\cS^{-1}(c_{\ref*{theta1theta2}}(pm)^{\frac{1}{q}} \sqrt{\max\{pd,q\}})^{\frac{2qk-4q}{2q-2}}\prod\limits_{\lambda  = 1}^{k/2} {\norm{\Upsilon_\lambda }_{\infty}^2} \cdot \norm{\Upsilon_k}_{\frac{2q}{2q-k}}(\E \tr((\Gamma(t))^{2q}))^{(1/q)(1/r)}
\end{align*}

The second case (CASE II) is $\tau_1(1)=2$, or $\tau_1(2)=1$. In this case, we have
\begin{align*}
   & (\Theta_{\tau_{1}(2),1}^T\Theta_{\tau_{1}(1),1})^T (\Theta_{\tau_{1}(2),1}^T\Theta_{\tau_{1}(1),1})\\=&(\Theta_{1,1}^T\Theta_{2,1})^T (\Theta_{1,1}^T\Theta_{2,1})\\=&(S_2(t)U)^T Z_{\eta,1}U (Z_{\eta,1}U)^TS_2(t)U
\end{align*}

Therefore, we have
\begin{align*}
    &\norm{| \Theta_{1,1}^T\Theta_{2,1}|^{1/r} |\Upsilon_k| | \Theta_{1,1}^T\Theta_{2,1}|^{1/r}}_{r'}^{r'} \\ \le  & \E\tr( |\Upsilon_k|^{r}  (S_2(t)U)^T Z_{\eta,1}U (Z_{\eta,1}U)^TS_2(t)U )
    \end{align*}

Following similar calculations as in CASE (I), we obtain
\begin{align*}
    &\E\tr( |\Upsilon_k|^{r}  (S_2(t)U)^T Z_{\eta,1}U (Z_{\eta,1}U)^TS_2(t)U )
    \\=&\E\tr( |\Upsilon_k|^{r}  \E_{\eta}(S_2(t)U)^T Z_{\eta,1}U (Z_{\eta,1}U)^TS_2(t)U )
    \\=&\E\tr( |\Upsilon_k|^{r}  \frac{1}{\cS}(\sum\limits_{(l,\gamma)}(S_2(t)U)^T Z_{(l,\gamma),1}U (Z_{(l,\gamma),1}U)^TS_2(t)U ))
    \\=&\frac{1}{\cS}\E\tr( |\Upsilon_k|^{r}  \E_{Z_{(l,\gamma),1}}(\sum\limits_{(l,\gamma)}(S_2(t)U)^T Z_{(l,\gamma),1}U (Z_{(l,\gamma),1}U)^TS_2(t)U ))
    \\ \le& \frac{1}{\cS} \norm{|\Upsilon_k|^{r}}_{\frac{2q}{2q-2}} \cdot \norm{\E_{Z_{(l,\gamma),1}}(\sum_{(l,\gamma)}(S_2(t)U)^T Z_{(l,\gamma),1}U (Z_{(l,\gamma),1}U)^TS_2(t)U)}_{q}
\end{align*}
because $\eta$ and $Z_{(l,\gamma),1}$ are independent with $\Upsilon_k$ and $S_2(t)$.

By Lemma \ref{zssz}, we still get
\begin{align*}
    &(\text{factor 1})^2
  \\ \le &\cS^{-1}(c_{\ref*{theta1theta2}}(pm)^{\frac{1}{q}} \sqrt{\max\{pd,q\}})^{\frac{2qk-4q}{2q-2}}\prod\limits_{\lambda  = 1}^{k/2} {\norm{\Upsilon_\lambda }_{\infty}^2} \cdot \norm{\Upsilon_k}_{\frac{2q}{2q-k}}(\E \tr((\Gamma(t))^{2q}))^{(1/q)(1/r)}
\end{align*}

We can repeat the same argument for (factor 2) and show that
\begin{align*}
    &(\text{factor 2})^2
  \\ \le &\cS^{-1}(c_{\ref*{theta1theta2}} (pm)^{\frac{1}{q}}\sqrt{\max\{pd,q\}})^{\frac{2qk-4q}{2q-2}}\prod\limits_{\lambda  = k/2+1}^{k-1} {\norm{\Upsilon_\lambda }_{\infty}^2} \cdot \norm{\Upsilon_k}_{\frac{2q}{2q-k}}(\E \tr((\Gamma(t))^{2q}))^{(1/q)(1/r)}
\end{align*}

Combining the estimates for (factor 1) and (factor 2) together, we have
\begin{align*}
    &\sum_{(l,\gamma) \in [n] \times [pm]} \E[ \tr \Theta_{(l,\gamma), 1, \tau_1(1)}^T\Theta_{(l,\gamma), 1, \tau_1(2)}
	\Upsilon_1\Theta_{(l,\gamma), 2, \tau_2(1)}^T\Theta_{(l,\gamma), 2, \tau_2(2)} \\&\cdots
	\Upsilon_2\Theta_{(l,\gamma), k, \tau_k(1)}^T\Theta_{(l,\gamma), k, \tau_k(2)}\Upsilon_k]  \\ \le &
      (c_{\ref*{theta1theta2}}(pm)^{\frac{1}{q}}\sqrt{\max\{pd,q\}})^{\frac{2qk-4q}{2q-2}} (\E \tr((\Gamma(t))^{2q}))^{\frac{1}{q} \cdot \frac{2q-k}{2q-2}} \prod\limits_{\lambda  = 1}^{k-1} {\norm{\Upsilon_\lambda }_{\infty}} \cdot \norm{\Upsilon_k}_{\frac{2q}{2q-k}}
\end{align*}

Now, we have shown that
\begin{align*}
    &\sup \limits_{(\Upsilon_1,...,\Upsilon_k) \in L_{\infty}(S_{\infty}^d)^k}\frac{F(\Upsilon_1,...,\Upsilon_k)}{(\prod\limits_{\lambda  = 1}^{k-1} {\norm{\Upsilon_\lambda }_{\infty}}) \cdot \norm{\Upsilon_k}_{\frac{2q}{2q-k}}} \\\le& (c_{\ref*{theta1theta2}}(pm)^{\frac{1}{q}}\sqrt{\max \{ pd, q \}})^{\frac{2qk-4q}{2q-2}} (\E \tr((\Gamma(t))^{2q}))^{\frac{1}{q} \cdot \frac{2q-k}{2q-2}} 
\end{align*}

The final result follows from Corollary \ref{cor:spinterpo}.

\end{proof}

\section{Leverage Score Sparsified Embeddings} \label{sec:lessproofs}

In this section, we prove our subspace embedding guarantee for the LESS-IC distribution, Theorem \ref{t:less-ic}. The proof is similar to the OSNAP case, and is accomplished via the following results,
\begin{itemize}
    \item Theorem \ref{t:less-ic} establishes the subspace embedding guarantee from a bound on the trace moments of the embedding error analogous to Theorem \ref{t:ose-full} in the OSNAP case.
    \item Lemma \ref{lem:decoupless} shows that it is sufficient to bound the moments of $(S_1U)^TS_2U + (S_2U)^TS_1U$ to control the trace moments of the embedding error, analogous to Lemma \ref{lem:decoup}. This is the decoupling step.
    \item Lemma \ref{lem:lessvar} establishes that the entries of the LESS-IC distribution are uncorrelated and have variance $p$, which means the Gaussian model that we interpolate with should have independent entries with variance $p$, just as in the case of OSNAP.
    \item Lemma \ref{prop:momestdecoupless} bounds the trace moments of the embedding error via interpolation after decoupling, analogous to Lemma \ref{prop:momestdecoupmatrix}.
    \item Lemma \ref{lem:diffineqless} establishes the differential inequality for the derivative of the interpolant, analogous to Lemma \ref{lem:diffineq} in the OSNAP and OSE-IE case.
    \item Lemma \ref{lem:traceineqless} establishes the trace inequality required to obtain the differential inequality in Lemma \ref{lem:diffineqless}, analogous to Lemma \ref{lem:decouptraceineq} in the case of OSNAP.
    \item The trace inequality in Lemma \ref{lem:traceineqless} in turn requires a bound for the row norm moments of $S(t)U$. This is provided by Lemma \ref{lem:lessrownormbound}, analogous to Lemma \ref{lem:rownormbound}.
\end{itemize}

Before we proceed, we state the formal definition of the LESS-IC distribution. The construction is similar to OSNAP, with some changes to reflect the different number and size of subcolumns and the different scaling for non-zero entries across columns.

\begin{definition}[LESS-IC]\label{def:lessindcol}
    Given $(\beta_1,\beta_2)$ leverage scores $z_1,...,z_n$, and $0 < p < 1 $, define
    \begin{align*}
        b_j &:= 
        \max \Big\{ \left\lfloor \frac{1}{\beta_1pz_j} \right\rfloor, 1 \Big\}\quad\text{and}\quad
        s_j := \left\lceil \frac{m}{b_j} \right\rceil.
    \end{align*} 
    An $m \times n$ random matrix $S$ is called a $K$-wise independent unscaled leverage score sparsified embedding with independent columns ($K$-wise independent unscaled LESS-IC), and also $\Pi = (1/\sqrt{pm})S$ is called a $K$-wise independent LESS-IC, corresponding to  $(\beta_1, \beta_2)$-approximate leverage scores  $(z_1,...,z_n)$ with parameter $p$ if it is distributed as
    \begin{align*}
    S &= \sum_{l=1}^n \sum_{\gamma_l=1}^{s_l} \alpha_{(l, \gamma_l)} \xi_{(l,\gamma_l)} e_{\mu_{(l, \gamma_l)}} e_l ^\top 
\end{align*}
where in this expression
\begin{itemize}
    \item the collections $\{ \xi_{(l,\gamma_l)} \}_{l \in [n], \gamma_l \in [s_l]}$ and $\{ \mu_{(l,\gamma_l)} \}_{l \in [n], \gamma_l \in [s_l]}$ are mutually independent;
    \item $\{ \xi_{l,\gamma_l} \}_{l \in [n], \gamma_l \in [s_l]}$ is a collection of $K$-wise independent Rademacher random variables;
    \item $\{ \mu_{(l,\gamma_l)} \}_{l \in [n], \gamma_l \in [s_l]}$ is a collection of $K$-wise independent random variables such that each $\mu_{(l,\gamma_l)}$ is uniformly distributed in $[b_l(\gamma_l-1)+1:\min \{ b_l\gamma_l, m \}]$;
    \item $\alpha_{(l, \gamma_l)} := \sqrt{p(\min \{ b_l\gamma_l, m \} - b_l(\gamma_l-1))}$;
    \item $e_{\mu_{(l, \gamma)}}$ and $e_l$ represent basis vectors in $\R^m$ and $\R^n$ respectively.
\end{itemize}
In addition, if all the random variables in the collections $\{ \xi_{(l,\gamma_{l})} \}_{l \in [n], \gamma \in [s]}$ and $\{ \mu_{(l,\gamma_l)} \}_{l \in [n], \gamma \in [s]}$ are fully independent, then $S$ is called a fully independent unscaled LESS-IC and $\Pi$ is called a fully independent LESS-IC.
\end{definition}

\lessicmainthm*

\begin{remark}
    We can obtain a subspace embedding guarantee for the LESS-IE distribution by following the proof of Theorem \ref{thm:oseiemain} suitably modified for the case of LESS. One can check that Theorem \ref{prop:momestdiag} holds even when $S$ has the LESS-IE distribution with the same values of $\sigma$ and $R$. Thus, a subspace embedding for the LESS-IE distribution holds under the same conditions as Theorem \ref{t:less-ic} with the additional requirement $pm \ge \frac{c \log (\frac{d}{\varepsilon \delta})}{\varepsilon^2}$.  
\end{remark}

\begin{proof}
    The proof of the main statement using Theorem \ref{prop:momestdecoupless} is identical to the proof of Theorem \ref{t:ose-full}. As in the proof of Theorem \ref{t:ose-full}, we shall assume that the collection of random variables $\{ \xi_{l,\gamma_l} \}_{l \in [n], \gamma_l \in [s_l]}$ and $\{ \mu_{l,\gamma_l} \}_{l \in [n], \gamma_l \in [s_l]}$ (See Definition \ref{def:lessindcol}) are fully independent in our calculations, and since the proof proceeds via looking at moments of order $O( \log (\frac{d}{\varepsilon \delta}))$, the same calculations still hold when we have $\log$-wise independence. 

    In this case, we require $q$ to satisfy $$ pm \ge \paren*{\max \left\{ c_{\ref*{prop:momestdecoupless}.2} \sqrt{e} (q)^{5/2}/\varepsilon, c_{\ref*{prop:momestdecoupless}.3} (q)^3 \right\}}^{1+\frac{1}{q-2}} $$. Just as in the case of OSNAP, it is enough to ensure that $c_{\ref*{prop:momestdecoupless}.1} \frac{d+\log(d/\varepsilon\delta)}{(\varepsilon/\sqrt{e})^2} \le m $, and
    \begin{align*}pm \ge & C_1 \max \left\{ \frac{C_2(\log(\frac{d}{\varepsilon \delta}))^{2.5}}{ \varepsilon}, C_3(\log(\frac{d}{\varepsilon \delta}))^3 \right\}
    \end{align*}
    where the form of the constants $C_1, C_2$ and $C_3$ are analogous to those in \eqref{pmassum}, and set $q= \lceil 2\log (\frac{d}{\varepsilon \delta} )\rceil+2$. 
    
    However, the value of $p$ satisfying $pm \ge  C_1 \max \left\{ \frac{C_2(\log(\frac{d}{\varepsilon \delta}))^{2.5}}{ \varepsilon^{1+4/d}}, C_3(\log(\frac{d}{\varepsilon \delta}))^3 \right\}$ has to be smaller than $1$ for our construction to be defined. So we must also have \begin{align*}m \ge & C_1 \max \left\{ \frac{C_2(\log(\frac{d}{\varepsilon \delta}))^{2.5}}{ \varepsilon}, C_3(\log(\frac{d}{\varepsilon \delta}))^3 \right\}\end{align*}

    It is enough to ensure,
    \begin{align*}
        m &\ge \frac{ \max \{ C_1C_2, C_1C_3 \} \paren*{ \log (d/\delta\varepsilon) }^3}{\varepsilon} \\
        &=   \max \{ C_1C_2, C_1C_3 \} \paren*{ \frac{\log(1/\varepsilon)^3}{\varepsilon} + \frac{3\log(1/\varepsilon)^2\log(d/\delta)}{\varepsilon} + \frac{3\log(1/\varepsilon)\log(d/\delta)^2}{\varepsilon} + \frac{\log(d/\delta)^3}{\varepsilon} } \\
    \end{align*}
     Now, we observe that
    \begin{align*}
        \frac{\log(1/\varepsilon)^2}{\varepsilon} + \frac{3\log(1/\varepsilon)^2\log(d/\delta)}{\varepsilon} &\le \frac{1}{\varepsilon^2} + \frac{ 3 \log(d/\delta)}{\varepsilon^2} \\
        \text{and, } \quad \frac{3\log(1/\varepsilon)\log(d/\delta)^2}{\varepsilon}  &\le \frac{3\log(d/\delta)^2}{\varepsilon^2}  \\
    \end{align*}
    So it is enough to have, for some $C_4>0$,
    \[ m \ge C_4 \frac{d + \log(d/\delta)^2 + \log(1/\varepsilon)}{\varepsilon^2} + \frac{\log(d/\delta)^3}{\varepsilon} \]

    We proceed to estimate the number of non-zero entries. With reference to Definition \ref{def:lessindcol}, we have, 
    \[ b_j = \left\lfloor \frac{1}{\beta_1pz_j} \right\rfloor \ge \frac{1}{2\beta_1pz_j} \]
    when $\frac{1}{\beta_1pz_j} \ge 1$. Thus,
    \[ s_j = \left\lceil \frac{m}{b_j} \right\rceil \le \lceil 2\beta_1pmz_j \rceil \le \max \left\{ 1 , 4\beta_1pmz_j \right\}\]
    When $\frac{1}{\beta_1pz_j} \le 1$, 
    \[ s_j = \left\lceil \frac{m}{b_j} \right\rceil = m \le m\beta_1pz_j \]
    
    The total number of non-zero entries in $S$ is,
    \[ \sum_{j=1}^n s_j \le n + \sum_{j=1}^n  4\beta_1pmz_j \le n +  4\beta_1\beta_2 pmd \]
\end{proof}

\begin{lemma}[Decoupling] \label{lem:decoupless}
When $S$ has the fully independent unscaled LESS-IC distribution,
\begin{align*}
    \E [ \tr (U^TS^TSU - pm\cdot I_d)^{2q} ] &= \E \left[ \tr \left( \sum_{i=1}^m \sum_{j,j' =1, j \neq j'}^n s_{ij}s_{ij'} u_ju_{j'}^T \right)^{2q} \right] \\
\end{align*}
Consequently,
\begin{align*}
    \E [ \tr (U^TS^TSU - pm\cdot I_d)^{2q} ] &\le \E_{S,S'} \left[ \tr \left(  2\paren*{(SU)^TS'U + (SU)^TS'U} \right)^{2q} \right]
\end{align*}
where $S'$ is an independent copy of $S$.
\end{lemma}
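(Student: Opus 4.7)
The plan is to mirror the proof of Lemma \ref{lem:decoup}, exploiting two structural properties that LESS-IC shares with OSNAP: the columns of $S$ are mutually independent, and the column sums $\sum_i s_{ij}^2$ are deterministic and equal to $pm$. The algebraic decomposition of $U^T S^T S U - pm\cdot I_d$ into a diagonal part $\sum_j(\sum_i s_{ij}^2 - pm) u_j u_j^T$ and an off-diagonal part $\sum_i \sum_{j\neq j'} s_{ij} s_{ij'} u_j u_{j'}^T$ is already carried out verbatim in the paragraphs preceding the lemma, so I only need to (a) justify the vanishing of the diagonal term and (b) run the decoupling-via-auxiliary-Bernoullis argument on the off-diagonal term.

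For (a), observe that within column $j$ the subcolumn index sets $[b_j(\gamma_j-1)+1 : \min\{b_j\gamma_j,m\}]$ for $\gamma_j = 1, \ldots, s_j$ partition $[m]$, so the scalars $\alpha_{(j,\gamma_j)}^2 = p(\min\{b_j\gamma_j,m\} - b_j(\gamma_j-1))$ sum to $pm$. Since each subcolumn has the one-hot distribution and $\xi^2 = 1$, this gives the deterministic identity $\sum_{i=1}^m s_{ij}^2 = \sum_{\gamma_j=1}^{s_j} \alpha_{(j,\gamma_j)}^2 = pm$ for every $j \in [n]$. Hence the diagonal contribution is identically zero, establishing the first equality.

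For (b), I copy the proof of Lemma \ref{lem:decoup} line by line. Introduce a vector $\mathbf{w} = (w_1,\ldots,w_n)$ of independent $\mathrm{Bernoulli}(1/2)$ variables independent of $S$, and use the identity $\E[\mathbf 1_{w_j\neq w_{j'}}] = 1/2$ for $j\neq j'$ to write
\begin{align*}
\sum_{i=1}^m \sum_{j\neq j'} s_{ij} s_{ij'} u_j u_{j'}^T = 2\E_{\mathbf w}\Big[\sum_{i=1}^m \sum_{j,j'=1}^n \mathbf 1_{w_j\neq w_{j'}}\, s_{ij} s_{ij'} u_j u_{j'}^T\Big].
\end{align*}
Two applications of Jensen's inequality (first to pull $\E_{\mathbf w}$ outside the $2q$-th trace power, then Fubini to bring it outermost) reduce the task to bounding the inner expectation in $S$ for each fixed $\mathbf{w}$. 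Let $J = \{j : w_j = 1\}$. The crucial structural fact is that the families $\mathcal S_J = \{s_{ij}\}_{i\in[m], j\in J}$ and $\mathcal S_{J^C}$ are independent, because Definition \ref{def:lessindcol} builds different columns from independent collections of $\xi$'s and $\mu$'s. Therefore replacing $\mathcal S_J$ by an independent copy $\mathcal S'_J$ drawn from an independent copy $S'$ leaves the expectation unchanged.

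Finally, the same symmetrization trick from Lemma \ref{lem:decoup} applies: add in the cross-terms $s'_{ij} s_{ij'}$ over $j\in J, j'\in J$ and over $j\in J^C, j'\in J^C$, together with the full index $s_{ij} s'_{ij'}$ and $s'_{ij} s_{ij'}$ sums, each of which has mean zero after taking one of the two partial expectations (because $\E[s_{ij}] = \E[s'_{ij}] = 0$ and $s$ and $s'$ are independent). A final Jensen step bounds the resulting expression by $\E_{S,S'}[\tr(2((SU)^T S'U + (S'U)^T SU))^{2q}]$, giving the claimed decoupling bound (the second summand in the lemma statement appears to have a typographical transposition, and is the transpose of the first as written in Lemma \ref{lem:decoup}). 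The argument is essentially identical to the OSNAP case; the only delicate point to check is the telescoping identity $\sum_i s_{ij}^2 = pm$, which I expect to be the main obstacle since the LESS-IC normalization was introduced precisely to make this hold despite unequal subcolumn sizes.
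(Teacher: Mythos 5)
Your proposal is correct and follows essentially the same route as the paper: the paper likewise verifies the deterministic identity $\sum_{i=1}^m s_{ij}^2=\sum_{\gamma_j=1}^{s_j}\alpha_{(j,\gamma_j)}^2=pm$ (since the truncated subcolumns partition $[m]$), so the diagonal term vanishes, and then invokes the decoupling argument of Lemma \ref{lem:decoup} verbatim, which only uses column independence. Your observation that the second summand in the displayed bound should read $(S'U)^T SU$ is also consistent with the statement of Lemma \ref{lem:decoup}.
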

\begin{proof}
    Letting $s_{ij}$ denote the entries of $S$, 
\begin{align*}
    U^TS^TSU - pm\cdot I_d &= \left( \sum_{i=1}^m U^T\left( \sum_{j=1}^n s_{ij}e_{j} \right) \left( \sum_{j'=1}^n s_{ij'}e_{j'}^T \right)U \right) - pm\cdot I_d \\
    &=  \sum_{i=1}^m \left( \sum_{j=1}^n s_{ij}u_{j} \right) \left( \sum_{j'=1}^n s_{ij'}u_{j'}^T \right)  - pm\cdot I_d \\
\end{align*}
where $u_j^T$ denotes the $j\textsuperscript{th}$ row of $U$. Separating the cases where $j=j'$ and $j \neq j'$, 
\begin{align*}
    U^TS^TSU - pm\cdot I_d &=  \sum_{i=1}^m \sum_{j=1}^n s_{ij}^2 u_ju_j^T  - pm\cdot I_d + \sum_{i=1}^m \sum_{\substack{j,j' =1 \\ j \neq j'}}^n s_{ij}s_{ij'} u_ju_{j'}^T \\
    &=  \sum_{j=1}^n \left( \sum_{i=1}^m s_{ij}^2 \right) u_ju_j^T  - pm\cdot I_d + \sum_{i=1}^m \sum_{\substack{j,j' =1 \\ j \neq j'}}^n s_{ij}s_{ij'} u_ju_{j'}^T
\end{align*}
By construction, $\sum_{i=1}^m s_{ij}^2 = \sum_{\gamma_j =1}^{s_j} \alpha_{(j, \gamma_j)}^2 = pm$, so,
\begin{align*}
    U^TS^TSU - pm\cdot I_d &= \sum_{i=1}^m \sum_{\substack{j,j' =1 \\ j \neq j'}}^n s_{ij}s_{ij'} u_ju_{j'}^T
\end{align*}
    From this point on, the proof is exactly the same as Lemma \ref{lem:decoup}.
\end{proof}

\begin{lemma}[Variance and Uncorrelatedness] \label{lem:lessvar}
Let $p = p_{m,n} \in (0,1]$ and $S=\{s_{ij}\}_{i \in [m], j \in [n]}$ be a $m \times n$ random matrix distributed according to the fully independent unscaled LESS-IC distributions. Then, $\E(s_{ij})=0$ and $\operatorname{Var}(s_{ij})=p$ for all $i \in [m], j \in [n]$, and $\cov(s_{i_1 j_1},s_{i_2 j_2})=0$ for any $\{i_1,i_2\} \subset [m], \{j_1,j_2\} \subset [n]$ and $ (i_1,j_1)\neq (i_2,j_2) $
\end{lemma}
\begin{proof}
We have
\begin{align*}
    \E (s_{ij}) =& \alpha_{(j, \gamma_j)}^2 \E(\xi_{j,\gamma_j}) \Pb(\mu_{(j, \gamma_l)}=i) \\=&\alpha_{(j, \gamma_j)}^2 \cdot 0 \cdot  \Pb(\mu_{(j, \gamma_l)}=i)
    \\=& 0
\end{align*}
because $\E(\xi_{j,\gamma_j})=0$.

\begin{align*}
    \E (s_{ij}^2) =& \alpha_{(j, \gamma_j)}^2 \E(\xi_{j,\gamma_j}^2) \Pb(\mu_{(j, \gamma_l)}=i) \\=&(\sqrt{p(\min \{ b_l\gamma_l, m \} - b_l(\gamma_l-1))})^2 \cdot 1 \cdot  \frac{1}{(\min \{ b_l\gamma_l, m \} - b_l(\gamma_l-1))}
    \\=& p
\end{align*}

For the covariances, we first observe that $\cov(s_{i_1 j_1}, s_{i_2 j_2})=0$ if $(i_1,j_1)$ and $(i_2,j_2)$ belong to two different subcolumns by independence. If $(i_1,j_1)$ and $(i_2,j_2)$ belong to the same subcolumn, we have
\begin{align*}
    \cov(s_{i_1 j_1}, s_{i_2 j_2})=\E(s_{i_1 j_2}s_{i_2 j_1})=\E(0)=0
\end{align*}
because each subcolumn has one hot distribution which means at most one of $s_{i_1 j_1}$ and $s_{i_2 j_2}$ can be nonzero.

\end{proof}

 Using the decoupling result, we bound the trace moments of the embedding error by interpolating between LESS and its Gaussian model exactly as in the proof of Lemma \ref{prop:momestdecoupmatrix} in Section \ref{subsec:osnaptracemom}.

\begin{lemma} [Trace Moments of Embedding Error for LESS]\label{prop:momestdecoupless}
Let $S$ be an $m \times n$ matrix distributed according to the fully independent unscaled LESS-IC distribution with parameter $p$ for some fixed matrix $U$ satisfying $U^TU=I$ with given $(\beta_1, \beta_2)$-approximate leverage scores. Define $X = \frac{1}{\sqrt{pm}}SU$. Given $0< \varepsilon < 1$, there exist constants $c_{\ref*{prop:momestdecoupless}.1}, c_{\ref*{prop:momestdecoupless}.2}, c_{\ref*{prop:momestdecoupless}.3}$ such that for $m \geq c_{\ref*{prop:momestdecoupless}.1} \frac{d+q}{\varepsilon^2}$ and $q \in \N$ satisfying $2 \le q \le m$ and $ pm \ge \paren*{ \max \left\{ \frac{c_{\ref*{prop:momestdecoupless}.2} q^{5/2}}{\varepsilon}, c_{\ref*{prop:momestdecoupless}.3}  q^3 \right\}}^{1+\frac{2}{q-2}} $ ,
\begin{align*} \E[\tr(X^TX - I_d)^{2q}]^\frac{1}{2q} \leq  \varepsilon \end{align*}
\end{lemma}

\begin{proof}
    The structure of the proof is the same as the proof of Theorem \ref{prop:momestdecoupmatrix} and we only highlight changes in the specific values. By Lemma \ref{lem:diffineqless}, we have,
    \begin{align*}
    \frac{d}{dt}\E[\tr \Gamma(t)^{2q}] \le&  \max \limits_{4 \le k \le 2q} (c_{\ref*{lem:diffineqless}}q)^k ((pm)^\frac{1}{q}\sqrt{\max\{pd,q^2 \}})^{\frac{2qk-4q}{2q-2}}(\E[f(S_1(t),S_2(t))])^{1-\frac{k-2}{2q-2}}
\end{align*}

As in the proof of Theorem \ref{prop:momestdecoupmatrix}, we divide the analysis into two cases - (i) $ pd < q^2$ and (ii) $ pd \ge q^2$.

In the first case, following the steps in the proof of Theorem \ref{prop:momestdecoupmatrix}, we get,
\begin{align*}
    &(\E[\tr \Gamma(S_1,S_2)^{2q}])^{\alpha}-(\E[\tr \Gamma(G_1,G_2)^{2q}])^{\alpha} \\\le& (c_5q)^4 ((pm)^\frac{1}{q}\sqrt{\max\{ pd,q^2 \}})^{\frac{2q}{q-1}} \cdot \frac{1}{q-1}+ ((c_5q)^{2q-4} ((pm)^\frac{1}{q}\sqrt{\max\{pd,q^2 \}})^{\frac{2q^2-4q}{q-1}})^{\frac{\alpha}{1-\alpha}}
    \\\le& (c_5q)^4 (q)^{\frac{2q}{q-1}}(pm)^\frac{2}{q-1} \cdot \frac{1}{q-1}+ ((c_5q)^{2q-4} ((pm)^\frac{1}{q}q)^{\frac{2q^2-4q}{q-1}})^{\frac{1}{q-2}}
    \\=& (c_5q)^4 q^2 (q)^{\frac{2}{q-1}}(pm)^\frac{2}{q-1} \cdot \frac{1}{q-1}+ ((c_5q)^{2} q^\frac{2q}{q-1}(pm)^\frac{2}{q-1})
    \\ \le & c_6 q^5(pm)^\frac{2}{q-1}
\end{align*}

Therefore, we have
\begin{align*}
    (\E[\tr \Gamma(S_1,S_2)^{2q}])^{\frac{1}{q-1}}-(\E[\tr \Gamma(G_1,G_2)^{2q}])^{\frac{1}{q-1}} \le c_6 q^5(pm)^\frac{2}{q-1}
\end{align*}

which gives,
\begin{align*}
    (\E[\tr \Gamma(S_1,S_2)^{2q}])^{\frac{1}{2q}}-(\E[\tr \Gamma(G_1,G_2)^{2q}])^{\frac{1}{2q}} \le (c_6q^5)^{\frac{q-1}{2q}} \le c_7 q^{5/2}(pm)^\frac{1}{q}
\end{align*}
so in this case, we need $pm \ge c_8(q)^{5/2}/\varepsilon$. 

The case when $pd > q^2$ gives the same lower bound on $pm$ as the corresponding case in Theorem \ref{prop:momestdecoupmatrix}.

\end{proof}

Here we obtain the differential inequality that arises during interpolation in the proof of Lemma \ref{prop:momestdecoupless}. 
\begin{lemma}[Differential Inequality for LESS]\label{lem:diffineqless}
Let $p>0$. Let $S_1$ and $S_2$ be independent random matrices with the fully independent unscaled LESS-IC distribution with parameter $p$ for some fixed matrix $U$  satisfying $U^TU=I$ with given $(\beta_1, \beta_2)$-approximate leverage scores. Let $G_1$ and $G_2$ be independent random matrices with i.i.d. Guassian entries each with variance $p$, and define the interpolated random matrices,
\begin{align}
\begin{split} 
    S_1(t) = \sqrt{t}S_1 + \sqrt{1-t}G_1 \\
    S_2(t) = \sqrt{t}S_2 + \sqrt{1-t}G_2
\end{split}
\end{align}
Let $f(M_1,M_2)=\tr(((M_1U)^T(M_2U)+(M_2U)^T(M_1U))^{2q})$.
Then there exists $c_{\ref*{lem:diffineqless}} >0$ such that, for any $q\ge 2$,
\begin{align*}
    \frac{d}{dt} \E[f(S_1(t),S_2(t))] \le&  \max \limits_{4 \le k \le 2q} (c_{\ref*{lem:diffineqless}}q)^k ((pm)^\frac{1}{q}\sqrt{\max\{ pd,q^2\}})^{\frac{2qk-4q}{2q-2}}(\E[f(S_1(t),S_2(t))])^{1-\frac{k-2}{2q-2}}
\end{align*}
\end{lemma}

\begin{proof}
    Following the steps in the proof of Lemma \ref{lem:diffineq}, we get,
    \begin{align}
\begin{split} \label{eq:derestdecoupless} 
	&\frac{d}{dt}\E[f(S_1(t),S_2(t))]
 \\=& 	\frac{1}{2}\sum_{k=4}^{2q}
	\frac{t^{\frac{k}{2}-1}}{(k-1)!}
	\sum_{\pi\in\mathrm{P}([k])}
	(-1)^{|\pi|-1}(|\pi|-1)!\,
	\\&\E\Bigg[ \sum_{l \in [n], \gamma_l \in [s_l]} \partial_{Z_{(l,\gamma_l),1|\pi}}\cdots\partial_{Z_{(l,\gamma_l ),k|\pi}}f_{1,S_2(t)}(S_1(t))
	\Bigg]\\&+\frac{1}{2}\sum_{k=4}^{2q}
	\frac{t^{\frac{k}{2}-1}}{(k-1)!}
	\sum_{\pi\in\mathrm{P}([k])}
	(-1)^{|\pi|-1}(|\pi|-1)!\,
	\\&\E\Bigg[ \sum_{(l,\gamma_l) \in [n] \times [pm]} \partial_{Z_{(l,\gamma_l),1|\pi}}\cdots\partial_{Z_{(l,\gamma_l),k|\pi}}f_{2,S_1(t)}(S_2(t))
	\Bigg]\\=:& T_1 + T_2
 \end{split}
\end{align}
where $f_{1,S_2(t)}(S_1(t))$ and $f_{2,S_1(t)}(S_2(t))$ are as in the proof of Lemma \ref{lem:diffineq} and $Z_{(l,\gamma_l),j|\pi}$ are constructed from $Z_{(l,\gamma_l)}$ appearing in $S$ written as the sum $\sum_{l=1}^n \sum_{\gamma_l =1}^{s_l} Z_{l,\gamma_l}$ when $S$ has the unscaled LESS-IC distribution. Once again we remind the reader about our assumption that the $Z_{(l,\gamma_l)}$ are independent, with the additional remark that since all quantities involved are moments of order at most $4q$, the same calculations hold even if the $Z_{(l,\gamma)}$ are $4q$-wise independent.

In this case, we use Lemma \ref{lem:traceineqless} to estimate \eqref{eq:derestdecoupless} as in the proof of Lemma \ref{lem:diffineq} to get,
\begin{align*}
    \frac{d}{dt}\E[\tr \Gamma(t)^{2q}] \le&  \max \limits_{4 \le k \le 2q} (c_{\ref*{lem:diffineqless}}q)^k ((pm)^\frac{1}{q}\sqrt{\max\{ pd,q^2 \}})^{\frac{2qk-4q}{2q-2}}(\E[f(S_1(t),S_2(t))])^{1-\frac{k-2}{2q-2}}
\end{align*}
\end{proof}

As in the oblivious case, a trace inequality is the key step in the proof of Lemma \ref{lem:diffineqless}.
\begin{lemma} [Trace Inequalities for LESS]\label{lem:traceineqless}
Let $S_1(t)$ and $S_2(t)$ be as in Lemma \ref{lem:diffineqless}. Let $\Gamma(t)=(S_1(t)U)^T(S_2(t)U)+(S_2(t)U)^T(S_1(t)U)$. Let $\mathcal{Z}=\{\xi_{(l,\gamma)}: l \in [n], \gamma_l \in [s_l] \} \cup \{\mu_{(l,\gamma)}:l \in [n], \gamma_l \in [s_l] \}$ be a family of mutually independent random variables be the family of mutually independent random variables generating an instance of $S_1$ with the unscaled LESS-IC distribution corresponding to some $(\beta_1, \beta_2)$-approximate leverage scores for $U$. Let $q \ge 2$ and $3 \le k \le 2q$. Let $\{\mathcal{Z}_{\lambda}\}_{\lambda \in [k]}$ be a family of (possibly dependent) random elements, where for each $\lambda \in [k]$, the random element
\begin{align*}
    \mathcal{Z}_{\lambda}=\{\xi_{(l,\gamma),\lambda}:l \in [n], \gamma_l \in [s_l]\} \cup \{\mu_{(l,\gamma),\lambda}:l \in [n], \gamma_l \in [s_l]\}
\end{align*}
has the same distribution as 
\begin{align*}
    \mathcal{Z}=\{\xi_{(l,\gamma)}:l \in [n], \gamma_l \in [s_l]\} \cup \{\mu_{(l,\gamma)}:l \in [n], \gamma_l \in [s_l]\}
\end{align*}
Let $Z_{(l,\gamma)}=\xi_{(l,\gamma)} e_{\mu_{(l, \gamma)}} e_l ^T$ and $Z_{(l,\gamma),\lambda}=\xi_{(l,\gamma),\lambda} e_{\mu_{(l, \gamma),\lambda}} e_l ^T$.
Let $\{\Upsilon_1,...,\Upsilon_k\}$ be a family of $L_{\infty}(S_{\infty}^d)$ random matrices.
Assume further that the collection $\{\mathcal{Z}_{\lambda}\}_{\lambda \in [k]}$ is independent of $S_1, S_2, G_1, G_2$, and $\{\Upsilon_1,...,\Upsilon_k\}$. (In other words, $\{\Upsilon_1,...,\Upsilon_k\}$ can possibly be dependent with $S_1, S_2, G_1, G_2$.)
For each $l \in [n], \gamma_l \in [s_l]$ and $\lambda \in k$, we define random vectors $\Theta_{(l,\gamma), \lambda, 1}, \Theta_{(l,\gamma), \lambda, 2} \in \R^d$ such that
\begin{align*}
    \Theta_{(l,\gamma), \lambda, 1} = \xi_{(l,\gamma),\lambda} \alpha_{(l,\gamma),\lambda} u_l^T \text{ and } \Theta_{(l,\gamma), \lambda, 2} = e_{\mu_{(l,\gamma),\lambda}}^TS_2(t)U
\end{align*}
where $e_{\mu_{(l,\gamma),\lambda}}$ represents the $\mu_{(l,\gamma),\lambda}$th coordinate vector. Then, given $0 \le \rho_1,...,\rho_k \le +\infty$ such that $\sum \limits_{\lambda=1}^k \frac{1}{\rho_{\lambda}}=1-\frac{k}{2q}$, $\tau_1, \ldots, \tau_k \in \sym(\{1,2 \})$, there exists $c_{\ref*{lem:traceineqless}} >0$ such that
\begin{align*}
    &\sum_{l \in [n], \gamma_l \in [s_l]} \E[ \tr \Theta_{(l,\gamma), 1, \tau_1(1)}^T\Theta_{(l,\gamma), 1, \tau_1(2)}
	\Upsilon_1\Theta_{(l,\gamma), 2, \tau_2(1)}^T\Theta_{(l,\gamma), 2, \tau_2(2)} \\&\cdots
	\Upsilon_2\Theta_{(l,\gamma), k, \tau_k(1)}^T\Theta_{(l,\gamma), k, \tau_k(2)}\Upsilon_k]  \\ \le &
     (c_{\ref*{lem:traceineqless}}(pm)^\frac{1}{q}\sqrt{\max\{\beta pd,q^2\}})^{\frac{2qk-4q}{2q-2}} (\E \tr((\Gamma(t))^{2q}))^{\frac{1}{q} \cdot \frac{2q-k}{2q-2}} \prod\limits_{\lambda  = 1}^{k} {\norm{\Upsilon_\lambda }_{\rho_{\lambda}}} 
\end{align*}

\end{lemma}
\begin{proof}
    The structure of the proof is exactly the same as the proof of Lemma \ref{lem:decouptraceineq}, and only the specific expressions differ. We define the functional $F(\Upsilon_1,...,\Upsilon_k)$ exactly as in the proof of Lemma \ref{lem:decouptraceineq}, and proceed to prove the claim for when  $\rho_1= \cdots =\rho_{k-1}=\infty$ and $\rho_k=\frac{q}{q-k}$.

    Let $\eta = (\eta(1), \eta(2))$ be a uniformly distibuted random variable in $\{ (l, \gamma_l) | l \in [n], \gamma_l \in [s_l] \}$ and for all $\lambda \in [k]$, define random variables $\Theta_{1, \lambda} = \Theta_{\eta, \lambda, 1}$ and $\Theta_{2, \lambda} = \Theta_{\eta, \lambda, 2}$. Then, for $\mathcal{S} = \sum_{l=1}^n s_l$,
    
    \begin{align*}
        &\sum_{l \in [n], \gamma_l \in [s_l]} \E[ \tr \Theta_{(l,\gamma_l), 1, \tau_1(1)}^T\Theta_{(l,\gamma_l), 1, \tau_1(2)}
	\Upsilon_1\Theta_{(l,\gamma_l), 2, \tau_2(1)}^T\Theta_{(l,\gamma_l), 2, \tau_2(2)}\\&\cdots
	\Upsilon_2\Theta_{(l,\gamma_l), k, \tau_k(1)}^T\Theta_{(l,\gamma_l), k, \tau_k(2)}\Upsilon_k] \\ =
    &\mathcal{S} \cdot \E[
	\tr \Theta_{\tau_{1}(1),1}^T\Theta_{\tau_{1}(2),1}
	\Upsilon_1\Theta_{\tau_{2}(1),2}^T\Theta_{\tau_{2}(2),2}\cdots
	\Upsilon_2\Theta_{\tau_{k}(1),k}^T\Theta_{\tau_{k}(2),k}\Upsilon_k]
    \end{align*}

Defining factor 1 exactly as in Lemma \ref{lem:decouptraceineq}, we have,
\begin{align*}
(\text{factor 1})^2 \le & \norm{\Theta_{\tau_{1}(2),1}^T\Theta_{\tau_{1}(1),1}}_{2q}^{2-2/r} \prod\limits_{\lambda  = 2}^{k/2}\norm{\Theta_{\tau_{\lambda}(1),\lambda}^T\Theta_{\tau_{\lambda}(2),\lambda}}_{2q}^2\prod\limits_{\lambda  = 1}^{k/2} {\norm{\Upsilon_\lambda }_{\infty}^2} \\ &\cdot \norm{|\Theta_{\tau_{1}(2),1}^T\Theta_{\tau_{1}(1),1}|^{1/r} |\Upsilon_k| |\Theta_{\tau_{1}(2),1}^T\Theta_{\tau_{1}(1),1}|^{1/r}}_r
\end{align*}

To proceed, we prove an analogue of Lemma \ref{theta1theta2} to bound $\norm{\Theta_{\tau_\lambda(1),\lambda}^T\Theta_{\tau_\lambda(2),\lambda}}_{q} = \norm{\Theta_{1,\lambda}^T\Theta_{2,\lambda}}_{q}$.

\begin{lemma}[Product of Random Rows for LESS] \label{theta1theta2less}
Let $\Theta_{(l,\gamma), \lambda, 1}, \Theta_{(l,\gamma), \lambda, 2} \in \R^d$ be as in Lemma \ref{lem:traceineqless}. Let $\eta = (\eta(1), \eta(2))$ be a uniformly distributed random variable in $\{ (l, \gamma_l) | l \in [n], \gamma_l \in [s_l] \}$ such that $\eta$ is independent with $\{\mathcal{Z}_{\lambda}\}_{\lambda \in [k]}$, $S_1, S_2, G_1, G_2$. For all $\lambda \in [k]$, define random vectors $\Theta_{1, \lambda} = \Theta_{\eta, \lambda, 1}$ and $\Theta_{2, \lambda} = \Theta_{\eta, \lambda, 2}$. Let $q \ge 2$. Then there exists a constant $c_{\ref*{theta1theta2less}} >0$ such that,
\begin{align*}
\norm{\Theta_{1,\lambda}^T\Theta_{2,\lambda}}_{q} &\le  \frac{ c_{\ref*{theta1theta2less}}(pm)^\frac{1}{q}\sqrt{\max \{ pd, q^2 \}} }{\mathcal{S}^\frac{1}{q}}
\end{align*}
\end{lemma}

\begin{proof}

Note that for each fixed $1 \le \lambda \le k/2$, 
\begin{align*}
    \norm{\Theta_{1,\lambda}^T\Theta_{2,\lambda}}_{q} &= \norm{\xi_{\eta, \lambda}\alpha_{\eta} u_{\eta(1)} e_{\mu_{\eta,\lambda}}^TS_2(t)U}_{q} \\
    &\le \left( \E [\tr \abs{\xi_{\eta, \lambda}\alpha_{\eta} u_{\eta(1)} e_{\mu_{\eta,\lambda}}^TS_2(t)U}^{q} ] \right)^\frac{1}{q} \\
    &= \left( \E [\tr \abs{\alpha_{\eta} u_{\eta(1)} e_{\mu_{\eta,\lambda}}^TS_2(t)U}^{q} ] \right)^\frac{1}{q} \\
    &\le \left( \E \left[ \frac{1}{d} \norm{\alpha_{\eta} u_{\eta(1)} }^{q}\norm{e_{\mu_{\eta,\lambda}}^TS_2(t)U}^{q} \right] \right)^\frac{1}{q} \\
\end{align*} 
Conditioning on $\eta$, we take expectation of the second factor over $\mu_{\eta, \lambda}$ and $S_2$,
\begin{align*}
    \norm{\Theta_{1,\lambda}^T\Theta_{2,\lambda}}_{q} &\le \left( \E_{\eta} \left[ \frac{1}{d} \norm{\alpha_{\eta} u_{\eta(1)} }^{q} \E_{\mu_{\eta, \lambda}, S_2(t)}[ \norm{e_{\mu_{\eta,\lambda}}^TS_2(t)U}^{q}] \right] \right)^\frac{1}{q} \\
\end{align*}

Note that, conditioned on $\eta$, $\mu_{\eta, \lambda}$ is uniformly distributed over a subset of $[m]$. Therefore, by Lemma \ref{lem:lessrownormbound}, we have  
\begin{align*}
    \norm{\Theta_{1,\lambda}^T\Theta_{2,\lambda}}_{q} &\le \left( \E_{\eta} \left[ \frac{1}{d} \norm{\alpha_{\eta} u_{\eta(1)} }^{q} \E_{\mu_{\eta, \lambda}, S_2(t)}[ \norm{e_{\mu_{\eta,\lambda}}^TS_2(t)U}^{q}] \right] \right)^\frac{1}{q} \\
    &\le c_{\ref*{lem:lessrownormbound}}\sqrt{\max \{ pd, q^2 \}} \left( \E_{\eta} \left[ \frac{1}{d} \norm{\alpha_{\eta} u_{\eta(1)} }^{q} \right] \right)^\frac{1}{q} \\
    &\le c_{\ref*{lem:lessrownormbound}}\sqrt{\max \{ pd, q^2 \}} \left( \frac{1}{\mathcal{S}d} \sum_{l=1}^n \sum_{\gamma_l=1}^{s_l} \abs{\alpha_{(l,\gamma_l)}}^{q} \norm{u_l}^{q} \right)^\frac{1}{q} \\
\end{align*}

Note that when $b_l \ge m$, we have $s_l=1$ and $\abs{\alpha_{(l,\gamma_l)}} \le \sqrt{pm}$. Moreover, $b_l \ge m$ only when $1/\beta_1pz_l \ge m$, which means we have $pm\norm{u_l}^2 \le \norm{u_l}^2/\beta_1z_l \le 1$. This means, $\abs{\alpha_{(l,\gamma_l)}}^{q} \norm{u_l}^{q} \le (pm\norm{u_l}^2)^{q/2} \le pm\norm{u_l}^2$.

When $b_l < m$, we have $\abs{\alpha_{(l,\gamma_l)}}\norm{u_l} \le \sqrt{pb_j}\norm{u_l} \le \norm{u_l}/\sqrt{\beta_1z_j} \le 1$ when $\lfloor 1/\beta_1pz_j \rfloor \ge 1$ and $b_j = \lfloor 1/\beta_1pz_j \rfloor$ because $pb_j \le 1/\beta_1z_j \le 1/\norm{u_j}^2$.
    When $b_j=1$, we still have $\abs{\alpha_{(l,\gamma_l)}}\norm{u_l} \le \sqrt{pb_j}\norm{u_l} \le \sqrt{p}\norm{u_l} \le 1$ since $p \le 1$.

So, we have $\sum_{\gamma_l=1}^{s_l} \abs{\alpha_{(l,\gamma_l)}}^{q} \norm{u_l}^{q} \le \sum_{\gamma_l=1}^{s_l} \abs{\alpha_{(l,\gamma_l)}}^{2} \norm{u_l}^{2} \le pb_l s_l \norm{u_l}^{2}$ (Since $\alpha_{(l,\gamma_l)} \le \sqrt{p b_l}$ by definition). But $s_l = \lceil m/b_l \rceil \le 2m/b_l$. So, $pb_l s_l \norm{u_l}^{2} \le 2pm \norm{u_l}^{2} $.

We can now continue to bound,
\begin{align*}
\norm{\Theta_{\tau_j(1),j}^T\Theta_{\tau_j(2),j}}_{q} &\le c_{\ref*{lem:lessrownormbound}}\sqrt{\max \{ pd, q^2 \}} \left( \frac{1}{\mathcal{S}d} \sum_{l=1}^n \sum_{\gamma_l=1}^{s_l} \abs{\alpha_{(l,\gamma_l)}}^{q} \norm{u_l}^{q} \right)^\frac{1}{q} \\
    &\le c_{\ref*{lem:lessrownormbound}}\sqrt{\max \{ pd, q^2 \}} \left( \frac{1}{\mathcal{S}d} \sum_{l=1}^n  2pm\norm{u_l}^2   \right)^\frac{1}{q} \\
    &\le c_{\ref*{lem:lessrownormbound}}\sqrt{\max \{ pd, q^2 \}} \left( \frac{2pmd}{\mathcal{S}d}  \right)^\frac{1}{q} \\
    &\le c_{\ref*{lem:lessrownormbound}}\sqrt{\max \{ pd, q^2 \}} \left( \frac{1}{\mathcal{S}}  \right)^\frac{1}{q} \left( 2pm  \right)^\frac{1}{q}
\end{align*}
\end{proof}

Using the same calculations as Lemma \ref{lem:decouptraceineq} for the term \begin{align*}
    \norm{|\Theta_{\tau_{1}(2),1}^T\Theta_{\tau_{1}(1),1}|^{1/r} |\Upsilon_k| |\Theta_{\tau_{1}(2),1}^T\Theta_{\tau_{1}(1),1}|^{1/r}}_r
\end{align*} we get
\begin{align*}
    (\text{factor 1})^2
 &\le  \paren*{\frac{ c_{\ref*{theta1theta2less}}(pm)^\frac{1}{q}\sqrt{\max \{pd, q^2 \}} }{\mathcal{S}^\frac{1}{2q}}}^{k-2/r}\prod\limits_{\lambda  = 1}^{k/2} {\norm{\Upsilon_\lambda }_{\infty}^2} \cdot  \paren*{\frac{1}{\cS} \norm{|\Upsilon_k|^{r}}_{\frac{2q}{2q-2}} \cdot \paren*{\E \tr((\Gamma(t))^{2q})}^\frac{1}{q}}^\frac{1}{r} \\
 &\le \frac{1}{\cS} \paren*{ c_{\ref*{theta1theta2less}}(pm)^\frac{1}{q}\sqrt{\max \{pd, q^2 \}}  }^{k-2/r}\prod\limits_{\lambda  = 1}^{k/2} {\norm{\Upsilon_\lambda }_{\infty}^2} \cdot   \norm{\Upsilon_k}_{\frac{2q}{2q-k}} \cdot \paren*{\E \tr((\Gamma(t))^{2q})}^\frac{1}{qr}  \\
\end{align*}
By repeating the same argument for factor 2, we get,
\begin{align*}
    &\sum_{l \in [n], \gamma_l \in [s_l]} \E[ \tr \Theta_{(l,\gamma_l), 1, \tau_1(1)}^T\Theta_{(l,\gamma_l), 1, \tau_1(2)}
	\Upsilon_1\Theta_{(l,\gamma_l), 2, \tau_2(1)}^T\Theta_{(l,\gamma_l), 2, \tau_2(2)} \\&\cdots
	\Upsilon_2\Theta_{(l,\gamma_l), k, \tau_k(1)}^T\Theta_{(l,\gamma_l), k, \tau_k(2)}\Upsilon_k]  \\ \le &
       (c_{\ref*{theta1theta2less}}(pm)^\frac{1}{q}\sqrt{\max\{pd,q^2\}})^{\frac{2qk-4q}{2q-2}} (\E \tr((\Gamma(t))^{2q}))^{\frac{1}{q} \cdot \frac{2q-k}{2q-2}} \prod\limits_{\lambda  = 1}^{k-1} {\norm{\Upsilon_\lambda }_{\infty}} \cdot \norm{\Upsilon_k}_{\frac{2q}{2q-k}}
\end{align*}
\end{proof}

\begin{lemma}[Row Norm for LESS-IC]\label{lem:lessrownormbound}
    Let $S(t):= \sqrt{t}S + \sqrt{1-t}G$, where $S$ has the fully independent unscaled LESS-IC distribution as in Lemma \ref{lem:diffineqless} and $G$ is an $m \times n$ matrix with i.i.d. Gausian entries with variance $p$. Let $\mu$ be a random variable uniformly distributed in $\phi \neq I \subset [m]$ and independent of $S$ and $G$. Then, there exists $c_{\ref*{lem:lessrownormbound}} > 0$, such that 
    \begin{align*}
        \E_{\mu, S(t)} [ \norm{e_{\mu}^TS(t)U}^{q} ]^\frac{1}{q} &\le c_{\ref*{lem:lessrownormbound}}\sqrt{\max \{ pd, q^2 \}} 
    \end{align*}
\end{lemma}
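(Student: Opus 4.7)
The plan is to follow the proof of Lemma \ref{lem:rownormbound}, with two adjustments reflecting features of the LESS-IC model. First, since the rows of $S$ in the LESS-IC distribution are \emph{not} identically distributed (the subcolumn sizes $b_l$ vary with~$l$), I would replace the reduction to a single row by the estimate
\begin{align*}
\E_{\mu,S(t)}[\|e_\mu^T S(t) U\|^q] \le \max_{i\in I}\E_{S(t)}[\|e_i^T S(t)U\|^q],
\end{align*}
assuming WLOG (by H\"older) that $q$ is even. Second, the entries $s_{ij}$ of row $i$ are independent across columns $j$ and take values in $\{0,\pm\alpha_j^*\}$ with $\alpha_j^* = \sqrt{p\,k_{ij}}$, where $k_{ij}$ is the size of the subcolumn of column $j$ containing row $i$; these magnitudes are \emph{not} bounded by $1$, but the LESS-IC construction $b_l\le 1/(\beta_1 p z_l)$ together with the approximate leverage score condition $\|u_l\|^2\le\beta_1 z_l$ yields the crucial uniform estimate $\alpha_j^*\|u_j\|\le 1$.

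Fix a row $i\in I$ and condition on $G$. As in Lemma \ref{lem:rownormbound}, convexity of $\|\cdot\|$ gives $\E_S\|e_i^T S(t)U\| \le \sqrt{tpd+(1-t)\|e_i^T GU\|^2}$, and the Gaussian factor is handled by the identical calculation, producing $\Pr(\|e_i^T GU\|^2\ge 2pd+2\lambda^2)\le\exp(-c\lambda^2)$. The main obstacle is the concentration of $\|e_i^T S(t)U\|$ as a function of $S$ with $G$ fixed: the direct analogue of Theorem 6.10 in \cite{boucheron2013concentration} used in Lemma \ref{lem:rownormbound} would involve $\sum_j(\alpha_j^*)^2\le\sum_j 1/(\beta_1 z_j)$, which is not uniformly controlled and would give vacuous concentration.

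To overcome this, I would use Talagrand's concentration inequality for suprema of empirical processes (Bousquet's form) applied to the representation
\begin{align*}
\|e_i^T S(t)U\| = \sup_{\|w\|\le 1}\Bigl(\sqrt{t}\sum_j s_{ij}(u_j\cdot w) + \sqrt{1-t}\,e_i^T GUw\Bigr).
\end{align*}
For each fixed $w$ in the unit ball, the summand $s_{ij}(u_j\cdot w)$ has envelope $|s_{ij}||u_j\cdot w|\le\alpha_j^*\|u_j\|\le 1$ and per-summand variance $p(u_j\cdot w)^2$. The key gain is that the \emph{weak variance}
\begin{align*}
\sigma^2 \;=\; \sup_{\|w\|\le 1}\sum_j p(u_j\cdot w)^2 \;=\; p,
\end{align*}
since $\sum_j(u_j\cdot w)^2=\|U^T w\|^2\le 1$ by orthonormality of the columns of $U$; this is much smaller than the naive total variance $pd$. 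Bousquet's inequality then yields the mixed tail
\begin{align*}
\Pr\bigl(\|e_i^T S(t)U\|\ge \sqrt{2pd}+C\lambda+\mu\bigr)\le \exp\!\Bigl(-c\min\bigl(\mu^2/\sqrt{pd},\;\mu\bigr)\Bigr)
\end{align*}
on the Gaussian good event (combining with the $G$-tail exactly as in Lemma \ref{lem:rownormbound}).

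Integrating this tail gives $\|e_i^T S(t)U\|_q \le C\bigl(\sqrt{pd}+\sqrt{q\sqrt{pd}}+q\bigr)$, and the AM-GM bound $\sqrt{q\sqrt{pd}}\le(q+\sqrt{pd})/2$ collapses this to $\|e_i^T S(t)U\|_q\le C(\sqrt{pd}+q)\le c_{\ref*{lem:lessrownormbound}}\sqrt{\max\{pd,q^2\}}$, from which the lemma follows by maximizing over $i\in I$. The $q^2$ in the bound (versus $q$ in the OSNAP case) arises precisely from the subexponential component $\exp(-c\mu)$ of the mixed tail, which in turn reflects the fact that the entries $s_{ij}$ are not bounded by $1$ but only in the weighted sense $\alpha_j^*\|u_j\|\le 1$.
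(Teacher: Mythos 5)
Your argument is correct in substance but follows a genuinely different route from the paper's. The paper first splits off the Gaussian part by Minkowski's inequality, and then bounds $\E[\norm{e_i^TSU}^{2q}]^{1/2q}$ by a recursive moment inequality: the diagonal term $\sum_j\delta_{ij}^2\alpha_{j,r(i,j)}^2\norm{u_j}^2$ is handled by Rosenthal's inequality (Lemma \ref{lem:rosenthal}), while the off-diagonal term is decoupled via \cite[Theorem 6.1.1]{vershynin2018high} and reduced to a scalar linear form whose moment generating function is bounded by $\exp(pe^{4\lambda})$ --- using exactly your key fact $\alpha_{j,r(i,j)}\norm{u_j}\le 1$ --- so that the linear form has $O(q)$ moments; this yields the recursion $\E[\norm{e_i^TSU}^{2q}]^{1/2q}\le\sqrt{2pd}+\sqrt{c_2q}+\E[\norm{e_i^TSU}^{q}]^{1/2q}\sqrt{C_1q}$, which is solved to give $c\sqrt{\max\{pd,q^2\}}$. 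You instead make a single application of Talagrand-type concentration, viewing the row norm as the supremum of an empirical process with envelope $\alpha_{j,r(i,j)}\norm{u_j}\le1$ and weak variance $p$, and then integrate the resulting mixed tail; both arguments rest on the same two structural facts (the envelope bound and the entrywise variance $p$), both give $\sqrt{pd}+q\lesssim\sqrt{\max\{pd,q^2\}}$, and your diagnosis of why the convex-Lipschitz argument of Lemma \ref{lem:rownormbound} fails for LESS-IC is accurate. Two caveats, neither fatal: (i) by keeping $S$ and $G$ coupled and conditioning on $G$, your indexed family acquires the non-centered deterministic offset $\sqrt{1-t}\,e_i^TGUw$, which depends on $w$, so Bousquet's theorem does not literally apply as stated; the cleanest fix is to split $\norm{e_i^TS(t)U}\le\sqrt{t}\norm{e_i^TSU}+\sqrt{1-t}\norm{e_i^TGU}$ first (as the paper does), treat the Gaussian norm by its standard moment bound $c\sqrt{\max\{pd,pq\}}$, and apply the concentration argument to the centered process $\sup_{\norm{w}\le1}\sum_j s_{ij}(u_j\cdot w)$ alone; (ii) since the summands are independent but not identically distributed (the subcolumn sizes and Bernoulli rates vary with $j$), you need the Klein--Rio version of Talagrand's inequality for non-identically distributed summands rather than the i.i.d.\ statement. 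With these adjustments your proof goes through and delivers the same bound as the paper's.
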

\begin{proof}
    By Hölder's inequality, it suffices to prove these bounds for moments of the order of the smallest even integer bigger than $q$, so without loss of generality, we may assume that $q$ is an even integer. Moreover,
    \begin{align*}
        \E_{\mu, S(t)} [ \norm{e_{\mu}^TS(t)U}^{q}] =& \frac{1}{\abs{I}}\sum_{i \in I} \E_{S(t)} [ \norm{e_{i}^TS(t)U}^{q}] \\
    \end{align*}
    In what follows, we fix $i \in I$, and obtain a bound for this fixed $i$. We observe that, 
\begin{align*}
    \E_{S(t)} [ \norm{e_{i}^TS(t)U}^{q}]^\frac{1}{q} &= \E_{S(t)} [ \norm{e_{i}^T(\sqrt{t}S + \sqrt{1-t}G)U}^{q}]^\frac{1}{q} \\
    &\le \E_{S(t)} [ \paren*{ \norm{e_{i}^TSU} + \norm{e_{i}^TGU} }^{q}]^\frac{1}{q} \\
    &\le \E_{S} [ \norm{e_{i}^TSU}^{q}]^\frac{1}{q} + \E_{G} [ \norm{e_{i}^TGU}^{q}]^\frac{1}{q}
\end{align*}
Since $\E_{G} [ \norm{e_{i}^TGU}^{q}]^\frac{1}{q}$ is simply the $q\textsuperscript{th}$ moment of a $d$-dimensional Gaussian random vector whose independent components have variance $p$,  $\E_{G} [ \norm{e_{i}^TGU}^{q}]^\frac{1}{q} \le c_1\sqrt{\max \{pd, pq \}}$, and it is enough to prove the bound in the statement of the lemma for $\E_{S} [ \norm{e_{i}^TSU}^{q}]^\frac{1}{q}$.

Note that in the case when $S$ has the unscaled LESS-IC distribution, $S_{1,1}$ is non-zero when the one hot distribution on the column submatrix $S_{[1:\min \{b_1, m \}]\times{1}}$ has it's non zero entry on the first row. The probability that this happens is $1/\min \{b_1, m \} = p/\alpha_{1,1}^2$. To generalise this to the entry $S_{i,j}$ for a given $(i,j)$, observe that for each $(i,j)$, we can associate a unique tuple $(j,r)$ with $r=r(i,j)$ such that the random variable $\mu_{j,r}$ determines whether $S_{ij}$ is non zero. $\mu_{j,r}$ is uniformly distributed in an interval of size $\alpha^2_{j,r}/p$, so the probability that $S_{ij}$ is non-zero is $p/\alpha^2_{j,r}$. Moreover, we assume the columns of $S$ are independent and remark that the calculations hold even for $\log$-wise independence since $\norm{e_{i}^TS(t)U}^{q}$ is a polynomial of order $q = O(\log (d/\varepsilon\delta))$ in the entries of the first row of $S$ only sees products of $O(\log (d/\varepsilon\delta))$ many random variables.

From the above discussion, we can write,
\begin{align*}
    e_{i}^TSU &= \sum_{j=1}^n \delta_{ij} \alpha_{j,r(i,j)} \xi_{i,j} u_j^T \\
    &=: \sum_{j=1}^n Z_{ij}
\end{align*}
where,
\begin{itemize}
    \item $\{ \delta_{ij} \}_{j \in [n]}$ are independent Bernoulli random variables which are non zero with probability $p/\alpha^2_{j,r(i,j)}$ respectively.
    \item $\{ \xi_{i,j} \}_{j \in [n]}$ is a collection of independent Rademacher random variables.
    \item $\{ u_j^T \}_{j \in [n]}$ are rows of the matrix $U$.
\end{itemize}

We shall use a recursive relation to estimate $\E \left[ \left\|\sum_{j=1}^n  Z_{ij} \right\|^{2q} \right]^\frac{1}{q}$, and proceed to obtain this recursive inequality,
    \begin{align} 
    \begin{split} 
        &\E \left[ \left\|\sum_{j=1}^n  Z_{ij} \right\|^{2q} \right]^\frac{1}{q} \\
        &= \E \left[ \left ( \left\|\sum_{j=1}^n  Z_{ij} \right\|^{2} \right)^q \right]^\frac{1}{q} \\
        &= \E \left[ \left ( \sum_{j=1}^n \delta_{ij}^2 \alpha^2_{j,r(i,j)} \| u_j \| ^2 + \sum_{ \substack{j_1,j_2 =1 \\ j_1 \neq j_2} }^n \alpha_{j_1,r(i,j_1)}\alpha_{j_2,r(i,j_2)} \delta_{ij_1} \delta_{ij_2} \xi_{i,j_1} \xi_{i,j_2} u_{j_1}^\top u_{j_2}  \right)^q \right]^\frac{1}{q} \\
        &\le \E \left[ \left ( \sum_{j=1}^n \delta_{ij}^2 \alpha^2_{j,r(i,j)} \| u_j \| ^2 \right)^q \right]^\frac{1}{q} +
        \E \left[ \left ( \sum_{ \substack{j_1,j_2 =1 \\ j_1 \neq j_2} }^n \alpha_{j_1,r(i,j_1)}\alpha_{j_2,r(i,j_2)} \delta_{ij_1} \delta_{ij_2} \xi_{i,j_1} \xi_{i,j_2} u_{j_1}^\top u_{j_2}  \right)^q \right]^\frac{1}{q} \label{eq:rnormrecur}
    \end{split}
    \end{align}
    Using decoupling (\cite[Theorem 6.1.1]{vershynin2018high}), 
    \begin{align*}
        &\E \left[ \left ( \sum_{ \substack{j_1,j_2 =1 \\ j_1 \neq j_2} }^n \alpha_{j_1,r(i,j_1)}\alpha_{j_2,r(i,j_2)} \delta_{ij_1} \delta_{ij_2} \xi_{i,j_1} \xi_{i,j_2} u_{j_1}^\top u_{j_2}  \right)^q \right] \\
        &\le \E \left[ \left ( 4 \sum_{ j_1,j_2 =1  }^n \alpha_{j_1,r(i,j_1)}\alpha_{j_2,r(i,j_2)} \delta_{ij_1} \widetilde{\delta}_{ij_2} \xi_{i,j_1} \widetilde{\xi}_{i,j_2} u_{j_1}^\top u_{j_2}  \right)^q \right] \\
        &= \E \left[ \left ( 4 \left\langle \alpha_{j,r(i,j)} \sum_{j=1}^n \delta_{ij} \xi_{ij} u_{j}, \sum_{j=1}^n \alpha_{j,r(i,j)} \widetilde{\delta}_{ij} \widetilde{\xi}_{ij} u_{j} \right\rangle  \right)^q \right] \\
    \end{align*}
    where $\{ \widetilde{\delta}_{ij}, \widetilde{\xi}_{ij} \}_{i \in [m], j \in [n]}$ are independent copies of $\{ {\delta}_{ij}, {\xi}_{ij} \}_{i \in [m], j \in [n]}$.
    
    Conditioning on $\{ \widetilde{\delta}_{ij}, \widetilde{\xi}_{ij} \}_{i \in [m], j \in [n]}$, we have
    \begin{align*}
        &\E \left[ \left( 4 \left\langle \sum_{j=1}^n \alpha_{j,r(i,j)} \delta_{ij} \xi_{ij} u_{j}, \sum_{j=1}^n \alpha_{j,r(i,j)} \widetilde{\delta}_{ij} \widetilde{\xi}_{ij} u_{j} \right\rangle  \right)^q \right] 
        \\=& \E \left[ \E \left[ \left ( 4 \left\langle \sum_{j=1}^n \alpha_{j,r(i,j)} \delta_{ij} \xi_{ij} u_{j}, \sum_{j=1}^n \alpha_{j,r(i,j)} \widetilde{\delta}_{ij} \widetilde{\xi}_{ij} u_{j} \right\rangle  \right)^q \Bigg| \{ \widetilde{\delta}_{ij}, \widetilde{\xi}_{ij} \}_{i \in [m], j \in [n]} \right] \right] \\ = &
        \E \left[ \left\| \sum_{j=1}^n \alpha_{j,r(i,j)} \widetilde{\delta}_{ij} \widetilde{\xi}_{ij} u_{j} \right\|^q \E \left[ \left( 4 \left\langle \sum_{j=1}^n \alpha_{j,r(i,j)} \delta_{ij} \xi_{ij} u_{j}, v \right\rangle  \right)^q \right] \right]
    \end{align*}
    for some fixed unit vector $v \in \R^d$. Plugging this estimate back into \eqref{eq:rnormrecur}, 
    \begin{align*}
        \E \left[ \left\|\sum_{j=1}^n  Z_{ij} \right\|^{2q} \right]^\frac{1}{q} \le &\E \left[ \left ( \sum_{j=1}^n \delta_{ij}^2 \alpha^2_{j,r(i,j)} \| u_j \| ^2 \right)^q \right]^\frac{1}{q} \\&+ \E \left[ \left\|\sum_{j=1}^n  Z_{ij} \right\|^{q} \right]^\frac{1}{q} \E \left[ \left( 4 \left\langle \sum_{j=1}^n \alpha_{j,r(i,j)} \delta_{ij} \xi_{ij} u_{j}, v \right\rangle  \right)^q \right]^\frac{1}{q} 
    \end{align*}
    Both $\E \left[ \left ( \sum_{j=1}^n \delta_{ij}^2 \alpha_{j,r(i,j)}^2 \| u_j \| ^2 \right)^q \right]^\frac{1}{q}$ and $\E \left[ \left( 4 \left\langle \sum_{j=1}^n \alpha_{j,r(i,j)}\delta_{ij} \xi_{ij} u_{j}, v \right\rangle  \right)^q \right]^\frac{1}{q}$ shall be computed using tail probabilities obtained via Chernoff bounds. To this end, we first look at $\E [ \exp (4\lambda \langle \alpha_{j,r(i,j)} \delta_{ij} \xi_{ij} u_{j}, v \rangle ) ]$ for a fixed $j$,
    \begin{align*}
        \E \left[ \exp \left(4\lambda \left\langle \alpha_{j,r(i,j)} \delta_{ij} \xi_{ij} u_{j}, v \right\rangle \right) \right] &= 1 + \frac{p}{\alpha^2_{j,r(i,j)}} \left( \cosh\left(  4\lambda\alpha_{j,r(i,j)} \left\langle  u_{j}, v \right\rangle \right) - 1\right) \\
        &= 1 + \frac{\cosh\left(  4\lambda\alpha_{j,r(i,j)} \left\langle  u_{j}, v \right\rangle \right) - 1}{16\lambda^2\alpha^2_{j,r(i,j)}\langle  u_{j}, v \rangle^2} \cdot 16\lambda^2 \langle  u_{j}, v \rangle^2 p \\ 
    \end{align*}
    Since $\cosh$ is an even function, we may assume that $\langle  u_{j}, v \rangle \ge 0$. Then, $4\lambda \alpha_{j,r(i,j)} \left\langle  u_{j}, v \right\rangle  \le 4\lambda \alpha_{j,r(i,j)} \norm{u_j} \le 4\lambda \sqrt{pb_j} \norm{u_j} \le 4\lambda$ when $\lfloor 1/\beta_1pz_j \rfloor \ge 1$ and $b_j = \lfloor 1/\beta_1pz_j \rfloor$ because $pb_j \le 1/\beta_1z_j \le 1/\norm{u_j}^2$.
    When $b_j=1$, we still have $4\lambda \alpha_{j,r(i,j)} \left\langle  u_{j}, v \right\rangle   \le 4\lambda \sqrt{p} \cdot 1 \le 4\lambda$.
    
    Using the fact that $\frac{\cosh(x)-1}{x^2}$ is increasing in $x$ for $x \ge 0$, we get,
    \begin{align*}
        \E \left[ \exp \left(4\lambda \left\langle \alpha_{j,r(i,j)} \delta_{ij} \xi_{ij} u_{j}, v \right\rangle \right) \right] &\le 1 + \frac{\cosh(4\lambda)-1}{16\lambda^2} \cdot 16\lambda^2 \langle  u_{j}, v \rangle^2 p \\
        &\le 1 + \langle  u_{j}, v \rangle^2 p e^{4\lambda} \\
        &\le \exp( \langle  u_{j}, v \rangle^2 p e^{4\lambda} )
    \end{align*}
    So,
    \begin{align*}
        \E \left[ \exp \left(4\lambda \left\langle \sum_{j=1}^n\alpha_{j,r(i,j)} \delta_{ij} \xi_{ij} u_{j}, v \right\rangle \right) \right] 
        &\le \exp \left( \sum_{j=1}^n \langle  u_{j}, v \rangle^2 p e^{4\lambda} \right) \\
        &\le \exp(p e^{4\lambda})
    \end{align*}
    This gives a Chernoff tail bound,
    \begin{align*}
        \Pb \left[ \left\lvert \left\langle \sum_{j=1}^n\alpha_{j,r(i,j)} \delta_{ij} \xi_{ij} u_{j}, v \right\rangle \right\rvert \ge t \right] \le 2\exp \left( \frac{t}{4} \left( 1 - \log \left( \frac{t}{4p} \right) \right) \right)
    \end{align*}
    and a standard moment computation gives, $\E \left[ \left( 4 \left\langle \sum_{j=1}^n \alpha_{j,r(i,j)}\delta_{ij} \xi_{ij} u_{j}, v \right\rangle  \right)^q \right]^\frac{1}{q} \le C_1 q $.

    To deal with $\E \left[ \left ( \sum_{j=1}^n \delta_{ij}^2 \alpha_{j,r(i,j)}^2 \| u_j \| ^2 \right)^q \right]^\frac{1}{q}$, we use the following Rosenthal’s inequality from \cite[p.442]{boucheron2013concentration}.

\begin{lemma}[Theorem 14.10 from \cite{boucheron2013concentration}]\label{lem:rosenthal}
Let $Z=\sum_{j=1}^nX_i$ where $X_1,...,X_n$ are independent and nonnegative random variables. Then the exists a constant $_{\ref{lem:rosenthal}}$ such that, for all integers $q \ge 1$, we have
\begin{align*}
    (\E (Z^q))^{1/q} \le 2 \E( Z)+ c_{\ref*{lem:rosenthal}}q (\E ((\max \limits_{j=1,...,n} X_j)^q))^{1/q}
\end{align*}
\end{lemma}

We use this lemma with $X_j=\delta_{ij}^2 \alpha_{j,r(i,j)}^2 \| u_j \| ^2$. We first calculate that
\begin{align*}
    \E (\sum_{j=1}^n \delta_{ij}^2 \alpha_{j,r(i,j)}^2 \| u_j \| ^2) = \sum_{j=1}^n (p/\alpha_{j,r(i,j)}^2)\alpha_{j,r(i,j)}^2\| u_j \| ^2 =pd
\end{align*}

Next, we calculate $(\E ((\max \limits_{j=1,...,n} X_j)^q))^{1/q}$. Since $\alpha_{j,r(i,j)} \le \sqrt{pb_j}$ and $\delta_{ij} \le 1$, we have $X_j=\delta_{ij}^2 \alpha_{j,r(i,j)}^2 \| u_j \| ^2 \le pb_j \| u_j \| ^2$. Now we consider two cases depending on whether $b_j=\left\lfloor \frac{1}{\beta_1pz_j} \right\rfloor$ or $b_j=1$. First, if $b_j=\left\lfloor \frac{1}{\beta_1pz_j} \right\rfloor$, then we have $X_j \le pb_j \| u_j \| ^2 \le p \frac{1}{\beta_1pz_j} \beta_1z_j \le 1$. Second, if $b_j=1$, then we have $X_j \le pb_j \| u_j \| ^2 \le p  \le 1$. In conclusion, we have $\max \limits_{j=1,...,n} \delta_{ij}^2 \alpha_{j,r(i,j)}^2 \| u_j \| ^2 \le 1$ almost surely, and therefore we have $(\E ((\max \limits_{j=1,...,n} \delta_{ij}^2 \alpha_{j,r(i,j)}^2 \| u_j \| ^2)^q))^{1/q} \le 1$.

Therefore, we have
\begin{align*}
    \E \left[ \left ( \sum_{j=1}^n \delta_{ij}^2 \alpha_{j,r(i,j)}^2 \| u_j \| ^2 \right)^q \right]^\frac{1}{q} \le 2pd+c_2q
\end{align*}
for some constant $c_2$.

    We thus have,
    \begin{align*}
        \E \left[ \left\|\sum_{j=1}^n  Z_{ij} \right\|^{2q} \right]^\frac{1}{2q} &\le \sqrt{2 pd} + \sqrt{c_2q} + \E \left[ \left\|\sum_{j=1}^n  Z_{ij} \right\|^{q} \right]^\frac{1}{2q} \sqrt{C_1q} 
    \end{align*}
    Let $C_2=\max\{2,c_2,(3C_1)^2\}$.   Observe that $ \E \left[ \left\|\sum_{j=1}^n  Z_{ij} \right\|^{2} \right]^\frac{1}{2} = \sqrt{pd}$. Assuming that \begin{align*}\E \left[ \left\|\sum_{j=1}^n  Z_{ij} \right\|^{q} \right]^\frac{1}{q} \le 3\sqrt{C_2} \sqrt{\max \{pd, q^2 \}} \end{align*} for some $q$, the above relation gives,
    \begin{align*}
        \E \left[ \left\|\sum_{j=1}^n  Z_{ij} \right\|^{2q} \right]^\frac{1}{2q} &\le \sqrt{C_2pd} + \sqrt{C_2q} + \left( 3\sqrt{C_2}\sqrt{\max \{pd, q^2 \}} \right)^\frac{1}{2} \sqrt{C_1q} \\
        &\le \left( 2\sqrt{C_2} + \sqrt{3C_1}C_2^{1/4} \right) \sqrt{\max \{pd, q^2 \}} \\
        &\le 3\sqrt{C_2} \sqrt{\max \{pd, q^2 \}}
    \end{align*}

\end{proof}
\section{Oblivious Subspace Embedding with Independent Entries} \label{sec:oseieproof}

In this section, we consider the subspace embedding property for the following classical model with independent entries in the matrix $S$.

\begin{definition}[OSE-IE]\label{def:oseie}
An $m \times n$ random matrix $S$ is called an unscaled oblivious subspace embedding with independent entries (unscaled OSE-IE) with parameter $p$ if $S$ has i.i.d. entries $s_{i,j}=\delta_{(i,j)} \xi_{(i,j)}$ where $\delta_{(i,j)}$ are i.i.d. Bernoulli random variables taking value 1 with probability $p \in (0,1]$ and $\xi_{(i,j)}$ are i.i.d. random variables independent with $\delta_{(i,j)}$ and satisfy $\Pb(\xi_{(i,j)}=1)=\Pb(\xi_{(i,j)}=-1)=1/2$. And in this case, $\Pi = (1/\sqrt{pm})S$ is call an OSE-IE with parameter $p$.
\end{definition}

For this model, we have the following subspace embedding guarantee,

\begin{restatable}[High Probability Bounds for the Embedding Error for OSE-IE]{theorem}{oseiemainthm} \label{thm:oseiemain}
   Let $S$ be an $m \times n$ matrix distributed according to the unscaled OSE-IE distribution with parameter $p$. Let $U$ be an arbitrary $n \times d$ deterministic matrix such that $U^TU=I$. Then, there exist constants $c_{\ref*{thm:oseiemain}.1}>0$ and $c_{\ref*{thm:oseiemain}.2}>0$ such that for any $0 < \varepsilon , \delta < 1$ and $d>10$,
we have
\begin{equation}
    \begin{aligned}
\Pb \left( 1 - \varepsilon  \leq s_{\min}((1/\sqrt{pm})SU)   \leq s_{\max}((1/\sqrt{pm})SU) \leq 1 + \varepsilon \right) \geq 1-\delta
\end{aligned}
\end{equation}
when $m > c_{\ref*{thm:oseiemain}.1}  \frac{d + \log(1/\varepsilon\delta)}{\varepsilon^2} $  and $$pm \geq \min \left\{ c_{\ref*{thm:oseiemain}.2}\paren*{\frac{(\log (d/\varepsilon\delta))^2}{\varepsilon}+\frac{(\log (d/\varepsilon\delta))}{\varepsilon^2} + (\log (d/\varepsilon\delta))^3 } ,m \right\}$$
\end{restatable}

The overall structure of the proof of Theorem \ref{thm:oseiemain} is the same as the proof in the OSNAP case and we only highlight the differences from the proof of the OSNAP case discussed in Section \ref{sec:osnap-proof}. A key difference between the above result and the corresponding result for OSNAP is the $1/\varepsilon^2$ dependence in the lower bound for sparsity. This arises due to our approach of decomposing $U^TS^TSU - pm\cdot I_d$ as
\[ U^TS^TSU - pm\cdot I_d  =  \sum_{j=1}^n \left( \sum_{i=1}^m s_{ij}^2 \right) u_ju_j^T  - pm\cdot I_d + \sum_{i=1}^m \sum_{\substack{j,j' =1 \\ j \neq j'}}^n s_{ij}s_{ij'} u_ju_{j'}^T  \]
where we label the former term as the diagonal term and the latter as the off diagonal term. In the OSNAP model, we have $\sum_{i=1}^m s_{ij}^2 = pm$ by construction and therefore the diagonal term vanishes, but when $S$ has the OSE-IE distribution, we need not necessarily have $\sum_{i=1}^m s_{ij}^2 = pm$, which means we need to analyze diagonal term in addition to the off-diagonal term analyzed in Lemma \ref{lem:decoup}. Observing that, 
\[ \sum_{j=1}^n \left( \sum_{i=1}^m s_{ij}^2 \right) u_ju_j^T  - pm\cdot I_d = \sum_{j=1}^n \left( \sum_{i=1}^m \paren*{s_{ij}^2 - p} \right) u_ju_j^T \]
and using Minkowski's inequality,
\begin{align*}
    \E [ \tr (U^TS^TSU - pm\cdot I_d)^{2q} ]^\frac{1}{2q} &\le \E \left[ \tr \left( \sum_{i=1}^m \sum_{j =1}^n (s_{ij}^2 - p)u_ju_{j}^T \right)^{2q} \right]^\frac{1}{2q} \\
    & \qquad + \E \left[ \tr \left( \sum_{i=1}^m \sum_{j,j' =1, j \neq j'}^n s_{ij}s_{ij'} u_ju_{j'}^T \right)^{2q} \right]^\frac{1}{2q} \\
\end{align*}
Proposition \ref{prop:momestdiag} controls the diagonal term (the former term), which gives rise to the $1/\varepsilon^2$ dependence, and is analyzed in Section \ref{subsec:oseiediag}. The analysis of the off-diagonal term (the latter term) is similar to the OSNAP case and is discussed in Section \ref{subsec:oseiemain}.

\subsection{Controlling the diagonal term when $S$ is the OSE-IE distribution} \label{subsec:oseiediag}

\begin{proposition}[Diagonal Term] \label{prop:momestdiag}
    Let $S$ be an $m \times n$ matrix distributed according to the unscaled OSE-IE distribution with parameter $p$. Let $U$ be an arbitrary $n \times d$ deterministic matrix such that $U^TU=I$. Given $0< \varepsilon < 1$ and $q \ge \log(d) $, there exist constants $c_{\ref*{prop:momestdiag}}$ such that for $m \geq d \ge 20$ and $pm \geq c_{\ref*{prop:momestdiag}} \frac{q }{\varepsilon^2} $,

\[ \E \left[ \tr \left( \frac{1}{pm }\sum_{i=1}^m \sum_{j =1}^n (s_{ij}^2 - p)u_ju_{j}^T \right)^{2q} \right]^\frac{1}{2q} \leq  \varepsilon \]

\end{proposition}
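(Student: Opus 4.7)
The proposition concerns the diagonal term in the i.i.d.~entries model, and since the rows of $S$ are independent in OSE-IE, the key observation is that the matrix inside the trace is already a sum of $m$ independent mean-zero symmetric $d\times d$ matrices, so the natural approach is Matrix Bernstein in moment form. Concretely, write
\begin{align*}
W := \frac{1}{pm}\sum_{i=1}^m Y_i,\qquad Y_i := \sum_{j=1}^n (s_{ij}^2-p)\,u_j u_j^\top.
\end{align*}
Then $Y_1,\ldots,Y_m$ are i.i.d., symmetric, and $\E[Y_i]=0$ since $\E[s_{ij}^2]=p$. Using $\tr(W^{2q})\le \|W\|^{2q}$ (the normalized trace is bounded by the largest eigenvalue raised to $2q$), it suffices to bound $\E\|\sum_i Y_i\|^{2q}$ and then divide by $(pm)^{2q}$.

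For the Bernstein inputs, I would first argue that $\|Y_i\|\le 1$ almost surely. Indeed, since $s_{ij}^2\in\{0,1\}$, we have $Y_i=\sum_{j\in\mathrm{supp}(s_i)}u_ju_j^\top-p\,I_d$, and both $\sum_{j\in\mathrm{supp}(s_i)}u_ju_j^\top$ and $pI_d$ are PSD with operator norm at most $1$ (the first because it is a principal submatrix-induced projection bounded above by $U^\top U=I$), giving eigenvalues of $Y_i$ in $[-p,1-p]\subset[-1,1]$. For the variance, independence of the $s_{ij}$ across $j$ kills all cross terms in $\E[Y_i^2]$, so
\begin{align*}
\E[Y_i^2]=p(1-p)\sum_{j=1}^n\|u_j\|^2 u_ju_j^\top,
\end{align*}
and because $\|u_j\|^2\le 1$ (rows of an $n\times d$ matrix with orthonormal columns have norm at most one), we get $\|\sum_i\E[Y_i^2]\|\le mp(1-p)\le mp$.

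Plugging $R=1$ and $\sigma^2=mp$ into the moment form of Matrix Bernstein,
\begin{align*}
\Big(\E\Big\|\sum_i Y_i\Big\|^{2q}\Big)^{1/(2q)}\le C\sqrt{mp(q+\log d)}+C(q+\log d),
\end{align*}
and dividing by $pm$ yields the bound $C\sqrt{(q+\log d)/(pm)}+C(q+\log d)/(pm)$ on $\E[\tr(W^{2q})]^{1/(2q)}$. Since we assume $q\ge\log d$, the square-root term dominates, and both terms are at most $\varepsilon$ once $pm\ge c_{\ref*{prop:momestdiag}}\,q/\varepsilon^2$ with $c_{\ref*{prop:momestdiag}}$ chosen as a suitable multiple of $C^2$; the restriction $d\ge 20$ is absorbed in the constants from the Bernstein moment bound.

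There is no serious obstacle: the proof is essentially a bookkeeping exercise once one recognizes that unlike the OSNAP case, the rows here are genuinely independent, so a row-level decomposition makes Matrix Bernstein directly applicable without any decoupling. The only care needed is in the operator-norm estimate $\|Y_i\|\le 1$, which uses that OSE-IE entries satisfy $s_{ij}^2\in\{0,1\}$ together with $U^\top U=I$, and in the passage from operator norm to the normalized trace moment, which is lossless in this direction and explains why the $1/\varepsilon^2$ dependence in $pm$ is unavoidable for this term (this is precisely the origin of the quadratic-in-$1/\varepsilon$ barrier for i.i.d.\ models discussed in Remark~\ref{rem:compareosnapiid}).
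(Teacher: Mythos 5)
Your proof is correct and follows essentially the same route as the paper: both reduce the normalized trace moment to the operator norm and apply Matrix Bernstein with $R=1$ and $\sigma^2\le pm$, the only cosmetic differences being that you group the summands by rows (rather than over all $(i,j)$ entries) and invoke a packaged moment-form Bernstein bound where the paper integrates the tail bound by hand.
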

Proposition \ref{prop:momestdiag} shows that, to make the diagonal term small, it suffices to require the nonzero entries per column to be greater than or equal to $c\frac{\log(d)}{\varepsilon^2}$. Note that
    \[ \E \left[ \tr \left( \frac{1}{pm }\sum_{i=1}^m \sum_{j =1}^n (s_{ij}^2 - p)u_ju_{j}^T \right)^{2q} \right]^\frac{1}{2q} \ge  \E \left[ \tr \left( \frac{1}{pm }\sum_{i=1}^m \sum_{j =1}^n (s_{ij}^2 - p)u_ju_{j}^T \right)^{2} \right]^\frac{1}{2} \]
    and the latter can be of the order $1/\sqrt{pm}$ when $U$ corresponds to a $d$-dimensional coordinate subspace of $\R^n$. Thus, the $1/\varepsilon^2$ dependence in the lower bound on $pm$ is necessary when using this approach to prove the subspace embedding guarantee.

\begin{proof}
    First, note that, 
    \[ \E \left[ \tr \left( \sum_{i=1}^m \sum_{j =1}^n (s_{ij}^2 - p)u_ju_{j}^T \right)^{2q} \right]^\frac{1}{2q} \le \E \left[  \norm*{\sum_{i=1}^m \sum_{j =1}^n (s_{ij}^2 - p)u_ju_{j}^T}^{2q} \right]^\frac{1}{2q} \]
    and we shall use Matrix Bernstein's inequality to estimate this moment. We shall only use a specific case of this theorem that holds more generally. 
    \begin{lemma}[Matrix Bernstein, Theorem 6.1.1 \cite{troppmatrixconc}]
        For a finite sequence of $d \times d$ independent mean zero symmetric bounded random matrices $\{ S_k \}$ with $\norm{S_k} \le R$, let,
        $$\sigma = \norm*{\sum_k \E[S^2_k]}$$
        Then,
        \[ \P \paren*{ \norm*{\sum_k S_k} \ge t } \le 2d \exp \paren*{\frac{-t^2/2}{\sigma + \frac{Rt}{3}}} \]
    \end{lemma}
    Recall that we are assuming that the entries of $S$ are fully independent, but since we are only dealing with moments of order up to $4q$, the same bounds will hold even if the entries of $S$ are $4q$-wise independent.
    
    In our case, for the sum $\sum_{i=1}^m \sum_{j =1}^n (s_{ij}^2 - p)u_ju_{j}^T$, we have $R = 1$ and,
    \begin{align*}
        \sigma &= \norm*{\sum_{i=1}^m \sum_{j =1}^n \E[(s_{ij}^2 - p)^2] \norm{u_j}^2u_ju_{j}^T } \\
        &= \norm*{\sum_{j =1}^n pm(1-p) \norm{u_j}^2u_ju_{j}^T } \\  
        &\le pm
    \end{align*}
    where we use the fact that $\sum_{j =1}^n (1-p) \norm{u_j}^2u_ju_{j}^T \prec \sum_{j =1}^n  u_ju_{j}^T = I_d$. Then,
    \begin{align*}
        \P \paren*{ \norm*{\sum_{i=1}^m \sum_{j =1}^n (s_{ij}^2 - p)u_ju_{j}^T} \ge t' } &\le 2d \exp \paren*{\frac{-(t')^2/2}{pm + \frac{t}{3}}} \\
        \P \paren*{ \norm*{\paren*{\frac{1}{pm}\sum_{i=1}^m \sum_{j =1}^n s_{ij}^2u_ju_{j}^T} -  I_d } \ge \frac{t'}{pm} } &\le 2d \exp \paren*{\frac{-(t')^2/2}{pm + \frac{t}{3}}} \\
        \P \paren*{ \norm*{ \paren*{\frac{1}{pm}\sum_{i=1}^m \sum_{j =1}^n s_{ij}^2u_ju_{j}^T} -  I_d } \ge t } &\le 2d \exp \paren*{\frac{-(pmt)^2/2}{pm + \frac{pmt}{3}}} \\
        &= 2d \exp \paren*{\frac{-pmt^2/2}{1 + \frac{t}{3}}} \\
    \end{align*}
    Let $X$ denote $\norm*{\paren*{\frac{1}{pm}\sum_{i=1}^m \sum_{j =1}^n s_{ij}^2u_ju_{j}^T} -  I_d}$ for convenience. Then we have the mixed tail,
    \begin{align*}
        \P(X \ge t) \le \begin{cases}
            2d \exp \paren*{-pmt^2/4} \text{ for } 0 \le t \le 3\\
            2d \exp \paren*{-3pmt/4} \text{ for } 3 < t\\
        \end{cases}
    \end{align*}
    We are interested in $\E[X^{2q}]^\frac{1}{2q}$, and, for $\varepsilon < 1$,
    \begin{align*}
        \E[X^{2q}] &= 2q \int_0^\infty t^{2q-1} \P(X \ge t) dt \\
        &= 2q \int_0^\varepsilon t^{2q-1} \P(X \ge t) dt + 2q \int_\varepsilon^3 t^{2q-1} \P(X \ge t) dt + 2q \int_3^\infty t^{2q-1} \P(X \ge t) dt \\
        &\le 2q \cdot \varepsilon^{2q} + 2q \int_\varepsilon^3 t^{2q-1} \P(X \ge t) dt + 2q \int_3^\infty t^{2q-1} \P(X \ge t) dt \\
    \end{align*}
    When $pm \ge 4q/\varepsilon^2$, 
    \begin{align*}
        2q \int_\varepsilon^3 t^{2q-1} \P(X \ge t) dt &\le 2q \int_\varepsilon^3 t^{2q-1} 2d \exp \paren*{-pmt^2/4} dt \\
        &\le 2q \int_\varepsilon^3 t^{2q-1} 2d \exp \paren*{-2q t^2/2\varepsilon^2} dt \\
    \end{align*}
    Using change of variables $\hat{t} = \sqrt{2q}t/\varepsilon$,
    \begin{align*}
        2q \int_{\varepsilon}^3 t^{2q-1} \P(X \ge t) dt 
        &\le 2q \cdot 2d \int_0^\infty \paren*{\frac{\varepsilon}{\sqrt{2q}}}^{2q} \hat{t}^{2q-1}  \exp \paren*{-\hat{t}^2/2} d\hat{t} \\
        &\le 4qd \paren*{\frac{\varepsilon}{\sqrt{2q}}}^{2q} \int_0^\infty  \hat{t}^{2q-1}  \exp \paren*{-\hat{t}^2/2} d\hat{t} \\
        &\le \sqrt{2\pi} 4qd \paren*{\frac{\varepsilon \sqrt{2q-1} }{\sqrt{2q}}}^{2q}   \\
        &\le \sqrt{2\pi} 4qd \paren*{\varepsilon}^{2q}   \\
    \end{align*}
    Continuing,
    \begin{align*}
        2q \int_3^\infty t^{2q-1} \P(X \ge t) dt &\le 4qd \int_3^\infty t^{2q-1} \exp \paren*{-3pmt/4} dt \\
        &\le 4qd \int_\varepsilon^3 t^{2q-1} \exp \paren*{-3q t/\varepsilon^2} dt \\
    \end{align*}
    with the change of variables $\tilde{t} = 3qt/\varepsilon^2$,
    \begin{align*}
        2q \int_3^\infty t^{2q-1} \P(X \ge t) dt  &\le 4qd \paren*{\frac{\varepsilon^2}{3q}}^{2q} \int_0^\infty (\tilde{t})^{2q-1} \exp \paren*{-\tilde{t}} d\tilde{t} \\
        &\le 4qd \paren*{\frac{\varepsilon^2(2q-1)}{3q}}^{2q} \\
        &\le 4qd \paren*{\varepsilon^2}^{2q}
    \end{align*}
    So, $$\E[X^{2q}]^\frac{1}{2q} \le (18qd)^\frac{1}{2q} \varepsilon \le \exp \paren*{\frac{2\log d + \log 18}{2q}}  \varepsilon \le e^{1.5} \varepsilon$$

    By starting with $\varepsilon/e^{1.5}$ instead, we can get the claim made in the statement.
\end{proof}

\subsection{Proving the subspace embedding guarantee for the OSE-IE distribution} \label{subsec:oseiemain}

In this section, we prove \ref{thm:oseiemain}.

\oseiemainthm*

\begin{proof}
    By the proof of Theorem \ref{t:ose-full}, we see that it is enough to show that $\E[\tr(X^TX - I_d)^{2q}]^\frac{1}{2q} \leq  \varepsilon$, or, equivalently, $ \E [ \tr (U^TS^TSU - pm\cdot I_d)^{2q} ]^\frac{1}{2q} \le pm \varepsilon$ whenever $p$ and $q$ satisfy $ C\paren*{\frac{q^2}{\varepsilon} + q^3} \le pm$ and $m \ge C_1\frac{d+q}{\varepsilon^2}$.
    At the beginning of Section \ref{sec:oseieproof}, we saw that when $S$ has the OSE-IE distribution, 
    \begin{align*}
    \E [ \tr (U^TS^TSU - pm\cdot I_d)^{2q} ]^\frac{1}{2q} &\le \E \left[ \tr \left( \sum_{i=1}^m \sum_{j =1}^n (s_{ij}^2 - p)u_j u_{j}^T \right)^{2q} \right]^\frac{1}{2q} \\
    & \qquad + \E \left[ \tr \left( \sum_{i=1}^m \sum_{j,j' =1, j \neq j'}^n s_{ij}s_{ij'} u_j u_{j'}^T \right)^{2q} \right]^\frac{1}{2q} \\
    \end{align*}
    By Proposition \ref{prop:momestdiag}, the former term is bounded by $pm\varepsilon/2$ when $pm \geq 4c_{\ref*{prop:momestdiag}} \frac{q}{\varepsilon^2} $ and $m \ge 20$. For the latter term, we see that the proof of Lemma \ref{lem:decoup}  still applies and we have, 
    \[ \E \left[ \tr \left( \sum_{i=1}^m \sum_{j,j' =1, j \neq j'}^n s_{ij}s_{ij'} u_j u_{j'}^T \right)^{2q} \right]^\frac{1}{2q} \le 2 \E[\tr(\Gamma(S_1,S_2))^{2q}]^\frac{1}{2q} \]
    Following the proof of Theorem \ref{prop:momestdecoupmatrix} and using Lemma \ref{lem:decouptraceineqose}, we see that $\E[\tr(\Gamma(S_1,S_2))^{2q}]^\frac{1}{2q} \le \varepsilon/4$ when $ C\paren*{\frac{q^2}{\varepsilon} + q^3}^{1+\frac{2}{q-2}} \le pm$ and $m \ge C_1\frac{d+q}{\varepsilon^2}$. The claim follows as in the proof of Theorem \ref{t:ose-full}. 
\end{proof}

\subsection{Trace Inequality in the OSE-IE Case}
The trace inequality required to obtain the differential inequality for the interpolant between the moments of $(S_1U)^TS_2U$ and $(G_1U)^TG_2U$ has a slightly different proof in the OSE-IE case than in the OSNAP case, even though both bounds are the same.

\begin{lemma} \label{lem:decouptraceineqose}
    Let $S_1(t)$ and $S_2(t)$ be as in Lemma \ref{lem:diffineq} with both having the OSE-IE distribution. Let $\Gamma(t)=(S_1(t)U)^T(S_2(t)U)+(S_2(t)U)^T(S_1(t)U)$. Let $\mathcal{Z}=\{\xi_{(l,\gamma)}:(l,\gamma) \in [n] \times [m]\} \cup \{\delta_{(l,\gamma)}:(l,\gamma) \in [n] \times [m]\}$ be the family of mutually independent random variables generating an instance of $S_1$ with the OSE-IE distribution. Let $q \ge 2$ and $3 \le k \le 2q$. Let $\{\mathcal{Z}_{\lambda}\}_{\lambda \in [k]}$ be a family of (possibly dependent) random elements, where for each $\lambda \in [k]$, the random element
\begin{align*}
    \mathcal{Z}_{\lambda}=\{\xi_{(l,\gamma),\lambda}:(l,\gamma) \in [n] \times [m]\} \cup \{\delta_{(l,\gamma),\lambda}:(l,\gamma) \in [n] \times [m]\}
\end{align*}
has the same distribution as 
\begin{align*}
    \mathcal{Z}=\{\xi_{(l,\gamma)}:(l,\gamma) \in [n] \times [m]\} \cup \{\delta_{(l,\gamma)}:(l,\gamma) \in [n] \times [m]\}
\end{align*}
Let $Z_{(l,\gamma)}=\xi_{(l,\gamma)} \delta_{(l,\gamma)} e_{\gamma} e_l^T$ and $Z_{(l,\gamma),\lambda}= \xi_{(l,\gamma), \lambda} \delta_{(l,\gamma), \lambda} e_{\gamma} e_l^T$.
Let $\{\Upsilon_1,...,\Upsilon_k\}$ be a family of $L_{\infty}(S_{\infty}^d)$ random matrices.
Assume further that the collection $\{\mathcal{Z}_{\lambda}\}_{\lambda \in [k]}$ is independent of $S_1, S_2, G_1, G_2$, and $\{\Upsilon_1,...,\Upsilon_k\}$. (In other words, $\{\Upsilon_1,...,\Upsilon_k\}$ can possibly be dependent with $S_1, S_2, G_1, G_2$.)
For each $(l,\gamma) \in [n] \times [m]$ and $\lambda \in k$, we define random vectors $\Theta_{(l,\gamma), \lambda, 1}, \Theta_{(l,\gamma), \lambda, 2} \in \R^d$ such that
\begin{align*}
    \Theta_{(l,\gamma), \lambda, 1} = \xi_{(l,\gamma),\lambda} \delta_{(l,\gamma), \lambda} \mathbf{u}_l^T \text{ and } \Theta_{(l,\gamma), \lambda, 2} = e_{\gamma}^TS_2(t)U
\end{align*}
where $e_{\gamma}$ represents the $\gamma$-th coordinate vector. Then, given $0 \le \beta_1,...,\beta_k \le +\infty$ such that $\sum \limits_{\lambda=1}^k \frac{1}{\beta_{\lambda}}=1-\frac{k}{2q}$, $\tau_1, \ldots, \tau_k \in \sym(\{1,2 \})$, there exists $c_{\ref*{lem:decouptraceineqose}}>0$ such that
\begin{align*}
    &\sum_{(l,\gamma) \in [n] \times [m]} \E[ \tr \Theta_{(l,\gamma), 1, \tau_1(1)}^T\Theta_{(l,\gamma), 1, \tau_1(2)}
	\Upsilon_1\Theta_{(l,\gamma), 2, \tau_2(1)}^T\Theta_{(l,\gamma), 2, \tau_2(2)} \\&\cdots
	\Upsilon_2\Theta_{(l,\gamma), k, \tau_k(1)}^T\Theta_{(l,\gamma), k, \tau_k(2)}\Upsilon_k]  \\ \le &
       (c_{\ref*{lem:decouptraceineqose}}(pm)^\frac{1}{q}\sqrt{\max\{pd,q\}})^{\frac{2qk-4q}{2q-2}} (\E \tr((\Gamma(t))^{2q}))^{\frac{1}{q} \cdot \frac{2q-k}{2q-2}} \prod \limits_{\lambda=1}^k \norm{\Upsilon_{\lambda}}_{\beta_{\lambda}}
\end{align*}
\end{lemma}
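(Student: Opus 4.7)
The plan is to mirror the proof of Lemma~\ref{lem:decouptraceineq} step by step, checking that each ingredient still works when the summands $Z_{(l,\gamma)}=\xi_{(l,\gamma)}\delta_{(l,\gamma)}e_\gamma e_l^T$ arise from OSE-IE rather than from OSNAP. As before, the left-hand side of the claimed inequality defines a multilinear functional $F(\Upsilon_1,\dots,\Upsilon_k)$ on $L_\infty(S_\infty^d)^k$, so by Corollary~\ref{cor:spinterpo} it suffices to prove the bound at the extreme point $\beta_1=\cdots=\beta_{k-1}=\infty$, $\beta_k=\tfrac{q}{q-k}$. I would then introduce $\eta=(\eta(1),\eta(2))$ uniform on $[n]\times[m]$, independent of everything else, and rewrite the sum as $\cS\cdot\E[\,\cdot\,]$ with $\cS:=mn$ (the size of the index set, not the expected number of nonzeros). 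Applying the matrix H\"older inequality exactly as in \eqref{eq:factor1} with exponents $2q$, $\infty$, $2qr/(r-1)$, $r$ for $r=(2q-2)/(2q-k)$, the task reduces to bounding the ``product-of-rows'' factors $\|\Theta_{1,\lambda}^T\Theta_{2,\lambda}\|_{2q}$ and the ``heavy'' factor $\||\Theta_{2,1}^T\Theta_{1,1}|^{1/r}|\Upsilon_k||\Theta_{2,1}^T\Theta_{1,1}|^{1/r}\|_r$.

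For the product-of-rows factor, I would prove the OSE-IE analogue of Lemma~\ref{theta1theta2}, namely
\begin{align*}
\|\Theta_{1,\lambda}^T\Theta_{2,\lambda}\|_q\le \frac{c\sqrt{\max\{pd,q\}}}{\cS^{1/q}}.
\end{align*}
Since $\Theta_{1,\lambda}^T\Theta_{2,\lambda}$ is rank one, $\tr|\Theta_{1,\lambda}^T\Theta_{2,\lambda}|^q=\tfrac1d\|\Theta_{1,\lambda}\|^q\|\Theta_{2,\lambda}\|^q$. Conditioning on $\eta$, using that $\E[\delta_{\eta,\lambda}^q]=p$ and that (by Lemma~\ref{lem:rownormbound}, whose proof already handles the Bernoulli-sparsified-Rademacher case) $\E_{\eta(2),S_2(t)}[\|e_{\eta(2)}^TS_2(t)U\|^q]\le (c\sqrt{\max\{pd,q\}})^q$, one gets
\begin{align*}
\|\Theta_{1,\lambda}^T\Theta_{2,\lambda}\|_q^q\le \tfrac{p}{d}(c\sqrt{\max\{pd,q\}})^q\cdot\E_\eta[\|\mathbf u_{\eta(1)}\|^q]\le \tfrac{p}{d}(c\sqrt{\max\{pd,q\}})^q\cdot\tfrac{d}{n},
\end{align*}
so extracting $\cS^{-1/q}=(mn)^{-1/q}$ and using $(pm)^{1/q}\le e^2$ (valid since $q\ge\tfrac12\log(pm)$) absorbs the remaining factors into a constant.

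For the heavy factor, I would prove the OSE-IE analogue of Lemma~\ref{zssz}, i.e.
\begin{align*}
\Big\|\E_{Z_{(l,\gamma),1}}\Big(\sum_{(l,\gamma)\in[n]\times[m]}(Z_{(l,\gamma),1}U)^T(S_2(t)U)(S_2(t)U)^TZ_{(l,\gamma),1}U\Big)\Big\|_q^{q}\le \E\tr\Gamma(t)^{2q},
\end{align*}
and the corresponding bound in CASE~II. The proof strategy of Lemma~\ref{zssz} carries over verbatim, because it depends only on three distributional facts: (i) the $Z_{(l,\gamma)}$ are independent with mean zero, so replacing them by their Gaussian models $W_{(l,\gamma)}$ preserves the inner expectation; (ii) $\sum_{(l,\gamma)}Z_{(l,\gamma)}(t)$ has the same distribution as $S_1(t)$, which forces the quantity to equal $\E_{S_1(t)}[(S_1(t)U)^T(S_2(t)U)(S_2(t)U)^T S_1(t)U]$; and (iii) the entries of $(S_1(t)U)^T$ have variance $1$ and vanishing pairwise covariances, so the identity $\E_X[XYX]=Y^T$ applies to the ``cross'' terms in the expansion of $\E_{S_1(t)}[\Gamma(t)^2]$. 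All three facts hold for OSE-IE as well: entries of $S_1$ are independent mean-zero with variance $p$, entries of $S_1U$ are uncorrelated (this is direct from $U^TU=I$ and independence of entries of $S_1$), and the $Z_{(l,\gamma)}$ are by construction independent across $(l,\gamma)\in[n]\times[m]$. Thus the positive-semidefiniteness argument at the end of Lemma~\ref{zssz} yields the same conclusion.

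Combining the bounds on the two types of factors as in \eqref{eq:factor1, second} and the (factor~2) calculation, and collecting exponents, I obtain
\begin{align*}
F(\Upsilon_1,\dots,\Upsilon_k)\le (c\sqrt{\max\{pd,q\}})^{\frac{2qk-4q}{2q-2}}(\E\tr\Gamma(t)^{2q})^{\frac{1}{q}\cdot\frac{2q-k}{2q-2}}\Big(\prod_{\lambda=1}^{k-1}\|\Upsilon_\lambda\|_\infty\Big)\|\Upsilon_k\|_{\frac{2q}{2q-k}},
\end{align*}
and Corollary~\ref{cor:spinterpo} extends this to all admissible $(\beta_1,\dots,\beta_k)$. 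The main technical hurdle I anticipate is not in any single step but in keeping careful track of the extra $\sqrt{p}$ factors that appear at various places (in $\E[\delta^q]=p$, in the covariance of entries of $S_1$, and in $\alpha_{(l,\gamma)}$ which is absent here): one must verify that these factors combine to yield the same $\cS^{-1/q}$-type cancellation as in the OSNAP proof, so that no spurious $p$-dependence enters the final bound. Once that bookkeeping is done, the proof is essentially identical to that of Lemma~\ref{lem:decouptraceineq}.
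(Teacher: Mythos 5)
Your proposal is correct and follows essentially the same route as the paper's proof: reduction to the extreme exponents via Corollary \ref{cor:spinterpo}, the uniform index $\eta$ with $\cS=mn$, matrix H\"older as in \eqref{eq:factor1}, the OSE-IE row-product bound obtained from Lemma \ref{lem:rownormbound} together with $\E[\delta^q]=p$ and $(pm)^{1/q}\le e^2$, and the Lemma \ref{zssz}-type replacement for the heavy factor. Your explicit check that the Lemma \ref{zssz} argument transfers to OSE-IE (which the paper leaves implicit by citing "the same calculations") and your bookkeeping of the $p$ factors are both accurate, so no changes are needed.
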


\begin{proof}
    The structure of the proof is exactly the same as the proof of Lemma \ref{lem:decouptraceineq}, and only the specific expressions differ. We define the functional $F(\Upsilon_1,...,\Upsilon_k)$ exactly as in the proof of Lemma \ref{lem:decouptraceineq}, and proceed to prove the claim for when  $\beta_1= \cdots =\beta_{k-1}=\infty$ and $\beta_k=\frac{q}{q-k}$. 
    
    Let $\eta = (\eta(1), \eta(2))$ be a uniformly distibuted random variable in $[n]\times [m]$ and for all $\lambda \in [k]$, define random variables $\Theta_{1, \lambda} = \Theta_{\eta, \lambda, 1}$ and $\Theta_{2, \lambda} = \Theta_{\eta, \lambda, 2}$. Then, for $\mathcal{S} = mn$,
    
    \begin{align*}
        &\sum_{l \in [n], \gamma \in [m]} \E[ \tr \Theta_{(l,\gamma_l), 1, \tau_1(1)}^T\Theta_{(l,\gamma_l), 1, \tau_1(2)}
	\Upsilon_1\Theta_{(l,\gamma_l), 2, \tau_2(1)}^T\Theta_{(l,\gamma_l), 2, \tau_2(2)}\\&\cdots
	\Upsilon_2\Theta_{(l,\gamma_l), k, \tau_k(1)}^T\Theta_{(l,\gamma_l), k, \tau_k(2)}\Upsilon_k] \\ =
    &\mathcal{S} \cdot \E[
	\tr \Theta_{\tau_{1}(1),1}^T\Theta_{\tau_{1}(2),1}
	\Upsilon_1\Theta_{\tau_{2}(1),2}^T\Theta_{\tau_{2}(2),2}\cdots
	\Upsilon_2\Theta_{\tau_{k}(1),k}^T\Theta_{\tau_{k}(2),k}\Upsilon_k]
    \end{align*}

Defining factor 1 exactly as in Lemma \ref{lem:decouptraceineq}, we have,
\begin{align*}
(\text{factor 1})^2 \le & \norm{\Theta_{\tau_{1}(2),1}^T\Theta_{\tau_{1}(1),1}}_{2q}^{2-2/r} \prod\limits_{\lambda  = 2}^{k/2}\norm{\Theta_{\tau_{\lambda}(1),\lambda}^T\Theta_{\tau_{\lambda}(2),\lambda}}_{2q}^2\prod\limits_{\lambda  = 1}^{k/2} {\norm{\Upsilon_\lambda }_{\infty}^2} \\ &\cdot \norm{|\Theta_{\tau_{1}(2),1}^T\Theta_{\tau_{1}(1),1}|^{1/r} |\Upsilon_k| |\Theta_{\tau_{1}(2),1}^T\Theta_{\tau_{1}(1),1}|^{1/r}}_r
\end{align*}

To proceed, we prove an analogue of Lemma \ref{theta1theta2} to bound $\norm{\Theta_{\tau_j(1),j}^T\Theta_{\tau_j(2),j}}_{q}$,

\begin{lemma} \label{theta1theta2ose}
Let $\Theta_{\tau_j(1),j}, \Theta_{\tau_j(2),j} \in \R^d$ be as in Lemma \ref{lem:decouptraceineqose}. Let $\eta = (\eta(1), \eta(2))$ be a uniformly distributed random variable in $[n]\times[m]$ such that $\eta$ is independent with $\{\mathcal{Z}_{\lambda}\}_{\lambda \in [k]}$, $S_1, S_2, G_1, G_2$. For all $\lambda \in [k]$, define random vectors $\Theta_{1, \lambda} = \Theta_{\eta, \lambda, 1}$ and $\Theta_{2, \lambda} = \Theta_{\eta, \lambda, 2}$. Let $q \ge \log(pm)$. Then there exists a constant $c_{\ref*{theta1theta2ose}} >0$ such that,
\begin{align*}
\norm{\Theta_{\tau_j(1),j}^T\Theta_{\tau_j(2),j}}_{q} &\le  \frac{ c_{\ref*{theta1theta2ose}}\sqrt{\max \{ pd, q \}} }{\mathcal{S}^\frac{1}{q}}
\end{align*}
\end{lemma}

\begin{proof}
Note that for each fixed $1 \le j \le k/2$, 
\begin{align*}
    \norm{\Theta_{\tau_j(1),j}^T\Theta_{\tau_j(2),j}}_{q} &= \norm{\xi_{\eta} \delta_{\eta} \mathbf{u}_{\eta(1)} e_{\eta(2)}^TS_2(t)U}_{q} \\
    &\le \left( \E [\tr \abs{\xi_{\eta} \delta_{\eta} \mathbf{u}_{\eta(1)} e_{\eta(2)}^TS_2(t)U}^{q} ] \right)^\frac{1}{q} \\
    &= \left( \E [\tr \abs{ \delta_{\eta} \mathbf{u}_{\eta(1)} e_{\eta(2)}^TS_2(t)U}^{q} ] \right)^\frac{1}{q} \\
    &\le \left( \E \left[ \frac{1}{d} \norm{ \delta_{\eta} \mathbf{u}_{\eta(1)} }^{q}\norm{e_{\eta(2)}^TS_2(t)U}^{q} \right] \right)^\frac{1}{q} \\
\end{align*} 
Conditioning on $\eta$, we take expectation of the first factor over $\delta_{\eta}$ and the second factor over $S_2$,  
\begin{align*}
    \norm{\Theta_{\tau_j(1),j}^T\Theta_{\tau_j(2),j}}_{q} &\le \left( \E_{\eta} \left[ \frac{1}{d} \E_{\delta_{\eta}}\sqbr*{\norm{\delta_{\eta} \mathbf{u}_{\eta(1)} }^{q}} \E_{S_2(t)}[ \norm{e_{\eta(2)}^TS_2(t)U}^{q}] \right] \right)^\frac{1}{q} \\
\end{align*}

Note that, after conditioning on $\eta$, $\eta(2)$ is fixed. In other words, $\eta(2)$ is uniformly distributed over a one element subset of $[m]$. Therefore, by Lemma \ref{lem:rownormbound}, we have  
\begin{align*}
    \norm{\Theta_{\tau_j(1),j}^T\Theta_{\tau_j(2),j}}_{q} &\le \left( \E_{\eta} \left[ \frac{1}{d} \E_{\delta_{\eta}}\sqbr*{\norm{\delta_{\eta} \mathbf{u}_{\eta(1)} }^{q}} \E_{S_2(t)}[ \norm{e_{\eta(2)}^TS_2(t)U}^{q}] \right] \right)^\frac{1}{q} \\
    &\le c_{\ref*{lem:rownormbound}}\sqrt{\max \{ pd, q \}} \left( \E_{\eta} \left[ \frac{p}{d} \norm{ \mathbf{u}_{\eta(1)} }^{q} \right] \right)^\frac{1}{q} \\
    &\le c_{\ref*{lem:rownormbound}}\sqrt{\max \{ pd, q \}} \left( \frac{p}{\mathcal{S}d} \sum_{l=1}^n \sum_{\gamma=1}^{m}  \norm{\mathbf{u}_l}^{q} \right)^\frac{1}{q} \\
    &= c_{\ref*{lem:rownormbound}}\sqrt{\max \{ pd, q \}} \left( \frac{pm}{\cS  d} \sum_{l=1}^{n} \norm{u_l}^{q} \right)^\frac{1}{q} \\
    &\le c_{\ref*{lem:rownormbound}}\sqrt{\max \{ pd, q \}} \left( \frac{pm}{\cS  d} \sum_{l=1}^{n} \norm{u_l}^{2} \right)^\frac{1}{q} \\
    &\le \frac{c_{\ref*{theta1theta2ose}}(pm)^\frac{1}{q}\sqrt{\max \{ pd, q \}}}{\cS^\frac{1}{q}}
\end{align*}

where in the last line we use the fact that $\sum_{l=1}^{n} \norm{u_l}^{2} = d$.

\end{proof}

Using the same calculations as Lemma \ref{lem:decouptraceineq} for the term \begin{align*}\norm{|\Theta_{\tau_{1}(2),1}^T\Theta_{\tau_{1}(1),1}|^{1/r} |\Upsilon_k| |\Theta_{\tau_{1}(2),1}^T\Theta_{\tau_{1}(1),1}|^{1/r}}_r\end{align*} we get 
\begin{align*}
    (\text{factor 1})^2
 &\le  \paren*{\frac{ c_{\ref*{theta1theta2ose}}\sqrt{\max \{ pd, q \}} }{\mathcal{S}^\frac{1}{2q}}}^{k-2/r}\prod\limits_{\lambda  = 1}^{k/2} {\norm{\Upsilon_\lambda }_{\infty}^2} \cdot  \paren*{\frac{1}{\cS} \norm{|\Upsilon_k|^{r}}_{\frac{2q}{2q-2}} \cdot \paren*{\E \tr((\Gamma(t))^{2q})}^\frac{1}{q}}^\frac{1}{r} \\
 &\le \frac{1}{\cS} \paren*{ c_{\ref*{theta1theta2ose}}\sqrt{\max \{ pd, q \}}  }^{k-2/r}\prod\limits_{\lambda  = 1}^{k/2} {\norm{\Upsilon_\lambda }_{\infty}^2} \cdot   \norm{\Upsilon_k}_{\frac{2q}{2q-k}} \cdot \paren*{\E \tr((\Gamma(t))^{2q})}^\frac{1}{qr}  \\
\end{align*}
By repeating the same argument for factor 2, we get,
\begin{align*}
    &\sum_{l \in [n], \gamma_l \in [s_l]} \E[ \tr \Theta_{(l,\gamma_l), 1, \tau_1(1)}^T\Theta_{(l,\gamma_l), 1, \tau_1(2)}
	\Upsilon_1\Theta_{(l,\gamma_l), 2, \tau_2(1)}^T\Theta_{(l,\gamma_l), 2, \tau_2(2)} \\&\cdots
	\Upsilon_2\Theta_{(l,\gamma_l), k, \tau_k(1)}^T\Theta_{(l,\gamma_l), k, \tau_k(2)}\Upsilon_k]  \\ \le &
       (c_{\ref*{theta1theta2ose}}\sqrt{\max \{ pd, q \}})^{\frac{2qk-4q}{2q-2}} (\E \tr((\Gamma(t))^{2q}))^{\frac{1}{q} \cdot \frac{2q-k}{2q-2}} ({\prod\limits_{\lambda  = 1}^{k-1} {\norm{\Upsilon_\lambda }_{\infty}} )\cdot \norm{\Upsilon_k}_{\frac{2q}{2q-k}}}
\end{align*}

\end{proof}

\section{Conclusions}

We give an oblivious subspace embedding with optimal embedding dimension that achieves near-optimal sparsity, thus nearly matching a conjecture of Nelson and Nguyen in terms of the best sparsity attainable by an optimal oblivious subspace embedding. We also propose a fast algorithm for constructing low-distortion subspace embeddings, based on a new family of Leverage Score Sparsified embeddings with Independent Columns (LESS-IC). This new algorithm leads to speedups in downstream applications such as optimization problems based on constrained or regularized least squares. As a by-product of our analysis, we develop a new set of tools for matrix universality, combining a decoupling argument with a two-dimensional interpolation method, which are likely of independent interest.

\printbibliography


\end{document}